\documentclass[12pt]{amsart}
\usepackage{amssymb}
\usepackage{latexsym}
\usepackage{epic, eepic}
\usepackage{graphicx}

\setlength{\textheight}{8in} 
\setlength{\oddsidemargin}{0.0in}
\setlength{\evensidemargin}{0.0in} \setlength{\textwidth}{6.4in}
\setlength{\topmargin}{0.18in} \setlength{\headheight}{0.18in}
\setlength{\marginparwidth}{1.0in}
\setlength{\abovedisplayskip}{0.15in}
\setlength{\belowdisplayskip}{0.15in}
\setlength{\parskip}{0.05in}

\pagestyle{headings}
\newcommand{\talpha}{\tilde \alpha}
\newcommand{\tbeta}{\tilde \beta}
\newcommand{\tA}{\widetilde A}

\newcommand{\tchi}{\tilde \chi}

\newcommand{\teta}{\tilde \eta}

\newcommand{\tg}{\tilde g}
\newcommand{\tM}{\widetilde M}
\newcommand{\Ps}{P^\sharp}
\newcommand{\tPs}{{\widetilde P}^\sharp}
\newcommand{\tP}{\widetilde P}
\newcommand{\tPi}{\widetilde \Pi}

\newcommand{\tR}{\widetilde R}

\newcommand{\ts}{\tilde s}

\newcommand{\tu}{\tilde u}
\newcommand{\Us}{U^\sharp}

\newcommand{\ty}{\tilde y}

\newcommand{\WFh}{\operatorname{WF}_{ h }}


\newcommand{\eps}{\epsilon}


\newcommand{\CC}{{\mathbb C}}

\newcommand{\NN}{{\mathbb N}}

\newcommand{\RR}{{\mathbb R}}
\newcommand{\IR}{{\mathbb R}}

\def\t2{{\mathbb T}^2}

\newcommand{\CI}{{\mathcal C}^\infty }
\newcommand{\CIc}{{\mathcal C}^\infty_{\rm{c}} }

\newcommand{\cE}{{\mathcal E}}
\newcommand{\cF}{{\mathcal F}}

\newcommand{\cA}{{\mathcal A}}
\newcommand{\cB}{{\mathcal B}}
\newcommand{\cC}{{\mathcal C}}

\newcommand{\cO}{{\mathcal O}}
\newcommand{\cP}{{\mathcal P}}

\newcommand{\Oo}{{\mathcal O}} 
\newcommand{\cS}{{\mathcal S}}

\newcommand{\cU}{{\mathcal U}}
\newcommand{\cV}{{\mathcal V}}


\newcommand{\vareps}{\varepsilon}


\newcommand{\supp}{\operatorname{supp}}

\newcommand{\loc}{\operatorname{loc}}

\newcommand{\Res}{\operatorname{Res}}
\newcommand{\Spec}{\operatorname{Spec}}

\newcommand{\rest}{|}

\renewcommand{\Re}{\mathop{\rm Re}\nolimits}
\renewcommand{\Im}{\mathop{\rm Im}\nolimits}

\newcommand{\ad}{\operatorname{ad}}

\newcommand{\Span}{\operatorname{span}}
\newcommand{\Op}{\operatorname{Op}}
\newcommand{\WF}{\operatorname{WF}}
\newcommand{\ra}{\rangle}
\newcommand{\la}{\langle}

\newcommand{\be}{\begin{equation}}
\newcommand{\ee}{\end{equation}}
\newcommand{\defi}{\stackrel{\rm{def}}{=}}
\newcommand{\defeq}{\stackrel{\rm{def}}{=}}
\def\hto0{\xrightarrow{h\to 0}}

\theoremstyle{plain}
\newtheorem{thm}{Theorem}
\newtheorem{prop}{Proposition}[section]

\newtheorem{lem}{Lemma}[section]

\theoremstyle{definition}

\newtheorem{rem}{Remark}[section]

\numberwithin{equation}{section}


\newcommand{\bequ}{\begin{equation}}

\newcommand{\norm}[1]{\Vert#1\Vert}
\newcommand{\set}[1]{\left\{\,#1\,\right\}}

\def\squarebox#1{\hbox to #1{\hfill\vbox to #1{\vfill}}} 
\newcommand{\stopthm}{\hfill\hfill\vbox{\hrule\hbox{\vrule\squarebox 
                 {.667em}\vrule}\hrule}\smallskip} 

\setlength{\marginparwidth}{.8in}

\usepackage{epic, eepic}

\usepackage{amsxtra}

\ifx\pdfoutput\undefined
  \DeclareGraphicsExtensions{.pstex, .eps}
\else
  \ifx\pdfoutput\relax
    \DeclareGraphicsExtensions{.pstex, .eps}
  \else
    \ifnum\pdfoutput>0
      \DeclareGraphicsExtensions{.pdf}
    \else
      \DeclareGraphicsExtensions{.pstex, .eps}
    \fi
  \fi
\fi

\title
[Quantum decay rates in chaotic scattering]
{Quantum decay rates in chaotic scattering}
\author[S. Nonnenmacher]
{St\'ephane Nonnenmacher}
\author[M. Zworski]
{Maciej Zworski}
\address{Service de Physique Th\'eorique, 
CEA/DSM/PhT, Unit\'e de recherche associ\'e CNRS,
CEA/Saclay,\\
91191 Gif-sur-Yvette, France}
\email{nonnen@spht.saclay.cea.fr}
\address{Mathematics Department, University of California \\
Evans Hall, Berkeley, CA 94720, USA}
\email{zworski@math.berkeley.edu}

\begin{document}

\maketitle


\section{Statement of Results}
\label{in}

In this article we prove that for a large class of operators, including
Schr\"odinger operators, 
\begin{equation}
\label{eq:Ph}
P (h) =  - h^2 \Delta + V ( x ) \,, \ \ V \in \CIc ( X ) \,,  \ \ X = \RR^2 \,, 
\end{equation}
with hyperbolic classical flows, the smallness
of dimension of the trapped set implies that 
there is a gap between the resonances
and the real axis.
In other words, the quantum decay rates are bounded
from below if the classical repeller is sufficiently {\em filamentary}.
The higher dimensional statement is given in terms of the 
{\em topological pressure} and is presented in Theorem \ref{t:1g}.
Under the same assumptions, we also prove a useful
resolvent estimate:
\begin{equation}
\label{eq:Rh}
\| \chi ( P ( h ) - E )^{-1} \chi \|_{ L^2  \rightarrow L^2  } \leq 
C\, \frac {\log ( 1/h)} {h }
 \,, 
\end{equation}
for any compactly supported bounded function $ \chi $ -
see Theorem \ref{th:4}, and a remark following it for an 
example of applications.

We refer to \S \ref{ass} for the general assumptions on $ P( h) $,
keeping in mind that they apply to $ P ( h) $ of the form \eqref{eq:Ph}.
The resonances of $ P ( h ) $ are defined as poles of the meromorphic
continuation of the resolvent:
\[ 
R ( z, h ) \defeq ( P ( h ) - z )^{-1} \; : \; L^2 ( X ) 
\longrightarrow L^2 ( X ) \,, \ \ \Im z > 0 \,, 
\]
through the continuous spectrum $ [ 0 , \infty ) $. More precisely,
\[ 
R ( z , h ) \; : \; L^2_{\rm{comp}} ( X ) \longrightarrow 
L^2_{\rm{loc}} ( X ) \,,  \ \ z \in \CC \setminus (-\infty, 0] \,,
\]
is a meromorphic family of operators (here  $ L^2_{\rm{comp}} $ 
and $ L^2_{\rm{loc}}$ denote functions which are compactly supported and in 
$ L^2 $, and functions which are locally in $ L^2 $). 
The poles are called {\em resonances}
and their set is denoted by $ \Res ( P ( h ) )$ --- see \cite{BiZ,ZwN}
for introduction and references. Resonances are counted according to their
multiplicities (which is generically one \cite{KlZw}).

In the case of \eqref{eq:Ph} the 
classical flow is given by Newton's equations:
\begin{gather}
\label{eq:Newt}
\begin{gathered}
\Phi^t ( x, \xi ) \defeq ( x( t) , \xi ( t ) ) \,, \\  
x'( t ) =  \xi ( t) \,, \ \ \xi'( t ) = - dV ( x( t ) ) \,, \ \
x ( 0 ) = x \,, \ \ \xi( 0 ) = \xi \,.
\end{gathered}
\end{gather}
This flow preserves the classical Hamiltonian 
\[  p ( x , \xi ) \defeq |\xi|^2 + V ( x ) \,, \ \ 
( x , \xi ) \in T^* X \,, \ \ X = \RR^2 \,, \]
and the energy layers of $p$ are denoted as follows:
\be\label{e:shell}
\cE_E\defeq \set{\rho\in T^*X,\ p(\rho)=E},\qquad
\cE_E^\delta\defeq \bigcup_{|E'-E|<\delta}\cE_{E'}\,, \quad 
 \delta>0 \,. 
\ee
The incoming and outgoing sets at energy $E$ are defined as 
\be\label{eq:gaha}
 \Gamma^\pm_ E  \defi \set{  \rho  \in T^*X \; :
\; p(\rho) = E \,, \ \ \Phi^t ( \rho )
\not \rightarrow \infty \,, \ t \rightarrow \mp \infty}\subset \cE_E \,.
\ee
The trapped set at energy $E$, 
\begin{equation}
\label{eq:trapp}
 K_E \defi \Gamma^+_E \cap \Gamma^-_ E  
\end{equation}
is a compact, locally maximal invariant set, contained inside $T^*_{B(0,R_0)} 
X $, for 
some $ R_0 $. That is clear for \eqref{eq:Ph}
but also follows from the general assumptions of \S \ref{ass}.

\begin{quote} 
\begin{center}
We assume that the flow $\Phi^t$ 
is {\em hyperbolic} on $ K_E$.
\end{center}
\end{quote}

The definition of hyperbolicity is recalled in \eqref{eq:aa} -- see
\S \ref{ass} below. We recall that it is a structurally stable property, so that
the flow is then also hyperbolic 
on $ K_{E'} $, for $ E'$ near $ E$.
Classes of  potentials satisfying this assumption at a range of
non-zero energies are given in \cite{Mo}, \cite[Appendix c]{SjDuke},
\cite{Wojt}, see also Fig.\ref{Fblock}.
\begin{figure}[htbp]
\begin{center}
\includegraphics[angle=-90,width=5in]{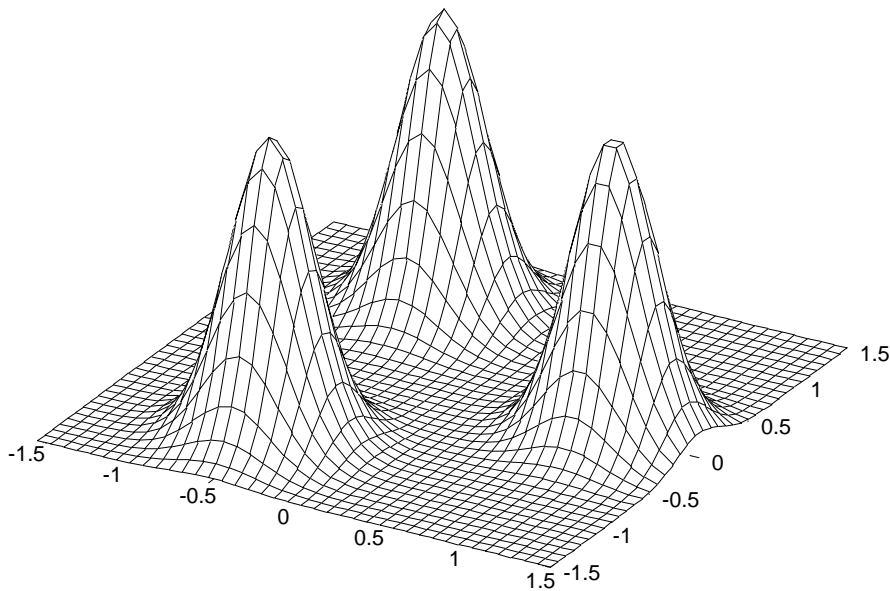}
\end{center}
\caption
{\label{Fblock} A three bump potential exhibiting a hyperbolic trapped set
for a range of energies. When the curve $\{ V= E \}$ is made of three
approximate circles of radii $ a $ and 
centers at equilateral distance $ R $, the partial dimension
$ d_H $ in \eqref{eq:2p} is approximately $\log 2 / \log (R/a)  $ when $ R \gg a $.}
\end{figure}
The dimension of the trapped set appears in 
the fractal upper bounds on the number of resonances. 
We recall the following result \cite{SjZw04} (see \cite{SjDuke} for the
first result of this type):
\begin{thm}\label{t:0}
Let $ P ( h ) $ be given by 
\eqref{eq:Ph} and suppose that the flow $ \Phi^t $ is hyperbolic on $ K_E $. 
Then in the semiclassical limit
\be\label{eq:2}
  | \Res (P ( h) ) \cap D ( E , C h ) | = \Oo ( h^{-d_H } )\,, 
\ee
where 
\be\label{eq:2p}  
2 d_H + 1 = \text{ Hausdorff dimension of $ K_E $.}
\ee
\end{thm}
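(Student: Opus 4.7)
The plan is to reduce the resonance count to counting zeros of a Fredholm determinant attached to a microlocally deformed version of $P(h)$, and then to bound that count by the number of $h^{1/2}$-quantum cells needed to cover a tubular neighborhood of $K_E$. First, complex scaling outside $B(0,R_0)$ produces a non-selfadjoint operator $P_\theta(h)$ whose eigenvalues in $D(E,Ch)$ are exactly $\Res(P(h))\cap D(E,Ch)$, counted with multiplicity.

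Next, using the hyperbolicity of $\Phi^t$ on $K_E$, one constructs a smooth escape function $G\in\CI(T^*X)$, bounded and supported near $\cE_E$, with $H_p G\ge 0$ everywhere and $H_p G>0$ off any preassigned neighborhood of $K_E$. Conjugating $P_\theta(h)$ by a semiclassical weight of the form $\exp(M\log(1/h)\,G^w(x,hD))$ yields a deformation $\tP(h)$ with the same discrete spectrum in $D(E,Ch)$, but whose symbol gains an imaginary part $\sim -Mh\log(1/h)\,H_pG$. Hence $\tP(h)-z$ is semiclassically elliptic for $z\in D(E,Ch)$ outside an $h^{1/2}$-neighborhood of $K_E$, and the counting problem reduces to counting eigenvalues of a microlocal compression $\Pi\tP(h)\Pi$, where $\Pi$ quantizes a cutoff supported near $K_E$ at scale $h^{1/2}$.

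Third, the eigenvalues of $\tP(h)$ in $D(E,Ch)$ are written as zeros of a holomorphic Fredholm determinant $D(z,h)=\det(I+K(z,h))$ and counted by Jensen's formula on a slightly larger disk. The upper bound $\log|D(z,h)|\le C\,\rank(\Pi)\log(1/h)$, combined with a matching lower bound at one well-chosen point (furnished, for instance, by the ellipticity of $\tP(h)-z$ outside a thin neighborhood of $K_E$), yields a zero count of order $\rank(\Pi)$. The rank of $\Pi$ is in turn controlled by the symplectic volume, divided by $h$, of an $h^{1/2}$-neighborhood of $K_E$ in a two-dimensional transverse slice of $\cE_E$. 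Since the upper Minkowski and Hausdorff dimensions coincide for hyperbolic invariant sets, after removing the flow direction the transverse trapped set has Minkowski dimension $2d_H$, so its $h^{1/2}$-neighborhood has area $\lesssim h^{1-d_H}$, giving $\rank(\Pi)\lesssim h^{-d_H}$.

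The principal obstacle is the last step: obtaining a rank bound on $\Pi$ that truly reflects the fractal Minkowski volume of $K_E$ rather than a cruder dimensional estimate. This requires a careful pseudodifferential calculus with the weight $e^{M\log(1/h)G^w}$, a symplectic covering of $K_E$ at scale $h^{1/2}$ adapted to its stable/unstable foliations, and the equivalence of Hausdorff and Minkowski dimensions on hyperbolic sets; the remaining steps (complex scaling, escape function conjugation, Jensen's inequality) are then essentially structural and standard.
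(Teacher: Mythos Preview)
The paper does not actually prove Theorem~\ref{t:0}: it is quoted from \cite{SjZw04}, and the only contribution here is the remark (just after the statement) that in dimension $n=2$ one may replace the upper Minkowski dimension appearing in \cite{SjZw04} by the Hausdorff dimension, invoking \cite[Theorem~4.1]{PeSa}. Your sketch is therefore not being compared against a proof in this paper, but it does accurately summarize the strategy of \cite{SjZw04}: complex scaling, conjugation by an exponential weight built from an escape function, reduction to a finite-rank perturbation localized in an $h^{1/2}$-neighborhood of $K_E$, and a Jensen-type count of zeros of a Fredholm determinant, with the rank controlled by the Minkowski volume of that neighborhood.

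One caveat worth flagging: your last step appeals to ``the equivalence of Hausdorff and Minkowski dimensions on hyperbolic sets'' as if it were automatic. This equivalence is what \cite{PeSa} supplies for \emph{conformal} Axiom~A flows, which covers the two-dimensional situation of \eqref{eq:Ph} (stable and unstable bundles are one-dimensional, hence conformal). In higher dimensions the two dimensions need not coincide, which is why the paper restricts the Hausdorff formulation to $n=2$ and refers to the Minkowski-dimension statement \cite[Theorem~3]{SjZw04} otherwise. Apart from this dimensional restriction, your outline is faithful to the cited argument.
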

We note that using \cite[Theorem 4.1]{PeSa}, and in dimension 
$ n = 2 $, we strengthened the formulation of the result in 
\cite{SjZw04} by replacing upper Minkowski (or box) dimension by 
the Hausdorff dimension. We refer to \cite[Theorem 3]{SjZw04} for the
slightly more cumbersome general case. 

In this article we address a different question which has been
present in the physics literature at least since the seminal
paper by Gaspard and Rice \cite{GaRi}. In the same setting of 
scattering by several convex obstacles,  it has also been considered
around the same time by Ikawa \cite{Ik} (see also the careful analysis by Burq \cite{NBu} and a recent paper by Petkov and Stoyanov \cite{Pest}).

\begin{quote}
{\bf Question:} What properties of the flow $ \Phi_t $, or 
of $ K_E $ alone, imply the existence of a {\em gap} $\gamma > 0 $ 
such that, for $h>0$ sufficiently small,
\[  
z \in \Spec ( P ( h)) \,, \quad  \Re z \sim E \; \Longrightarrow 
\; \Im z < - \gamma h \,? 
\]
In other words, what dynamical conditions guarantee a lower
bound on the quantum decay rate?
\end{quote} 

Numerical investigations in different settings of semiclassical three bump
potentials \cite{L,LZ}, three disk scattering
\cite{GaRi,LSZ,Wir}, Cantor-like Julia sets for 
$ z \mapsto z^2 + c $, $ c < - 2$ \cite{StZw}, and quantum maps 
\cite{NZ,SchTw},
all indicate that a trapped set $ K_E $ of low dimension (a ``filamentary'' fractal set) 
guarantees the existence of a {\em resonance gap} $ \gamma >0 $.
\begin{figure}[ht]
\begin{center}
\includegraphics[width=9cm]{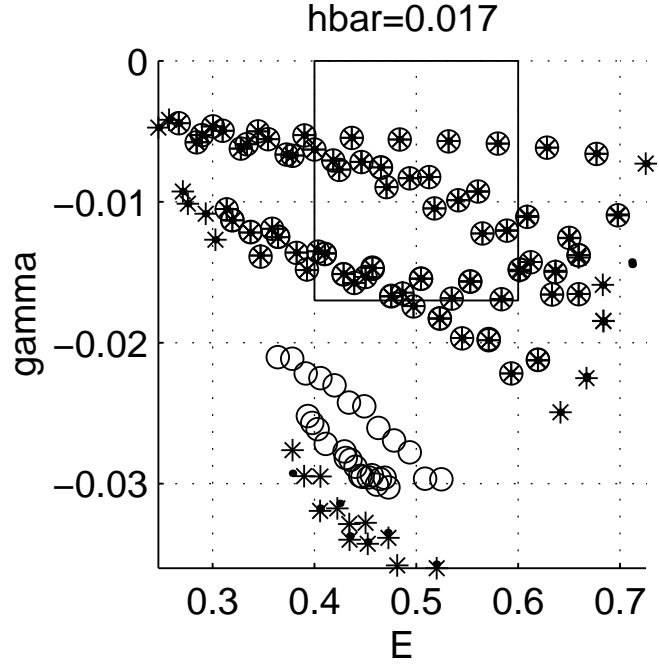}
\end{center}
\caption{A sample of numerical results of \cite{L}: the plot shows
resonances for the potential of Fig.~\ref{Fblock} ($ h = 0.017 $).
For the energies inside the box, the fractal dimension is approximately
$ d_H \simeq 0.288 < 0.5 $ 
(see \cite[Table 2]{L}), and resonances are separated from the 
real axis in agreement with Theorem \ref{t:1}.
}
\label{f:3}
\end{figure}
Some of these works also confirm the fractal Weyl law of
Theorem \ref{t:0}.
which, unlike Theorem \ref{t:1} below,
was first conjectured in the mathematical works on counting resonances.

Here we provide the following 
\begin{thm}
\label{t:1}
Suppose that the assumptions of Theorem \ref{t:0} hold and 
that the dimension $ d_H $ defined in \eqref{eq:2p} satisfies
\begin{equation}
\label{eq:t11}
d_H < \frac12 \,.
\end{equation}
Then there exists $ \delta,\ \gamma > 0 $, and $ h_{\delta,\gamma} > 0 $ such that
\be\label{eq:t12}
0 < h < h_{\delta,\gamma}\ \Longrightarrow \ 
\Res ( P ( h ) ) \cap \left( [E-\delta, E+\delta] - i [0, h \gamma] \right)
= \emptyset \,. 
\ee
\end{thm}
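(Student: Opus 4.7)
My plan is first to recast the dimensional hypothesis as a negativity statement for the topological pressure. In dimension two the stable and unstable bundles over $K_E$ are one-dimensional, and Bowen's equation identifies the partial Hausdorff dimension appearing in \eqref{eq:2p} as the unique zero $d_H$ of
\[ s \longmapsto \mathrm{Pr}_{\Phi^t|_{K_E}}(-s\,\varphi^u), \qquad \varphi^u(\rho) \defeq \frac{d}{dt}\Big|_{t=0}\!\log \|d\Phi^t|_{E^u(\rho)}\|. \]
Since $s \mapsto \mathrm{Pr}(-s\varphi^u)$ is strictly decreasing, the assumption $d_H < 1/2$ is equivalent to $\mathrm{Pr}(-\varphi^u/2) < 0$. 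I therefore set $\gamma_0 \defeq -\mathrm{Pr}(-\varphi^u/2) > 0$ and aim to prove a resonance gap of any size $\gamma < \gamma_0$, which is also the quantitative content of the more general Theorem \ref{t:1g}.

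\textbf{Step 2: Complex scaling and reduction to a microlocal estimate.} Next I invoke complex scaling in the region $\{|x| > 2R_0\}$ to produce an analytic family $P_\theta(h)$ whose eigenvalues in a complex neighborhood of $E$ coincide with the resonances of $P(h)$. It suffices to establish the polynomial bound
\[ \|(P_\theta(h) - z)^{-1}\| \leq C\, h^{-N}, \qquad \Re z \in [E-\delta, E+\delta],\ \ -h\gamma \leq \Im z \leq 0. \]
Off the energy shell, ellipticity provides the bound at once; on $\cE_E$ but away from $K_E$ every trajectory escapes to the dilated region in bounded time, so a standard escape-function (equivalently, absorbing-potential) argument yields the estimate there. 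The real task reduces to a microlocal resolvent bound in a small neighborhood $U$ of $K_E$.

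\textbf{Step 3: Hyperbolic dispersion and the orbit sum.} The heart of the proof is a propagator estimate for $U(t) = e^{-itP(h)/h}$ microlocalized to $U$. I cover a neighborhood of $K_E$ on $\cE_E$ by small phase-space boxes $\{V_j\}_{j\in J}$ and build a quantum partition of unity $\mathrm{Id} = \sum_j \Pi_j$ with $\WFh(\Pi_j) \subset V_j$. For each admissible word $\alpha = (\alpha_0, \ldots, \alpha_N)$, meaning $V_{\alpha_{k+1}} \cap \Phi^\tau(V_{\alpha_k}) \neq \emptyset$ for a fixed time step $\tau > 0$, the composition
\[ \Pi_\alpha \defeq \Pi_{\alpha_N}\,U(\tau)\,\Pi_{\alpha_{N-1}}\,U(\tau)\cdots U(\tau)\,\Pi_{\alpha_0} \]
is a semiclassical Fourier integral operator associated to a Lagrangian tube around an approximate orbit $\rho_\alpha$, and a WKB / stationary-phase analysis yields the hyperbolic dispersion bound
\[ \|\Pi_\alpha\|_{L^2 \to L^2} \leq C \exp\!\Big(-\tfrac{1}{2}\sum_{k=0}^{N-1}\varphi^u\circ\Phi^{k\tau}(\rho_\alpha)\Big), \]
reflecting the one-dimensional unstable contraction. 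Summing over admissible words and invoking the variational principle for pressure gives
\[ \|\Pi_U\,U(N\tau)\,\Pi_U\| \leq C\, e^{N\tau(\mathrm{Pr}(-\varphi^u/2) + o(1))} = C\, e^{-N\tau(\gamma_0 - o(1))}, \]
where $\Pi_U$ microlocalizes to $U$.

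\textbf{Step 4: From propagator decay to the resolvent estimate.} I finally iterate up to an Ehrenfest-type time $T = M|\log h|$ with $M$ large, obtaining $\|\Pi_U\,U(T)\,\Pi_U\| \leq C\,h^{M\gamma_0 - \eta}$ for any $\eta > 0$. Feeding this into the Fourier--Laplace representation of $(P_\theta(h) - z)^{-1}$ as an integral of the propagator and splitting at $t = T$, the growth factor $e^{-t\Im z/h} \leq e^{\gamma t}$ occurring on the tail for $\Im z \geq -\gamma h$ is dominated by the decay $h^{M\gamma_0 - \eta}$ once $\gamma < \gamma_0$ and $M$ is chosen large enough; this yields the polynomial resolvent bound and hence the resonance-free strip. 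The main obstacle is \textbf{Step 3}: making the composition $\Pi_\alpha$ rigorous up to logarithmic times forces the semiclassical calculus to survive past the Ehrenfest barrier $t \sim |\log h|/(2\lambda_{\max})$, which requires a carefully tuned partition $\{V_j\}$, an $h$-dependent symbol class adapted to the hyperbolic splitting, and a combinatorial control of admissible words through the pressure --- precisely where the assumption $d_H < 1/2$ enters decisively, ensuring the orbit sum beats, rather than is beaten by, the $h^{-1/2}$ dispersive losses accumulated at each step.
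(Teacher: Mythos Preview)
Your Steps~1, 2 and~4 track the paper's strategy: Bowen's equation gives $d_H<1/2\iff\cP_E(1/2)<0$, complex scaling reduces resonances to eigenvalues, and propagator decay up to time $M\log(1/h)$ with $M$ large yields the gap (the paper works directly with eigenstates rather than the resolvent integral, but your route is equivalent). The genuine gap is in Step~3.

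The bound you state for $\|\Pi_\alpha\|_{L^2\to L^2}$ is missing the decisive factor $h^{-n/2}$, and the justification you give (``WKB/stationary-phase analysis'' of $\Pi_\alpha$ as an FIO) cannot produce any decay at all: $\Pi_\alpha$ quantizes a canonical transformation with bounded symbol, hence has $L^2$ norm of order one, not exponentially small. The paper's actual estimate is
\[
\|\Pi_\alpha\|\;\leq\;C\,h^{-n/2}\,(1+\eps_0)^N\prod_{k}\exp\!\big(\tfrac12 S_{t_0}(W_{\alpha_k})\big),
\]
and the mechanism is \emph{not} a kernel estimate for the FIO but a decomposition of the initial state $\Pi_{\alpha_0}u$ into a family of Lagrangian states $e_\eta$ along leaves close to the unstable foliation, followed by a WKB evolution of each $e_\eta$ through the whole word (this is where the unstable Jacobian appears, via stretching of the leaf and the resulting decay of the amplitude in sup-norm). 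The $h^{-n/2}$ is the single, global cost of the Fourier decomposition $u=\int(\cF_h u)(\eta)\,e_\eta\,d\eta/(2\pi h)^{n/2}$. Your closing remark about ``$h^{-1/2}$ dispersive losses accumulated at each step'' is therefore wrong on both counts: the loss is one-time, not per step (a per-step loss over $N\sim M\log(1/h)$ steps would be super-polynomial and unbeatable), and the condition $\cP_E(1/2)<0$ is used only to make the exponent in the pressure sum negative, after which the fixed $h^{-n/2}$ prefactor is absorbed simply by taking $M$ large. The careful control of the iterated Lagrangian states up to logarithmic times (uniform $C^\ell$ bounds on the evolved phases via the inclination lemma, and polynomial-in-$N$ growth of the amplitude derivatives) is the technical core you would need to supply; no exotic $h$-dependent symbol class is required.
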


The statement of the theorem can be made more general and more 
precise using a more
sophisticated dynamical object, namely the {\em topological 
pressure} of the flow on $K_E$ with respect to the unstable Jacobian:
$$   
\cP_{E} ( s ) = \text{ pressure of the flow
on $ K_{E} $ with respect to the unstable Jacobian.} 
$$
We will give two equivalent definitions of the pressure below,
the simplest to formulate (but not to use), given in \eqref{eq:defpr}.

The main result of this paper is 
\begin{thm}\label{t:1g}
Suppose that $ P ( h ) $ satisfies the general assumptions of
\S \ref{ass} (in particular it can be of the form \eqref{eq:Ph} with 
$ X = \RR^n $), and that the flow $ \Phi^t $ is hyperbolic on the
trapped set $ K_{E} $. Suppose that the topological pressure satisfies
$$  
\cP_E ( 1/2 ) < 0 \,. 
$$
Then there exists $ \delta > 0 $ such that for any 
$ \gamma $ satisfying
\be\label{eq:press}
0 < \gamma < \min_{|E-E'|\leq\delta}(- \cP_{E'}( 1/2)) \,,  
\ee
there exits $ h_{\delta,\gamma}>0  $ such that 
\be\label{eq:t12n}
0 < h < h_{\delta,\gamma}\ \Longrightarrow \ 
\Res ( P ( h ) ) \cap \left( [E-\delta, E+\delta] - i [0, h \gamma] \right)
= \emptyset \,. 
\ee
\end{thm}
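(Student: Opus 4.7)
The plan is to translate the resonance-gap statement into an exponential decay estimate for the truncated quantum propagator $U(t) = e^{-itP(h)/h}$ up to the Ehrenfest time $T_E(h) \simeq M_0 \log(1/h)$, and then to control this propagator by a symbolic decomposition whose combinatorics are governed by the topological pressure. The first step is a reduction. By the complex-scaling / black-box framework permitted under the assumptions of \S \ref{ass}, the resonances of $P(h)$ in $[E-\delta,E+\delta] - i[0,\gamma h]$ correspond to eigenvalues of a dissipative deformation $P_\theta(h)$ whose principal symbol has strictly negative imaginary part outside a small neighbourhood of $K_E$. A standard microlocal argument then converts a polynomial-in-$h^{-1}$ bound on $(P_\theta(h)-z)^{-1}$ in that strip into an estimate of the form
\[
\bigl\|\Pi_0\,U(t)\,\Pi_0\bigr\|_{L^2\to L^2} \leq C\, h^{-M}\, e^{-\gamma t},\qquad 0 \leq t \leq T_E(h),
\]
where $\Pi_0 = \Op(\pi_0)$ microlocalises on a small neighbourhood of $K_E$ inside $\cE_E^\delta$; it therefore suffices to establish this estimate for every $\gamma$ satisfying \eqref{eq:press}.

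Next, I fix a small $t_0 > 0$ and an open cover $\{V_1,\dots,V_J\}$ of a neighbourhood of $K_E$ by sets of diameter below the expansivity scale of $\Phi^{t_0}$, together with a quantised partition of unity $\Pi_0 = \sum_{j=1}^J \Pi_j$, $\Pi_j = \Op(\pi_j)$. For $N$ with $N t_0 = T_E(h)$, I expand
\[
\Pi_0\, U(N t_0)\, \Pi_0 = \sum_{\alpha \in \{1,\dots,J\}^{N+1}} \Pi_{\alpha_N}\,U(t_0)\,\Pi_{\alpha_{N-1}} \cdots U(t_0)\,\Pi_{\alpha_0}.
\]
Iterated Egorov (justified at each fixed time step $t_0$) shows that only classically admissible words, i.e.\ those tracking a pseudo-orbit of $\Phi^{t_0}$ in $K_E$, contribute more than $\Oo(h^\infty)$. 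For each admissible word, a wavepacket / FBI-type argument transverse to the stable foliation yields the hyperbolic dispersion bound
\[
\bigl\|\Pi_{\alpha_N}\,U(t_0)\cdots U(t_0)\,\Pi_{\alpha_0}\bigr\|_{L^2\to L^2} \leq C\, e^{\epsilon N t_0}\, \bigl|\det d\Phi^{N t_0}|_{E^u(\rho_\alpha)}\bigr|^{-1/2},
\]
where $\rho_\alpha \in K_E$ is a point shadowed by $\alpha$, and the $-1/2$ power of the unstable Jacobian is the $L^2$-normalisation of a wavepacket squeezed by the flow onto its unstable leaf.

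Summing over admissible words and invoking the variational definition of the topological pressure,
\[
\sum_{\alpha\ \text{admissible}} \bigl|\det d\Phi^{N t_0}|_{E^u(\rho_\alpha)}\bigr|^{-1/2} \leq C_\epsilon\, e^{N t_0\,(\cP_E(1/2) + \epsilon)},
\]
so that $\|\Pi_0\, U(T_E)\,\Pi_0\| \leq C_\epsilon\, h^{-M_0(\cP_E(1/2) + 2\epsilon)}$; choosing $M_0$ sufficiently large and $\epsilon$ sufficiently small yields \eqref{eq:t12n} for any $\gamma$ satisfying \eqref{eq:press}, with uniformity in $E' \in [E-\delta,E+\delta]$ provided by structural stability of the hyperbolic set and continuity of the pressure. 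The main obstacle is the hyperbolic dispersion estimate: pseudodifferential calculus alone breaks down beyond logarithmic times, because iterated Egorov conjugation produces symbols with derivatives of order $h^{-1/2}$ along the stable and unstable directions, so one must design an anisotropic (second-microlocal) calculus adapted to the (generally only Hölder) stable/unstable foliations on $K_E$ and control the composition of $N \sim \log(1/h)$ such operators uniformly. This is precisely where the factor $1/2$ in $\cP_E(1/2)$ originates, and its sharpness is what forces the hypothesis in the stated form.
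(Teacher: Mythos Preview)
Your outline captures the paper's strategy quite well: reduce via complex scaling to a dissipative operator, expand the propagator at time $Nt_0 \sim M\log(1/h)$ over words in a microlocal partition, discard inadmissible words, bound admissible ones by a hyperbolic dispersion estimate, and sum using the pressure. However, there are three concrete gaps.

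First, your dispersion bound is misstated. The correct estimate (the paper's Proposition~\ref{p:crucial}, equation~\eqref{eq:crest}) is
\[
\bigl\|\Pi_{\alpha_N}U(t_0)\cdots U(t_0)\,\Pi_{\alpha_0}\bigr\| \;\leq\; C\,h^{-n/2}\,(1+\epsilon_0)^N\,\prod_{j} e^{\frac12 S_{t_0}(W_{\alpha_j})},
\]
with an unavoidable prefactor $h^{-n/2}$. Without it the bound is simply false: for $N$ below the Ehrenfest time the left side is of order one, not exponentially small. The $h^{-n/2}$ arises from the Fourier decomposition of an arbitrary $L^2$ state into Lagrangian (plane-wave) states, followed by the triangle inequality; this is exactly the step that forces $M_0$ large at the end, and your own phrase ``choosing $M_0$ sufficiently large'' has no content without it.

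Second, your proposed mechanism for the dispersion estimate --- an anisotropic second-microlocal calculus adapted to the H\"older foliations --- is not what the paper does, and designing such a calculus uniformly up to the Ehrenfest time is itself a hard open-ended problem. The paper instead decomposes the initial state in $W_{\alpha_1}$ into Lagrangian states $e_{\eta,\alpha_1}$ supported on leaves in an unstable cone, and then tracks each leaf through the chain of FIOs $U_{\alpha_j}$ by explicit stationary-phase iteration (Lemma~\ref{l:0}, Proposition~\ref{p:an1}). The key classical input is the Inclination Lemma (Proposition~\ref{p:inclination}): the evolved leaves stay in the unstable cone with uniformly bounded $C^\ell$ norms, so the constants in the iterated stationary phase remain under control for all $N\leq M\log(1/h)$. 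This completely avoids exotic symbol classes.

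Third, you assert that after complex scaling ``the principal symbol has strictly negative imaginary part outside a small neighbourhood of $K_E$.'' That is not true for $P_\theta$ in the intermediate region $B(0,2R_0)\setminus V$: one only has $\Im p_\theta = \cO(\theta)$ there, of either sign. The paper fixes this by conjugating with an exponential escape-function weight, $P_{\theta,\epsilon}=e^{-\epsilon G^w/h}P_\theta\,e^{\epsilon G^w/h}$ (see \S\ref{s:mso}), which produces $\Im p_\theta - \epsilon H_pG \leq 0$ on the whole energy shell and hence the crucial bound $\|U(t)\|\leq e^{Ct}$. Without this step the inadmissible words cannot be discarded (Lemma~\ref{l:4} fails) and the whole argument collapses.
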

For $n=2$,   $d_H<1/2 $ is equivalent to $ \cP_E(1/2)<0$, 
which shows that Theorem \ref{t:1} follows from Theorem \ref{t:1g}.
The connection between ${\rm sgn}\cP_E(1/2)$ and a resonance gap
also holds in dimension $n\geq 3$; however, for $n\geq 3$ there is generally
no simple link between the sign of $\cP_E(1/2)$ and
the value of $d_H$ (except when the flow is ``conformal'' in the 
unstable, respectively stable directions \cite{PeSa}).

The optimality of Theorem \ref{t:1g} is not clear. Except in 
some very special cases (for instance when $ K_E $ consists of one
hyperbolic orbit) we do not expect the estimate on the width of the resonance
free region in terms of the pressure to be optimal. In fact, in the 
analogous case of scattering on convex co-compact hyperbolic 
surfaces the results of Naud (see \cite{Naud} and references given there)
show that the resonance free strip is wider at high energies than the strip
predicted by the pressure. That relies on delicate zeta function analysis
following the work of Dolgopyat: at zero energy there exists a 
Patterson-Sullivan resonance with the imaginary part (width) given by the pressure, 
but all other resonances have more negative imaginary parts. A similar
phenomenon occurs in the case of
Euclidean obstacle scattering
as has recently been shown by Petkov and Stoyanov \cite{Pest}.

\begin{figure}[ht]
\begin{center}
\includegraphics[width=16cm]{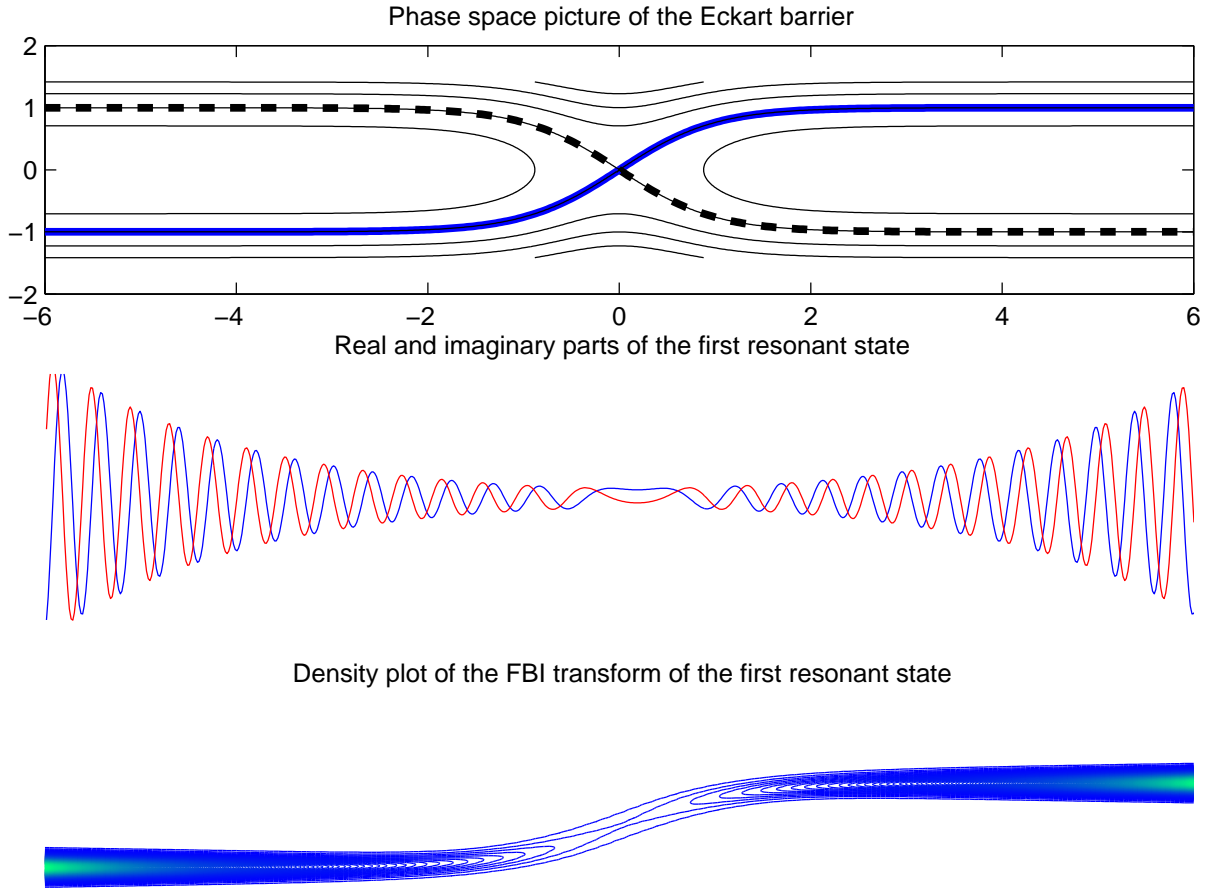}
\end{center}
\caption{The top figure shows the phase portrait for the Hamiltonian $ p ( x , \xi ) =
\xi^2 + \cosh^{-2} ( x ) $, with $ \Gamma_1^\pm $ highlighted. The middle
plot shows the resonant state corresponding to the resonance
closest to the real axis at $ h = 1/16 $, and the bottom plot shows
 the squared modulus of its FBI tranform.
The resonance states were computed by D.~Bindel
({\tt http://cims.nyu.edu/$\sim$dbindel/resonant1d}) and the FBI transform
was provided by L.~Demanet. The result of Theorem~\ref{t:2} is
visible in the mass of the FBI transform concentrated on $ \Gamma_1^+ $,
with the exponential growth in the outgoing direction.}
\label{fig1}
\end{figure}

The proof of Theorem \ref{t:1g} is based on the ideas developed in  the 
recent work of Anantharaman and the first author \cite{An,AnNo}
on semiclassical defect measures for eigenfuctions of the Laplacian on
manifolds with Anosov geodesic flows. Although we do not use semiclassical
defect measures in the proof of Theorem \ref{t:1g}, the following
result provides a connection:
\begin{thm}
\label{t:2}
Let $ P ( h) $ satisfy the general assumptions of \S \ref{ass}
(no hyperbolicity assumption here). 
Consider a sequence of values $h_k\to 0$ and a corresponding 
sequence of resonant states (see \eqref{eq:gath} in \S \ref{ass} 
below) satisfying
\begin{equation}
\label{eq:th2}  \| u ( h_k ) \|_{L^2 ( \pi ( K_E )  + B ( 0, \delta)) } = 1 \,,  \ \
 \Re z ( h_k ) = E+ o ( 1 ) \,, \ \ \Im z ( h ) \geq - C h
\end{equation}
where $ K_E $ is the trapped set at energy $ E $ \eqref{eq:trapp}
and $ \delta > 0$.
Suppose that a  semiclassical defect measure $ d \mu $ on
$ T^* X $ is
associated with the sequence $ (u ( h_k)) $:
\begin{gather}
\label{eq:t20}
\begin{gathered}
  \langle a^w ( x, h_k D) \chi u ( h_k ) , \chi u ( h_k)
\rangle \longrightarrow
\int_{ T^* X }  a ( \rho ) \, d\mu ( \rho ) \,, \ \ k \rightarrow \infty \,,
\\
a\in \CIc(T^*X)\,,\ \ \chi \in \CIc ( X ) \,, \ \  \pi^* \chi\rest_{ \supp a } \equiv 1 \,, \ \
\pi : T^* X \rightarrow X \,.
\end{gathered}
\end{gather}
Then
\be\label{eq:t21}
\supp \mu  \subset \Gamma^+ _E
\ee
and there exists $ \lambda > 0 $ such that
\be\label{eq:t22}
\lim_{ k \rightarrow \infty }  \Im z( h_k ) / h_k  = - \lambda/2 \,, \quad
 \text{ and} \quad
{\mathcal L}_{H_{p}} \mu =  \lambda \mu\,.
\ee
\end{thm}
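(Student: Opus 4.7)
My plan is to exploit the defining equation $(P(h_k) - z(h_k))u(h_k) = 0$ together with the outgoing boundary condition characterising $u(h_k)$ as a resonant state, combining microlocal regularity for the support statement with a standard commutator calculation for the Lie derivative identity.

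First I would establish $\supp\mu \subset \Gamma^+_E$ using microlocal regularity. By the construction of resonant states through meromorphic continuation, $u(h_k)$ is outgoing, so its semiclassical wavefront set outside a fixed compact set lies in outgoing directions on the energy shell, and in particular is disjoint from $\Gamma^-_E$. Applying semiclassical propagation of singularities to $(P(h_k) - z(h_k))u(h_k) = 0$---which remains valid when $\Im z(h_k) = O(h_k)$---shows that $\WFh(u(h_k))$ is invariant under $\Phi^t$. Now if $\rho\in\WFh(u(h_k))$ fails to belong to $\Gamma^+_E$, then $\Phi^{-t}(\rho)\to\infty$ as $t\to+\infty$; but the forward orbit of $\Phi^{-t}(\rho)$ returns to the bounded point $\rho$, so $\Phi^{-t}(\rho)$ lies in $\Gamma^-_E$ for large $t$, contradicting the outgoing condition at infinity. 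Passing to the limit gives $\supp\mu\subset\Gamma^+_E$.

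Next I would derive the Lie derivative relation. Fix $a\in\CIc(T^*X)$ and a cutoff $\chi\in\CIc(X)$ with $\pi^*\chi\equiv 1$ on a large neighborhood of $\supp a$; in particular $d\chi$ vanishes near $\supp a$. Using $Pu=zu$ and the symmetry of $P$ on compactly supported functions,
\[
\langle [P(h_k), a^w]\chi u, \chi u\rangle = -2i\,\Im z(h_k)\,\langle a^w \chi u, \chi u\rangle
+ \langle a^w\chi u, [P,\chi]u\rangle - \langle a^w[P,\chi]u, \chi u\rangle.
\]
The principal symbol of $(i/h_k)[P(h_k), a^w]$ is $H_p a$, while the compositions $[P,\chi]\circ a^w$ and $a^w\circ [P,\chi]$ are $O_{L^2\to L^2}(h_k^\infty)$ because their symbols have microsupports in the disjoint sets $\supp d\chi$ and $\supp a$. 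Dividing by $h_k$ and taking the limit $k\to\infty$ via \eqref{eq:t20} yields
\[
\int_{T^*X} H_p a\, d\mu = -\lambda\int_{T^*X} a\, d\mu,\qquad \lambda \defeq -2\lim_{k\to\infty}\Im z(h_k)/h_k,
\]
where the limit exists after extracting a subsequence because $\Im z(h_k)/h_k\in[-C,0]$. This is the distributional identity $\mathcal{L}_{H_p}\mu = \lambda\mu$.

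Finally, $\lambda\geq 0$ is immediate from the sign of $\Im z(h_k)$; the substance is the strict inequality $\lambda>0$. If instead $\lambda=0$, then $\mu$ would be $\Phi^t$-invariant; combined with $\supp\mu\subset\Gamma^+_E$ and the fact that forward orbits of points in $\Gamma^+_E\setminus K_E$ leave every compact set, the restriction of $\mu$ to a large compact neighborhood of $\pi(K_E)$ would be supported in $K_E$, so the normalisation in \eqref{eq:th2} would produce a nontrivial flow-invariant measure on $K_E$ associated with a genuine quasimode---incompatible with $u(h_k)$ being an outgoing resonant state rather than an $L^2$ eigenfunction at energy $E$, given the general assumptions on $P(h)$ in \S\ref{ass} excluding embedded eigenvalues. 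The main obstacle of the proof is precisely this last step: using the outgoing character of $u(h_k)$ quantitatively to rule out $\lambda=0$. The control of the boundary terms in the commutator identity, while routine in principle, also demands that the cutoff $\chi$ be chosen compatibly with both $\supp a$ and the outgoing wavefront structure of $u(h_k)$.
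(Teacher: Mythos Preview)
Your derivation of $\mathcal{L}_{H_p}\mu=\lambda\mu$ via the commutator identity is essentially the paper's argument and is correct, with one unstated input: to bound the boundary terms $\langle a^w\chi u,[P,\chi]u\rangle$ you need $\|u(h_k)\|_{L^2(B(0,R))}\leq C(R)$ uniformly in $k$ for every $R$. This is not free, since the normalisation \eqref{eq:th2} is only near $\pi(K_E)$ and the resonant state grows at infinity; the paper proves it separately (Lemma~\ref{l:11}) using complex scaling. Note also that no subsequence is needed: since $\mu$ is assumed to exist and is nontrivial, the identity forces $\Im z(h_k)/h_k$ to converge along the full sequence.

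The genuine gap is in your argument for $\supp\mu\subset\Gamma^+_E$. The resonant state $u(h_k)$ is not in $L^2(X)$ and not globally $h$-tempered, so invoking ``propagation of singularities'' and an ``outgoing wavefront set near infinity'' for $u$ itself is not justified as stated. The paper's route is to pass to the complex-scaled, weight-conjugated state $u_{\theta,\epsilon}\in L^2(X_\theta)$: the outgoing condition becomes the quantitative decay $\|u_{\theta,\epsilon}\|_{L^2(|x|>R_1)}=\mathcal{O}(h^{M_1/C_0})$ of Lemma~\ref{l:1r}, and backward propagation with the modified propagator of \S\ref{s:eo} then yields $\|a^w\chi u\|=\mathcal{O}(h^M)$ for $a$ supported off $\Gamma^+_E$ (Proposition~\ref{p:3}). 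Separately, your sentence ``$\Phi^{-t}(\rho)$ lies in $\Gamma^-_E$ for large $t$'' is wrong: $\Gamma^-_E$ is flow-invariant, so membership does not depend on $t$; what you mean is that $\Phi^{-t}(\rho)$ is an incoming point near infinity, and making that rigorous is exactly what the complex-scaling machinery supplies. As for $\lambda>0$: the paper's own proof in \S\ref{st2} in fact only concludes $\lambda\geq 0$, so your heuristic is not closing a gap the paper actually closes, and in any case this is not where the main work lies---you have misidentified the obstacle.
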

See Fig.~\ref{fig1} for a numerical result illustrating the theorem. A similar
analysis of the phase space distribution for the resonant eigenstates of quantized open
chaotic maps (discrete-time models for scattering Hamiltonian flows)
has been recently performed in \cite{KNPS06,NoRu07}.
Connecting this theorem with Theorems \ref{t:1} and \ref{t:1g}, we see
that the semiclassical defect measures associated with sequences 
of resonant states have decay rates $\lambda$ bounded from below by $2\gamma > 0$,
once the dimension of the trapped set is small enough ($n=2$), or more
generally, the pressure at $\frac 12 $ is negative.

Our last result is the precise version of the resolvent
estimate \eqref{eq:Rh}:
\begin{thm}
\label{th:4} 
Suppose that $ P ( h ) $ satisfies the general assumptions of
\S \ref{ass} (in particular it can be of the form \eqref{eq:Ph} with 
$ X = \RR^n $), and that the flow $ \Phi^t $ is hyperbolic on
the trapped set $ K_{E} $. If  the pressure $  \cP_E ( 1/2 ) < 0 $ then for any 
$ \chi \in \CIc ( X ) $ 
we have 
\be\label{eq:t4}
\| \chi ( P ( h ) - E )^{-1} \chi \|_{ L^2 (  X) \rightarrow L^2 ( X) }
\leq  
C\, \frac{ \log ( 1/h ) } {h} \,, \quad 0 < h < h_0 \,.
\ee
\end{thm}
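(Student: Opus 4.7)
The plan is to combine the resonance-free rectangle from Theorem \ref{t:1g} with a semiclassical maximum principle. I would first pass to the complex-scaled operator $P_\theta(h)$, a non-self-adjoint operator on $L^2(X)$ whose eigenvalues in a fixed complex neighborhood of $E$ coincide with the resonances of $P(h)$, and for which $\chi(P_\theta(h)-z)^{-1}\chi$ agrees with the meromorphic continuation of $\chi(P(h)-z)^{-1}\chi$ whenever $\supp\chi$ lies inside the unscaled region. After this reduction $F(z,h)\defeq \chi(P_\theta(h)-z)^{-1}\chi$ is a holomorphic family of bounded operators on the rectangle $\Omega_h \defeq [E-\delta,E+\delta] - i[0,\gamma h]$ for any $\gamma < -\cP_E(1/2)$, and the task becomes bounding $F$ on the real axis.

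Second, I would extract a polynomial a priori bound on $F$ inside $\Omega_h$. The contradiction argument used to prove Theorem \ref{t:1g} is quantitative: combining a partition of unity adapted to the dynamics on $K_E$, an Egorov propagation up to the Ehrenfest time $T\sim\log(1/h)$, and the strict inequality $\cP_E(1/2)+\gamma<0$, one obtains, for every $z\in\Omega_h$ and every $u$ with suitable microlocal support, an estimate of the form $\|u\|_{L^2}\le C h^{-1}\|(P_\theta(h)-z)u\|_{L^2}$. Read as a resolvent bound, this gives
$$\|F(z,h)\|_{L^2\to L^2}\le C h^{-1},\qquad z\in\Omega_h.$$
Separately, complex scaling together with the polynomial resonance count of Theorem \ref{t:0} yields, by a standard Cartan/Jensen argument, the crude upper bound $\|F(z,h)\|\le A e^{B/h}$ on a larger rectangle $\Omega'\defeq [E-\delta',E+\delta']-i[0,c_0]$ of $h$-independent height.

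Third, I would feed these two bounds into the semiclassical maximum principle of Tang--Zworski (see also Burq and Sj\"ostrand--Zworski). The principle interpolates between the single-exponential bound on $\Omega'$ and the polynomial bound on the lower edge $\{\Im z = -\gamma h\}$ of $\Omega_h$, producing $\|F(z,h)\|\le C\log(1/h)/h$ for $\Im z \ge -\gamma h/2$, and in particular at $z=E$; this is exactly \eqref{eq:t4}. The logarithmic factor is forced because the inner rectangle has height $\gamma h$: the weight assigned to the bad bound $e^{B/h}$ in the interpolation is of order $h$, and $h\cdot B/h$ combines multiplicatively with $h^{-1}$ on the inner edge to give $h^{-1}\log(1/h)$.

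The main obstacle is Step 2, the quantitative polynomial resolvent bound inside the resonance-free strip. The exclusion-of-resonances argument in Theorem \ref{t:1g} must be sharpened into an honest a priori bound on $(P_\theta(h)-z)^{-1}$, which requires tracking all constants in the Ehrenfest-time propagation and in the pressure-driven sub-unitarity of the iterated propagator. Any fixed polynomial power $h^{-N}$ would in fact suffice to drive the maximum-principle step (only the coefficient in front of $\log(1/h)/h$ changes), so one does not need the sharp exponent $N=1$ at this stage.
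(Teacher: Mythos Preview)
Your proposal has the right ingredients but assembles them incorrectly, and it differs substantially from the paper's proof.

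\textbf{Internal inconsistency.} Your $\Omega_h = [E-\delta,E+\delta]-i[0,\gamma h]$ contains the real axis, so if Step~2 really gave $\|F(z,h)\|\le Ch^{-1}$ for all $z\in\Omega_h$, you would already have $\|F(E,h)\|\le Ch^{-1}$, a bound \emph{stronger} than \eqref{eq:t4}, and Step~3 would be superfluous. Your explanation that interpolation ``degrades'' $h^{-1}$ to $h^{-1}\log(1/h)$ cannot be right: the three-line theorem never produces a bound worse than the better of the two inputs.

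\textbf{Where $\log(1/h)$ actually comes from.} In the paper the logarithm does not arise from interpolation weights. It comes from time-integrating the decaying propagator: one writes $(P-iW-z)^{-1}\psi^w = (i/h)\int_0^\infty e^{itz/h}e^{-it(P-iW)/h}\psi^w\,dt$ for $\Im z>0$, and uses the hyperbolic dispersion estimate $\|e^{-it(P-iW)/h}\psi^w\|\le Ch^{-n/2}e^{-\lambda t}$. The propagator only becomes small after the Ehrenfest time $T_E(h)=n\log(1/h)/(2\lambda)$, and the integral over $[0,T_E]$ contributes $T_E/h\sim \log(1/h)/h$. The three-line argument (Lemma~\ref{l:three}) then merely \emph{extends} this bound from $\Im z>0$ down to $\Im z=0$ without loss.

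\textbf{What you are missing from the paper.} The paper does not attempt your Step~2 for $P_\theta$ directly. Instead it (i) proves the $\log(1/h)/h$ bound for the absorbing-potential resolvent $(P-iW-E)^{-1}$ (Proposition~\ref{c:resolvent}), then (ii) transfers this to $(P_{\theta,\epsilon}-E)^{-1}$ via an explicit parametrix $Q_1=Q_2+Q_3+Q_4+Q_5$: near the trapped set one uses $\chi_1(P-iW-E)^{-1}\chi_2$; the commutator error $[P,\chi_1](P-iW-E)^{-1}\chi_2$ is outgoing (Lemma~\ref{l:prA}) and is absorbed by integrating $U(t)$ over a bounded time; away from the trapped set one uses the resolvent of an auxiliary \emph{non-trapping} operator $P^\sharp$ (Lemma~\ref{l:b}). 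This gluing step is substantial and entirely absent from your outline.

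\textbf{Why your Step 2 is not immediate.} The propagator bounds in \S\ref{qd} are for the energy-localized operator $\tilde P$ and hold only for $0\le t\le M\log(1/h)$; they do not directly yield a resolvent bound on $P_\theta$ for $\Im z\le 0$ because the time integral representing the resolvent does not converge there. Turning the eigenstate contradiction of Theorem~\ref{t:1g} into an inhomogeneous a priori estimate $\|u\|\le Ch^{-N}\|(P_\theta-z)u\|$ on the strip would require a Duhamel argument controlling the tail $t>M\log(1/h)$, which the paper avoids by passing through $P-iW$ (whose propagator is globally subunitary).
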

Notice that the upper bound $C\, \log ( 1/h )/h $ is the same as in the one
obtained in the case of one hyperbolic orbit by Christianson \cite{Chr}.
To see how results of this type imply dynamical estimates see
\cite{BZ,Chr}. In the context of Theorem \ref{th:4},
the applications are presented in \cite{Chr07}. Referring to that 
paper for details and pointers to the literatures we present one
application. 

Let $ P = -h^2 \Delta_g $ be the Laplace-Bertrami 
operator satisfying the assumptions below, for instance on 
a manifold Euclidean outside of a compact set with the standard 
metric there. 
The Schr\"odinger propagator, $ \exp ( - it \Delta_g ) $,
is unitary on any Sobolev space so regularity
is not improved in propagation. Remarkably, when $ K = \emptyset $,
that is, when the metric is nontrapping, 
the regularity improves when we integrate in
time and cut-off in space:
\[ \int_0^T \| \chi \exp ( - i t \Delta_g ) u \|_{H^{1/2} ( X) }^2 dt \leq
C \| u \|_{L^2 ( X ) }^2 \,, \ \ \ \chi \in C^\infty_{\rm{c}} ( X ) \,, \]
and this much exploited effect is known as {\em local smoothing}.
As was shown by Doi \cite{Doi} 
any trapping (for instance a presence
of closed geodesics or more generally $ K \neq \emptyset $) 
will destroy local smpoothing.
Theorem \ref{th:4} implies that under the assumptions that 
the geodesic flow is hyperbolic on the trapped set $ K \subset S^* X$, and
that the pressure is negative at $ 1/2 $ (or, when $ \dim X = 2 $, 
that the dimension 
of $ K \subset S^* X $ is less than $ 2 $)
local smoothing holds with $ H^{1/2} $ replaced by
$ H^{1/2-\epsilon} $ for any $ \epsilon > 0 $.

\medskip

\noindent
{\bf Notation.} In the paper $ C $ denotes a constant the value
of which may changes from line to line. The constants which matter and
have to be balanced against each other will always have a subscript
$ C_1, C_2 $ and alike. The notation $ u = {\mathcal O}_V ( f ) $
means that $ \| u \|_{V } = {\mathcal O} ( f ) $, and the notation
$ T = {\mathcal O}_{V \rightarrow W } ( f ) $ means that 
$ \| T u \|_W = {\mathcal O} (f ) \| u \|_V $.

\section{Outline of the proof}
\label{oops}
It this section we present the main ideas whith the precise definitions
and references to previous works given in the main body of the paper.
The operator to keep in mind is $ P = P(h) = - h^2 \Delta_g + V $, where 
$ V \in \CIc ( X ) $, $ X = \RR^n $, and the metric $ g $ is 
Euclidean outside a compact set.
The corresponding classical Hamiltonian is given by $ p = \xi^2 + V ( x) $. 
Weaker assumptions, which in particular 
do not force the compact support of the 
perturbation, are described in \S \ref{ass}.

First we outline the proof of Theorem \ref{t:1g} in the
simplified case in which resonances are replaced by the {\em eigenvalues}
of an operator modified by a {\em complex absorbing potential}:
$$ 
P_W = P_W ( h ) \defeq P - i W \,,
$$
where $ W \in \CI ( X ; [ 0 , 1] ) $, 
satisfies the following conditions:
$$  
W \geq 0 \,, \qquad  \supp W \subset X \setminus B ( 0 , R_1 ) \,, \qquad 
W\rest_{ X \setminus B ( 0 , R_1 + r_1 )  } = 1 \,,
$$
for $ R_1 , r_1$ sufficiently large. In particular, $ R_1 $ is 
large enough so that $ \pi ( K_E)  \subset B( 0 , R_1 ) $, where
$ K_E $ is the trapped set given by \eqref{eq:trapp}. The 
non-self-adjoint operator $ P_W $ has a discrete spectrum in 
$ \Im z > - 1/ C $ and the analogue of Theorem \ref{t:1g}
reads:

\medskip
\noindent
{\bf Theorem \ref{t:1g}$'$.}{\em Under the assumptions of 
Theorem \ref{t:1g}, for
\be
\label{eq:pressp}
0 < \gamma < \min_{|E-E'|\leq\delta}(- \cP_{E'}( 1/2)) \,,  
\ee
there exits $ h_0 = h_0 ( \gamma, \delta ) $ such that 
for $ 0 < h < h_0 $,
\be
\label{eq:t12p}
\Spec ( P_W ( h ) ) \cap \left( [E-\delta, E+\delta] - i [0, h \gamma] \right)
= \emptyset \,.
\ee}

\medskip
This means that the spectrum of $ P_W ( h ) $ near $ E $ is 
separated from the real axis by $ h \gamma $, where $ \gamma $ is
given in terms of the pressure of the square root of the 
unstable Jacobian, $ \cP_E ( 1/2 ) $.

This spectral gap is equivalent to the fact that the {\em decay rate} of 
any eigenstate is bounded from below:
\[  
P_W\, u = z\, u \,, \ \ z \in D ( E, 1/C ) \,, \ \ u \in L^2 
\ \Longrightarrow \ \| \exp ( - i t P_W /h ) u \| \leq e^{- \gamma t }\,\|u\| \,. 
\]
This is the physical meaning of the gap between the spectrum 
(or resonances) and the real axis --- a lower bound for the 
quantum decay rate --- and the departing point for the proof.
To show \eqref{eq:t12p} we will show that for functions, $ u $, which are
microlocally concentrated near the energy layer $ \cE_E=p^{-1} ( E ) $
(that is, $ u = \chi^w ( x, h D ) u + {\mathcal O} ( h^\infty ) $ 
for a $\chi$ supported near $ \cE_E $) we have 
\begin{gather}
\label{eq:PW}
\begin{gathered} 
 \| e^{ - i t P_W / h } u \| \leq C\, h^{-n/2}\, e^{- \lambda t }\, \| u \| \,,
 \qquad 0 < \lambda < \min_{|E-E'|\leq\delta}(- \cP_{E'}( 1/2)) \,, \\ 
0 \leq t \leq \tM \log ( 1/h ) \,, 
\end{gathered}
\end{gather}
for any $ \tM $. Taking $ \tM\gg n/2\lambda$ and applying the estimate
to an eigenstate $ u $ gives \eqref{eq:t12p}.

To prove \eqref{eq:PW} we decompose the propagator using an open cover $(W_a)_{a\in A}$
of the neighbourhood $\cE_E^\delta$ of the energy surface. That cover 
is adapted to the definition of the pressure (see \S\S 
\ref{s:selection},\ref{s:covers}) and it leads to a microlocal
partition of a neighbourhood of the energy surface:
$$ 
\sum_{ a\in A } \Pi_a = \chi^w ( x, h D ) + \cO ( h^\infty )
\,, \quad
\chi \equiv 1 \ \text{ on $\cE_E^{\delta/8}$},\quad  \text{ess-supp}\;\Pi_a \Subset W_{a}\,.
$$
The definition of the pressure in \S\ref{s:selection} also involves 
a time $ t_0>1$, independent of $h$, but depending on the classical cover.
Taking 
\be\label{e:M-N}
N \leq M \log ( 1/h )\,\qquad N\in\NN,\qquad  M>0\ \text{fixed but arbitrary large},
\ee
the propagator at time $t=N\,t_0$ 
acting of functions $u$ microlocalized inside $\cE_E^{\delta/8}$ can be written as
\be\label{eq:bigs}  
e^{- i N t_0 P_W/ h}\,u =  
\sum_{ \alpha  \in A^N } U_{\alpha_N } \circ \cdots \circ U_{
\alpha_1 }\,u + \cO (h^\infty)\,\|u\| \,, \qquad  
U_a \defeq e^{- i t_0 P_W/ h} 
\, \Pi_a \,.
\ee
Most terms in the sum appearing on the right hand side of \eqref{eq:bigs}
are negligible. The sequences $ \alpha=( \alpha_1 , \cdots , \alpha_N ) $
which are classically forbidden, that is, for which the 
corresponding sequences of 
neighbourhoods are {\em not} successively connected by classical 
propagation in time $ t_0 $, lead to negligible terms. So do the sequences
for which the propagation crosses the region where $ W = 1 $:
the operator $ \exp ( - it_0 P_W / h ) $ is negligible there, 
due to damping (or ``absorption'') by $ W$.

As a result, the only terms relevant in the sum on the right
hand side of \eqref{eq:bigs} come from $ \alpha \in A_1^N \cap 
\cA_N$ where $ A_1 $ indexes the element of the partition 
intersecting the trapped set $ K_E $, and $ \cA_N $ are
the classically allowed sequences --- see \eqref{e:A_N}. We then 
need the crucial {\em hyperbolic dispersion estimate} proved in 
\S\ref{s:estimate} after much preliminary work in \S\S 
\ref{s:iterating} and \ref{s:elf}: for $ N \leq M \log (1/h)$, $M>0$ arbitrary, we have
for any sequence $\alpha\in A_1^N \cap \cA_N$:
\be\label{eq:crucial}
\| U_{\alpha_N } \circ \cdots \circ U_{\alpha_1} \| \leq 
 h^{-n/2} ( 1 + \eps_0 )^N 
\prod_{j=1}^N \left( 
\inf_{\rho\in W_{\alpha_j} \cap
K^\delta_E }  \det\big(d\Phi^{t_0} ( \rho ) \rest_{ E^{+0}_ \rho  } \big) 
\right)^{-\frac12} 
\,.
\ee
The expression in parenthesis is the coarse-grained 
unstable Jacobian defined in \eqref{e:coarse-jac}, and
$\eps_0>0$ is a parameter depending on the cover $(W_a)$, 
which can be taken arbitrarily small
--- see \eqref{eq:press2}. From the definition of the pressure in \S\ref{s:selection},
summing \eqref{eq:crucial} over $ \alpha \in A_1^N \cap {\mathcal A}_N $ 
leads to \eqref{eq:PW}, with $\tM=Mt_0$.

In \S \ref{re} we show how to use \eqref{eq:PW} to obtain a
resolvent estimate for $ P_W $: at an energy $E$ for which the flow
is hyperbolic on $ K_E $ and $ \cP_E ( 1/2 ) < 0 $, we have
\begin{equation}
\label{eq:PWE}
\| ( P_W - E )^{-1} \|_{ L^2 ( X ) \rightarrow L^2 ( X ) } 
\leq C\, \frac{\log ( 1/h ) } h \,,\qquad 0 < h < h_0\,.
\end{equation}
To prove Theorem \ref{t:1g}, that is the gap between 
{\em resonances} and the real axis, we use the complex scaled
operator $ P_\theta $~: its eigenvalues near the real axis are resonances of $P$. 
If $ V $ is a decaying real analytic potential extending to a conic
neighbourhood of $ \RR^n $ (for instance a sum of three 
Gaussian bumps showed in Fig.~\ref{Fblock}), then we can 
take $ P_\theta = -h^2 e^{-2i\theta} \Delta + V ( e^{i \theta} x ) $,
though in this paper we will always use exterior complex scaling
reviewed in \S\ref{defcs}, with $ \theta \simeq M_1 \log(1/h)/h $,
where $ M_1 $ is chosen depending on $ M $ in \eqref{e:M-N}.

To use the same strategy of estimating $ \exp ( - i t P_\theta/ h ) $
we need to further modify the operator by conjugation with 
microlocal exponential weights. That procedure is described in \S\ref{qd}.
The methods developed there are also used in the proof of 
Theorem \ref{t:2} and in showing how the estimate \eqref{eq:PWE}
implies Theorem \ref{th:4}.

Since we concentrate on the more complicated, and scientifically 
relevant, case of resonances, the additional needed facts about the
study of $ P_W $ and its propagator are presented in the Appendix.

\section{Preliminaries and Assumptions}
\label{pr}
In this section we recall basic concepts of semiclassical 
analysis, state the general assumptions on operators
to which the theorems above apply, define hyperbolicity and 
topological pressure. We also define resonances using 
{\em complex scaling} which is the standard tool in the
study of their distribution. Finally, we will review
some results about semiclassical Lagrangian 
states and Fourier integral operators.

\subsection{Semiclassical analysis}\label{s:semiclass}
Let $ X $ be a $\CI$ manifold which agrees with $ \RR^n $ outside a
compact set, or more generally
\[  X = X_0 \sqcup (\RR^n \setminus B( 0 , R_0 ) ) \sqcup \cdots \sqcup
(\RR^n \setminus B( 0 , R_0 ) )  \,, \ \ X_0 \Subset X \,. \]
A weight function on $ T^* X $ is of the form 
\[ m : T^*X \longrightarrow ( 0 , \infty ) \,, \ \
m ( x , \xi ) \leq C ( 1 + d ( x , y ) + | \xi - \eta | )^N m ( y , \eta )
\,, \]
where $ d ( x , y ) $ is a distance function on $ X $, and any 
uniform choice of distance in the fibers is allowed --- 
the usual Euclidean distance is taken outside $ X_0 $. The typical 
choice is $ m ( x, \xi ) = 1 + |\xi|^2_g $, for a metric $ g $.

The class of 
symbols associated to the weight $ m$ is defined as 
$$
S^k_\delta ( T^* X , m ) = \set{ a \in \CI( T^* X \times (0, 1]  ) :
|\partial_x ^{ \alpha } \partial _\xi^\beta a ( x, \xi ;h ) | \leq
C_\alpha m ( x , \xi ) h^{-k-\delta ( | \alpha| + |\beta |). } } \,.
$$
Most of the time we will use the class with $ \delta = 0 $
in which case that we drop the subscript. When $ m \equiv 1 $ and $ k =0$,
we simply write $ S ( T^*X ) $ or $ S $ for the class of symbols. 

We denote by
$ \Psi_{h,\delta}^{k} ( X, m ) $  or $\Psi_{h}^{k} ( X, m )$
the corresponding  class of pseudodifferential operators.
We have surjective quantization and symbol maps:
\[ 
 \Op \; : \; S^{ k } ( T^* X, m  ) \ \longrightarrow 
  \Psi^{k}_h ( X, m ) \,, \quad
   \sigma_h \; : \; \Psi_h^{k} ( X, m  ) \ \longrightarrow 
  S^{ k } ( T^* X , m ) / S^{ k-1} ( T^* X , m ) \,.
\]
Multiplication of symbols corresponds to composition of 
operators, to leading order:
\[
 \sigma_h ( A \circ B ) = \sigma_h ( A)\sigma _h ( B ) \,, \]
and
\[ \sigma_h \circ \Op : S^{k} ( T^* X ,m ) 
\ \longrightarrow  S^{ k } ( T^* X, m  ) / S^{ k-1} ( T^* X, m  ) \,,
\]
is the natural projection map. A finer filtration can be obtained
by combining semiclassical calculus with the standard calculus
(or in the yet more general framework of the Weyl calculus) --- see
for instance \cite[\S 3]{SjZw02}.

The class of operators and the quantization map are defined locally using the
definition on $ \RR^n $:
\be\label{eq:weyl}
  \Op (a) u ( x') = a^w(x,hD) u(x') = \frac1{ ( 2 \pi h )^n } 
  \int \int  a \big( \frac{x' + x }{2}  , \xi \big) 
e^{ i \la x' -  x, \xi \ra / h } u ( x ) dx d \xi \,, 
\ee
and we refer to \cite[Chapter 7]{DiSj} for a detailed discussion, and 
to \cite[Appendix D.2]{EZ} for the semiclassical calculus on manifolds.

The semiclassical Sobolev spaces, $ H_h^s ( X ) $  are defined by 
choosing a globally elliptic, self-adjoint operator, 
$ A \in \Psi_h ( X, \langle \xi \rangle ) $ (that is an operator satisfying 
$ \sigma ( A) \geq \langle \xi \rangle / C $ everywhere) and
putting 
$$  
\| u \|_{ H_h^s } = \| A^{s} u \|_{L^2 ( X ) } \,.
$$
When $  X = \RR^n $, 
$$
  \| u \|^2_{ H_h^s } \sim  \int_{\RR^n} 
\langle \xi \rangle^{2s} |{\mathcal F}_h u ( \xi ) |^2 d \xi \,, \ \ 
\ {\mathcal F}_h u ( \xi ) \stackrel{\rm{def}}{=} 
\frac{1}{ ( 2 \pi h )^{n/2} }\int_{\RR^n } u ( x ) e^{ - i \langle x , \xi \rangle/h } d x \,.
$$
Unless otherwise stated all norms in this paper, $ \| \bullet \| $, 
are $ L^2 $ norms.

For $ a \in S ( T^* X ) $ we follow \cite{SjZw02} and 
say that the {\em essential support} is equal to a given compact
set $ K \Subset T^*X $, 
$$
  {\text{ess-supp}}_h\; a = K \Subset T^*X\,, 
$$
if and only if   
$$
 \forall \, \chi \in S ( T^*X ) \,, \ \supp \chi \subset 
\complement K \ \Longrightarrow
 \ \chi \, a \in h^\infty {\mathcal S} ( T^* X) \,.
$$
Here $\cS $ denotes the 
Schwartz class which makes sense since $ X $ is Euclidean outside a compact.
In this article we are only concerned with a purely semiclassical 
theory and deal only with {\em compact} subsets of $ T^* X $. 

For $ A \in \Psi_h ( X) $,  $  A = \Op ( a ) $, we put
$$
\WFh ( A) =  \text{ess-supp}_h\; a \,,
$$
noting that the definition does not depend on the choice of $ \Op $.

We introduce the following condition
\be\label{eq:tempu}  
u \in \CI ( ( 0 , 1]_h ; {\mathcal D}' ( X) ) 
\,, \ \ 
\ \exists \; P\,, h_0 \,, \   \   \| \langle x \rangle^{-P } u \|_{ L^2
 ( X ) } 
\leq h^{-P} \,, \ h < h_0 \,, 
\ee
and call families, $ u = u (h) $, satisfying \eqref{eq:tempu} $ h$-tempered.
What we need is that for $ u ( h ) $, $h$-tempered, $ \chi^w ( x,h D) u ( h ) 
\in h^\infty {\mathcal S} ( X )$ for $ \chi \in h^\infty {\mathcal S}
( T^* X ) $. That is, applying an operator in the residual 
class produces a negligible contribution.

For such $h$-tempered families we define the semiclassical $ L^2$-wave 
front set~:
\be\label{eq:defWF}
\WFh ( u ) =   \complement \big\{  ( x, \xi ) 
\; : \; \exists \, a \in S ( T^* X ) \,, \  \ a ( x, \xi ) \neq 0 
\,, \ \| a^w ( x , h D )  u \|_{L^2} = {\mathcal O}( h^\infty) \big\}  
 \,.
\ee
The last condition in the definition can be equivalently replaced
with 
\[ a^w ( x , h D ) u \in h^\infty  \CI ( ( 0 , 1]_h ; \CI ( X) )  \,,\]
since we may always take $ a \in {\mathcal S} ( T^* X ) $.

Equipped with the notion of semiclassical wave front set, 
it is useful and natural to 
consider the operators and their properties {\em microlocally}. For that 
we consider the classe of {\em tempered} operators, 
$ T=T(h)  \; : \; \cS ( X )  \rightarrow \cS'  ( X) $, 
defined by the condition
$$
\exists \, P , h_0 \,,  \qquad
\| \langle x \rangle^{-P} T u \|_{ H_h^{-P} ( X ) }  
\leq h^{-P} 
\| \langle x \rangle^{P} u \|_{ H_h^P ( X ) } \,, \ \ 0 < h < h_0 \,.
$$
For open sets, $ V \subset \overline V \Subset T^* X$, $ U 
\subset \overline U \Subset T^* X $, the operators 
{\em defined microlocally} near  $ V \times U $ 
are given by the following equivalence classes of tempered operators:
\begin{gather}
\label{eq:2.7}
\begin{gathered}
T \sim T'\ \text{if and only if there exist open sets} \\ 
\widetilde U,\,\widetilde V\Subset T^* X,\ \ \overline  U \Subset  \widetilde U,\ \
\overline  V \Subset  \widetilde V,\ \ \text{such that}\\
A ( T - T' ) B = \cO_{{\cS}' \to {\cS }} ( h^\infty )\,, \\
\text{for any }A, B \in \Psi_h ( X ) \ \ \text{with }\quad
\WF_h ( A ) \subset \widetilde V \,, \ \ \WF_h ( B ) \subset \widetilde U\,.\\
\end{gathered}
\end{gather}
For two such operators $T,T'$ we say that $ T = T' $ {\em microlocally} 
near $ V \times U$. 
If we assumed that, say
$ A = a^w ( x, h D ) $, where $ a \in \CIc ( T^* X ) $
then $ \cO_{\cS'\to \cS} ( h^\infty ), $  could be replaced by 
$ \cO_{ L^2 \to L^2 } ( h^\infty ) $ in the condition.
We should stress that ``microlocally'' is always meant in this 
semi-classical sense in our paper.

The operators in $ \Psi_h ( X ) $ are bounded on $ L^2 $ uniformly 
in $ h $. 
For future reference we also recall the sharp G{\aa}rding 
inequality (see for instance \cite[Theorem 7.12]{DiSj}): 
\begin{equation}
\label{eq:sharpg}
  a \in S ( T^* X )\,, \quad a \geq 0 \ \Longrightarrow \ 
\langle a^w ( x , h D ) u , u \rangle \geq - C h \| u \|_{L^2}^2 \,, \quad
u \in L^2 ( X ) \,,
\end{equation}
and Beals's characterization of pseudodifferential operators on $ X $
(see \cite[Chapter 8]{DiSj} and \cite[Lemma 3.5]{SjZw04} for the 
$ S_\delta $ case)~:
\begin{equation}
\label{eq:beals} A \in \Psi_{h,\delta} ( X  ) 
\; \Longleftrightarrow \; \left\{ \begin{array}{l} \| \ad_{W_N} 
\cdots \ad_{ W_1  } A \|_{ L^2 \rightarrow L^2 } = {\mathcal O} 
( h^{(1-\delta) N } ) 
 \\
\ \\
 \ \forall \, W_j \in {\rm{Diff}}^1(X) \,, \quad  j =1, \cdots, N \,,\\
\ \\
\text{$ W_j = \langle a , hD_x \rangle + \langle b , x \rangle $, $ a, b \in \RR^n$,
outside $ X_0 $.}
\end{array} \right. \end{equation}
Here $ \ad_B C = [ B , C ] $.

\subsection{Assumptions on $ P ( h ) $}
\label{ass}

We now state the general assumptions on the operator $ P = P(h) $,
stressing that the simplest case to keep in mind is 
\[ 
P = - h^2 \Delta + V( x )  \,, \ \ V \in \CIc ( \RR^n ) \,.
\]
In general we consider
\[ 
P ( h ) \in \Psi_h ( X, \langle \xi \rangle^2  ) \,, 
\ \ P ( h ) = P ( h)^* \,, 
\]
and an energy level $ E>0$, for which
\begin{gather}
\label{eq:gac}
\begin{gathered}
 P(h) = p^w ( x, hD) + h p_1^w ( x , h D; h ) \,, \ \ 
p_1 \in S ( T^* X, \langle \xi \rangle^2 ) 
\,, \\
 | \xi | \geq C \ \Longrightarrow \ 
p ( x , \xi ) \geq \langle \xi \rangle^2 / C \,, 
\ \ \ p = E \ \Longrightarrow dp \neq 0 \,,
\\
\exists \; R_0,  \ \forall \;  
u \in \CI ( X \setminus B ( 0 , R_0) )\,, \ \  P ( h ) u ( x ) = 
Q ( h ) u ( x ) \,. 
\end{gathered}
\end{gather}
Here the operator near infinity takes the following form on each ``infinite branch'' 
$\RR\setminus B(0,R_0)$ of $X$:
$$
 Q(h)  =\sum_{|\alpha|\leq2} a_{\alpha}(x;h){(hD_x)}^{\alpha}    \,,
$$
with 
$a_\alpha(x;h)=a_\alpha(x)$ independent of $h$ for $|\alpha|=2$,
$a_{\alpha}(x;h)\in C_b^\infty(\RR^n)$ uniformly bounded with respect to
$h$ (here $C_b^\infty(\RR^n)$
denotes the space of $C^\infty$ functions with bounded derivatives 
of all orders), and 
\begin{gather}
\label{eq:valid}
\begin{gathered}
  \sum_{|\alpha|=2} a_\alpha(x) {\xi}^\alpha\geq {(1/c)}{|\xi|}^2, \;\; 
{\forall\xi\in\RR^n}\,, \text{ for some constant $c>0$, } \\
\sum_{|\alpha|\leq2}a_\alpha(x;h){\xi}^\alpha \, \longrightarrow\, 
{\xi}^2 \,,\quad  \text{as $ |x|\to \infty$, uniformly with respect to $h$.}
\end{gathered}
\end{gather}
We also need the following analyticity
assumption in a neighbourhood of infinity:
there exist $\theta_0\in {[0,\pi)},$ $\epsilon>0$ such that the
coefficients $a_\alpha(x;h)$ of $Q(h)$ extend holomorphically in $x$ to
$$\set{ r\omega: \omega\in {{\mathbb C}^n}\,, \ \ 
 \text{dist}(\omega,{\bf{S}}^n)<\epsilon\,, \ 
r\in {{\mathbb C}}\,, \   |r|>R_0\,, \ 
 \text{arg}\,r\in [-\epsilon,\theta_0+\epsilon) }\,, $$
with \eqref{eq:valid} valid also in this larger set of $x$'s.
Here for convenience we chose the same $ R_0 $ as the one 
appearing in \eqref{eq:gac}, but that is clearly irrelevant.

We note that the analyticity assumption in a conic neighbourhood
near infinity automatically 
strengthens \eqref{eq:valid} through an application of Cauchy 
inequalities:
\be
\label{eq:validC}
\partial_x^\beta 
\left( \sum_{|\alpha|\leq2}a_\alpha(x;h){\xi}^\alpha - 
{\xi}^2 \right) \leq  | x|^{-|\beta|}\, f_{|\beta|}(|x|)\, \langle \xi \rangle^2
\,, \ \ x \longrightarrow \infty 
\,,
\ee
where for any $j\in\NN$ the function $f_j(r)\searrow 0$ when $r\to\infty$.

\subsection{Definitions of hyperbolicity and topological pressure}
\label{defhyp}

We use the notation
\[ 
\Phi^t ( \rho ) = \exp ( t H_{p} ) ( \rho ) \,, \ \ 
\rho = ( x , \xi ) \in T^* X \,,
\]
where $ H_{p} $ is the Hamilton vector field of $ p $, 
\[ 
H_p \defeq \sum_{i=1}^n \frac{\partial p }{\partial \xi_i} 
\frac{\partial}{\partial x_i} - \frac{\partial p}{\partial x_i} 
\frac{\partial}{\partial \xi_i}=\{p,\cdot\} \,, 
\]
in local coordinates in $ T^*X $. The last expression is the Poisson bracket
relative to the symplectic form $\omega = \sum_{i=1}^n d\xi_i\wedge dx_i$.

We assume $ p = p ( x , \xi ) $ and $E>0$ satisfy the assumptions 
\eqref{eq:gac} and \eqref{eq:valid} of \S\ref{ass}, 
and study the flow $\Phi^t$ generated by $p$
on $\cE_E$.
The incoming and outgoing sets, $ \Gamma_E^\pm $, and the trapped
set, $ K_E $, are given by \eqref{eq:gaha} and \eqref{eq:trapp}
respectively.

We say that the flow $\Phi^t$ 
is {\em hyperbolic on $ K_E $}, if
for any $\rho\in K_{E} $,
the tangent space to $\cE_{E}$ at $\rho$ splits into flow, unstable and stable 
subspaces \cite[Def.~17.4.1]{KaHa}: 
\be\label{eq:aa}
\begin{split}
i)\ & T_\rho (\cE_{E}) = \RR H_p (\rho ) \oplus E^+ _ \rho \oplus 
E^- _ \rho \,, \quad  \dim E^\pm _ \rho  = 1 \\
ii)\ &  d \Phi^t_\rho ( E^\pm_\rho ) = E^\pm _{ \Phi^t ( \rho ) }\,, 
\quad \forall t\in\IR\\
iii)\ & \exists \; \lambda > 0\,,\quad  \| d\Phi^t_\rho ( v ) \| \leq C e^{-\lambda |t| } \| v \| \,,
\quad \text{for all $ v \in E^\mp_ \rho  $, $ \pm t \geq 0 $.}\\
\end{split}
\ee
$K_E$ is a {\em locally maximal hyperbolic set} for the flow
$\Phi^t\rest_{\cE_E}$. The following properties are then satisfied:
\be
\label{eq:aaa} 
\begin{split}
iv)\ & K_{E}  \ni \rho \longmapsto E^{\pm}_ \rho  \subset
T_\rho (\cE_{E}) \
\text{ is H\"older-continuous} \\
v)\ &\text{any }\rho\in K_E\ 
\text{admits local (un)stable manifolds } W^{\pm}_{\loc}(\rho)\text{ tangent to }
E^\pm_\rho\\
vi)\ & \text{There exists an ``adapted'' metric $g_{\rm{ad}}$ near $K_E$
such that one can take $C=1$ in}\ iii).
\end{split}
\ee
The adapted metric $g_{\rm{ad}}$ can be extended to 
the whole energy layer, such as to coincide with the standard Euclidean metric
outside $T^*_{B(0,R_0)}X$.
We call 
\be\label{eq:weaks}
 E^{+0}_\rho \defeq E^+_ \rho \oplus\RR H_p(\rho) \,,  \qquad
E^{-0}_\rho \defeq E^-_\rho \oplus\RR H_p(\rho)\,,
\ee 
the weak unstable and weak stable 
subspaces at the point $\rho$ respectively. Similarly, we denote by
 $W^{+0}(\rho)$
(respectively $W^{-0}(\rho)$) the weak unstable (respectively stable) manifold. 
The ensemble of all the (un)stable
manifolds $W^{\pm}(\rho)$ forms the (un)stable {\em lamination} on $K_E$, and one
has 
$$
\Gamma_E^{\pm} = \cup_{\rho\in K_E} W^{\pm}(\rho)\,.
$$


If periodic orbits
are dense in $K_E$, then the flow is said to be {\em Axiom A} on $K_E$ \cite{BowRue}.

Such a hyperbolic set is {\em structurally stable} \cite[Theorem~18.2.3]{KaHa}, so that
\be\label{e:struct-stab}
\exists\delta>0,\ \forall E'\in [E-\delta,E+\delta],\quad
\text{$K_{E'}$ is a hyperbolic set for $\Phi^t\rest_{\cE_{E'}}$.}
\ee
Since the topological pressure plays
a crucial r\^{o}le in the statement and proof of 
Theorem \ref{t:1g}, we recall its definition in our
context (see \cite[Definition 20.2.1]{KaHa} or 
\cite[Appendix A]{PeSa}). 

Let $d$ be the distance function associated with the adapted metric.
We say that a set $ \cS \subset K_E $ is 
{\em $(\epsilon,t)$-separated} if for $ \rho_1, \rho_2 \in \cS $, 
$ \rho_1 \neq \rho_2 $, we have $ d ( \Phi^{t'} ( \rho_1 ) , 
\Phi^{t'} ( \rho_2 ) )>\eps $ for some $ 0 \leq t' \leq t $. Obviously,
such a set must be finite, but its cardinal may grow exponentially with $t$.
The metric $g_{\rm{ad}}$ induces a volume form $\Omega$ on any $n$-dimensional 
subspace of $T(T^*\RR^n)$. 
Using this volume form, we now define the
unstable Jacobian on $K_E$. For any $\rho\in K_E$, the determinant map
$$
\wedge^{n}\,d\Phi^t (\rho) \rest_{E^{+0}_ \rho}:\wedge^{n}E^{+0}_ \rho\longrightarrow 
\wedge^{n}E^{+0}_{\Phi^t(\rho)}
$$
can be identified with the real number
\be\label{e:determ}
\det\big(d\Phi^t ( \rho ) \rest_{ E^{+0}_\rho} \big)\defeq 
\frac{\Omega_{\Phi^t(\rho)}(d\Phi^t v_1\wedge d\Phi^t v_2\wedge\ldots\wedge d\Phi^t v_{n})}
{\Omega_{\rho}( v_1\wedge  v_2\wedge\ldots\wedge  v_{n})}\,,
\ee
where $(v_1,\ldots,v_{n})$ can be any basis of $E^{+0}_ \rho$.
This number defines the unstable Jacobian:
\be\label{e:unst-jac}
\exp \lambda^+_t(\rho) \defeq 
 \det\big(d\Phi^t ( \rho ) \rest_{ E^{+0}_ \rho  } \big)\,.
\ee
From there we take
\be\label{e:partitionf}
Z_t ( \eps , s) \defeq \sup_{ \cS} 
\sum_{ \rho \in \cS } \exp \left(- s\, \lambda^+_t ( \rho) \right) \,, 
\ee
where the supremum is taken over all $(\epsilon,t)$-separated sets. 
The pressure is then defined as
\be\label{eq:defpr}
 P_E ( s ) \defeq \lim_{\eps \to 0 } 
\limsup_{ t \to \infty }\; \frac 1 t \log Z_t ( \eps , s ) \,.
\ee
This quantity is actually independent of the volume form $\Omega$:
after taking logarithms, a change in $\Omega$
produces a term $ \cO ( 1 ) / t $ which is irrelevant in the 
$ t \to \infty $ limit. 

We remark that the standard definition
of the unstable Jacobian consists in restricting $\Phi^t ( \rho )$ on the 
{\em strong} unstable subspace $E^{+}_ \rho $; yet, 
including the flow direction in the definition 
\eqref{e:unst-jac} does not alter the pressure, and is better
suited for the applications in this article.
In \S\ref{s:pressure} we will give a different equivalent definition of the 
topological pressure, more adapted to our aims.

\subsection{Definition of resonances through complex scaling}\label{defcs}

We briefly
recall the complex scaling method -- see \cite{Sj} and references
given there.
Suppose that $ P = P ( h ) $ satisfies the assumptions of 
\S \ref{ass}. Here we can consider $ h $ as a fixed parameter
which plays no r\^ole in the definition of resonances. 

For any $\theta\in[0,\theta_0]$, let $ \Gamma_\theta
\subset \CC^n $ be a totally real contour with the following properties:
\begin{gather}
\label{eq:gpr}
\begin{gathered}
\Gamma_\theta \cap B_{\CC^n } ( 0 , R_0 ) = B_{\RR^n } ( 0 , R_0 ) \,, \\
\Gamma_\theta \cap \CC^n \setminus B_{\CC^n } ( 0 , 2 R_0 ) =
e^{ i \theta } \RR^n \cap \CC^n \setminus B_{\CC^n } ( 0 , 2 R_0 ) \,,\\
\Gamma_\theta = \{ x + i F_\theta ( x) \; : \; x \in \RR^n \} \,,  \  \
\partial_x^{\alpha}  F_\theta ( x ) =
\Oo_{\alpha} ( \theta )  \,.
\end{gathered}
\end{gather}
Notice that $F_\theta(x)=(\tan\theta) x$ for $|x|>2R_0$. By gluing $\Gamma_\theta\setminus B(0,R_0)$
to the compact piece $X_0$ in place of each infinite branch $\RR^n\setminus B(0,R_0)$, we obtain
a deformation of the manifold $ X $, which we denote by $ X_\theta $.

The operator $ P$ then defines a dilated operator:
\[ 
P_\theta \defeq P^\sharp\rest_{ X_\theta } \,,
\ \  P_\theta u  = P^\sharp ( u^\sharp ) \rest_{X_\theta } \,,
\]
where $ P^\sharp $ is the holomorphic continuation of the
operator $ P $, and $ u^\sharp$ is an almost analytic extension of
$ u \in \CIc ( X_\theta ) $ 

For $ \theta $ fixed and $E>0$, the scaled operator $ P_\theta - E $
is uniformly elliptic in $ \Psi_h ( X_\theta, \langle \xi \rangle^2 ) $, outside a 
compact set,
hence the resolvent, 
$ ( P_\theta - z )^{-1} $, is meromorphic for  $ z \in D ( E , 1 /C ) $.
We can also take $ \theta $ to be $ h $ dependent 
and the same statement holds for $ z \in D ( E , \theta / C ) $. 
The spectrum of $ P_\theta $ with $ z \in D ( E , \theta / C ) $ 
is independent of $ \theta $ and 
consists of {\em quantum resonances} of $ P$. The latter are generally defined as the
poles of the meromorphic continuation of
$$
( P - z )^{-1} \; : \; \CIc ( X ) \; \longrightarrow \; \CI ( X )
$$
from $D ( E , \theta /C )\cap \{\Im z>0\}$ to $D ( E , \theta /C )\cap\{\Im z <0\}$.
The {\em resonant states} associated with a resonance $ z $, $ \Re z \sim E >0$, 
$ |\Im z| < \theta / C $, are solutions to $ ( P - z ) u = 0 $ satisfying
\begin{gather}
\label{eq:gath}
\begin{gathered}
\ \exists \, U \in \CI \left( \Omega_\theta \right) \,, \ \
\Omega_\theta \defeq \bigcup_{ -\epsilon < \theta' < \theta + \epsilon } X_{\theta'} \\ 
  u = U\rest_X \,, \ \ u_{\theta'} = U\rest_{X_{\theta'} } \,, \ \
( P_{\theta'}  - z ) u_\theta' =0 \,, \ \ 0 < \theta' < 
\theta \,, \ \ u_\theta \in L^2 ( X_\theta ) \,. 
\end{gathered}
\end{gather}
If the multiplicity of the pole is higher there is a possibility of
more complicated states but here, and in Theorem \ref{t:2}, we 
consider only resonant states satisfying $ ( P - z ) u = 0 $.
At any pole of the meromorphically continued resolvent, 
such states satisfying \eqref{eq:gath}
always exist. We shall also call a nontrivial $ u_\theta $ 
satisfying $ ( P_\theta - z ) u_\theta = 0$, $ u_\theta \in L^2 ( X_\theta ) $,
a resonant state.

If $ \theta $ is small, as we shall always assume, we identify 
$ X $ with $ X_\theta $ using the map, $ R : X_\theta \rightarrow X $, 
\be\label{eq:idt}
X_\theta \ni x \longmapsto \Re x \in X \,,
\ee
and using this identification, consider $ P_\theta $ as an operator on $ X $,
defined by $ (R^{-1})^* P_\theta R^* $. 
We note that in the identification of $ L^2 ( X ) $ with $ L^2 ( X_\theta ) $
using $ x \mapsto \Re x $, 
\[ 
C^{-1}\,\| u ( h ) \|_{ L^2 ( X) } 
\leq \| u ( h ) \|_{ L^2 ( X_\theta ) } \leq C\, \| u ( h ) \|_{L^2 (X ) }\,,
\]
with $ C $ independent of $ \theta $ if $ 0 \leq \theta \leq 1/C_1 $.

For later use we conclude by describing the principal symbol of $ P_\theta $, 
as an operator on $ L^2 ( X )$ using the identification above:
\begin{equation}
\label{eq:pth}
p_\theta ( x, \xi ) = p ( x + i F_\theta ( x) , [( 1 + i dF_\theta (x)^t)]^{-1}
\xi ) \,, 
\end{equation}
where the complex arguments are allowed due to the analyticity of 
$ p ( x, \xi ) $ outside of a compact set --- see \S \ref{ass}.
In this paper we will always take $ \theta = \Oo (\log ( 1/h) h) $
so that $ p_\theta ( x, \xi ) - p ( x , \xi ) = 
\Oo ( \log( 1/h) h ) \langle \xi \rangle^2 $. More precisely,
\begin{equation}
\label{eq:pth1}
\begin{split}
& \Re p_\theta ( x , \xi) = p ( x, \xi ) + {\mathcal O} ( \theta^2) 
\langle \xi \rangle^2 
 \,, \\ 
& \Im p_\theta ( x, \xi ) = -  d_\xi p ( x, \xi ) [ dF_\theta ( x )^t \xi ]
+ d_x p (x , \xi ) [ F_\theta ( x ) ] + \Oo ( \theta^2) 
\langle \xi \rangle^2  \,. 
\end{split}
\end{equation}
In view of (\ref{eq:valid}) and (\ref{eq:validC}), 
we obtain the following estimate 
when $ |x | \geq R_0 $: 
\be\label{eq:pth2} 
\Im p_\theta ( x, \xi ) = - 2 \langle dF_\theta ( x ) \xi , \xi \rangle
+ \Oo\big(\theta\, (f_0(|x|) + f_{1}(|x|)) + 
\theta^2 \big) \langle \xi\rangle^2 \,,
\ee
where 
$f_j(r) 
\rightarrow 0$ as $r\to\infty$. 
In particular, if $R_0$ is taken large enough,
\be\label{eq:pth3}
(x,\xi)\in \cE_E^{\delta},\ \ 
|x | \geq 2\,R_0 \Longrightarrow  \Im p_\theta ( x, \xi) \leq - C \theta \,.
\ee

\section{Semiclassical Fourier integral operators and their iteration}\label{fio}
The crucial step in our argument is the analysis of compositions of a
large number --- of order $ \log ( 1/h ) $ --- of local Fourier integral operators.
This section is devoted to general aspects of that procedure, which will 
then be applied in \S\ref{s:estimate}.

\subsection{Definition of local Fourier integral operators}
We will review here the {\em local} theory of these operators
in the semiclassical setting. Let $ \kappa : T^* \RR^n \rightarrow
T^* \RR^n $ be a local diffeomorphism defined near $ ( 0 , 0 ) $, and satisfying
\be\label{e:fixed}
\kappa ( 0 , 0 ) = ( 0 , 0 ) \,, \quad \kappa^* \omega = \omega \,. 
\ee
(Here $\omega$ is the standard symplectic form on $T^* \RR^n$).
Let us also assume that the following projection from the graph of $ \kappa$, 
\be\label{eq:surj}
  T^* \RR^n \times T^* \RR^n \ni 
( x^1,\xi^1 ; x^0 , \xi^0 ) \longmapsto ( x^1, \xi^0 ) 
\in \RR^n \times \RR^n \,, \quad (x^1,\xi^1)=\kappa ( x^0 , \xi^0 )\,,
\ee
is a diffeomorphism near the origin. It then follows that there
exists (up to an additive constant) a unique function
$ \psi \in \CI ( \RR^n \times \RR^n ) $, 
such that for $ ( x^1, \xi^0) $
near $ ( 0 , 0 ) $, 
$$
\kappa ( \psi'_\xi ( x^1, \xi^0 ) , \xi^0 ) = ( x^1, \psi'_x ( x^1, \xi^0 ) ) \,, \ \ 
\det \psi''_{ x \xi } \neq 0 \,, \ \ \psi ( 0, 0 ) = 0 \,. 
$$
The function $\psi$ is said to {\em generate} the transformation $\kappa$ near $(0,0)$.
The existence of such a function $\psi$ in a small neighbourhood of $(0,0)$ is equivalent with
the following property: the $n\times n$ block $\big(\frac{\partial x^1}{\partial x^0}\big)$ 
in the tangent map $d\kappa(0,0)$ is invertible.

A {\em local semiclassical quantization of $ \kappa $} is an operator $T=T(h)$ acting
as follows~:
\be\label{eq:lfio}
T\, u ( x^1 ) \defeq \frac{1}{ (2 \pi h)^n } \int \int e^{ i ( \psi ( x^1, \xi^0 ) -
\langle x^0 , \xi^0 \rangle )/h } \alpha ( x^1, \xi^0 ; h ) u ( x^0 ) d x^0 d \xi^0 \,.
\ee
Here the amplitude $ \alpha $ is of the form
$$ 
\alpha ( x, \xi; h ) = \sum_{ j=0}^{L-1} h^j\,\alpha_j ( x, \xi )  + 
h^L\, \tilde\alpha_L ( x, \xi ; h ) \,, \quad \forall L\in\NN \,, 
$$
with all the terms, $ \alpha_j \,, \tilde\alpha_L \in S ( 1 ) $, 
supported in a fixed neighbourhood of $ ( 0 , 0 ) $. Such an operator $T$ is
a local Fourier integral operator associated with $\kappa$. 

We list here several basic properties of $ T $ -- see for instance
\cite[\S 3]{SjZw02} and \cite[Chapter 10]{EZ}:
\begin{itemize}
\item We have $ T^* T = A^w ( x, h D ) $, $ A \in S ( T^* \RR^n ) $, 
\be\label{eq:unit}  A ( \psi'_\xi (x^1, \xi^0 ) , \xi^0 ) = 
\frac{| \alpha_0 ( x^1 , \xi^0 ) |^2}{ |\det \psi_{x \xi }'' ( x^1 , \xi^0 ) 
|}   + {\mathcal O}_{S(1)} ( h ) \,.
\ee
In particular, $ T $ is bounded on $ L^2 $ uniformly with 
respect to $ h $.

If $ T^* T = I $ microlocally near $U\supset ( 0 , 0 ) $, then
\be\label{e:unitarity}
|\alpha_0 ( x^1 , \xi^0 ) | = |\det \psi_{x \xi}'' ( x^1 , \xi^0 ) |^{\frac12} \quad
\text{for $(x^0,\xi^0)$ near}\ \ U.
\ee 

\medskip

\item If $ \alpha ( 0 , 0 ) \neq 0 $ then $ T $ is microlocally 
invertible near $ ( 0 , 0) $: there exists an operator $S$
of the form \eqref{eq:lfio} quantizing $ \kappa^{-1} $, such that
$ST = I$ and $TS = I$ microlocally near $(0,0)$.

\medskip

\item 
For $ b \in S ( 1 ) $, 
$$
T\, b^w ( x, h D ) = c^w ( x , h D)\, T + \cO_{L^2 \to
L^2 } ( h ) \,, \quad  \kappa^* c \defeq c\circ \kappa
= b\,.
$$
Moreover, 
if $ \alpha ( 0 , 0 ) \neq 0 $, 
then for any $ b \in S ( 1 ) $ supported in a sufficiently small 
neighbourhood of $ ( 0 , 0 ) $, 
\be
\label{eq:egor}
 T\, b^w ( x, h D ) = c^w ( x , h D) \,T \,, \quad  \kappa^* c
= b + \cO_{S( 1) } ( h ) \,. 
\ee
The converse is also true: if $ \kappa $
satisfies the projection properties \eqref{eq:surj} and 
$ T $ satisfies \eqref{eq:egor} 
for all $ b\in S ( 1 ) $ with support near $ ( 0 , 0 ) $, then 
$ T $ is equal to an operator of the form \eqref{eq:lfio} microlocally near $(0,0)$.
The relation \eqref{eq:egor} is a version of {\em Egorov's theorem}
and we will frequently use it below.

\medskip

\item For $ b\in S(1)$ we have $b^w(x,hD)\,T = \tilde T +\cO_{L^2\to L^2}(h^\infty)$,
where $\tilde T$ is of the form \eqref{eq:lfio} with the same phase $\psi(x^1,\xi^0)$,
but with a different symbol $\beta( x^1, \xi^0 ; h )\in S ( 1 )$.
Its principal symbol reads $\beta_0(x^1, \xi^0)=b(x^1,\psi'_x(x^1,\xi^0))\, \alpha_0( x^1, \xi^0)$,
and the full symbol $\beta$ is supported in $\supp \alpha$.
\end{itemize}
The proofs of these statements are similar to the proof 
of the next lemma, which is an application of the stationary phase method
and a very special case of the composition formula for 
Fourier integral operators.

\begin{figure}[ht]
\begin{center}
\includegraphics[width=16cm]{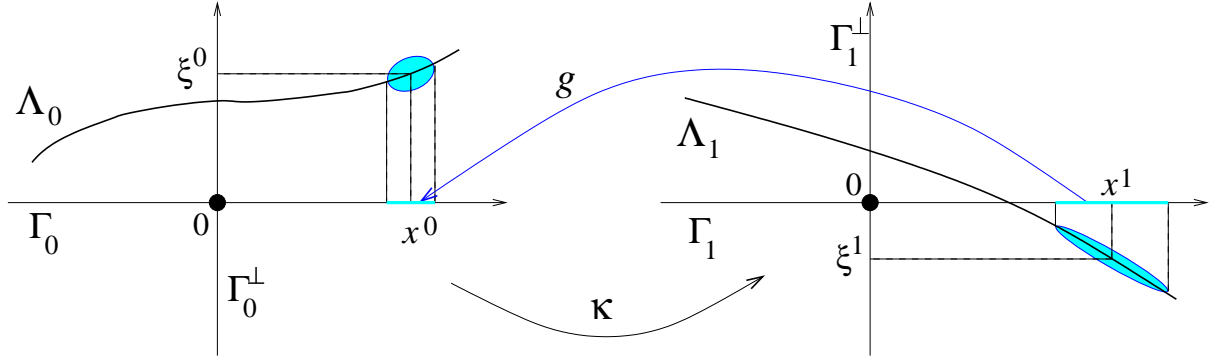}
\end{center}
\caption{A schematic illustration of the objects appearing in 
Lemma \ref{l:0}. We labelled the $x^j$ and $\xi^j$ axes respectively by $\Gamma_j$ and $\Gamma^\perp_j$,
in order to represent also the more general case of \eqref{eq:surj1}.}
\label{f:4}
\end{figure}

\begin{lem}\label{l:0}
We consider
a Lagrangian $ \Lambda_0 = \set{ ( x , \varphi_0' ( x ) ),\ x\in \Omega_0 } $,
$\varphi_0\in C^\infty_b(\Omega_0)$, contained
in a small neighbourhood $V\subset T^*\RR^n$, such that $\kappa$ is generated by
$\psi$ near $V$. We assume that
\be\label{eq:kapl}
 \kappa ( \Lambda_0 ) = \Lambda_1 = \{ ( x ,  \varphi_1' ( x ) )\,,\ x\in \Omega_1 \}\,,
\quad \varphi_1\in C^\infty_b(\Omega_1) \,.
\ee
Then, for any symbol $a\in \cC^\infty_c(\Omega_0)$, the application 
of $ T $ to the {\em Lagrangian state} 
$a(x)\, e^{i \varphi_0(x)/ h }$ associated with $\Lambda_0$ satisfies
\be\label{eq:l120}
T \big( a \, e^{i \varphi_0/ h }  \big) (x)
= e^{ i \varphi_1 ( x ) / h }\, \big( \sum_{j=0}^{L-1} b_j ( x ) h^j + 
h^L r_L ( x,h ) \big)\,,
\ee
where the coefficients $ b_j $ are described as follows. 
Consider the map:
\be
 \Omega_1 \ni x \mapsto g(x)\defeq \pi \circ \kappa^{-1} \big( x , \varphi_1' ( x )\big)\in\Omega_0 \,.
\ee
Any point $x^1\in \Omega_1$, is mapped by $g$ to 
the unique point 
$x^0$ satisfying
$$
\kappa(x^0,\varphi_0'(x^0))= (x^1,\varphi_1'(x^1))\,.
$$
The principal symbol $b_0$ is then given by
\be\label{eq:l1300}
  b_0 ( x^1) = e^{ i \beta_0 / h }\; \frac{\alpha_0 ( x^1 , \xi^0) }
{|\det \psi_{x \xi}'' ( x^1 , \xi^0  ) |^{\frac12} }\;
|\det  d g (x^1)|^{\frac12}\; a \circ g(x^1) \,, \quad
 \beta_0 \in \RR ,\quad \xi^0=\varphi'_0\circ g(x^1)\,,
\ee
and it vanishes outside $\Omega_1$. 
Furthermore, we have for any $\ell\geq\NN$:
\be\label{eq:l130}
\begin{split}
\norm{ b_j }_{ C^\ell ( \Omega_1 ) } &\leq C_{ \ell,j } \norm{ a }_{ C^{\ell+2j}( \Omega_0)  } \,, 
\qquad 0\leq j\leq L-1 \,, \\
\norm{ r_L ( \bullet , h ) }_{ C^\ell ( \Omega_1 ) } &\leq C_{ \ell , L  } 
\norm{ a }_{ C^{ \ell + 2 L + n} ( \Omega_0 ) } \,.
\end{split}
\ee
The constants $ C_{\ell , j } $ depend only on $ \kappa $,
$ \alpha $, and 
$  \sup_{\Omega_0} |\partial^\beta \varphi_0 | $, for 
$ 0 < |\beta| \leq 2 \ell + j $. 
\end{lem}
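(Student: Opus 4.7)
My plan is to apply the stationary phase method to the oscillatory integral representation of $T(a\,e^{i\varphi_0/h})$. Writing out \eqref{eq:lfio} with $u = a\,e^{i\varphi_0/h}$, I obtain
$$T(a\,e^{i\varphi_0/h})(x^1) = \frac{1}{(2\pi h)^n}\iint e^{i\Phi(x^1,x^0,\xi^0)/h}\, \alpha(x^1,\xi^0;h)\, a(x^0)\, dx^0\, d\xi^0,$$
with total phase $\Phi(x^1,x^0,\xi^0) := \psi(x^1,\xi^0) - \langle x^0,\xi^0\rangle + \varphi_0(x^0)$. For each fixed $x^1$, I will apply stationary phase in the $2n$ variables $(x^0,\xi^0)$; outside any fixed neighbourhood of the critical set, integration by parts in $(x^0,\xi^0)$ produces contributions of size $\Oo(h^\infty)$ which contribute only to $r_L$.

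The critical equations $\partial_{x^0}\Phi = 0$, $\partial_{\xi^0}\Phi = 0$ read $\xi^0 = \varphi_0'(x^0)$ and $x^0 = \psi'_\xi(x^1,\xi^0)$. Using the generating function identity $\kappa(\psi'_\xi(x^1,\xi^0),\xi^0) = (x^1,\psi'_x(x^1,\xi^0))$ and the hypothesis $\kappa(\Lambda_0)=\Lambda_1$, the unique critical point must be $(x^0_*,\xi^0_*) = (g(x^1),\varphi_0'(g(x^1)))$, as defined in the statement. Differentiating the critical value $\Phi(x^1,x^0_*(x^1),\xi^0_*(x^1))$ in $x^1$, the critical equations annihilate the contributions from the variation of $(x^0_*,\xi^0_*)$ and leave $\partial_{x^1}\Phi|_{\mathrm{crit}} = \psi'_x(x^1,\xi^0_*) = \varphi_1'(x^1)$; integrating, the critical value equals $\varphi_1(x^1) + \beta_0$ for some constant $\beta_0\in\RR$, producing the global phase $e^{i\varphi_1(x^1)/h}$ in \eqref{eq:l120}.

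The main computational step, and the main obstacle, is identifying $|\det H|^{-1/2}$ for
$$H = \begin{pmatrix} \varphi_0''(x^0_*) & -I \\ -I & \psi''_{\xi\xi}(x^1,\xi^0_*) \end{pmatrix},$$
the Hessian of $\Phi$ in $(x^0,\xi^0)$ at the critical point, with the factors appearing in \eqref{eq:l1300}. A Schur complement expansion gives $|\det H| = |\det(I - \psi''_{\xi\xi}\varphi_0'')|$, while implicit differentiation of the critical equations yields $dg(x^1) = (I - \psi''_{\xi\xi}\varphi_0'')^{-1}\psi''_{\xi x}$. Combining these,
$$|\det H|^{-1/2} = |\det dg(x^1)|^{1/2}\,|\det \psi''_{x\xi}(x^1,\xi^0_*)|^{-1/2},$$
and nondegeneracy of $H$ follows from $\det\psi''_{x\xi}\ne 0$, which is equivalent to \eqref{eq:surj}. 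Substituting into the leading term of stationary phase, the prefactor $(2\pi h)^n$ cancels the $(2\pi h)^{-n}$ in $T$, and \eqref{eq:l1300} follows, with the constant $\beta_0$ absorbing both the generating-function ambiguity and the Maslov sign from the signature of $H$. This identity is precisely what makes $b_0$ transform as a half-density on $\Lambda_1$ under $\kappa$, the geometric content of the lemma.

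For the subprincipal coefficients $b_j$ ($j\geq 1$) and the remainder $r_L$, I will invoke the standard full stationary phase expansion: each $b_j(x^1)$ equals $\cL_j$ applied at the critical point to $(x^0,\xi^0)\mapsto\alpha(x^1,\xi^0;h)\,a(x^0)$, where $\cL_j$ is a differential operator of order $2j$ with coefficients smooth in $x^1$ built from derivatives of $\psi$ and $\varphi_0$; consequently each $b_j$ vanishes outside $g^{-1}(\supp a) \subset \Omega_1$. The quantitative bounds \eqref{eq:l130} come from the standard $L^\infty$ remainder estimate for stationary phase, which controls the $L$-th remainder in terms of $2L+n$ derivatives of the amplitude (the $n$ being the Sobolev-type gain needed beyond the $2L$ algebraic derivatives); passage to $C^\ell$ norms costs an additional $\ell$ derivatives of $a$, matching the claimed indices. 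The constants $C_{\ell,j}$ depend on finitely many derivatives of $\kappa$, $\alpha$, and of $\varphi_0$ of order up to $2\ell+j$, entering through the coefficients of the $\cL_j$.
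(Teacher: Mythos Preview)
Your proof is correct and follows essentially the same route as the paper: both apply stationary phase to the oscillatory integral, identify the critical point and critical phase with $(g(x^1),\varphi_0'(g(x^1)))$ and $\varphi_1(x^1)+\beta_0$, and then invoke H\"ormander's estimates for the $b_j$ and $r_L$. The only cosmetic difference is that you obtain the key identity $\det dg = \det\psi''_{\xi x}/\det(I-\psi''_{\xi\xi}\varphi_0'')$ by implicit differentiation of the critical equations, whereas the paper writes out the full block matrix of $d\kappa$ and reads off the restriction to $\Lambda_0$ followed by projection; one minor caveat is that the Maslov factor $e^{i\pi\sigma/4}$ is $h$-independent and so cannot literally be absorbed into $e^{i\beta_0/h}$, but since only $|b_0|$ is used downstream this is harmless (and the paper is equally imprecise on this point).
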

\begin{proof}
The stationary points of the phase 
in the integral defining $ T ( a\, e^{ i \varphi_0/h })(x^1)$ are obtained by solving:
\[
d_{ x^0, \xi^0} \big( \psi ( x^1 , \xi^0 ) - \langle x^0 , \xi^0 \rangle + 
\varphi_0 ( x^0 ) \big) = 0 \; \Longleftrightarrow \; 
\left\{ \begin{array}{l} \xi^0 = \varphi_0' ( x^0 ) \,, \\
x^0 = \psi_\xi' ( x^1 , \xi^0 ) \,.  \end{array} \right. 
\]
The assumption \eqref{eq:kapl} implies, for $x^1\in\Omega^1$, the existence of a unique solution
$ x^0 = g(x^1) $, $ \xi^0 = \varphi_0'\circ g(x^1) $, 
and the nondegeneracy of the Hessian of the phase.
One also checks that, after inserting the dependence $x^0(x^1)$, $\xi^0(x^1)$ in
the critical phase, the derivative of the latter satisfies
\[ 
d_{x^1} \big( \psi ( x^1 , \xi^0(x^1)) - \langle x^0(x^1) , \xi^0(x^1) \rangle 
+ \varphi_0 ( x^0(x^1) )\big) = \varphi_1' ( x^1 ) \,.
\]
This shows that the critical phase is equal to $ \varphi_1 ( x^1 ) $,
up to an additive a constant.

The stationary phase theorem (see for instance \cite[Theorem 7.7.6]{Hor})
now shows that \eqref{eq:l120} holds with 
\begin{align}
\label{eq:b_0}
b_0 ( x^1 ) &= e^{ i \beta_0 / h }
|  \det ( I - \psi''_{\xi\xi} ( x^1 , \xi^0 ) \circ
 \varphi_0'' ( x^0 ) ) | ^{-\frac12}\,\alpha ( x^1, \xi^0)\, a ( x^0 ) \,, \\
b_j ( x^1 ) &= \sum_{j'=0}^j L_{j'} ( x^1 , D_{x,\xi} ) \big( \alpha_{j-j'} ( x^1 , \xi ) a ( x ) \big) 
\rest_{ \xi = \xi^0, x = x^0 } \,. \label{eq:b_j}
\end{align}
Each $ L_j ( x, D_{x,\xi} ) $ is a differential operator of order
$ 2 j $, with coefficients of the form 
\[ 
\frac{ P_{j\gamma} ( x^1 )} {  \det ( I - \psi''_{\xi\xi} 
( x^1 , \xi^0) \circ \varphi_0'' ( x^0 ) )^{3j} } \,, 
\]
where $ P_{ j \gamma } $ is a polynomial of degree $\leq 2j $
in the derivatives of $ \psi$ and $ \varphi_0 $, of 
order at most $ 2 j + 2 $ (the right hand side of \eqref{eq:b_0} can also be 
written as $L_0 (\alpha\, a)$). 
The remainder $ r_L ( x^1; h ) $ 
is bounded by a constant (depending on $ M $ and $ n $) times
\[  
\frac{
\big( \sum_{ | \alpha | \leq 2 L } \sup_{ x, \xi } | \partial^\alpha
_{x, \xi } \big( \psi( x^1, \xi ) - \la x , \xi \ra + \varphi_0(x)\big) | \big)^{2L} 
\big( \sum_{ | \alpha | \leq 2 L + n} \sup_{ x, \xi } | \partial^\alpha
_{x, \xi } \big( \alpha ( x^1, \xi ) a ( x ) \big) | \big)}
{\inf_{x,\xi}   |  \det \big( I - \psi''_{\xi\xi} ( x^1 , \xi ) \circ 
\varphi_0'' ( x ) \big) |^{ 3L} }  \,,
\]
with similar estimates for the derivatives $\partial^\ell r_L(\bullet;h)$.
The bounds \eqref{eq:l130} follow from the structure of the operators $L_j$,
and the above estimate on the remainder.

It remains to identify the determinant appearing in \eqref{eq:b_0}
with the more invariant formulation in \eqref{eq:l1300}.
The differential, $d\kappa (x^0,\xi^0 ) $, is the map
$(\delta {x^0},\delta {\xi^0} )\mapsto (\delta {x^1},\delta {\xi^1}
)$, where
\[
\begin{split} & \delta {x^0}=\psi ''_{\xi x}\delta {x^1} + 
\psi''_{\xi \xi} \delta {\xi^0} \\
&  \delta {\xi^1} =
\psi''_{x \xi }\delta {\xi^0} +\psi ''_{xx}\delta {x^1} \,,
\end{split} 
\]
and the $\psi''$ are evaluated at $ ( x^1, \xi^0 ) $.
By expressing  $\delta {x^1},\,\delta {\xi^1} $ in
terms of $\delta {x^0},\,\delta {\xi^0} $ we get
\be\label{e:dkappa}
d \kappa ( x^0 , \xi^0) = 
\left( \begin{array}{ll} (\psi ''_{\xi x})^{-1}
& -(\psi _{\xi x}'') ^{-1} \psi _{\xi \xi }'' \\
\psi ''_{xx}(\psi ''_{\xi x})^{-1} & \psi ''_{x\xi }-
\psi _{xx}''(\psi ''_{\xi x})^{-1}\psi''_{\xi \xi }  
\end{array} \right)  \,.   
\ee
The upper left block in this matrix is indeed invertible, as explained at the beginning
of the section.
From \eqref{e:dkappa} we also see that the restriction of $ d\kappa $ to 
$ \Lambda_0 $ followed by the projection $ \pi $ is 
given by 
\[ 
\delta{x^0} \longmapsto \delta x^1=( \psi_{ \xi x }'' )^{-1} ( I - \psi_{\xi \xi}''
\circ \varphi_0 '' ) ( \delta{x^0} ) \,. 
\]
Hence, noting that 
$g=\pi \circ \kappa^{-1} \circ ( \pi \rest_{\Lambda_1 })^{-1}=
(\pi \circ \kappa \circ ( \pi\rest_{\Lambda_0 })^{-1})^{-1}$, we get
\[
 \det (d g (x^1)) = 
\frac{\det \psi_{\xi x } ''(x^1,\xi^0)} 
{ \det( I - \psi_{\xi \xi} ''(x^1,\xi^0) \circ \varphi_0 ''(x^0) ) } \,, 
\]
which completes the proof of \eqref{eq:l1300}.
\end{proof}

We want to generalize the above considerations by relaxing the structure of 
$ \kappa $: we only assume that $\kappa$ is locally a canonical diffeomorphism such that
$\kappa(0,0)=(0,0)$.
Without loss of generality, we can find linear Lagrangian subspaces,
$ \Gamma_j,\,\Gamma_j^\perp \subset T^*\RR^n$ ($j=0,1$), with the following 
properties:
\begin{itemize}
\item $\Gamma_j^\perp$
is transversal to $ \Gamma_j$ 
(that is, $\Gamma_j^{\perp}\cap \Gamma_j=\set{0}$)\footnote{Here $\Gamma^\perp$ is 
{\em not} the symplectic annihilator
of $\Gamma$ -- see for instance \cite[Sect.21.2]{Hor}}
\item if $ \pi_j $ (respectively $ \pi_j^\perp $) is
the projection $ T^* \RR^n \rightarrow \Gamma_j $ along $ \Gamma_j^\perp$,
(respectively the projection $ T^* \RR^n \rightarrow \Gamma_j^\perp $ along $ \Gamma_j $), 
then, for some neighbourhood $U$ of the origin, the map
\be\label{eq:surj1} 
\kappa(U)\times U  \ni ( \kappa ( \rho ) , \rho ) 
\longmapsto \pi_1( \kappa (\rho) ) \times \pi_0^\perp (\rho ) 
\in \Gamma_1 \times \Gamma_0^\perp 
\ee
is a local diffeomorphism. If we write the
tangent map $d\kappa(\rho)$ as a matrix from $\Gamma_0\oplus \Gamma_0^\perp$ to
$\Gamma_1\oplus\Gamma_1^\perp$, then the upper-left block is invertible.
\end{itemize}

Let $ A_j $ be linear symplectic transformations with the properties
$$  
A_j ( \Gamma_j ) = \{ ( x , 0 ) \} \subset T^* \RR^n \,,
\quad  
 A_j ( \Gamma_j^\perp ) = \{ ( 0 , \xi ) \} \subset T^* \RR^n \,, 
$$
and let $ M_j $ be {\em metaplectic quantizations} of $ A_j $'s (see 
\cite[Appendix to Chapter 7]{DiSj} for a self-contained presentation
in the semiclassical spirit). Then the rotated diffeomorphism
\be\label{e:conjug}
 \tilde\kappa \defeq A_1 \circ \kappa \circ A_0^{-1} 
\ee
has the properties of the map $\kappa$ in Lemma \ref{l:0}. Let $ \tilde T $ 
be a quantization of $\tilde\kappa$ as in \eqref{eq:lfio}. Then 
\be\label{eq:Tdef}  
T \defeq M_1^{-1} \circ \tilde T \circ M_0
\ee
is a quantization of $\kappa$. 

By transposing Lemma~\ref{l:0} to this framework, we may apply $T$ 
to Lagrangian states supported on a Lagrangian $\Lambda_0$, $\kappa(\Lambda_0)=\Lambda_1$,
such that $ \pi_j : \Lambda_j \rightarrow \Gamma_j $ is 
locally surjective,  $j=0,1$.
The action of $\kappa^{-1}$ on $\Lambda_1$ can now represented by
the function
\be
g=\pi_0 \circ \kappa^{-1} \circ 
( \pi_1 \rest_{\Lambda_1} )^{-1}: \Gamma_1\to \Gamma_0\,.
\ee

Finally, performing phase space translations, we may relax the condition
$\kappa(0,0)=(0,0)$. 

\subsection{The Schr\"odinger propagator as a Fourier integral operator}\label{s:Schrod=FIO}

Using local coordinates on the manifold $X$, the above formalism applies to
propagators acting on $L^2(X)$.
\begin{lem}\label{l:lprop}
Suppose that $ P ( h )  $ satisfies the assumptions of \S \ref{ass}, 
$$ 
V_0 \Subset \cE_E^\delta \defeq p^{-1} ( ( E-\delta, E+\delta))  \,, 
\ \ 
 \chi \in S ( 1) \,, \ \ 
 \chi \rest_{\cE_E^\delta} \equiv 1 \,, \ \ 
 V_1 \subset \Phi^t ( V_0 ) \,. $$
 For a fixed time $t>0$, let 
\be\label{e:U_chi}
U_\chi ( t ) \defeq \exp ( - it \chi^w ( x, h D )\,  P ( h ) \,
\chi^w ( x, h D ) / h ) \,, 
\ee
be a modified unitary propagator of $ P $, acting on $L^2(X)$.

Take some $ \rho_0 \in V_0 \cap\cE_E $, and call $\rho_1=\Phi^t(\rho_0)\in V_1$.
Let $ f_0 : \pi ( V_0 ) \rightarrow \RR^n $, 
$ f_1 : \pi ( V ) \rightarrow \RR^n $ be local coordinates such 
that $ f_0 ( \pi ( \rho_0 )) = f_1 ( \pi ( \rho_1 ) ) = 0 \in 
\RR^n $. They induce on $V_0,\,V_1$ the symplectic coordinates
\be\label{e:F_U}
F_j ( x , \xi ) \defeq ( f_j ( x ) , (d f_j ( x ) ^t )^{-1} 
\xi  - \xi^{(j)}) \,, \quad j=0,1\, ,
\ee
where $\xi^{(j)}\in\RR^n$ is fixed by the condition $F_j(\rho_j)=(0,0)$.
Then the operator on $L^2(\IR^n)$,
\be
\label{eq:lpr}
T^\sharp(t) \defeq   e^{ - i \langle x, \xi^{(1)} \rangle/h }\, (f_1^{-1})^*\, U_\chi ( t )\, (f_0)^* 
e^{ i \langle x, \xi^{(0)} \rangle / h } \,,
\ee
is of the form \eqref{eq:Tdef} for some choices of $ A_j$'s,
microlocally near $ ( 0 , 0 ) $.
\end{lem}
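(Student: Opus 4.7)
The strategy is to decompose $T^\sharp(t)$ into three local semiclassical Fourier integral operators and then invoke the converse Egorov characterization recalled in \S\ref{fio}. Reading \eqref{eq:lpr} from right to left: the rightmost composition $(f_0)^*\circ e^{i\langle x,\xi^{(0)}\rangle/h}$ is a local unitary operator from $L^2(\RR^n)$ to $L^2(X)$, and a direct computation on a test Lagrangian state $a(x)\,e^{i\varphi(x)/h}$ shows that the image Lagrangian is obtained from the initial one by applying $F_0^{-1}$, so that this factor is an FIO quantizing $F_0^{-1}:T^*\RR^n\to T^*X$. Symmetrically, $e^{-i\langle x,\xi^{(1)}\rangle/h}\circ (f_1^{-1})^*$ quantizes $F_1:T^*X\to T^*\RR^n$. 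The middle factor $U_\chi(t)$ is the unitary group generated by the self-adjoint operator $\chi^w P\chi^w$, whose Weyl symbol equals $p$ on the flow-invariant shell $\cE_E^\delta$ where $\chi\equiv 1$. The semiclassical Egorov theorem applied to $U_\chi(t)$ along the trajectory $\{\Phi^s(\rho_0)\}_{0\leq s\leq t}$ then gives, for any $b\in S(1)$ with sufficiently small essential support near $\rho_0$,
\[
 U_\chi(t)\, b^w(x,hD) = \big((\Phi^t)^*b\big)^w(x,hD)\, U_\chi(t) + \cO_{L^2\to L^2}(h)\,,
\]
together with the analogous intertwinings for all iterated commutators, which is precisely the Egorov relation \eqref{eq:egor} in the microlocal sense.

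Composing the three pieces, $T^\sharp(t)$ on $L^2(\RR^n)$ satisfies \eqref{eq:egor} near $(0,0)$ with respect to the local symplectomorphism
\[
 \kappa \defeq F_1\circ \Phi^t\circ F_0^{-1}\,, \qquad \kappa(0,0)=(0,0)\,.
\]
If the tangent map $d\kappa(0,0)$ has its upper-left block $\partial x^1/\partial x^0$ invertible, then $\kappa$ admits a generating function $\psi(x^1,\xi^0)$ near the origin, and the converse of Egorov's theorem (the final bullet in \S\ref{fio}) identifies $T^\sharp(t)$ microlocally with an operator of the form \eqref{eq:lfio}, hence of the form \eqref{eq:Tdef} with trivial $M_0,M_1$.

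In general this block need not be invertible, which is the only genuine point requiring extra care. It is addressed by the construction around \eqref{eq:surj1}: choose transverse Lagrangian pairs $(\Gamma_j,\Gamma_j^\perp)$ in $T^*\RR^n$ at the origin such that $(\kappa(\rho),\rho)\mapsto \pi_1(\kappa(\rho))\times \pi_0^\perp(\rho)$ is a local diffeomorphism, pick linear symplectic maps $A_j$ sending $\Gamma_j$ to $\{(x,0)\}$ and $\Gamma_j^\perp$ to $\{(0,\xi)\}$, and let $M_j$ be their metaplectic quantizations. Then $\tilde\kappa=A_1\circ\kappa\circ A_0^{-1}$ satisfies the projection condition \eqref{eq:surj}, and $\tilde T=M_1\circ T^\sharp(t)\circ M_0^{-1}$ still intertwines Weyl quantizations via $\tilde\kappa$; by the converse Egorov statement $\tilde T$ is microlocally of the form \eqref{eq:lfio}, and unwinding gives $T^\sharp(t)=M_1^{-1}\circ \tilde T\circ M_0$ in the form \eqref{eq:Tdef}. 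I expect the only serious obstacle in the argument to be bookkeeping: one must check that the microlocal neighbourhoods on which the three Egorov intertwinings hold are compatible, so that the final representation of $T^\sharp(t)$ is valid microlocally near $(0,0)$ as claimed.
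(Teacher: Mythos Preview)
Your proposal is correct and follows essentially the same route as the paper: establish Egorov's relation \eqref{eq:egor} for $U_\chi(t)$ (the paper does this by differentiating $U_\chi(t)^{-1}a^wU_\chi(t)$ in $t$), invoke the converse to Egorov to recognize $T^\sharp(t)$ as a quantization of $\kappa=F_1\circ\Phi^t\circ F_0^{-1}$, and then use the metaplectic rotations $A_j$ from \eqref{eq:surj1}--\eqref{eq:Tdef} to handle the possible failure of the projection condition. One minor remark: the converse Egorov statement quoted in \S\ref{fio} only needs the single intertwining \eqref{eq:egor} for all $b$ supported near the origin, not iterated commutators, so that extra clause in your argument is unnecessary.
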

Although complicated to write, the lemma simply states that
the propagator is a Fourier integral operator in the 
sense of this section. 

\medskip

\noindent{\em Sketch of the proof of Lemma \ref{l:lprop}:} 
The first step is to prove that for $ a \in S ( 1 ) $ with 
support in $ \chi \equiv 1 $ we have 
\be\label{eq:Uchi}
  U_\chi ( t)^{-1} a^w ( x, h D) U_\chi ( t ) = a_t^w ( x, h D) \,, \ \ 
a_t = (\Phi^t)^* a + {\mathcal O}_{S ( 1 ) } ( h ) \,. 
\ee
This can be see from differentiation with respect to $ t $~:
$$
\partial_t a_t^w = \frac{i}{h} [ \chi^w P \chi^w , a_t^w ] = 
\frac{i}{h} [ P , a_t^w ] + {\mathcal O} ( h^\infty ) \,, \ \ 
a_0^w = a^w \,. 
$$
Since  $ (i/h) [ P, a_t^w ] = ( H_p a_t )^w + \cO ( h ) $
we conclude that $ a_t^w = [( \Phi_t )^* a]^w + \cO_{L^2 
\rightarrow L^2 } ( h ) $. An iteration of this argument shows \eqref{eq:Uchi}
(see \cite[Chapter 9]{EZ} and the proof of Lemma \ref{l:new1}
below). The converse to Egorov's 
theorem (see \cite[Lemma 3.4]{SjZw02} or \cite[Theorem 10.7]{EZ}) 
implies that \eqref{e:U_chi} is a quantization of $\Phi^t$, microlocally
near $\rho_0 \times \rho_1$.

On the classical level, the symplectic coordinates $F_0$, $F_1$ of \eqref{e:F_U} are such
that the symplectic map
$$
\kappa \defeq F_1 \circ \Phi^t \circ F_0^{-1} \quad\text{satisfies}\quad \kappa(0,0)=(0,0)\,.
$$
Hence the operator $T^\sharp(t)$ is a quantization of $\kappa$, and
can be put in the form \eqref{eq:Tdef} for some choice of symplectic rotations
$A_j$, microlocally near $(0,0)$. A possible choice of these rotations
is given in Lemma~\ref{l:Gam01} below.
\stopthm

\medskip

We will now describe a particular choice of coordinate chart in the neighbourhood 
$U_\rho$ of an arbitrary
point $\rho\in K_E$. Using the notation of the previous lemma, $U_\rho$ 
may be identified through a
symplectic map $F_\rho$
with a neighbourhood of $(0,0)\in T^*\RR^{n}$. This way, Lagrangian (respectively isotropic) subspaces 
in $T_\rho(T^*X)$ are identified with Lagrangian (respectively isotropic) subspaces in $T_0(T^*\RR^{n})$.

We now recall that the weak stable and unstable subspaces $E^{\pm 0}_\rho$ 
defined by \eqref{eq:weaks} are Lagrangian. 
The proof of that well known fact is simple: for any two vectors $v,w\in E^{+}_\rho$, we have 
$$
\forall t\in\RR,\qquad\omega(v,w)=\Phi^{t*}\omega(v,w)=\omega(\Phi^t_*v,\Phi^t_*w)\,.
$$
By assumption, the vectors on the right hand side converge to zero when $t\to -\infty$, which proves that
strong unstable subspaces are isotropic. The same method 
shows that $\omega(v,H_p)=0$, so weak unstable subspaces are Lagrangian. The same results
apply to stable subspaces. Besides,
the isotropic subspace $E^{-}_\rho$ is transversal to the Lagrangian $E^{+0}_\rho$, so the tangent
space to the energy layer $ \cE_E $ at $\rho$ is decomposed into 
$T_\rho\cE_E = E^{+0}_\rho\oplus E^{-}_\rho$.
\begin{lem}\label{l:coord}
Take any point $\rho\in K_E$. As above, we may identify
a neighbourhood $U_\rho\subset T^*X$ of $\rho$ with a neighbourhood of $(0,0)\in T^*\RR^{n}$. 
The tangent
space $T_\rho(T^*X)$ is then identified with $T_0(T^*\RR^{n})\equiv T^*\RR^{n}$.

The space $T^*\RR^{n}$ can be equipped with a {\em symplectic} basis
$(e_1,\ldots,e_n;f_1,\ldots,f_n)$ such that $e_1=H_p(\rho)$, $E^+_\rho=\Span\{e_2,\ldots,e_n\}$ and
$E^-_\rho = \Span\{f_2,\ldots,f_n\}$. We also require that $\Omega_\rho(e_1\wedge\cdots\wedge e_n)=1$,
where $\Omega$ is the volume form on $E^{+0}_\rho$ induced by the adapted metric $g_{\rm{ad}}$ 
(see $vi)$ in \eqref{eq:aaa}).
The two Lagrangian subspaces
$$
\Gamma\defi E^{+0}_\rho\quad\text{and}\quad \Gamma^\perp \defi E^{-}_{\rho}\oplus 
\RR f_1
$$
are transversal.
Let us call $(\ty_1,\ldots,\ty_n,\teta_1,\ldots,\teta_n)$ the linear symplectic coordinates 
on $T^*\RR^{n}$ dual to the basis $(e_1,\ldots,e_n;f_1,\ldots,f_n)$.

There exists a symplectic coordinate chart $(y,\eta)$ near $\rho\equiv (0,0)$, such that
\begin{gather}
\label{e:nonlin}\eta_1 = p-E\,, \quad \ 
\frac{\partial}{\partial y_i}(0,0)=e_i\quad\text{and}\quad 
\frac{\partial}{\partial \eta_i}(0,0)=f_i,\quad i=1,\ldots,n \, .
\end{gather}
Such a chart will be called {\em adapted} to the dynamics.
$(y,\eta)$ is mapped to $(\ty,\teta)$ through a local symplectic diffeomorphism
fixing the origin, and tangent to the identity at the origin.
\end{lem}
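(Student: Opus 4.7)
The proof has two parts: a linear-algebraic construction of the symplectic frame together with verification of transversality of $\Gamma$ and $\Gamma^\perp$, and an analytic construction of the chart via Darboux's theorem followed by a Moser deformation.

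For the linear algebra, the key observation is that $\omega$ restricts to a nondegenerate pairing on $E^+_\rho \times E^-_\rho$. Indeed, $E^{\pm 0}_\rho$ are Lagrangian: $E^\pm_\rho$ are isotropic by the argument recalled in the paper, and for any $v \in T_\rho \cE_E$ one has $\omega(H_p(\rho), v) = dp(v) = 0$, so adjoining $\RR H_p(\rho)$ preserves isotropy while raising the dimension to $n$. The radical of $\omega|_{T_\rho\cE_E}$ is exactly $\RR H_p(\rho)$, so $\omega$ descends nondegenerately to $T_\rho \cE_E / \RR H_p(\rho) \simeq E^+_\rho \oplus E^-_\rho$, which forces nondegeneracy of its restriction to $E^+_\rho \times E^-_\rho$. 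I would then pick an arbitrary basis $(e_2,\ldots,e_n)$ of $E^+_\rho$, rescale one vector so that $\Omega_\rho(H_p(\rho) \wedge e_2 \wedge \cdots \wedge e_n) = 1$, let $(f_2,\ldots,f_n) \subset E^-_\rho$ be the $\omega$-dual basis, and set $e_1 = H_p(\rho)$. The $\omega$-orthogonal of $\Span(e_2,\ldots,e_n,f_2,\ldots,f_n)$ is a symplectic $2$-plane containing $e_1$; choosing $f_1$ in it with $\omega(e_1,f_1) = 1$ completes the symplectic basis. Transversality of $\Gamma = E^{+0}_\rho = \Span(e_1,\ldots,e_n)$ and $\Gamma^\perp = E^-_\rho \oplus \RR f_1 = \Span(f_1,\ldots,f_n)$ is then immediate from the basis property.

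For the chart, I would first apply Darboux's theorem in the form in which the tangent frame at a single point may be prescribed: there exists a symplectic diffeomorphism $F_0 : U_\rho \to T^*\RR^n$ with $F_0(\rho) = 0$ sending $(e_i, f_i)$ to $(\partial_{\ty_i}, \partial_{\teta_i})|_0$ under $dF_0(\rho)$. Setting $\tilde p \defi p \circ F_0^{-1} - E$, one has $\tilde p(0) = 0$ and $d\tilde p(0) = d\teta_1$, the latter because $H_{\tilde p}(0) = dF_0(\rho)\, H_p(\rho) = \partial_{\ty_1}|_0$. It remains to produce a local symplectomorphism $\psi$ of $(T^*\RR^n, 0)$ with $\psi(0) = 0$, $d\psi(0) = \text{Id}$, and $\psi^*\teta_1 = \tilde p$; then $\kappa \defi \psi \circ F_0$ furnishes the adapted chart, and the symplectic diffeomorphism appearing in the final sentence of the lemma is $\psi^{-1}$. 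I would build $\psi$ by a Moser deformation: interpolate $p_t = (1-t)\,\teta_1 + t\,\tilde p$, which satisfies $dp_t(0) = d\teta_1 \neq 0$ throughout $t \in [0,1]$, and seek a Hamiltonian isotopy $\phi_t$ with $\phi_t^* p_t = \teta_1$ and $\phi_0 = \text{Id}$. Differentiating reduces this to the cohomological equation $\{p_t, g_t\} = \teta_1 - \tilde p$ for the generating Hamiltonian $g_t$, which can be solved locally by integrating along the flow of $H_{p_t}$ from a hypersurface transversal to $H_{p_t}(0)$. Because $\teta_1 - \tilde p$ vanishes to second order at $0$, the solution $g_t$ may be chosen to vanish to third order there, so $H_{g_t}$ vanishes together with its first derivatives at the origin; this forces $\phi_t(0) = 0$ and $d\phi_t(0) = \text{Id}$ throughout the isotopy. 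Setting $\psi = \phi_1^{-1}$ completes the construction.

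The main point of care is reconciling the Hamiltonian normalization $\eta_1 = p - E$ with the prescribed tangent data at $\rho$; this is exactly what the second-order vanishing of $\teta_1 - \tilde p$ at the origin achieves through the Moser argument, and it is why the Darboux preparation step producing $F_0$ with the correct tangent map at $\rho$ must be done before straightening $p$ to $\eta_1 + E$.
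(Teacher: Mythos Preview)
Your proof is correct. The linear-algebra construction of the symplectic basis is essentially the same as the paper's, just assembled in a slightly different order.

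For the chart, however, you take a genuinely different route from the paper. The paper proceeds in the opposite order: it first applies the standard Darboux theorem to obtain symplectic coordinates $(y^\flat,\eta^\flat)$ with $\eta_1^\flat = p-E$, and then observes that a single \emph{linear} symplectic map $A$ suffices to rotate the tangent frame into the desired position while preserving the function $\eta_1^\flat$. No Moser deformation is needed. The reason this works is the special structure of the basis $(e_i,f_i)$: in the $(y^\flat,\eta^\flat)$ frame one already has $e_1 = H_p = \partial_{y_1^\flat}$, the vectors $e_2,\dots,e_n,f_2,\dots,f_n$ are all annihilated by $d\eta_1^\flat = dp$ (they lie in $T_\rho\cE_E$), and $d\eta_1^\flat(f_1)=\omega(f_1,e_1)=1$. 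Hence the unique linear symplectomorphism sending $(e_i,f_i)$ to the standard basis automatically satisfies $\eta_1\circ A^{-1}=\eta_1^\flat$. This is shorter and more transparent than your Moser argument, which is a correct general-purpose method but does more work than the situation requires.

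Two small remarks on your write-up. First, your cohomological equation has a sign slip: with the paper's convention $H_g = \{g,\cdot\}$, differentiating $\phi_t^*p_t=\teta_1$ gives $\{p_t,g_t\}=\tilde p-\teta_1$ rather than $\teta_1-\tilde p$; this is harmless. Second, your claim that the diffeomorphism in the last sentence of the lemma is exactly $\psi^{-1}$ is only correct if you take $F_0$ to be the \emph{linear} chart $L$ dual to $(e_i,f_i)$, which you may as well do since the ambient space is already $(T^*\RR^n,\omega_{\rm std})$; for a general nonlinear $F_0$ the map would be $L\circ F_0^{-1}\circ\psi^{-1}$, which is still symplectic, fixes the origin, and is tangent to the identity there.
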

\begin{proof}
Once we select the Lagrangian $\Gamma = E^{+0}_\rho$, 
with the isotropic $E^{-}_\rho$ plane transversal 
to $\Gamma$, it is always possible to 
complete $E^-_\rho$ into a Lagrangian $\Gamma^\perp$ transversal to $\Gamma$, by adjoining
to $E^-_\rho$ a certain subspace $\RR v$. Since $\Gamma\oplus \Gamma^\perp$ spans 
the full space $T^*\RR^n$, the vector $v$ must be transversal to 
the energy hyperplane $T_\rho \cE_E$.

Since we took $e_1=H_p(\rho)$ we can 
equip $E^+_\rho$ with a basis $\{e_2,\ldots,e_n\}$ satisfying
$\Omega_\rho(e_1\wedge e_2\wedge\cdots\wedge e_n)=1$.
There is a unique choice of vectors $\{f_1,\ldots,f_n\}$ 
such that these vectors generate $\Gamma^\perp$
and satisfy $\omega(f_i,e_j)=\delta_{ij}$ for all $i,j=1,\ldots,n$.
The property $\omega(f_j,e_1)=0$ for
$j>1$ implies that $f_j$ is in the energy hyperplane, while $\omega(f_1,e_1)=dp(f_1)=1$ shows 
that 
$p((\ty,\teta))= E+\teta_1+\cO(\teta_1^2)$ when $\teta_1\to 0$.

From Darboux's theorem, there exists a (nonlinear) symplectic 
chart $(y^\flat,\eta^\flat)$ near the origin such that $\eta_1^\flat=p-E$.
There also exists a linear symplectic transformation $A$ such
that the coordinates $(y,\eta)=A(y^\flat,\eta^\flat)$ satisfy
$\eta_1=\eta_1^\flat$
as well as the properties \eqref{e:nonlin} on $T_0(T^*\RR^{n})$. The last statement 
concerning the mapping $(\ty,\teta)\mapsto (y,\eta)$ comes from the fact that the vectors
$\partial/\partial\ty_i$, $\partial/\partial\teta_i$ satisfy \eqref{e:nonlin} as well.
\end{proof}
\begin{lem}\label{l:Gam01}
Suppose that $ P $ satisfies the assumptions of \S \ref{ass} and
the hyperbolicity assumption \eqref{eq:aa}. Fixing $t>0$ and using the notation of 
Lemmas~\ref{l:lprop} and \ref{l:coord},
we consider the symplectic frames 
$\Gamma_0\oplus\Gamma^\perp_0$ and $\Gamma_1\oplus\Gamma_1^\perp$,
constructed respectively near $\rho_0$ and $\rho_1=\Phi^t(\rho_0)$.

Then, the graph of $\Phi^t$ near $\rho_1\times \rho_0$ projects 
surjectively to 
$\Gamma_1\times \Gamma_0^\perp$ (see \eqref{eq:surj1}). 
This implies that
the operator \eqref{eq:lpr} can be written in the form \eqref{eq:Tdef}, where
the metaplectic operators $M_j$ quantize the coordinate changes 
$F_j(x,\xi)\mapsto(\ty^j,\teta^j)$,
while $\tilde T(t)$ quantizes $\Phi^t$ written in 
the coordinates $(\ty^0,\teta^0)\mapsto (\ty^1,\teta^1)$.

The symplectic coordinate changes $(\ty^j,\teta^j)\mapsto (y^j,\eta^j)$
can be quantized by Fourier integral operators 
$T_0,\,T_1$ of the form \eqref{eq:lfio} and microlocally unitary. 
If we set $\cU_j\defeq T_j \circ M_j$, $j=0,1$,
the operator \eqref{eq:lpr} can then be written as
\be\label{e:Tsharp}
T^\sharp(t)=\cU_1^{*}\circ T(t) \circ \cU_0
\ee
microlocally near $(0,0)$, 
where $T(t)$ is a Fourier integral operator of the form \eqref{eq:lfio} which quantizes
the map $\Phi^t$, when written in the
adapted coordinates $(y^0,\eta^0)\mapsto (y^1,\eta^1)$.

\end{lem}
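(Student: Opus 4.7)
\medskip\noindent
\textit{Proof proposal.} The statement has two components: a classical transversality property for the graph of $\Phi^t$ relative to the frames $\Gamma_0\oplus\Gamma_0^\perp$, $\Gamma_1\oplus\Gamma_1^\perp$, and the operator factorization \eqref{e:Tsharp}. The plan is to settle the dynamical step first, then assemble the required FIOs by composing objects already at our disposal.

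First I verify the projection property \eqref{eq:surj1} for $\kappa=\Phi^t$ near $(\rho_0,\rho_1)$. Decompose $d\Phi^t(\rho_0):T_{\rho_0}(T^*X)\to T_{\rho_1}(T^*X)$ as a $2\times 2$ block operator relative to the splittings $\Gamma_j\oplus\Gamma_j^\perp$ of Lemma~\ref{l:coord}. By the hyperbolicity assumption \eqref{eq:aa}, $d\Phi^t$ preserves the splitting $T_\rho\cE_E=E^{+0}_\rho\oplus E^-_\rho$ of the energy hyperplane, carrying $E^{+0}_{\rho_0}$ isomorphically onto $E^{+0}_{\rho_1}$ (it sends $H_p(\rho_0)$ to $H_p(\rho_1)$ and $E^+_{\rho_0}$ onto $E^+_{\rho_1}$) and $E^-_{\rho_0}$ onto $E^-_{\rho_1}$. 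Since $\Gamma_j=E^{+0}_{\rho_j}$ and $E^-_{\rho_j}\subset\Gamma_j^\perp$, one concludes $d\Phi^t(\Gamma_0)=\Gamma_1$, so the lower-left block of $d\Phi^t(\rho_0)$ vanishes and its upper-left block is precisely the isomorphism $d\Phi^t|_{\Gamma_0}:\Gamma_0\to\Gamma_1$. Invertibility of that upper-left block is exactly the condition that the projection \eqref{eq:surj1} is a local diffeomorphism at $(\rho_1,\rho_0)$.

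With the projection property in hand, I invoke the construction around \eqref{e:conjug}--\eqref{eq:Tdef}. Let $A_j$ be the linear symplectic map sending $\Gamma_j$ to the horizontal and $\Gamma_j^\perp$ to the vertical plane of $T^*\RR^n$, and let $M_j$ be its metaplectic quantization; note that $M_j$ simultaneously quantizes the linear coordinate change $F_j(x,\xi)\mapsto(\ty^j,\teta^j)$. Because $\tilde\kappa=A_1\circ\Phi^t\circ A_0^{-1}$ satisfies the hypotheses \eqref{eq:surj} of Lemma~\ref{l:0}, it can be quantized by a Fourier integral operator $\tilde T(t)$ of the form \eqref{eq:lfio}, which is nothing but $\Phi^t$ expressed in the linear coordinates $(\ty^0,\teta^0)\mapsto(\ty^1,\teta^1)$. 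Combining this with Lemma~\ref{l:lprop} yields $T^\sharp(t)=M_1^{-1}\circ\tilde T(t)\circ M_0$ microlocally near $(0,0)$.

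Finally, by Lemma~\ref{l:coord} the passage between the linear coordinates $(\ty^j,\teta^j)$ and the adapted coordinates $(y^j,\eta^j)$ is a local symplectomorphism fixing the origin and tangent to the identity there; in particular it satisfies \eqref{eq:surj} with respect to the standard horizontal/vertical frames and is generated by a phase of the form $\psi_j(x,\xi)=\langle x,\xi\rangle+O(|(x,\xi)|^3)$, so it can be quantized by a Fourier integral operator $T_j$ of the form \eqref{eq:lfio}. Choose the principal symbol $\alpha_0$ of $T_j$ as prescribed by \eqref{e:unitarity}, so that $T_j^{*}T_j=I$ microlocally near $(0,0)$. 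Set $\cU_j=T_j\circ M_j$ and $T(t)=T_1\circ\tilde T(t)\circ T_0^{-1}$; the composition theorem for local FIOs then shows that $T(t)$ is again of the form \eqref{eq:lfio} and quantizes $\Phi^t$ written in the adapted coordinates. Using $T_j^{-1}=T_j^{*}$ microlocally one computes
\[
\cU_1^{*}\circ T(t)\circ\cU_0\;=\;M_1^{-1}T_1^{-1}\bigl(T_1\,\tilde T(t)\,T_0^{-1}\bigr)T_0 M_0\;=\;M_1^{-1}\circ\tilde T(t)\circ M_0\;=\;T^\sharp(t),
\]
microlocally near $(0,0)$, which is \eqref{e:Tsharp}. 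The only non-routine point is the transversality established in the first paragraph; once it is known, the rest is an assembly of standard composition and conjugation properties of semiclassical FIOs.
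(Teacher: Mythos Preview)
Your proof is correct and follows essentially the same approach as the paper. For the transversality step you argue conceptually that $d\Phi^t$ carries $\Gamma_0=E^{+0}_{\rho_0}$ isomorphically onto $\Gamma_1=E^{+0}_{\rho_1}$, while the paper reaches the same conclusion by writing out the block matrix of $d\Phi^t(\rho_0)$ in the adapted coordinates $(y^j,\eta^j)$ and reading off the invertible upper-left block $\left(\begin{smallmatrix}1&0\\0&A\end{smallmatrix}\right)$; the remaining assembly of the FIOs $T_j$, $\tilde T(t)$, $T(t)$ and the verification of \eqref{e:Tsharp} is the same in both.
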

\begin{proof}
We may express the map $\Phi^t$ from $V_0$ to $V_1$ using the coordinate charts
$(y^0,\eta^0)$ on $V_0$, respectively $(y^1,\eta^1)$ on $V_1$.
The tangent map $d\Phi^t(\rho_0)$ is then given by a matrix of the form
\be
d\Phi^t (\rho_0)\equiv \begin{pmatrix}
1 & 0 & * & 0 \\
0 & A & * & 0 \\
0 & 0 & 1 & 0 \\
0 & 0 & * & \,^t\!A^{-1}
\end{pmatrix}.
\ee
Since the full matrix is symplectic, 
the block, 
$$
\begin{pmatrix}1 & 0\\0 & A\end{pmatrix}\,, 
$$
is necessarily invertible: this
implies that the graph of
$\Phi^t$ projects surjectively to 
 $\Gamma_1\times \Gamma_0^\perp$ in some neighbourhood of
$\rho_1\times \rho_0$. 
Equivalently, if we represent $\Phi^t$ near $\rho_1\times \rho_0$ as a map $\tilde\kappa$ in
the ``linear'' coordinates $(\ty^0,\teta^0)$ and $(\ty^1,\teta^1)$,
the graph of $\tilde\kappa$ projects surjectively to
$(\ty^1,\teta^0)$, so that the operator $\tilde T(t) = M_1\circ T^\sharp(t)\circ M_0^{-1}$ quantizing
$\tilde\kappa$ can put in the form \eqref{eq:lfio} near the origin.

For each $j=0,1$, the tangency of 
the charts $(\ty^j,\teta^j)$ and $(y^j,\eta^j)$ at the origin shows that the graph
of the the coordinate
change $(\ty^j,\teta^j)\mapsto (y^j,\eta^j)$ projects well on $(y^j,\teta^j)$,
so this change can be quantized by an operator $T_j$ of the form \eqref{eq:lfio}, microlocally
unitary near the origin. The operator
$T(t) = T_1\circ M_1\circ T^\sharp(t)\circ M_0^{*}\circ T_0^{*}$
quantizes $\Phi^t$, when written in the coordinates $(y^0,\eta^0)\mapsto (y^1,\eta^1)$, 
and can also written in the form \eqref{eq:lfio} near the origin.
\end{proof}

\subsection{Iteration of the propagators}\label{s:iterating}

Later we will compose operators of type $ U ( t_0 ) \, \Pi_a$,
where $ \Pi_a $ is a microlocal cut-off to a small neighbourhood
$W_a\subset \cE_E^\delta $. In view of Lemma \ref{l:lprop}
the estimates on these compositions can be reduced to estimates 
on compositions of operators of type \eqref{eq:lfio}. The next
proposition is similar to the results of \cite[Section~3]{AnNo}.

We take a sequence of symplectic maps $(\kappa_i)_{i=1,\ldots,J}$ defined in some
open neighbourhood $V\subset T^*\RR^n$ of the origin, which
satisfy (\ref{eq:surj}). Now the $\kappa_i$ do not necessarily leave the
origin invariant, but we assume that $\kappa_i(0,0)\subset V$ for all $i$.
We then consider
operators $(T_i)_{i=1,\ldots,J} $ which
quantize $\kappa_i$ in the sense of \eqref{eq:lfio} and are {\em microlocally unitary} near 
an open set $U\Subset V$ containing $(0,0)$. Let $\Omega\subset \RR^n$ be an open
set, such that $U\Subset T^*\Omega$, and for all $i$, $\kappa_i(U)\Subset T^*\Omega$.

For each $i$ we take a smooth cutoff function $ \chi_i \in \CIc ( U ; [0,1] )$, 
and let 
\be\label{e:S_j}
S_i \defeq \chi_i^w ( x, h D) \circ T_i\, .
\ee
We now consider a family of Lagrangian manifolds 
$\Lambda_k= \set{(x,\varphi'_k(x))\,;\,x\in\Omega}\subset T^*\RR^n$, $ k=0,\ldots, N $
sufficiently close to the ``position plane'' $\{\xi=0\}$:
\be\label{eq:varph} 
 | \varphi_k'|< \epsilon \,, \quad 
| \partial^\alpha \varphi_k | \leq C_\alpha\,, \quad  0 \leq k \leq N \,, \quad 
\alpha \in \NN^n \,.
\ee
Furthermore, we assume that these manifolds are locally mapped to one another by
the $\kappa_i$'s:
there exists a sequence of
integers $ (i_k \in [ 1 , J ])_{k=1,\ldots,N}$ such that
\be\label{eq:kap}
\kappa_{ i_{k+1} } ( \Lambda_k \cap U ) \subset  \Lambda_{k+1}\,, 
\quad  k =0,\ldots,N-1 \,.  
\ee
We want to propagate an initial
Lagrangian state $a(x)\,e^{i\varphi_0(x)/h}$, $a\in C^\infty_c(\Omega)$ through
the sequence of operators $S_{i_k}$, $k=1,\ldots,N$.

At each step, the action of
$\kappa_{i_k}^{-1}\rest_{\Lambda_k} $ can be projected 
on the position plane, to give a map $g_k$ defined on $\pi\kappa_{i_k}(U)\subset \Omega$~:
\be
\label{eq:gkx}
g_k(x)= \pi \circ \kappa^{-1}_{i_k} (x,\varphi'_k(x))\,.
\ee
For each $x=x^N\in\Omega$, we define iteratively
$x^{k-1}=g_k(x^k)$, $k=N,\ldots,1$: this procedure is possible as long as
each $x^k$ lies in the domain of definition of $g_k$.
Let us state our crucial dynamical assumptions: we assume that for all such sequences 
$(x^N,\ldots,x^0)$,
the Jacobian matrices, ${\partial x^k}/{\partial x^l}$, are uniformly bounded
from above:
\be\label{e:contraction}
\Big\|\frac{\partial x^k}{\partial x^l}\Big\|= 
\Big\|\frac{\partial (g^{k+1}\circ g^{k+2}\circ \cdots\circ g^{l})}{\partial x^l}(x^l)\Big\|\leq C_D,\,\qquad
0\leq k<l\leq N\,,
\ee
where $C_D$ is independent of $N$.
This assumption roughly
means that the maps $g_k$ are (weakly) contracting.

We will also use the notations
\be\label{e:D_k}
D_k \defeq \sup_{ x \in \Omega } |\det  d g_k(x)|^{\frac12}\,,\qquad 
J_k \defeq \prod_{k'=1}^k D_{k'}\,, 
\ee
and assume that the $D_k$ are uniformly bounded: 
$1/C_D\leq D_k\leq C_D$.

We can now state the main propagation estimate of this section which 
describes an $N$-iteration of Lemma~\ref{l:0}.
\begin{prop}\label{p:an1}
We use the above definitions and assumptions, 
and take $N$ arbitray large, possibly varying with $h$.

Take any $a\in C^\infty_c(\Omega)$, and consider the Lagrangian state 
$u=a\, e^{ i \varphi_0 / h }$ associated with the Lagrangian $\Lambda_0$.

Then we may write
\be\label{eq:lit}
( S_{i_N } \circ \cdots \circ S_{i_1} ) \big( a\, e^{ i \varphi_0 / h } \big) (x)
= e^{ i \varphi_{N} ( x) /h }\, \big(\sum_{j=0}^{L-1} h^j\,a_j^{N}( x )  
+ h^L R_L^N ( x , h ) \big) \,,
\ee
where each $a_j^{N}\in C^\infty_c (\Omega)$ is independent of $h$, while 
$R_L^N\in C^\infty((0,1]_h, \cS(\RR^n))$.

If $x^N\in\Omega$ defines a sequence (see \eqref{eq:gkx})
$x^{k-1}=g_k(x^k)$, $k=N,\ldots,1$, 
then
\be\label{e:principal}
| a_0^{N} ( x^N ) | = 
\Big( \prod_{k=1}^{N} \chi_{i_k}  ( x^k , \varphi_k' ( x^k ) )\; 
\,|\det  d g_k (x^k)|^{\frac12} \Big)\; | a ( x^0 ) | \,,
\ee
otherwise $a_j^{N}(x^N)=0$, $j=0,\ldots,L-1$. Also, we have the bounds
\begin{align}
\label{e:bounds}
\| a_j^{N} \|_{ C^\ell ( \Omega ) }& \leq C_{j,\ell}\, J_N\, (N+1)^{\ell + 3 j }\, 
\| a \|_{ C^{\ell + 2 j} (\Omega)}  ,\qquad j=0,\ldots,L-1,\quad\ell\in\NN
\,,\\ 
\label{e:bounds-rem}
\| R_L^N \|_{L^2(\RR^n)}& \leq C_L\,\| a \|_{C^{2L+n}(\Omega)} \, 
( 1 + C_0 h)^N \sum_{k=1}^{N} J_k\, k^{3L+n}  
\,.
\end{align}
The constants $ C_{ j, \ell },\,C_0,\,C_L$ depend on
constants in \eqref{eq:varph} and on the operators $ (S_i)_{i= 1,\ldots, J}$. 
\end{prop}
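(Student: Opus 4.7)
The plan is to prove Proposition~\ref{p:an1} by induction on $N$, applying Lemma~\ref{l:0} at each step. The induction hypothesis is that after $k$ steps the state has the form $e^{i\varphi_k/h}\big(\sum_{j=0}^{L-1} h^j a_j^k + h^L R_L^k\big)$ with $a_j^k\in C^\infty_c(\Omega)$. At step $k+1$ I apply $S_{i_{k+1}} = \chi_{i_{k+1}}^w(x,hD)\circ T_{i_{k+1}}$. First, Lemma~\ref{l:0} applied to $T_{i_{k+1}}$ propagates the Lagrangian state on $\Lambda_k$ to one on $\Lambda_{k+1}$ (by \eqref{eq:kap}) with principal symbol given by \eqref{eq:l1300}; since the $T_i$ are microlocally unitary, \eqref{e:unitarity} yields $|\alpha_0|/|\det\psi_{x\xi}''|^{1/2} = 1$, so the principal symbol reduces to $|\det dg_{k+1}|^{1/2}\,a_0^k\circ g_{k+1}$. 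Second, the pseudodifferential action of $\chi_{i_{k+1}}^w(x,hD)$ on a Lagrangian state on $\Lambda_{k+1}$ multiplies the principal symbol by $\chi_{i_{k+1}}(x,\varphi_{k+1}'(x))$, modulo $\cO(h)$ contributions to the subprincipal terms. Combining these operations and iterating immediately yields \eqref{e:principal}.

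For the derivative bounds \eqref{e:bounds}, the $\ell$-th derivative of $a_0^N(x^N)$ is computed by the Fa\`a di Bruno formula applied to the composition $g_1\circ\cdots\circ g_N$; every differentiation produces factors $\partial x^k/\partial x^N$ which are uniformly bounded in $k$ and $N$ by the contraction hypothesis \eqref{e:contraction}, while the combinatorial count of multi-indices contributes at most a polynomial factor $(N+1)^\ell$, and $\|a\|_{C^\ell}$ is pulled from the innermost term. For $j\geq 1$, each application of Lemma~\ref{l:0} brings in new contributions via the operators $L_{j'}$ in \eqref{eq:b_j}; these cost up to $2j$ additional derivatives on the current symbol and on $\chi_{i_k}$, $\psi$, $\varphi_k$, and their accumulation across $N$ steps produces the polynomial factor $(N+1)^{\ell+3j}$, the exponent $3j$ reflecting the triple dependence of each $L_{j'}$ on derivatives of those three objects. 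The global multiplicative factor $J_N$ in \eqref{e:bounds} is simply the product of the Jacobian factors $D_k$ inherited from \eqref{e:principal}.

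The remainder bound \eqref{e:bounds-rem} is the main obstacle. At each step, Lemma~\ref{l:0} produces a \emph{fresh} remainder of order $h^L$, whose $L^2$-norm is controlled by \eqref{eq:l130} in terms of the $C^{2L+n}$-norm of the symbol fed into that step, i.e.\ by the $C^{2L+n}$-norm of $a_0^{k-1},\ldots,a_{L-1}^{k-1}$; by the previous paragraph this is bounded by $J_{k-1}\,k^{3L+n}\,\|a\|_{C^{2L+n}}$. Each such freshly generated error is then propagated through the remaining $N-k$ steps; propagation through one step $S_{i_k}$ multiplies the $L^2$-norm by $(1+C_0 h)$, using the uniform $L^2$-boundedness of $S_{i_k}$ together with the change-of-variables factor $D_k$ absorbed in $J_N/J_{k-1}$ and subprincipal semiclassical error estimates. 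Summing the contributions generated at steps $k=1,\ldots,N$ and bounding each $(1+C_0 h)^{N-k}\leq (1+C_0 h)^N$ gives precisely \eqref{e:bounds-rem}. The crucial point enabling the whole scheme is the uniform contraction assumption \eqref{e:contraction}: it prevents derivative orders from compounding super-polynomially through the composition, without which the remainder would blow up exponentially in $N$ and the method would break down well before $N\sim \log(1/h)$.
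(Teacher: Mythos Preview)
Your overall strategy matches the paper's: iterate Lemma~\ref{l:0}, track the principal symbol recursively to get \eqref{e:principal}, and write the total remainder as a sum of freshly generated pieces propagated forward by the near-unitary $S_i$. The principal-symbol formula and the remainder decomposition are handled correctly.

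There is, however, a genuine gap in your justification of the exponent $3j$ in \eqref{e:bounds}. Your explanation (``triple dependence of each $L_{j'}$ on derivatives of those three objects'') is not the mechanism. The $3j$ arises from a \emph{double induction} on $j$ and $N$: assuming \eqref{e:bounds} holds for all $j'<j$ (all $N$, all $\ell$), the recursion from \eqref{eq:b_j} reads
\[
a_j^{N}(x^N)=f_N(x^N)\,a_j^{N-1}(x^{N-1})+\sum_{j'=1}^{j}\sum_{|\gamma|\le 2j'}\Gamma^N_{j'\gamma}(x^N)\,\partial^\gamma a_{j-j'}^{N-1}(x^{N-1}),
\]
so $\|a_j^N\|_{C^0}\le D_N\|a_j^{N-1}\|_{C^0}+C\sum_{j'=1}^j\|a_{j-j'}^{N-1}\|_{C^{2j'}}$. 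By the inductive hypothesis the last sum is $\le C\,J_{N-1}\sum_{j'}N^{2j'+3(j-j')}\le C\,J_{N-1}N^{3j-1}$; summing this linear recursion in $N$ yields the factor $N^{3j}$. Higher $\ell$ is then handled by a further induction on $\ell$, differentiating the same recursion. Without this structure one cannot see why the exponent is $3j$ rather than $2j$ or something worse; a bare appeal to Fa\`a di Bruno does not control the interaction between the $2j'$ derivative loss at each step and the accumulated $N$-dependence of the lower-order symbols.

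Two smaller points. First, for $\|a_0^N\|_{C^\ell}$ your Fa\`a di Bruno sketch is morally right but incomplete: higher derivatives of the long composition $g_1\circ\cdots\circ g_N$ themselves produce sums of $\cO(N)$ terms at each order, and one must check these stay polynomial rather than factorial in $N$. The paper avoids this by inducting on $\ell$ and telescoping the recursion, which is cleaner. Second, in your remainder paragraph the phrase ``change-of-variables factor $D_k$ absorbed in $J_N/J_{k-1}$'' is spurious: propagation of $r_L^k$ through $S_{i_{k+1}},\ldots,S_{i_N}$ costs only $(1+C_0 h)^{N-k}$ by sharp G{\aa}rding (unitarity of $T_i$ and $0\le\chi_i\le 1$), with no Jacobian factor. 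The $J_k$ in \eqref{e:bounds-rem} comes entirely from the size of the \emph{fresh} remainder at step $k$, via \eqref{eq:l130} applied to $a_{j}^{k-1}$.
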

A crucial point in the above proposition is the 
explicit dependence on $ N $.
\begin{proof}
The proof of the proposition consists by iterating the results of Lemma~\ref{l:0},
keeping track of the bounds on the symbols and remainders.

For each $i$, the operator $S_i=\chi_i^w\, T_i$ can also be written in the form
\eqref{eq:lfio}, up to an error $\cO_{L^2\to L^2}(h^\infty)$, 
with the symbol $\alpha^i( x^1, \xi^0 ; h )$ replaced by 
$\beta^i( x^1, \xi^0 ; h )$ of compact support, and principal symbol
$\beta^i_{0}( x^1, \xi^0)=\chi_i(x^1,\psi'_{i\,x}(x^1,\xi^0))\,\alpha^i_{0}( x^1, \xi^0)$.
From the unitarity of $T_i$, $\alpha^i_{0}$ satisfies 
\eqref{e:unitarity} near $U$; as a result, when applying $S_i$ to
a Lagrangian state as in Lemma~\ref{l:0}, the first ratio in \eqref{eq:l1300}
should be replaced by $\chi_i(x^1,\xi^1)$.

To abbreviate the formulas, we set
$$
f_k(x)\defeq e^{i\beta_k/h} e^{ i \gamma_k ( x) } \,
\chi_{i_k}(x,\varphi'_k(x))\,|\det  d g_k (x)|^{\frac12},\qquad k=1,\ldots,N\,,
$$
where using unitarity \eqref{e:unitarity}, 
$$ \exp (i \gamma_k ( x ) ) = \frac{
\alpha^{i_k}_{0} ( x ,  \varphi_{k-1}'(g_k(x))  ) }
{ |\det \psi_{i_k\,x \xi}'' ( x , \varphi_{k-1}'(g_k(x)) ) |^{\frac12}}\,. $$
We will also use the short notation
$$
a_{j, \ell}^N \defeq \| a_j^N \|_{ C^{\ell } ( \Omega ) }\,,\qquad j=0,\ldots,L-1,\quad \ell\in\NN\,.
$$
We first analyze the principal symbol $a^N_0(x)$.
The formula \eqref{eq:l1300} and the definition of $f_k $ give
\be\label{e:an0}
a^N_0(x^N)= f_N(x^N)\,a^{N-1}_0(x^{N-1})\,,
\ee
which by iteration yields \eqref{e:principal}.
From $\|f_k\|_{C^0}\leq D_k$ the recursive relation \eqref{e:an0} also
implies the bound 
$a^N_{0,0}\leq J_N\,\norm{a}_{C^0}$.

To estimate higher $ C^\ell$ norms we 
differentiate \eqref{e:an0} with respect to $x^N$:
$$
\frac{\partial a^N_0}{\partial x^N}=f_N(x^N)\,
\frac{\partial x^{N-1}}{\partial x^N}\,\frac{\partial a^{N-1}_0}{\partial x^{N-1}}
+\frac{\partial f_N}{\partial x^N}\,a^{N-1}_0(x^{N-1})
$$
(to simplify the notation we omit the subscripts corresponding
to the coordinates in $x^N=(x^N_1,\ldots,x^N_n)$). 
Since we already control $a^{N-1}_{0,0}$,
and the norms $\norm{f_N}_{C^1}$ are
bounded uniformly in $N$, the above expression
can be schematically written as
$$
\frac{\partial a^N_0}{\partial x^N}=f_N\,
\frac{\partial x^{N-1}}{\partial x^N}\,\frac{\partial a^{N-1}_0}{\partial x^{N-1}}
+\cO(J_{N-1}\,\norm{a}_{C^0})\,,
$$
with an implied constant independent of $N$.
Applying this equality iteratively to $\partial a^{k}_0/\partial x^{k}$ down to $k=0$, 
we obtain
\begin{multline*}
\frac{\partial a^N_0}{\partial x^N}=
f_N f_{N-1}\cdots f_1\,\frac{\partial x^{0}}{\partial x^N} \frac{\partial a^0_0}{\partial x^0}\,+\\
+\cO\left( J_{N-1}
+f_N \frac{\partial x^{N-1}}{\partial x^N} J_{N-2}+
+f_N f_{N-1}  \frac{\partial x^{N-2}}{\partial x^N} J_{N-3}
+\ldots
+f_N f_{N-1}\cdots f_2 \frac{\partial x^{1}}{\partial x^N}  \right)\norm{a}_{C^0}\,.
\end{multline*}
Notice that $a^0_0=a$. Using the uniform bounds for the
Jacobian matrices $\frac{\partial x^{k}}{\partial x^N}$ and for the $D_k$,
this expression leads to
$$
a^N_{0,1}\leq C\,J_N\,\norm{a}_{C^1} + C\norm{a}_{C^0} \sum_{k=1}^N \frac{J_N}{D_k}
\leq C_{0,1}\,J_N\,(N+1)\,\norm{a}_{C^1}\,.
$$
The same procedure can be applied to
 higher derivatives of $a^N_0$: since $\norm{f_{N}}_{C^{\ell}}$
is uniformly bounded, the chain rule shows that
the $\ell$th derivatives of \eqref{e:an0} can be written
$$
\frac{\partial^\ell a^N_0}{(\partial x^N)^\ell}= f_N(x^N)\,
\Big(\frac{\partial x^{N-1}}{\partial x^N}\Big)^\ell\,
\frac{\partial^\ell a^{N-1}(x^{N-1})}{(\partial x^{N-1})^\ell}
+\cO(a^{N-1}_{0,\ell-1})\,.
$$
Assume we have proven the bounds \eqref{e:bounds} for the $a^{k}_{0,\ell-1}$, $k=0,\ldots,N$.
Iterating the above equality from $k=N-1$ down to $k=0$ yields the following estimate for 
$\partial^\ell a^N_0/(\partial x^N)^\ell$
\begin{multline}\label{e:aN0ell}
\frac{\partial^\ell a^N_0}{(\partial x^N)^\ell}=
f_N f_{N-1}\cdots f_1\Big(\frac{\partial x^{0}}{\partial x^N}\Big)^\ell 
\frac{\partial^\ell a^0_0}{(\partial x^0)^\ell}
+\cO\Big( J_{N-1}\,N^{\ell-1}
+f_N \Big(\frac{\partial x^{N-1}}{\partial x^N}\Big)^\ell J_{N-2}\,(N-1)^{\ell-1}+\\
+f_N f_{N-1} \Big(\frac{\partial x^{N-2}}{\partial x^N}\Big)^\ell J_{N-3}\,(N-2)^{\ell-1}
+\ldots
+f_N f_{N-1}\cdots f_2 \Big(\frac{\partial x^{1}}{\partial x^N}\Big)^\ell \Big)\norm{a}_{C^{\ell-1}}
\,.
\end{multline}
Using the uniform bounds \eqref{e:contraction} for 
$\frac{\partial x^{k}}{\partial x^N}$ and $D_k$, we get
$$
a^N_{0,\ell}\leq C_\ell\,J_N\,\norm{a}_{C^{\ell}}+ 
C\,\norm{a}_{C^{\ell-1}}\sum_{k=1}^N \frac{J_N}{D_k}\,k^{\ell-1}\leq 
C_{0,\ell}\,J_N\,(N+1)^\ell\,\norm{a}_{C^{\ell}}\,.
$$

We can now deal with higher-order coefficients
 $a^N_j$, by double induction on $j$ and $ N$. Above we
have proved the bounds for $j=0$ and all $ N $. 
Assume now 
that, for some $j\geq 1$, we have proved the bounds \eqref{e:bounds} 
for 
$a^N_{j',\ell}$ for all $j'<j$, $\ell\geq 0$ and all $N\geq 1$.
By induction on $ N $ we will prove the bounds for that $ j $ 
and all $ N$.

Applying Lemma~\ref{l:0} term by term to
$ a^{N-1} \defeq \sum_{j=0}^{L-1} h^j a^{N-1}_j+h^L R^{N-1}_L\,, $
we see that each component $a^{N}_j$
depends on the components $a^{N-1}_{j'}$, $0\leq j'\leq j$, and not on $R^{N-1}_L$. 
More precisely, 
from \eqref{eq:b_j} we get
\be\label{e:aNj0}
\begin{split}
a^{N}_j (x^{N})&= \sum_{j'=0}^j L^{N}_{j'} \big(\beta^{i_{N}} \, a^{N-1}_{j-j'} \big)(x^{N})\\
&= f_{N}(x^{N})\,a^{N-1}_{j}(x^{N-1}) + 
\sum_{j'=1}^j \sum_{|\gamma|\leq 2j'} \Gamma^{N}_{j'\gamma}(x^{N})\;
\partial^\gamma a^{N-1}_{j-j'}(x^{N-1})\,.
\end{split}
\ee
As explained in the proof of Lemma~\ref{l:0},
the functions $\Gamma^{N}_{j'\gamma}(x)$ can be expressed in terms of 
$\kappa_{i_N}$ and $\varphi_{N-1}$. From the assumptions on the latter, 
the norms, $\norm{\Gamma^N_{j\gamma}}_{C^\ell}$,
 are bounded uniformly with respect to $N$,
so \eqref{e:aNj0} implies the following upper bound:
\begin{align}
a^{N}_{j,0} &\leq D_{N}\, a_{j, 0}^{N-1} + C\, \sum_{j'=1}^j a^{N-1}_{j-j',2j'}\\
&\leq D_{N}\, a_{j, 0}^{N-1} + C\,J_{N-1}\, 
\sum_{j'=1}^j  N^{2j'+3(j-j')}\, \norm{a}_{C^{2j'}} \\
&\leq D_{N}\, a_{j, 0}^{N-1} + C\,j\,J_{N-1}\,N^{3j-1}\, \norm{a}_{C^{2j}}\,.
\end{align}
This inequality can be used in an induction with respect to $N$, starting from the trivial
$a^0_{j,0}= 0$. Assuming $a^{N-1}_{j,0}\leq C_{j,0}\,J_{N-1}\,N^{3j}\,\norm{a}_{C^{2j}}$ for 
some $C_{j,0}>0$, we obtain
\be\label{e:an1}
a^{N}_{j,0} \leq C_{j,0}\,J_{N}\big(N^{3j} + \frac{C\,j}{C_{j,0}\,D_{N}}\,N^{3j-1} \big) \norm{a}_{C^{2j}} \,.
\ee
The constant $C_{j,0}$ can be chosen large enough, so that the brackets are smaller
than\\ $N^{3j}+3jN^{3j-1}\leq (N+1)^{3j}$, which proves the induction
 step for $a^{N}_{j,0}$.

Once we have proved the bounds for the sup-norms of the symbols $a^N_j$, we can estimate
their derivatives by induction on $\ell$, as we did above for the principal symbol $a^N_0$.
Assume that we have proved the bounds \eqref{e:bounds} for 
all $a^{N}_{j,l}$, $N\geq 1$, $0\leq l\leq \ell-1$.
If we differentiate \eqref{e:aNj0} $\ell$ times with respect to $x^{N}$, we get 
$$
\frac{\partial^\ell a^N_j}{(\partial x^N)^\ell}
= f_{N}\,\Big(\frac{\partial x^{N-1}}{\partial x^{N}}\Big)^\ell\,
\frac{\partial^\ell a^{N-1}_j}{(\partial x^{N-1})^\ell}
+ \cO\big(a^{N-1}_{j,\ell-1}+ \sum_{j'=1}^j a^{N-1}_{j-j',\ell+2j'}\big)\,.
$$
where the implied constant depends on the bounds on $\norm{\Gamma^N_{j\gamma}}_{C^\ell}$.
Taking into account what we already know on $a^{N-1}_{j,\ell-1}$ and $a^{N-1}_{j-j',\ell+2j'}$, 
this takes the form
$$
\frac{\partial^\ell a^N_j}{(\partial x^N)^\ell}= f_{N}\,
\Big(\frac{\partial x^{N-1}}{\partial x^{N}}\Big)^\ell\,
\frac{\partial^\ell a^{N-1}_j}{(\partial x^{N-1})^\ell} + 
\cO\big(J_{N-1}\,N^{\ell+3j-1}\,\norm{a}_{C^{\ell+2j}}\big)\,.
$$
Applying iteratively this equality to $\partial^\ell a^{k}_j/(\partial x^{k})^\ell$ down to $k=1$ 
(as in \eqref{e:aN0ell}) and using $a^0_j(x)\equiv 0$, $ j > 0 $, we find:
\begin{multline}
\frac1{\norm{a}_{C^{\ell+2j}}}\,\frac{\partial^\ell a^{N}_j}{(\partial x^{k})^\ell} 
=\cO\Big(J_{N-1}\,N^{\ell+3j-1} + 
f_{N}\,\Big(\frac{\partial x^{N-1}}{\partial x^{N}}\Big)^\ell\,J_{N-2}\,(N-1)^{\ell+3j-1}\\
+f_{N}f_{N-1}\,\Big(\frac{\partial x^{N-2}}{\partial x^{N}}\Big)^\ell\, J_{N-3}\,(N-2)^{\ell+3j-1}\, 
+\ldots+
f_{N}f_{N-1}\cdots f_{1}\,\Big(\frac{\partial x^{1}}{\partial x^{N}}\Big)^\ell \Big)\,.
\end{multline}
From the uniform bound \eqref{e:contraction} and $\|f_k\|_{C^0}\leq D_k$,
this gives 
$$
a^{N}_{1,1}\leq C\,J_N \sum_{k=1}^N \frac{k^{\ell+3j-1}}{D_k}\leq C_{j,\ell}\,J_N\,N^{\ell+3j}\qquad
\text{for a certain }C_{1,\ell}>0\,.
$$ 
This proves the induction step $\ell-1\to \ell$, so that
we now have proved the bounds for $a^N_{j,\ell}$ for all $N\geq 1$, $\ell\geq 0$. This 
achieves to show the induction step on $j$, and \eqref{e:bounds}.

\bigskip

To estimate the remainder $ R_L^N ( x , h ) $ we define $ r_{k+1}^N ( x , h ) $
by 
\[ \begin{split}  S_{i_{k+1} } & \left( e^{ i \varphi_{k} / h } 
( a_0^{k} + h a_1^{k}  + \cdots + h^{L-1} a_{L-1}^{k} ) \right) \\
& = e^{ i \varphi_{k}  /h } \big( a_0^{k+1}  + h a_1^{k+1}  + 
\cdots + h^{L-1} a_{L-1}^{k+1} + h^L r_L^{k+1} ( \bullet, h ) \big) \,. 
\end{split}
\]
Due to the cutoff $\chi^w_i$, the remainder will be 
$\cO\big(({h}/({h+d(\bullet,\supp)})^\infty\big)$ outside 
$\pi\supp \chi_i$, so it is essentially supported inside $\Omega$.
On the other hand, from Lemma \ref{l:0} and the estimates \eqref{e:bounds}, we get
\[ \begin{split}
\| r_L^{k+1} ( \bullet, h ) \|_{ C^\ell(\RR^n)} & \leq C_{L,\ell} \sum_{j=0}^{L-1}
\| a^k_j \|_{C^{\ell+n+2(L-j)}} \leq
 C_{L,\ell}\, \sum_{j=0}^{L-1}J_k\, (k+1)^{j+\ell+n+2L} \|a\|_{ C^{\ell + n + 2 L }}\\
&  \leq C_{L,\ell}\,J_k\, (k+1)^{3L + \ell + n }\,  \|a\|_{ C^{\ell + n + 2 L }}
 \,. \end{split} \]
In particular,
\begin{equation}
\label{eq:rl2} \| r_L^{k+1} ( \bullet, h ) \|_{L^2(\RR^n)} \leq 
C_{L}\,J_k\, (k+1)^{3L+n} \| a \|_{C^{2L+n}} \,.
\end{equation}
The remainder $ R_N^L( x , h ) $ can now be written as 
\[ 
R^N_L  = r_N^L  + e^{-i \varphi_N /h }
\sum_{k=1}^{N-1} ( S_{i_{N}} \circ \cdots \circ S_{i_{k+1}} ) \big( r_L^{k}\, e^{i \varphi_k / h } \big)\,. 
\]
Since we assumed that $ T_j $'s  are microlocally unitary on the support
of $ \chi_j$'s, and that $ 0 \leq \chi_j \leq 1$, we have 
from the sharp G{\aa}rding inequality:
$$
\norm{ S_j }_{ L^2(\RR^n) \to L^2(\RR^n) } \leq 1 + C_0 h \,.
$$
The above formula for $ R_N^L $ and \eqref{eq:rl2}
give the estimate \eqref{e:bounds-rem}.
\end{proof}

\begin{rem} We can also obtain slightly weaker 
pointwise estimates on $ R_L^N $ in place of the $ L^2 $ estimates
of \eqref{eq:lit}. In fact, since $ \chi_j $'s are compactly 
supported we have 
\[  h^{\frac{n}2 + \ell}\, \norm{ R_L^N }_{C^\ell(\RR^n)} 
\leq C_\ell\, \| R_L^N \|_{ H^\ell_h} \leq C'_\ell\, \norm{ R_L^N }_{L^2(\RR^n)} \,, \]
and hence 
\[ \| R_L^N ( \bullet, h ) \|_{ C^{\ell}(\RR^n)} \leq C_{L\ell}\, h^{-\frac n 2 
- \ell }\, \| a \|_{C^{2L+n}}\,  ( 1 + C_0 h)^N 
\sum_{k=1}^{N} J_k\, k^{3L+n}  \,. \]
\end{rem}


\section{Classical Dynamics}\label{s:cl}
In this section we analyse the evolution of a family of Lagrangian leaves through the
classical flow. We will check that these Lagrangians (which remain in the vicinity of the
trapped set) stay ``under control'' uniformly
with respect to time. Eventually, this uniform control, which implies that 
the conditions \eqref{eq:varph} hold, 
will allow us to apply Proposition~\ref{p:inclination} in \S\ref{s:estimate}.

\subsection{Evolution of Lagrangian leaves}\label{s:elf}

\subsubsection{Poincar\'e sections and Poincar\'e maps}\label{s:Poincare}
\renewcommand{\eps}{\varepsilon}
We describe the construction of Poincar\'e sections and maps associated with the 
flow $\Phi^t$ on $\cE_E$ in the vicinity of $K_E$. This construction
will be used in the next section.

Take $\rho_0\in K_E$. We use an adapted coordinate chart $(y^0,\eta^0)$ centered at
$\rho_0\equiv (0,0)$ to parametrize 
the neighbourhood of $\rho_0$ in $T^*X$, with properties as 
described in Lemma~\ref{l:coord}. To keep in mind that 
$$
E^+_{\rho_0}=\Span\Big\{\frac{\partial}{\partial y_i}(0) \,,
\quad i=2,\ldots,n\Big\} \,,
$$ 
(and similarly
for $E^-_{\rho_0}$), we keep the ``time'' and ``energy'' coordinates $y^0_1,\,\eta^0_1$,
but rename the transversal coordinates as
$$
u^0_j\defi y^0_{j+1},\quad s^0_j\defi \eta^0_{j+1},\qquad j=1,\ldots,n-1\,.
$$
For any small
$\eps>0$ and using the Euclidean disk $D_\eps=\{u\in\RR^{n-1},\ |u| < \eps\}$, 
we define a neighbourhood of $\rho_0$ as the polydisk
\be
U_0(\eps)\equiv \set{|y^0_1| < \eps,\ |\eta^0_1| < \delta,\ u^0 \in D_\eps,\ 
s^0 \in D_\eps}\,.
\ee
Here $\delta>0$ corresponds to an energy interval where the dynamics remains uniformly hyperbolic,
as mentioned in \eqref{e:struct-stab}.
The intersection $U_0(\eps)\cap\cE_E$ is obtained by imposing the condition $\eta^0_1=0$,
and a {\em Poincar\'e section} $\Sigma_0=\Sigma_0(\eps)$ transversal to the flow
is obtained by imposing both $\eta^0_1=0$ and $y^0_1=0$.
The chart $(u^{0}, s^0)$ on $\Sigma_0$ 
is symplectic with respect to the induced symplectic structure on $\Sigma_0$.

Let us assume that the point $\Phi^1(\rho_0)$ belongs to a polydisk
$U_1(\eps)$ constructed similarly around a certain point $\rho_1\in K_E$, using an adapted
chart $(y^1,\eta^1)$. As a result,  
the Poincar\'e section $\Sigma_1\equiv \{y^1_1=\eta^1_1=0\}$ will intersect
the trajectory $(\Phi^s(\rho_0))_{|s-1|\leq\eps}$ at a single point, which we call $\rho'_0$.
The {\em Poincar\'e map}
$\kappa$ is defined, for $\rho\in \Sigma_0(\eps)$ near $\rho_0$, by taking the 
intersection of the trajectory $(\Phi^t(\rho))_{|t-1|\leq\eps}$ 
with the section $\Sigma_1$ (this intersection consists of at most one point). 
This map is automatically {\em symplectic}. 
In general, the 
strong (un)stable spaces $E^{\pm}_{\rho'_0}$ are not exactly tangent to $\Sigma_1$, but
close to it: they form ``angles'' $\cO(\eps)$ with the intersections,
$$
\tilde E^{\pm}_{\rho'_0}\defeq E^{\pm 0}_{\rho'_0}\cap T_{\rho'_0}\Sigma_1\,. 
$$
Furthermore, since the (un)stable subspaces $E^{\pm}_\rho$ are {\em H\"older continuous} 
with respect to 
$\rho\in K_E^\delta$, with some H\"older
exponent $\gamma>0$, and $d(\rho'_0,\rho_1)\leq\eps$, the subspaces 
$E^{\pm}_{\rho'_0}$ form ``angles'' $\cO(\eps^\gamma)$ with $E^{\pm}_{\rho_1}$.
The tangent map $d\kappa(\rho_0)$ maps $E^{\pm}_{\rho_0}$ to $\tilde E^{\pm}_{\rho'_0}$.
Hence, using the coordinate frames $\{(u^0,s^0)\}$ on $\Sigma_0$ (centered at $\rho_0$) 
and $\{(u^1,s^1)\}$ on $\Sigma_1$ (centered at $\rho_1$), the symplectic matrix
representing $d\kappa(\rho_0)$ can be written in the form
\be\label{e:Dkappa2}
d\kappa(\rho_0)\equiv \begin{pmatrix}A&0\\0& ^t\!A^{-1}\end{pmatrix}
+\eps^\gamma\begin{pmatrix}0&*\\ *& *\end{pmatrix}\,,
\ee
where the second matrix on the right
has uniformly bounded entries.
From the assumptions \eqref{eq:aa} on hyperbolicity, for $\eps$ small enough
there exists 
\be
\nu=e^{-\lambda}+\cO(\eps^\gamma)<1
\ee
such that the matrix $A$ satisfies
\be
\norm{A^{-1}}\leq \nu\quad\text{and}\quad\norm{^t\!A^{-1}}\leq \nu\,,
\ee
where $\norm{^t\!A^{-1}}$ is computed using the norms on $T_{\rho_0}\Sigma_0$,
$T_{\rho_1}\Sigma_1$ induced by the adapted metric $g_{\rm{ad}}$ (see \S\ref{defhyp}).
By extension, in the neighbourhood $V\subset\Sigma_0$ where it is defined, 
$\kappa$ takes the following form in the coordinates $(u^0,s^0)\mapsto (u^1,s^1)$:
\be\label{e:kappa}
\kappa(u^0,s^0)=(u^1,s^1)(\rho'_0) + 
\big(A u^0 + \alpha(u^0,s^0), \,^t\!A^{-1} s^0 + \beta(u^0,s^0)\big),\quad (u^0,s^0)\in V\,,
\ee
and the smooth functions $\alpha$, $\beta$ satisfy
\be\label{e:alpha-beta}
\alpha(0,0)=\beta(0,0)=0,\qquad \norm{\alpha}_{C^1(V)}\leq C\eps^\gamma,
\quad \norm{\beta}_{C^1(V)}\leq C\eps^\gamma\,.
\ee

\subsubsection{Evolving Lagrangian leaves}\label{s:evolving}

Given $\eps>0$, one can choose a
finite set of points $(\rho_i\in K_E)_{i\in I}$, adapted charts $(y^i,\eta^i)$
centered on $\rho_i$, such that the polydisks
$U_i(\eps) \equiv  \set{|y^i_1| < \eps,\ |\eta^i_1| < \delta,\ u^{i}\in D_\eps,\ s^i\in D_\eps}$
form an open cover of $K^\delta_E$:
\be\label{e:covering}
K^\delta_E\subset \bigcup_{i\in I} U_i(\eps)\,.
\ee
For some index $i_0\in I$, let $ \Lambda=\Lambda^0_{\loc} \subset U_{i_0}(\eps)\cap \cE_E$ 
be a connected isoenergetic Lagrangian leaf\footnote{Here and below, a leaf is a contractible submanifold with
piecewise smooth boundary.}.
For any $t>0$ we call $\Lambda^t = \Phi^t ( \Lambda )$.

We consider a point $\rho_0\in \Lambda$, and assume that there exists an integer $N>0$ such
that, for at each integer
time $0\leq k\leq N$, 
the point $\rho_k=\Phi^k(\rho_0)$ belongs to the set $U_{i_k}(\eps)$ for some $i_k\in I$.
We then call $\Lambda^k_{\loc}$ the connected part of $(\cup_{|s|<\vareps}\Phi^s \Lambda^k)\cap U_{i_k}(\eps)$
containing $\rho_k$.

We may use the symplectic coordinate chart $(y^{i_k},\eta^{i_k})$ to represent $\Lambda^k_{\loc}$.
Being contained in a single energy shell $\cE_E$, the Lagrangian leaf $\Lambda^k_{\loc}$ 
is foliated by flow trajectories (bicharacteristics). It can be put
into the form 
\be\label{e:Lambda}
\Lambda = \cup_{|s| < \eps} \Phi^s (S^k)\,,
\ee
where $S^k=\Lambda^k_{\loc}\cap \Sigma_{i_k}$ is an $n-1$-dimensional
Lagrangian leaf in the symplectic
section $\Sigma_{i_k}(\eps)=U_{i_k}(\eps)\cap\{y^{i_k}_1=\eta^{i_k}_1=0\}$ (see Fig.~\ref{f:Poincare} for
a representation of the above objects).

We will be interested in Lagrangian leaves which are ``transversal enough'' 
to the stable subspace $E^{-}_{\rho_k}$,
and can therefore be represented by graphs of smooth functions in the adapted charts:
\be\label{e:graph}
\Lambda^k_{\loc}\equiv \set{ ( y^{i_k} ,  \eta^{i_k}  ) 
\; : \; \eta^{i_k} = F^k ( y^{i_k} ) }\,.
\ee
The intersection $S^k=\Lambda^k_{\loc}\cap \Sigma_{i_k}$ is then also given by a graph:
$$
S^k\equiv \set{(u^{i_k},s^{i_k}) \; : \; s^{i_k} = 
f^{k}(u^{i_k}) ,\ u^{i_k}\in D_\eps}\,,
$$
and \eqref{e:Lambda} implies that $F^k(y^{i_k})=(0,f^k(u^{i_k}))$, so that \eqref{e:graph}
takes the form
\be\label{e:reduction}
\Lambda^k_{\loc} \equiv \set{(y^{i_k}_1,u^{i_k}; 0, f^{k}(u^{i_k})),\ |y^{i_k}_1| < \eps,\ u^{i_k}\in D_\eps}\,.
\ee

\noindent{\bf Convention:}
In the rest of this section the norm $\norm{\cdot}$ applying to an object living on
$\Sigma_{i_k}\equiv D_\eps\times D_\eps$ corresponds to the Euclidean norm 
on $T_{\rho_{i_k}}\Sigma_{i_k}$
relative to the adapted metric $g_{\rm{ad}}(\rho_{i_k})$. 
The same convention applies to the norm $\norm{\cdot}$ of a linear
operator sending an object on $\Sigma_{i_k}$ to an object on $\Sigma_{i_{k+1}}$ (or vice-versa).

\medskip

The following proposition (similar to the Inclination Lemma of \cite[Proposition~6.2.23]{KaHa})
shows that, if $\eps$ has been chosen small enough and $\Lambda$ is
``transversal enough'' to the stable manifolds (that is, in some ``unstable cone''), 
then the local Lagrangian leaves
$\Lambda^k_{\loc}$ remain in the same unstable cone, uniformly with respect to $k=0,\ldots,N$. 
\begin{prop}
\label{p:inclination}
Fix some $\gamma_1>0$. Then there exists $\eps_{\gamma_1}>0$ such that, provided the diameter
$\eps\in (0,\eps_{\gamma_1})$, the following holds.

Suppose the Lagrangian $\Lambda=\Lambda^0_{\loc}\subset \cE_E \cap U_{i_0}(\eps)$ is
the graph of a smooth function $f^0$
in the adapted frame $(y^{i_0},\eta^{i_0})$, and is
contained in the {\em unstable $\gamma_1$-cone}:
$$
\Lambda^0_{\loc} \equiv \set{ ( y^{i_0}_1,u^{i_0} ;0, f^0 ( u^{i_0} ) ),
\ |y^{i_0}_1| < \eps,\ u^{i_0}\in D_\eps },\quad 
\text{with}\quad \sup_{u^{i_0}}\|df^0(u^{i_0})\|\leq\gamma_1\,.
$$

i) Then, for any $0\leq k\leq N$, the connected component 
$\Lambda^k_{\loc}\subset U_{i_k}(\eps)$ containing $\rho_k$ is also a graph in the frame
$(y^{i_k},\eta^{i_k})$, and is also contained in the
unstable $\gamma_1$-cone:
$$
\Lambda^k_{\loc}  \equiv \set{ ( y^{i_k}_1,u^{i_k};0, f^k ( u^{i_k} ) ),
\ |y^{i_k}_1| < \eps,\ u^{i_k}\in D_\eps},
\quad \text{with}\quad \sup_{u^{i_k}\in D_\eps}\|df^k ( u^{i_k} )\|\leq\gamma_1\,.
$$
$ii)$ For any integer $\ell\geq 2$ and $\tilde\gamma_\ell>0$, we may choose
$\eps>0$ small enough such that,
if $f^0$ is in the unstable $\gamma_1$-cone and
satisfies $\norm{f^0}_{C^\ell}\leq \tilde\gamma_\ell$, then
\be\label{eq:l21}
\forall k=0,\ldots,N,\qquad \norm{f^k}_{C^\ell(D_\eps)} \leq \gamma_\ell\,.
\ee
$iii)$ From the above properties, near $\rho_0$ the map $\Phi^N\rest_\Lambda$ can be projected
on the planes $\{\eta^{i_0}=\nolinebreak 0\}$ and $\{\eta^{i_N}=0\}$, inducing
a map $y^{i_0}\mapsto y^{i_N}$.

In the case where the sets $U_{i_k}(\eps)$ contain a trajectory in $K_E^\delta$,
(so these sets may be centered on $\rho_{i_k}=\Phi^k(\rho_{i_0})$), 
the projected map $y^{i_0}\mapsto y^{i_N}$ satisfies the following estimate on its
domain of definition:
$$
\det \Big( \frac{\partial y^{i_N}}{\partial y^{i_0}} \Big) = (1+\cO(\eps))\,
\exp\big(\lambda^+_{N}(\rho_{i_0}) \big) \,.
$$
Here $\lambda^+_{N}$ is the unstable Jacobian given in \eqref{e:unst-jac}. 
The crucial point is that the implied constant is independent of $N$.
\end{prop}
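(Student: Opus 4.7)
My strategy is to use the Poincar\'e map formalism of \S\ref{s:Poincare} and the bicharacteristic decomposition \eqref{e:Lambda} of each $\Lambda^k_{\loc}$ over its section $S^k = \Lambda^k_{\loc}\cap\Sigma_{i_k}$. Parts $(i)$ and $(ii)$ will be established as a quantitative Inclination Lemma (cf.\ \cite[Proposition~6.2.23]{KaHa}) for the sequence of Poincar\'e maps $\kappa_k : \Sigma_{i_{k-1}} \to \Sigma_{i_k}$, while part $(iii)$ requires a separate calculation identifying the projected Jacobian with the unstable Jacobian $\exp(\lambda^+_N(\rho_{i_0}))$, together with a careful control of the error as $N\to\infty$.

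For part $(i)$, I will write down the recursion for $f^k$ using the form \eqref{e:kappa} of $\kappa_k$. Since $\kappa_k(S^{k-1})\supset S^k$, the implicit function theorem provides a smooth inverse $u^k\mapsto u^{k-1}$ with derivative
\[
\frac{\partial u^{k-1}}{\partial u^k} = \bigl(A_k + \partial_u\alpha_k + \partial_s\alpha_k\, df^{k-1}\bigr)^{-1},
\]
which has operator norm $\leq \nu + \cO(\eps^\gamma)(1+\gamma_1)$ by \eqref{e:alpha-beta}. Combining this with the $s$-component of \eqref{e:kappa} gives
\[
\|df^k\| \leq \nu^2\gamma_1 + C\eps^\gamma(1+\gamma_1),
\]
which is bounded by $\gamma_1$ as soon as $(1-\nu^2)\gamma_1 > C\eps^\gamma(1+\gamma_1)$; this defines $\eps_{\gamma_1}$ and proves the invariance of the unstable $\gamma_1$-cone. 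For part $(ii)$, I will differentiate this recursion $\ell$ times. The leading term will be linear in $d^\ell f^{k-1}$ with coefficient a tensor product of ${}^t\!A_k^{-1}$ and $\ell$ copies of $A_k^{-1}$, whose operator norm is at most $\nu^{\ell+1}+\cO(\eps^\gamma)<1$ for $\eps$ small; the remaining terms are polynomials in the $d^{\ell'}f^{k-1}$ for $\ell'<\ell$ and in derivatives of $\alpha_k,\beta_k$, all bounded by the induction hypothesis on $\ell-1$. A strict contraction with uniformly bounded forcing yields an $N$-uniform bound $\|f^k\|_{C^\ell}\leq \gamma_\ell$, with $\eps$ depending on $\ell$ and $\tilde\gamma_\ell$.

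For part $(iii)$, I will use the block form of $d\Phi^1$ at the points $\rho_{i_{k-1}}$ provided by Lemma~\ref{l:Gam01}: in the adapted coordinates, the upper-right block has the form $\operatorname{diag}(*,0)$, the lower-left block vanishes, the upper-left block is $\operatorname{diag}(1,A_k)$, and the lower-right block is $\operatorname{diag}(1,{}^t\!A_k^{-1})$. Composing $N$ such maps, the upper-left block of $d\Phi^N|_{\rho_{i_0}}$ is $Y_1=\operatorname{diag}(1,A_N\cdots A_1)$, and from the graph representation of $\Lambda$ the ``cotangent'' part of the tangent to $\Lambda$ is given by $\tilde M^0=\operatorname{diag}(0,df^0)$. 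The $u$-block of $\tilde M^0$ is annihilated by the zero column of $Y_2$, so at $y^{i_0}=0$,
\[
\det\Big(\tfrac{\partial y^{i_N}}{\partial y^{i_0}}\Big)\Big|_{0}
= \det Y_1 = \det(A_N\cdots A_1) = \exp(\lambda^+_N(\rho_{i_0})),
\]
the last equality coming from \eqref{e:determ}--\eqref{e:unst-jac} together with the adapted-volume normalization of Lemma~\ref{l:coord} and the fact that $d\Phi^t(H_p)=H_p$ contributes a factor of $1$. For a general $y^{i_0}$ in the domain of the map, the correction is controlled by: (a) parts $(i)$--$(ii)$, giving $\|df^0\|\leq\gamma_1$; (b) an exponentially small bound $|\Phi^k(y^{i_0})-\rho_{i_k}|=\cO(\eps(e^{-\lambda k}+e^{-\lambda(N-k)}))$ coming from the hyperbolic structure (points surviving to $U_{i_N}(\eps)$ under $\Phi^N$ must lie in a stable tube, and the backward iterates of the image inherit an unstable tube estimate); and (c) H\"older continuity of the splitting with exponent $\gamma$. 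Summing the per-step H\"older errors $\cO(\eps^\gamma(e^{-\lambda\gamma k}+e^{-\lambda\gamma(N-k)}))$ yields a geometric series bounded by $\cO(\eps^\gamma)$, independent of~$N$.

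The main obstacle is precisely the last step: guaranteeing that the multiplicative error in the Jacobian formula remains $N$-independent. A naive propagation of H\"older errors through the composition $d\kappa_N\circ\cdots\circ d\kappa_1$ would produce $(1+\cO(\eps^\gamma))^N$, which diverges, and the resolution must exploit the exponential shrinkage of the effective domain of the projected map: only points lying in an exponentially thin tube around the reference trajectory $(\rho_{i_k})$ can be iterated through all $N$ Poincar\'e sections, and for such points the distances $|\Phi^k(y^{i_0})-\rho_{i_k}|$ decay geometrically at both ends of the orbit, making the total correction summable. Making this precise rigorously requires using the cone invariance of part $(i)$ both in forward time (controlling the image) and in backward time (controlling the pre-image), and carefully tracking how the nonlinear corrections $\alpha_k,\beta_k$ in \eqref{e:kappa} enter the product of $\det A_k$ factors.
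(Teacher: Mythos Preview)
Your treatment of parts $(i)$ and $(ii)$ matches the paper's almost verbatim: both reduce to the Poincar\'e sections, write the recursion $f^{k-1}\mapsto f^k$ via \eqref{e:kappa}--\eqref{e:alpha-beta}, and run the standard Inclination Lemma contraction, first on $\|df^k\|$ and then inductively on higher derivatives.

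For part $(iii)$ you have correctly located the obstacle and the resolution in outline, but the mechanism you propose does not close. In the coordinates of \eqref{e:kappa} the per-step tangent error is $\partial u^{k+1}/\partial u^k = A_k + \cO(\eps^\gamma)$ from \eqref{e:alpha-beta}, and this $\cO(\eps^\gamma)$ does \emph{not} decay along the orbit: it is the size of $\|d\alpha_k\|_{C^0}$, independent of where you evaluate. Neither cone invariance nor H\"older continuity of the splitting turns this into a summable series; composing $N$ such factors gives $(1+\cO(\eps^\gamma))^N$, exactly the divergence you warn about. Your item~(c) conflates two different things: H\"older regularity of $\rho\mapsto E^\pm_\rho$ controls the off-diagonal block in \eqref{e:Dkappa2} at the \emph{center}, not the variation of $d\kappa_k$ along the orbit. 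The paper's device is a further change of coordinates $(\tu^k,\ts^k)$ on each $\Sigma_{i_k}$ that straightens the local stable and unstable \emph{manifolds} $W^\pm_k$ of the reference point $\rho_{i_k}$ into coordinate planes. In these (non-symplectic) charts the nonlinearities acquire the refined structure $\alpha_k(0,\ts)\equiv 0$, $\beta_k(\tu,0)\equiv 0$, $d\alpha_k(0,0)=d\beta_k(0,0)=0$, which decouples the $\tu$ and $\ts$ recursions and yields $\|\tu^k\|\le C\eps\,\nu_3^{N-k}$, $\|\ts^k\|\le C\eps\,\nu_3^{k}$ directly. The per-step error in $\partial\tu^{k+1}/\partial\tu^k$ is then $\cO(\|\tu^k\|+\|\ts^k\|)$, genuinely summable to $\cO(\eps)$---matching the $1+\cO(\eps)$ in the statement, not the weaker $1+\cO(\eps^\gamma)$ you arrive at. A secondary issue: your ``computation at $y^{i_0}=0$'' tacitly evaluates $d\Phi^N$ at $\rho_{i_0}$, but the point of $\Lambda$ over $y^{i_0}=0$ is $(0;0,f^0(0))$, generically off the reference orbit; the paper avoids this by working with the per-step product from the start and only invoking the special Lagrangian $\Lambda=W^{+0}_{\rho_{i_0}}$ at the very end to identify $\det(\prod A_k)$ with $\exp\lambda^+_N(\rho_{i_0})$.
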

\begin{proof}
We follow the proof of the stable/unstable manifold
theorem for hyperbolic flows \cite[Thm 6.2.8 and Thm 17.4.3]{KaHa}.

For each $k=0,\ldots,N$, the Poincar\'e
section $\Sigma_{i_k}$ 
does generally not contain $\rho_k$, but it contains a unique 
iterate $\rho'_k=\Phi^s\rho_k$ for some $s\in (-\eps,\eps)$.
The Poincar\'e map $\kappa_k$ from $V_k\subset \Sigma_{i_k}(\eps)$ to $\Sigma_{i_{k+1}}(\eps)$
will satisfy $\kappa_k(\rho'_k)=\rho'_{k+1}$. 

Since $d(\rho_{i_k},\rho'_k)\leq\eps$ and $d(\rho_{i_{k+1}},\rho'_{k+1})\leq\eps$,
there exists $C>1$ such that the extended Poincar\'e map from
$\Sigma_{i_{k}}(\eps)$ to $\Sigma_{i_{k+1}}(C\eps)$ sends $\rho_{i_k}$ to a point 
$\rho'_{i_k}\in \Sigma_{i_{k+1}}(C\eps)$.
We are thus in the situation of \S\ref{s:Poincare}, with $\rho_0,\rho'_0,\rho_1$ being replaced by
$\rho_{i_k},\,\rho'_{i_k}\,,\rho_{i_{k+1}}$ (see Fig.~\ref{f:Poincare}).
In the charts
$(u^{i_k},s^{i_k})\mapsto (u^{i_{k+1}},s^{i_{k+1}})$,
the map $\kappa_k$ takes the form
\be\label{e:kappa-ik}
\kappa_k(u^{i_k},s^{i_k})=(u^{i_{k+1}},s^{i_{k+1}})(\rho'_{i_k}) +
\big(A_k u^{i_k} +\tilde\alpha_k(u^{i_k},s^{i_k}),\, ^t\!A_k^{-1} s^{i_k} + 
\tilde\beta_k(u^{i_k},s^{i_k})\big)\,,
\ee
where $\norm{A_k^{-1}},\ \norm{^t\!A_k^{-1}}\leq\nu$ 
and the smooth functions $\tilde\alpha_k$, $\tilde\beta_k$ satisfy
\eqref{e:alpha-beta}. 
It is convenient to shift the origin of the coordinates $(u^{i_k},s^{i_k})$ 
(respectively  $(u^{i_{k+1}},s^{i_{k+1}})$) 
such as to center them at $\rho'_k$ (respectively  at $\rho'_{k+1}$). We call the
shifted coordinates $(u^{k},s^{k})$ (respectively  $(u^{k+1},s^{k+1})$). In these shifted coordinates, 
we get
\be\label{e:kappa_k}
\kappa_k(u^k,s^k)=
\big(A_k u^k +\alpha_k(u^k,s^k),\, ^t\!A_k^{-1} s^k + \beta_k(u^k,s^k)\big)\,,\qquad
(u^k,s^k)\in V_k\,.
\ee
The shifted functions $\alpha_k$, $\beta_k$ still satisfy \eqref{e:alpha-beta},
where $V=V_k$ corresponds to the neighbourhood of $\rho'_k$ where $\kappa_k$ is defined.
\begin{figure}[ht]
\begin{center}
\includegraphics[width=16cm]{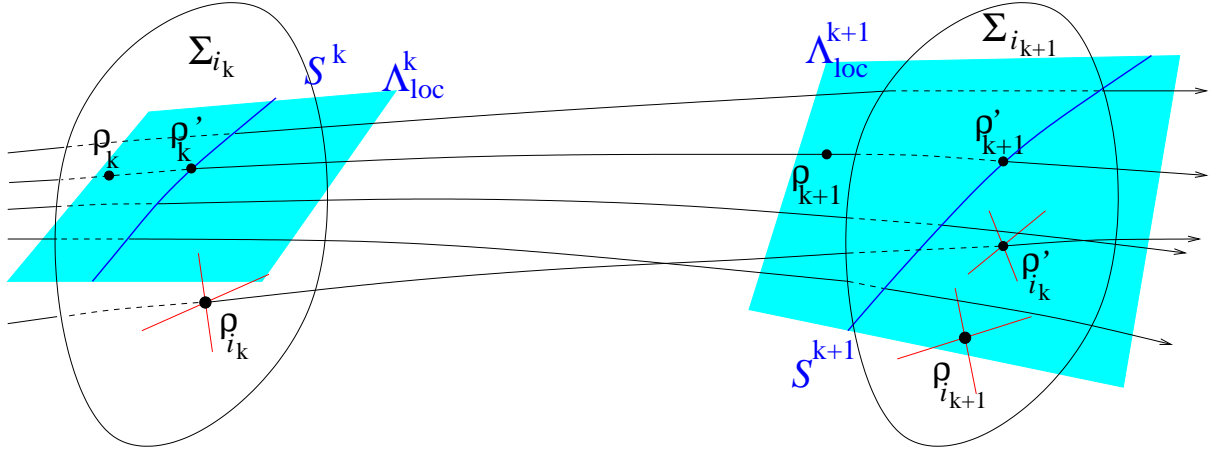}
\end{center}
\caption{Illustration of the objects appearing in 
the proof of Proposition~\ref{p:inclination}. The local Lagrangians $\Lambda^{k}_{\loc}$, 
$\Lambda^{k+1}_{\loc}$ appear in light blue, and are foliated by bicharacteristics.
The axes around $\rho_{i_k}$, $\rho_{i_{k+1}}$
represent the stable and unstable subspaces $E^{\pm}_\rho$ on those points. The axes around
$\rho'_{i_k}=\kappa_k(\rho_{i_k})$ are the projected subspaces $\tilde E^{\pm}_{\rho'}$.
\label{f:Poincare}
}\end{figure}

After fixing the coordinate charts, we can study the behaviour of the
intersections $S^k=\Lambda^k_{\loc}\cap \Sigma_k$ when $k$ grows. 
We are exactly in the framework of \cite[Theorem 6.2.8]{KaHa}, and we will use the same
method to control the $S^k$.

We first show that, if $\eps$ is chosen small enough, the 
unstable $\gamma_1$-cone in $S^k$ is sent by $\kappa_k$ inside the $\gamma_1$-cone in $S^{k+1}$.
Let us assume that 
$$ 
S^k= \{(u^k,f^k(u^k))\} \,, \qquad \sup_{u^k\in D_\eps}\norm{df^k}\leq\gamma_1\,. 
$$
The projection of $\kappa_k\rest_{S^k}$ on the horizontal subspace reads:
\be\label{e:induced}
u^k\mapsto u^{k+1}=\pi\kappa_k (u^k,f^k(u^k))= A_k u^k + \alpha_k(u^k,f^k(u^k))\,,
\ee
so by differentiation we get it is uniformly expanding from some neighbourhood $D'_\eps\subset D_\eps$ to $D_\eps$:
\be\label{e:partial-tg}
\frac{\partial u^{k+1}}{\partial u^k}= A_k + 
\frac{\partial\alpha_k}{\partial u^k}+
\frac{\partial\alpha_k}{\partial s^k}\frac{\partial f^k}{\partial u^k}
=A_k + \cO(\eps^\gamma(1+\gamma_1))\,.
\ee
The property $\norm{A_k^{-1}}\leq\nu<1$ shows that, for $\eps^\gamma(1+\gamma_1)$ 
small enough, this map is uniformly expanding.
Hence, this map is {\em invertible}, and its inverse, 
\be\label{e:g_k}
u^{k+1}\mapsto u^k\defeq \tg_{k+1}(u^{k+1})\,,
\ee
is uniformly contracting:
\be\label{e:bound10}
\norm{d\tg_{k+1}(u^{k+1})}=\Big\|\frac{\partial u^{k}}{\partial u^{k+1}}\Big\|\leq \nu_1\,,
\qquad u^{k+1}\in D_\eps\,,
\ee
with $\nu_1 = \nu+C_{\alpha}(\eps^\gamma(1+\gamma'_1))<1$.
As a result, since $\tg_{k+1}(0)=0$, we have
$$
\norm{u^k}=\norm{\tg_{k+1}(u^{k+1})}\leq \nu_1\,\norm{u^{k+1}},\qquad u^{k+1}\in D_\eps\,.
$$
We also see that the intersection $S^{k+1}=\kappa_k(S^k)$ can be represented as the graph
$S^{k+1}=\set{(u^{k+1},f^{k+1}(u^{k+1})),\ u^{k+1}\in D_\eps}$ in the coordinates centered
at $\rho'_{k+1}$, with the explicit expression
\be\label{e:ft}
f^{k+1}(u^{k+1})=\,^t\!A_k^{-1} f^k(u^k) + \beta_k(u^k,f^k(u^k)),\quad u^{k+1}\in D_\eps, \quad
u^k=\tg_{k+1}(u^{k+1})\,.
\ee
Differentiating this expression with respect to $u^{k+1}$ leads to
$$
\frac{\partial f^{k+1}}{\partial u^{k+1}}=\Big(\frac{\partial u^k}{\partial u^{k+1}}\Big)\,
\Big[\big(\,^t\!A_k^{-1}+\partial_{s}\beta^k(u^k,f^k(u^k))\big)\,\partial f^k (u^k)+
\partial_{u}\beta^k(u^k,f^k(u^k))\Big]\,.
$$
Since for $\eps$ small enough we have uniformly
$$
\norm{\,^t\!A_k^{-1}+\partial_{s}\beta^k(u^k,f^k(u^k))}\leq \nu_2,\qquad \nu_2=\nu + C_\beta\eps^\gamma<1\,,
$$
the above Jacobian is bounded from above by
$$
\Big\|\frac{\partial f^{k+1}}{\partial u^{k+1}}\Big\|\leq 
\nu_1 \,\big( \nu_2 \gamma_1 + C\eps^\gamma \big)\,.
$$
If $\eps>0$ is small enough, 
the above right hand side is smaller than $\nu_2\gamma_1$.
We have thus proved that the $\gamma_1$-unstable cones in $\Sigma_k$ are invariant through 
$\kappa_k$, which proves the statement $i)$ of the proposition.

\medskip

Let us now study the higher derivatives
of the functions $f^k$, obtained by further differentiating \eqref{e:ft}.
We use the norms 
$$
\norm{f}_{C^\ell(V)}=\max_{\alpha\in \NN^{n-1}, |\alpha|\leq \ell}\; 
\sup_{u\in V} \|\partial^\alpha f(u)\|\,,
$$
and will proceed by induction of the degree $\ell$ of differentiation. 
Let us assume that
for some $\ell\geq 2$, there exists $\gamma_{\ell-1}$ such that
all functions $f^k$, $0\leq k\leq N$, satisfy $\norm{f^k}_{C^{\ell-1}}\leq \gamma_{\ell-1}$.
Above we have proved this property for $\ell=2$.
By differentiating $\ell$ times \eqref{e:ft}, we get
\begin{align*}
\frac{\partial^\ell f^{k+1}}{(\partial u^{k+1})^\ell} &= 
\Big(\frac{\partial u^k}{\partial u^{k+1}}\Big)^\ell\, 
\big(\,^t\!A_k^{-1}+\partial_s\beta^k\big)\,\frac{\partial^\ell f^k}{(\partial u^{k})^\ell} +
P_{\ell,k}(\partial f^k,\ldots,\partial^{\ell-1}f^k )\,,\nonumber\\
\Longrightarrow\ \Big\|\frac{\partial^\ell f^{k+1}}{(\partial u^{k+1})^\ell}\Big\|&
\leq  \nu_1^\ell \,\nu_2\,\Big\|\frac{\partial^\ell f^{k}}{(\partial u^{k})^\ell}\Big\|
+\|P_{\ell,k}(\partial f^k,\ldots,\partial^{\ell-1}f^k)\|\,.
\end{align*}
Here $P_{\ell,k}$ is a polynomial of degree $\ell$, 
with coefficients uniformly bounded with respect to $k$ and $u^k\in D_\eps$. Using the assumption
$\norm{f^k}_{C^{\ell-1}}\leq \gamma_{\ell-1}$, there exists $C_{\ell-1}(\gamma_{\ell-1})>0$ 
such that the following inequality holds:
$$
\Big\|\frac{\partial^\ell f^{k+1}}{(\partial u^{k+1})^\ell}\Big\|
\leq \nu_1^\ell\,\nu_2\,
\Big\|\frac{\partial^\ell f^{k}}{(\partial u^{k})^\ell}\Big\| + 
C_{\ell-1}(\gamma_{\ell-1})\,.
$$
If we now choose $\gamma_\ell>0$ such that
$$ 
\gamma_\ell>\max\Big(\frac{C_{\ell-1}(\gamma_{\ell-1})}{\nu_2(1-\nu_1^\ell)},\, 
\gamma_{\ell-1},\,\norm{f^0}_{C^\ell}\Big)\,,
$$
we check that the condition
$\norm{f^{k}}_{C^\ell}\leq \gamma_\ell$ implies that 
$\norm{\partial^\ell f^{k+1}}_{C^0}\leq \nu_2\,\gamma_\ell$. Hence, all functions
$f^k$, $0\leq k\leq N$, satisfy $\norm{f^k}_{C^{\ell}}\leq \gamma_{\ell}$, which 
proves statement $ii)$.

\medskip

The important point in $iii)$ is the uniformity of the estimate with respect to $N$. 
To prove such a uniform estimate, one needs to analyze the trajectory 
$(\rho'_k)_{k=0,\ldots,N}$ with respect to the ``reference trajectory'' $(\rho_{i_k})$.
It is useful to replace the coordinates $(u^{i_k},\eta^{i_k})$ on $\Sigma^{i_k}$
by coordinates $(\tu^{k},\ts^{k})$ with the following properties.
We define
the local (un)stable
manifolds on the Poincar\'e sections:
$$ 
W^{\pm}_k\defeq W^{0\pm}_{\loc}(\rho_{i_k})\cap\Sigma^{i_k}\,.
$$
The new coordinates $(\tu^{k},\ts^{k})$ satisfy:
\begin{gather*}
W^+_k\equiv \{(\tu^{k},0),\ \tu^k\in D_\eps\},\qquad W^-_k\equiv \{(0,\ts^{k}),\ \ts^k\in D_\eps\},\\
(\tu^k,\ts^k) = (u^{i_k},s^{i_k}) + \cO((u^{i_k},s^{i_k})^2)\quad \text{near the origin,}
\end{gather*}
and they need not be symplectic.
In these coordinates, the Poincar\'e
map $\kappa_k:\Sigma^{i_k}\to\Sigma^{i_{k+1}}$ has a more precise form than in \eqref{e:kappa_k}: we
can still write it as
$$
\kappa_k(\tu^k,\ts^k)=
\big(A_k \tu^k +\alpha_k(\tu^k,\ts^k),\, ^t\!A_k^{-1} \ts^k + \beta_k(\tu^k,\ts^k)\big)\,,
$$
but the smooth functions $\alpha_k$, $\beta_k$ satisfy more constraints than before:
$$
\alpha_k(0,\ts^k)=\beta_k(\tu^k,0)\equiv 0\quad\text{and}\quad d\talpha_k(0,0)= d\tbeta_k(0,0)=0\,.
$$
This shows that, near the origin, $\alpha_k(\tu^k,\ts^k)=\cO\big(\norm{\tu^k} + \norm{\ts^k}\big)\norm{u^k}$
and similarly for $\beta_k$.
Using these coordinates, we can show that most of the points along the trajectory $(\rho'_k)$ are
very close to the reference points $(\rho_{i_k})$. If we call $(\tu^k,\ts^k)$ the coordinates 
of $\rho'_k\in S^k$, we have
\begin{align*}
\tu^{k+1}&= A_k \tu^k + \alpha_k(\tu^k,\ts^k)= A_k\tu^k + \cO(\eps\,\norm{\tu^k})\,,\\
\ts^{k+1}&=\,^t\!A_k^{-1} \ts^{k} + \beta_k(\tu^k,\ts^k)= \,^t\!A_k^{-1} \ts^k+\cO(\eps\,\norm{\ts^k})\,.
\end{align*}
Taking into account the fact that $\norm{\tu^N}\leq C\eps$ and $\norm{\ts^0}\leq C\eps$, 
for $\eps$ small enough there exists $\nu_3 = \nu+\cO(\eps)<1$ such that
$$
\norm{\tu^k}\leq C\,\eps\,\nu_3^{N-k}\,,\qquad
\norm{\ts^k}\leq C\,\eps\,\nu_3^k\,,\qquad k=0,\ldots,N\,.
$$
These estimates prove that, if $N$ is long, the points $\rho'_k$ for $k\gg 1$, $N-k\gg 1$ are close to
$\rho_{i_k}$. The tangent of the map $\tu^k\mapsto \tu^{k+1}$ induced by projecting
$\kappa_k\rest_{S^k}$ on the planes $\{\ts=0\}$ is given as in \eqref{e:partial-tg} by
$$
\frac{\partial \tu^{k+1}}{\partial \tu^k}=A_k + 
\frac{\partial\alpha_k}{\partial \tu^k}+
\frac{\partial\alpha_k}{\partial \ts^k}\frac{\partial f^k}{\partial \tu^k}
=A_k +\cO(\norm{\tu^k} + \norm{\ts^k})\,.
$$
To obtain the last equality we used the fact that $\norm{d f^k}$ is uniformly 
bounded, as shown above.
The tangent of the map obtained by projecting $\kappa_{N-1}\circ\cdots\circ \kappa_0\rest_{S^0}$
on the planes $\{\ts=0\}$ then reads
\begin{align*}
\frac{\partial \tu^{N}}{\partial \tu^0}&=\prod_{k=0}^{N-1}\Big(A_k +\cO(\norm{\tu^k} + \norm{\ts^k})\Big)
= \prod_{k=0}^{N-1}\Big(A_k +\cO(\eps\,(\nu_3^{N-k} + \nu_3^k))\Big)\\
&=\big(\prod_{k=0}^{N-1} A_k\big)\, \prod_{k=0}^{N-1}\Big(I + \cO(\eps\,(\nu_3^{N-k} + \nu_3^k))\Big)
\end{align*}
The determinant of the last factor is of order $1+\cO(\eps)$, so we deduce
\be\label{e:deter}
\det\Big(\frac{\partial \tu^{N}}{\partial \tu^0}\Big)=(1+\cO(\eps))\,\det\big(\prod_{k=0}^{N-1} A_k\big)\,.
\ee
We then recall that the change of variables $(\tu^k,\ts^k)\mapsto (u^{i_k},s^{u_k})$ is close to
the identity:
$$
\Big(\frac{\partial (\tu^k,\ts^k)}{\partial (u^{i_k},s^{i_k})}\Big)=I+\cO(\eps).
$$
As a result, the estimate \eqref{e:deter} applies as well to the Jacobian of the map 
$\kappa_{N-1}\circ\cdots\circ \kappa_0\rest_{S^0}$, projected
in the planes $\{s^{i_0}=0\}$, $\{s^{i_N}=0\}$, which we denote by 
$\det(\partial u^{i_N}/\partial u^{i_0})$.

We now consider the map $y^{i_0}\mapsto y^{i_N}$
induced by projecting $\Phi^N\rest_{\Lambda^0}$
on the planes $\{\eta^{i_0}=0\}$, $\{\eta^{i_N}=0\}$. 
From the structure of the adapted coordinates, the tangent to this map has the form
$$
\Big(\frac{\partial y^{i_N}}{\partial y^{i_0}}\Big)=
\begin{pmatrix}1&*\\
0& \big({\partial u^{i_N}}/{\partial u^{i_0}} \big)\end{pmatrix}\,,
$$
so the estimate \eqref{e:deter} also applies to 
$\det({\partial y^{i_N}}/{\partial y^{i_0}})$.

Finally, we remark that if we take $\Lambda=W^{+0}_{\rho_{i_0}}$, 
then the tangent map at $\rho_0=\rho_{i_0}$ is given by
$$
\Big(\frac{\partial u^{i_N}}{\partial u^{i_0}}\Big)(0)=\prod_{k=0}^{N-1} A_k\,. 
$$
Hence in this case we find
$$
\det\big(\prod_{k=0}^{N-1} A_k\big) = \det(\frac{\partial y^{i_N}}{\partial y^{i_0}})
=\det \big(d\Phi^N\rest_{E^{+0}_{\rho_{i_0}}}\big)=\exp(\lambda^+_N(\rho_{i_0}))\,.
$$
For the second equality we have used \eqref{e:determ} and 
the fact that, for each $k$, the adapted coordinates satisfy 
$\Omega(\partial/\partial y^{i_k}_1\wedge \ldots\wedge \partial/\partial y^{i_k}_n)=1$ at the origin
(see Lemma~\ref{l:coord}).
\end{proof}

\begin{rem}\label{r:rem2}
Due to structural stability, the results of Proposition~\ref{p:inclination} 
apply to Lagrangian leaves $\Lambda\in \cE_{E'}$ transversal to the stable lamination,
for any energy
$E'\in (E-\delta,E+\delta)$, with the difference that the evolved
local Lagrangians are of the form
\be\label{e:Lag-E'}
\Lambda^k_{\loc}  \equiv \set{ ( y^{i_k}_1,u^{i_k};E'-E, f^k ( u^{i_k} ) ),
\ |y^{i_k}_1|\leq\eps,\ |u^{i_k}|\leq \eps},
\quad \text{with}\quad \|df^k ( u^{i_k} )\|\leq\gamma_1\,.
\ee
The Poincar\'e sections used in the proof are taken as $U_i(\eps)\cap\{y^i_1=0,\,\eta^i_1=E'-E\}$.
All constants can be taken to be independent of $E'\in (E-\delta,E+\delta)$.
\end{rem}
\begin{rem}\label{r:rem3}
Each $f^k:(D_\eps)_u\to \RR^{n-1}_s$ representing the Lagrangian $\Lambda^k_{\loc}$ 
of \eqref{e:Lag-E'}
can be written as 
$ f^k(u) = \phi'_k(u)$ for some function $\phi_k:(D_\eps)_u\to\RR$. Therefore, the function
$$
\varphi_k(y_1,u)\defeq \phi_k(u)+(E'-E)y_1,\qquad u\in D_{\vareps},\ |y_1|\leq\vareps\,,
$$ 
generates $\Lambda^k_{\loc}$ in the symplectic coordinates $(y^{i_k},\eta^{i_k})$.
\end{rem}

\subsection{An alternative definition of the topological pressure}\label{s:selection}
\renewcommand{\eps}{\epsilon}

To connect the resonance spectrum with the 
topological pressure \eqref{eq:defpr} of the
flow, we use an alternative definition of the pressure \cite[Theorem 5.2]{Pes88}, which will 
provide us with a convenient open cover of $K^\delta_E$. 

Taking $\delta>0$ small enough to satisfy \eqref{e:struct-stab}, consider 
$\cV=(V_b)_{b\in B}$ an open finite cover of $K^{\delta}_E$, made of 
sets of small diameters contained in the energy strip $\cE_E^{\delta}$.
For any integer $T>0$, the refined cover $\cV^{(T)}$ is made of the
sets
\be\label{e:V_beta}
V_{\beta}\defeq\bigcap_{k=0}^{T-1} \Phi^{-k} (V_{b_k})\,,\qquad 
\beta=b_0 b_2\ldots b_{T-1}\in B^T\,.
\ee
The $T$-strings $\beta$ such that $V_{\beta}\cap K_E^\delta\neq\emptyset$ make up a subset 
$\cB'_T\subset B^T$.
Below it is 
convenient to coarse-grain the unstable Jacobian \eqref{e:unst-jac} 
on subsets $W\subset \cE_E^\delta$: 
\be\label{e:coarse-jac}
\forall W\subset \cE_E^\delta,\quad W\cap K_E^\delta\neq\emptyset,\qquad S_{T}(W)\defeq 
-\inf_{\rho\in W \cap 
K^\delta_E } \lambda^+_{T}(\rho)\,.
\ee
We define the following quantity, 
similar to \eqref{e:partitionf}:
$$
Z_{T}(\cV,s)\defeq \inf\set{\sum_{\beta\in\cB_T} 
\exp\{s\,S_{T} (V_\beta)\}\;:\;\cB_T \subset\cB'_T,\ 
K^{\delta}_E\subset\bigcup_{\beta\in\cB_T} V_\beta}\,.
$$
The topological pressure of the flow on $K_E^{\delta}$ can then be obtained as follows:
$$
\cP^\delta_E(s)=\lim_{{\rm diam} \cV\to 0}\lim_{T\to\infty}\frac{1}{T}\,\log Z_{T}(\cV,s)\,.
$$
Here the covers $\cV$ are as above: they cover $K_E^{\delta}$ in the 
energy strip $\cE_E^{\delta}$.
Finally, because the pressure is continuous as a function of 
the energy, the pressure \eqref{eq:defpr}
can be obtained through the limit $\cP_E(s)=\lim_{\delta\to 0}\cP^\delta_E(s)$.

From now on we fix some small $\eps_0>0$. From the above limits, 
a cover $\cV_0$ of $K_E^\delta$ inside $\cE_E^\delta$ (of arbitrary small 
diameter $\vareps>0$) and an integer $t_0>0$ depending on $\cV_0$, such that
\be\label{e:pressure-approx}
\Big|\frac{1}{t_0}\,\log Z_{t_0}(\cV,s)-\cP^\delta_E(s)\Big|\leq \eps_0\,.
\ee
As a consequence, there exists a subset $\cB_{t_0}\subset \cB'_{t_0}$, such that
$\set{V_\beta,\ \beta\in \cB_{t_0}}$ is an open cover of $K^\delta_E$ inside 
$\cE_E^{\delta}$, which satisfies
$$
\sum_{\beta\in \cB_{t_0}} \exp\{s\,S_{t_0} (V_\beta)\}\leq \exp\big\{t_0(\cP^\delta_E(s)+\eps_0)\big\}\,.
$$
We rename the family $\set{V_\beta,\ \beta\in \cB_{t_0}}$ as
$\set{W_a,\ a\in A_1}$, so the above bound reads
\be\label{eq:press2}
\sum_{a\in A_1} \exp\{s\,S_{t_0} (W_a)\}\leq \exp\big\{t_0(\cP^\delta_E(s)+\eps_0)\big\}\,.
\ee
Each set $W_a$ contains at least one point $\rho_a\in K_E^\delta$, which we may
set as reference point: following Lemma~\ref{l:coord}, we can represent $W_a$ by 
an adapted chart $(y^a,\eta^a)$ centered at $\rho_a$. Similarly, we can
also equip any $V_b\in\cV_0$ with adapted charts $(y^b,\eta^b)$ centered at some point 
$\rho_b\in V_b\cap K_E^\delta$.

Each point $\rho\in W_a=V_\beta$ evolves such that 
$\Phi^{k}(\rho)\in V_{b_k}$ for all $k=0,\ldots,t_0-1$. Therefore, as long as $\vareps$ has
been chosen small enough, we are in position to apply
Proposition~\ref{p:inclination} and Remarks~\ref{r:rem2},\ref{r:rem3} to isoenergetic 
$\gamma_1$-unstable Lagrangian leaves in $W_a$.

\begin{prop}\label{c:fold}
Take any energy $E'\in (E-\delta,E+\delta)$ and any index $a\in A_1$.
Assume that $\Lambda\subset\cE_{E'}\cap W_a$ is a Lagrangian leaf generated in the
chart $(y^a,\eta^a)$ by a function $\varphi$ defined on a 
subset $D_a\subset D_\vareps$, and is contained in the unstable $\gamma_1$-cone:
\[
\Lambda \simeq \set{ (y^a_1, u^a;E'-E, \varphi' ( u^a ) ) \; : \; u^a \in D_a},\quad  
\text{with}\quad \|\varphi''\|_{C^0(D_a)}\leq\gamma_1\,.
\]
Then for any index $a'\in A_1$, the Lagrangian leaf $ \Phi^{t_0} ( \Lambda ) \cap W_{a'}$ is also
in the unstable $\gamma_1$-cone in the chart $(y^{a'},\eta^{a'})$.

Besides, the map $y^{a}\mapsto y^{a'}$ obtained by projecting $\Phi^{t_0}\rest_{\Lambda}$
on the planes $\{\eta^a=0\}$ and $\{\eta^{a'}=0\}$, satisfies
the following estimate on its domain of definition:
$$
\det \Big( \frac{\partial y^{a'}}{\partial y^{a}} \Big) =  (1+\cO(\vareps^\gamma))\,
\exp\big(\lambda^+_{t_0}(\rho_a))\,.
$$
Here $\lambda^+_{t_0}(\rho_a)$ is the unstable Jacobian \eqref{e:unst-jac} of the reference
point 
$\rho_a\in W_a\cap K_E^\delta$,
and $\gamma>0$ is the H\"older exponent of the unstable lamination. The implied
constant is uniform with respect to $t_0$.
\end{prop}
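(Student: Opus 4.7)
This proposition is a specialization of Proposition~\ref{p:inclination} to the refined cover $(W_a)_{a\in A_1}$. The key structural observation is that, by construction, $W_a=V_\beta=\bigcap_{k=0}^{t_0-1}\Phi^{-k}(V_{b_k})$ for $\beta=b_0\cdots b_{t_0-1}$, so every $\rho\in W_a$ satisfies $\Phi^k(\rho)\in V_{b_k}$ for $k=0,\ldots,t_0-1$. This is exactly the itinerary data Proposition~\ref{p:inclination} requires, provided $\vareps<\vareps_{\gamma_1}$, which we arrange when choosing the initial cover $\cV_0$.

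For the cone-preservation statement, take any $\tilde\rho\in\Phi^{t_0}(\Lambda)\cap W_{a'}$ and set $\rho_*=\Phi^{-t_0}(\tilde\rho)\in\Lambda\subset W_a$. The orbit $\rho_k=\Phi^k(\rho_*)$ visits $V_{b_k}$ for $k\leq t_0-1$ and lands in $W_{a'}$ at $k=t_0$. Applying part $(i)$ of Proposition~\ref{p:inclination} (extended to energies $E'\in(E-\delta,E+\delta)$ by Remark~\ref{r:rem2}) with the chart sequence $W_a,V_{b_1},\ldots,V_{b_{t_0-1}},W_{a'}$, equipped with the adapted charts centered respectively at $\rho_a,\rho_{b_1},\ldots,\rho_{b_{t_0-1}},\rho_{a'}$, shows that the connected component of $\Phi^{t_0}(\Lambda)\cap W_{a'}$ containing $\tilde\rho$ is a graph in the unstable $\gamma_1$-cone of $(y^{a'},\eta^{a'})$. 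Since this holds for every such $\tilde\rho$, the first claim follows.

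For the Jacobian formula I invoke part $(iii)$ of Proposition~\ref{p:inclination}, this time with the \emph{trapped} reference trajectory $(\Phi^k(\rho_a))_{k=0,\ldots,t_0}\subset K_E^\delta$, which lies in $K_E^\delta$ throughout because the latter is $\Phi$-invariant. Since $\rho_a\in W_a$, this orbit passes through each $V_{b_k}$, so the hypothesis of part $(iii)$ is met; applying it in adapted charts re-centered on $\Phi^k(\rho_a)$ yields a Jacobian $(1+\cO(\vareps))\exp(\lambda^+_{t_0}(\rho_a))$ for the $y$-projection of $\Phi^{t_0}\rest_\Lambda$ in those charts. To convert this into the statement's formula I compare, at time $t_0$, the adapted chart at $\Phi^{t_0}(\rho_a)$ with that at $\rho_{a'}$: since $\Phi^{t_0-1}(\rho_a)$ and $\Phi^{t_0-1}(\rho_*)$ both lie in $V_{b_{t_0-1}}$ (of diameter $\vareps$), a Lipschitz bound on $\Phi^1$ gives $d(\Phi^{t_0}(\rho_a),\tilde\rho)=\cO(\vareps)$, and $d(\tilde\rho,\rho_{a'})\leq\vareps$, hence $d(\Phi^{t_0}(\rho_a),\rho_{a'})=\cO(\vareps)$. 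The H\"older regularity \eqref{eq:aaa}\,$iv)$ of $\rho\mapsto E^\pm_\rho$ on $K_E^\delta$ then forces the symplectic change of coordinates between these two adapted charts to be tangent to the identity up to order $\vareps^\gamma$, so its Jacobian is $1+\cO(\vareps^\gamma)$. At time $0$ the two charts coincide (both centered at $\rho_a$), so combining the two factors yields the announced formula.

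The main obstacle is precisely this last step: part $(iii)$ of Proposition~\ref{p:inclination} delivers the Jacobian only in charts rooted on the reference orbit, whereas the statement demands the chart at $\rho_{a'}$, which generally does \emph{not} lie on the orbit of $\rho_a$. Controlling the separation $d(\Phi^{t_0}(\rho_a),\rho_{a'})=\cO(\vareps)$ uniformly in $t_0$ crucially uses that $\rho_a$ and $\rho_*$ share the full $t_0$-itinerary through the $V_{b_k}$'s --- the whole point of the refined cover --- with only a single residual $\Phi^1$-step at the very end. The degradation of the error from $\vareps$ to $\vareps^\gamma$ reflects the mere H\"older (rather than smooth) regularity of the (un)stable lamination on $K_E^\delta$; the independence of the implied constants from $t_0$ is inherited from Proposition~\ref{p:inclination}\,$(iii)$.
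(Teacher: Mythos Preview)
Your proof is correct and follows essentially the same approach as the paper: both reduce to Proposition~\ref{p:inclination} via the built-in itinerary $\Phi^k(W_a)\subset V_{b_k}$, and both handle the Jacobian by first applying part~$(iii)$ in charts centered on the orbit of $\rho_a$, then making a single $\cO(\vareps^\gamma)$ chart comparison at time $t_0$ using the H\"older regularity of $E^\pm_\rho$. The paper packages the distance estimate $d(\Phi^{t_0}(\rho_a),\rho_{a'})=\cO(\vareps)$ slightly differently (enlarging an auxiliary set $V_b$ to contain both $\Phi^{t_0}(W_a)$ and $W_{a'}$), and it notes explicitly that $\Lambda^k_{\loc}=\Phi^k(\Lambda)$ for $k<t_0$ (no intermediate folding), but your pointwise argument achieves the same conclusion.
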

\begin{proof}
From Proposition~\ref{p:inclination}, we know that for any $\rho\in \Lambda$ and any $k=0,\ldots,t_0-1$, 
the connected component $\Lambda^{k}_{\loc}$ of $\Phi^{k}(\Lambda)\cap V_{b_k}$ containing 
$\Phi^{k}(\rho)$ lies
in the unstable $\gamma_1$-cone with respect to the chart $(y^{b_k},\eta^{b_k})$. 
On the other hand, since $\Lambda$ is a connected leaf inside $W_a$, at each 
step $k=0,\ldots,t_0-1$ its image $\Phi^{k}(\Lambda)$ is fully contained in $V_{b_k}$
and is {\em connected}, so that $\Lambda^k_{\loc}$ is actually equal to $\Phi^{k}(\Lambda)$
for all $k=0,\ldots,t_0-1$. Finally, we apply one iteration of Proposition~\ref{p:inclination} 
to the leaf $\Lambda'=\Phi^{t_0-1}(\Lambda)\subset V_{b_{t_0-1}}\cap \cE_{E'}$, and deduce that 
any intersection $\Phi(\Lambda')\cap W_{a'}=\Phi^{t_0} ( \Lambda ) \cap W_{a'}$
is also in the $\gamma_1$-unstable cone.

We now prove the statement concerning the Jacobian of the induced map. It is a direct 
consequence of part $iii)$ in 
 Proposition~\ref{p:inclination}, after replacing the time
$N$ by $t_0$.
Let $\rho_a$ be the reference point in $W_a\cap K_E^\delta$,
on which the coordinates $(y^a,\eta^a)$ are centered. 
If $V_{b}$ is a set containing 
$\Phi^{t_0}(\rho_a)$, we may enlarge it into a set of diameter $C\vareps$, such that
$\Phi^{t_0}(W_a)\subset V_{b}$ and $W_{a'}\subset V_{b}$.
On $V_b$ we may use adapted coordinates $(y^{b},\eta^b)$ centered on the point 
$\rho_b\defeq\Phi^{t_0}(\rho_a)$, and
represent $\Phi^{t_0}\rest_{\Lambda}$ by a map $y^a\mapsto y^b$. In this setting,
Proposition ~\ref{p:inclination},$iii)$ shows that the associated Jacobian satisfies
$$
\det\Big(\frac{\partial y^{b}}{\partial y^{a}}\Big) = (1+\cO(\eps))\,\exp(\lambda^+_{t_0}(\rho_{a}))\,.
$$
There remains to compare the coordinates $(y^b,\eta^b)$ with the coordinates $(y^{a'},\eta^{a'})$
centered on $\rho_{a'}\in W_{a'}$. Since the (un)stable subspaces at $\rho_b$ and $\rho_{a'}$
form angles $\cO(\eps^\gamma)$ and $d(\rho_b,\rho_{a'})=\cO(\eps)$, 
the representation of $\Phi^{t_0}\rest_{\Lambda}$ through $y^a\mapsto y^{a'}$ satisfies
\be\label{e:ya->yb}
\det\Big(\frac{\partial y^{a'}}{\partial y^{a}}\Big)=(1+\cO(\eps^\gamma))
\det\Big(\frac{\partial y^{b}}{\partial y^{a}}\Big)
=(1+\cO(\eps^\gamma))\,\exp(\lambda^+_{t_0}(\rho_{a}))\,.
\ee
\end{proof}
Notice that, even though $t_0$ (depending on the cover 
$\cV_0$ in an unknown way) can be very large,
applying $\Phi^{t_0}$ onto a near-unstable isoenergetic leaf $\Lambda\subset W_a$ 
does not fold it.

\subsection{Completing the cover}\label{s:covers}
We need to complete the family $(W_a)_{a\in A_1}$ in order to
cover the full energy strip $\cE_E^\delta$. 
Far from the interaction region 
(which we define using the radius $R_0$ of \S\ref{pr}),
we take the unbounded set 
$$
W_0 = \cE_E^\delta \cap \{|x(\rho)| > 3 R_0 \} \,.
$$
We complete the cover with a finite family of sets 
$$
( W_a \subset \cE_E^\delta)_{a\in A_2}\,, 
$$
with the following properties. These sets should have
sufficiently small diameters,
and for some uniform $d_1>0$ they should satisfy:
$$
d ( W_a ,  \Gamma_E^{+ \delta} ) + d ( W_a ,  \Gamma_E^{- \delta} ) > d_1 \,, \quad  
\text{where}\quad
\Gamma_E^{\pm \delta} \defeq \bigcup_{| E' - E| < \delta } \Gamma_{E'}^\pm \,, 
$$
where $\Gamma_E^{\delta}$ are the incoming/outgoing sets given in \eqref{eq:gaha}.
Finally, the
full family should cover $\cE_E^\delta$:
$$
\cE_E^\delta=\bigcup_{a\in A}W_a\,,\quad\text{where}\quad A=\{0\}\cup A_1\cup A_2\,.
$$
\begin{lem}
\label{l:cover}
Such a cover exists. Consequently, there exists $ N_0 \in \NN $, such that
for any index $ a \in A_2 $ we have
$$
\begin{cases}
\quad \Phi^{t} ( W_a ) \cap \{x(\rho)< 3 R_0 \} = \emptyset &\text{for any } \ 
 t\geq N_0t_0,\\
\quad \quad \quad \quad\text{or}\\
\quad
 \Phi^{- t} ( W_a ) \cap \{x(\rho) < 3 R_0 \}  = \emptyset  & \text{for any }\ t \geq N_0t_0\,. 
\end{cases}
$$
\end{lem}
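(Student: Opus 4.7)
The plan is to construct $(W_a)_{a\in A_2}$ by covering the residual compact region that is neither near infinity nor near the trapped set, and then to deduce the escape property from compactness together with the standard non-return property of the classical flow outside $B(0,R_0)$.

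First I would note that $(W_a)_{a\in A_1}$ is by construction an open cover of the compact set $K_E^\delta$. Hence there is an open neighbourhood $\cU$ with $K_E^\delta\subset \cU$ and $\overline{\cU}\subset \bigcup_{a\in A_1} W_a$. The residual set
$$
C\defeq \big(\cE_E^\delta\cap \{|x|\leq 3R_0\}\big)\setminus \cU
$$
is compact (energy conservation together with $p\sim|\xi|^2$ at infinity bounds $|\xi|$ on the region $|x|\leq 3R_0$) and is disjoint from $K_E^\delta$. Since on $\cE_E^\delta$ one has $K_E^\delta = \Gamma_E^{+\delta}\cap \Gamma_E^{-\delta}$ (different energy shells being disjoint), each $\rho\in C$ lies outside at least one of the closed sets $\Gamma_E^{\pm\delta}$. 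Consequently $\rho$ has a small open neighbourhood $V(\rho)\Subset \cE_E^\delta$ whose distance to one of $\Gamma_E^{+\delta}$, $\Gamma_E^{-\delta}$ is bounded below by some $\eta(\rho)>0$. Extracting a finite subcover of $C$, shrinking the pieces so their diameters are small, and choosing $d_1>0$ uniformly smaller than all the $\eta(\rho)$'s appearing, we obtain $(W_a)_{a\in A_2}$ with the required separation property; together with $W_0$ and $(W_a)_{a\in A_1}$ it covers $\cE_E^\delta$.

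For the escape property, fix $a\in A_2$ and assume without loss of generality that $d(W_a,\Gamma_E^{+\delta})>d_1/2$, so that $W_a\cap \Gamma_E^{+\delta}=\emptyset$. Every $\rho\in W_a$ lies in $\cE_{E'}\setminus \Gamma_{E'}^+$ for some $|E'-E|<\delta$, so by definition $\Phi^{-t}(\rho)\to \infty$ as $t\to+\infty$. I would now invoke the standard non-return property of the classical flow at positive energies outside $B(0,R_0)$: since outside $B(0,R_0)$ the principal symbol converges to $\xi^2$ (see \eqref{eq:valid} and \eqref{eq:validC}), one has $\tfrac{d}{dt}|x(\Phi^t\rho)|^2 = 2\langle x,\xi\rangle$ up to lower order corrections, which is eventually positive and monotone once the orbit has left the interaction region with outgoing momentum. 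Consequently there is a finite time $T(\rho)$ such that $\Phi^{-t}(\rho)\in \{|x|>3R_0\}$ for every $t\geq T(\rho)$.

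Finally, continuity of the flow and the non-return property imply that the exit time $\rho\mapsto T(\rho)$ is upper semicontinuous on the compact set $\overline{W_a}$, hence bounded: set $T_a\defeq \sup_{\rho\in \overline{W_a}} T(\rho)<\infty$. Choosing $N_0\geq \max_{a\in A_2} \lceil T_a/t_0\rceil$ (a finite maximum since $A_2$ is finite) yields the integer in the statement. The symmetric case $d(W_a,\Gamma_E^{-\delta})>d_1/2$ is treated by time reversal and gives the forward escape $\Phi^{t}(W_a)\cap \{|x|<3R_0\}=\emptyset$ for $t\geq N_0t_0$. The main point to check carefully is the uniformity of $T(\rho)$ on $\overline{W_a}$; this is where the non-return property is essential, as it rules out orbits that leave and re-enter $\{|x|<3R_0\}$ arbitrarily late, and reduces the bound on $T$ to a compactness argument on the finite exit time.
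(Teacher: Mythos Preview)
Your argument is correct and follows essentially the same route as the paper's: cover the compact residual region by small open sets each avoiding one of $\Gamma_E^{\pm\delta}$, then deduce a uniform escape time by compactness. The differences are only in presentation. The paper fixes $d_1$ \emph{a priori} via a quantitative ``uniform transversality'' statement (if $d(\rho,\Gamma_E^{+\delta})+d(\rho,\Gamma_E^{-\delta})$ is small then $d(\rho,K_E^\delta)$ is small) and then splits the residual region into $S_\pm$ according to which of $\Gamma_E^{\mp\delta}$ is far, covering each piece separately by families $A_2^\pm$; your pointwise-then-finite-subcover extraction of $d_1$ bypasses this transversality input and is in that sense slightly more elementary. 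Conversely, the paper simply asserts the existence of the uniform exit times $T_\pm$, whereas you spell out the mechanism --- non-return via the eventual convexity of $t\mapsto |x(\Phi^t\rho)|^2$ outside the interaction region, giving upper semicontinuity of the exit time on the compact $\overline{W_a}$ --- which is exactly the right justification (the paper carries out this convexity computation later, in the proof of Lemma~\ref{l:prA}).
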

\begin{proof}
The complement of $\cup_{a\in A_1}W_a$ in $\cE_E^\delta\cap T^*_{B(0,3R_0)}X$ 
is at a certain distance $D>0$ from $K_E^\delta$. On the other hand, from
the {\em uniform transversality} of stable and unstable manifolds, 
there exists $d_1>0$ such that
\be\label{e:unif-transv}
\forall \rho\in \cE_E^\delta\cap  T^*_{B(0,3R_0)}X,\qquad
d (\rho, \Gamma_E^{+ \delta}) + d (\rho,\Gamma_E^{- \delta})\leq 4\,d_1
\Longrightarrow d(\rho,K_E^\delta)\leq  D \,.
\ee
We first cover the set 
$S_-=\{\rho\in \cE_E^\delta\cap T^*_{B(0,3R_0)}X\,:\,d(\rho,\Gamma_E^{- \delta})> 2d_1\}$
by small open sets $\{W_a,\ a\in A_2^-\}$ at distance $\geq d_1$ from $\Gamma_E^{- \delta}$.
There exists $T_->0$
such that at any time $t\geq T_-$, the iterate
$\Phi^t(W_a)$ has escaped outside $T^*_{B(0,3R_0)}X$ for any $a\in A_2^-$.

We then cover the set 
$S_+=\{\rho\in \cE_E^\delta\cap T^*_{B(0,3R_0)}X\,:
\,d(\rho,\Gamma_E^{- \delta})\leq 2d_1,\ d(\rho,\Gamma_E^{+ \delta})> 2d_1\}$
by small open sets $\{W_a,\ a\in A_2^+\}$ at distance $\geq d_1$ from $\Gamma_E^{+ \delta}$. 
Now, there exists $T_+>0$
such that all these sets have escaped outside $T^*_{B(0,3R_0)}X$ for times $t\leq -T_+$. 
From \eqref{e:unif-transv}, points $\rho\in \cE_E^\delta\cap T^*_{B(0,3R_0)}X$ 
which are neither in $S_-$ nor in $S_+$
are at distance $\leq D$ from $K_E^\delta$, and therefore already belong to some $W_a$, $a\in A_1$.
Finally, we take $A_2\defeq A_2^-\cup A_2^+$, and $N_0\in \NN$ such that $N_0\, t_0\geq \max(T_-,T_+)$.
\end{proof} 

\section{Quantum Dynamics}
\label{qd}

As reviewed in \S \ref{defcs} resonances are the eigenvalues of
the complex scaled operator $ P_\theta $. To prove the lower bound
on the size of the imaginary part of a resonance $ z ( h ) $, with 
a resonant state $ u_\theta ( h ) \in L^2 ( X_\theta ) $,  
$ \| u_\theta \| = 1$, we want to estimate
$$
\exp ( - t | \Im z ( h ) | /h ) = \| \exp ( - i t P_\theta / h ) \,
u_\theta ( h ) \| \,, \quad   t \gg 1 \,,
$$
where the exponential of $ - i t P_\theta / h $ is considered
purely formally.
In principle that could be done by estimating $ \| \exp ( - i t P_\theta 
/ h ) \chi^w ( x, h D ) \|$, where $ \chi^w $ provides a localization 
to the energy surface. However, the imaginary part of $ P_\theta $
can be positive of size $ \sim  \theta  \sim  M h \log (1/h ) $ 
and that poses problems for such estimates. 

Hence the first step is to modify the operator $P_\theta$ without
changing its spectrum. To make the notation simpler, we normalize
the operator so that we work near energy $ 0$. In the case of \eqref{eq:Ph}
that means considering 
$$
P ( h ) = -h^2 \Delta + V ( x ) - E \,, \qquad  p ( x , \xi ) 
= |\xi|^2 + V ( x ) - E \,.
$$
Accordingly, the energy strips and trapped sets will be denoted by $\cE^\delta$, $K^\delta$.

\subsection{Modification of the scaled operator}\label{s:mso}

To modify the operator $ P_\theta $  we follow the 
presentation of \cite[\S\S 4.1,4.2,7.3]{SjZw04} which is 
based on many earlier works cited there.
 
Thus instead of $ P_\theta $ we consider the operator
$ P_{\theta, \eps } $ obtained by conjugation with an exponential weight:
\be\label{eq:pthep} 
P_{\theta , \eps } \defeq  e^{-\eps G^w / h } P_\theta
e^{ \eps G^w / h } \,, \ \ \eps = M_2 \theta\,, \ \
\theta = M_1 h \log ( 1/ h )  \,.
\ee
This section is devoted to the construction of an appropriate weight $G^w=G^w(x,hD)$.
The large constant $ M_1 $ will be of crucial importance for error
estimates in our argument and will be chosen large enough to control
propagation up to time $ M \log ( 1/h ) $, roughly $ M_1 \gg M $. The constant $M_2$
will also be given below.

We start with the construction of the weight $ G(x,\xi) $:
\begin{lem}
\label{l:gsa}
Suppose that $ p $ satisfies the general assumptions \eqref{eq:gac} 
(with the energy $E>0$ now in the interval $(-\delta,\delta)$).
Then for any open neighbourhood $ V $
of $ K^\delta $, $V\Subset T^*_{B(0,R_0)}X$, and any $ \delta_0 \in (0,1/2)$, 
there exists $ G \in \CIc ( T^*X ) $ such that 
\be
\label{eq:gsa} 
\begin{gathered}
\rho\in T^*_{B(0,3R_0)}X \ \Longrightarrow \ H_p G ( \rho ) \geq 0 \,, \\
\rho \in T^*_{B(0,3R_0)}X \cap (\cE^\delta \setminus V)  \ 
\Longrightarrow \ H_p G ( \rho ) \geq 1 \,,\\
\forall \rho \in T^* X,\ \ \ H_p G( \rho)  \geq -\delta_0 \,. 
\end{gathered}
\ee
\end{lem}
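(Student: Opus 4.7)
The construction is a standard escape function for a hyperbolic scattering flow, built by integrating a phase-space cutoff along the Hamiltonian flow and then localizing by spatial cutoffs.

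First, since $K^\delta$ is compactly contained in $V\Subset T^*_{B(0,R_0)}X$ and is a locally maximal hyperbolic invariant set, structural stability (and the fact that no orbit outside $K^\delta$ is trapped in the interaction region) lets me choose a smaller open neighbourhood $V_0\Subset V$ of $K^\delta$ and a time $T_0>0$ with the quantitative escape property: every $\rho\in\big(\cE^{2\delta}\cap T^*_{B(0,4R_0)}X\big)\setminus V_0$ satisfies $\Phi^{T_0}(\rho)\notin T^*_{B(0,4R_0)}X$ or $\Phi^{-T_0}(\rho)\notin T^*_{B(0,4R_0)}X$. This is the key dynamical input.

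Next I pick $\chi_0\in\CIc(T^*X;[0,1])$ equal to $1$ on $T^*_{B(0,4R_0)}X\cap\cE^{2\delta}$ and supported slightly outside, and set
\[
G_0(\rho)\defeq \int_0^{T_0}\bigl[\chi_0(\Phi^{-s}\rho)-\chi_0(\Phi^{s}\rho)\bigr]\,ds \in \CIc(T^*X).
\]
Its flow derivative is $H_pG_0(\rho)=2\chi_0(\rho)-\chi_0(\Phi^{T_0}\rho)-\chi_0(\Phi^{-T_0}\rho)$. On $T^*_{B(0,3R_0)}X$ the value $\chi_0(\rho)=1$ immediately gives $H_pG_0\geq 0$, and on $(\cE^\delta\setminus V)\cap T^*_{B(0,3R_0)}X$ at least one of $\chi_0\circ\Phi^{\pm T_0}(\rho)$ vanishes by the choice of $T_0$, giving $H_pG_0\geq 1$. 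These verify conditions (i) and (ii) of \eqref{eq:gsa} for $G_0$.

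The delicate point is condition (iii): $H_pG\geq -\delta_0$ globally with $\delta_0<1/2$. The naive bound is only $H_pG_0\geq -2$, with the minimum attained near the unstable manifold $\Gamma^+_E$ outside $T^*_{B(0,4R_0)}X$, where $\chi_0(\rho)=0$ while $\chi_0(\Phi^{-T_0}\rho)=1$ (the backward orbit heads to $K$). This is the main obstacle. To sharpen, I combine two refinements: (a) replace the sharp time cutoff by a smoothly decreasing weight $\eta(s/T)$ with $\int\eta=1$ and $T$ large, so that after integration by parts the negative contributions are averaged over a long time window and the positive pointwise contributions on the trapped region are preserved; (b) multiply by a spatial cutoff $\chi_1\in\CIc(T^*X;[0,1])$ equal to $1$ on a sufficiently large ball that contains the geometrically bounded region $\{H_pG_0<0\}$, with transition width chosen so that $|\widetilde G_0\, H_p\chi_1|\leq\delta_0/2$ uniformly (this is possible because $|\widetilde G_0|$ is bounded and $|H_p\chi_1|$ can be made as small as $C/R$ for $R$ the transition length on bounded-$|\xi|$ support of $\widetilde G_0$). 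Setting $G=\chi_1\widetilde G_0$ and scaling so that the positive amplitude on the escape region stays $\geq 1$, all three conditions are satisfied. The constant $\delta_0<1/2$ reflects the optimal balance in this scheme between the positive amplitude on the interaction region and the unavoidable negative contributions near $\Gamma^+_E\setminus K_E$.
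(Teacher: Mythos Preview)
Your construction of $G_0(\rho)=\int_0^{T_0}[\chi_0(\Phi^{-s}\rho)-\chi_0(\Phi^s\rho)]\,ds$ does give conditions (i) and (ii) correctly, and you identify the right obstruction for (iii). But the fixes you sketch do not close the gap.

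For (a), with a smooth weight $\eta(s/T)$ one computes
\[
H_p\tilde G_0(\rho)=2\chi_0(\rho)+\int_0^\infty \tfrac1T\eta'(s/T)\big[\chi_0(\Phi^{-s}\rho)+\chi_0(\Phi^s\rho)\big]\,ds.
\]
At a point $\rho$ on $\Gamma^+_E$ just outside $\supp\chi_0$, the backward entry time $s_0$ can be of order $1$ independently of $T$, and then $H_p\tilde G_0(\rho)\approx -\eta(s_0/T)\approx -1$. So smoothing in time does not shrink the negative part. For (b), in the transition zone of $\chi_1$ you pick up $\tilde G_0\,H_p\chi_1$, and $|\tilde G_0|$ is of size $T_0$ (or $T$) on that zone; since the uniform escape time $T_0$ is at least comparable to the radius of the region from which one must escape, the product $|\tilde G_0|\,|H_p\chi_1|$ is of order one no matter how you rescale. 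There is no small parameter in your scheme that drives the negative part below an arbitrary $\delta_0$ while keeping the positive amplitude $\geq 1$.

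The paper's proof supplies the missing idea. It first invokes the G\'erard--Sj\"ostrand escape function: an \emph{unbounded} $G_0$ with $H_pG_0\geq 0$ \emph{everywhere}, $H_pG_0\geq 1$ on $\cE^\delta\setminus V$, and $H_pG_0\leq C$ on $\cE^{2\delta}$. The truncation is then done by \emph{composition}, $G=\chi(G_0)\,\psi(p/\delta)\,\psi(|x|/R)$, where $\chi(t)=t$ for $|t|\leq\alpha T$, $|\chi|\leq 2\alpha T$, and $\chi'\geq -2\alpha$. Since $H_pG_0\geq 0$, one has $H_p(\chi\circ G_0)=\chi'(G_0)H_pG_0\geq -2\alpha C$, and the spatial cutoff contributes only $|\chi(G_0)|\cdot|H_p\psi(|x|/R)|\leq C\alpha T/R$. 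The small parameter $\alpha$ (followed by $R$ large) makes the total negative part $\leq\delta_0$. The point you are missing is precisely this: first arrange $H_pG_0\geq 0$ globally, then cap the \emph{values} of $G_0$ (not the time window) by composing with a bounded function of small negative derivative.
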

\begin{proof}
The construction of the function $ G $ is based on the following 
result of \cite[Appendix]{GeSj}:
for any open neighbourhoods $ U, V $
of $ K^\delta $, $  \overline U \subset V $, there exists 
$ G_0 \in \CI ( T^* X ) $, such that
$$
G_0\rest_U\equiv 0\,, \  \ H_p G_0 \geq 0 \,, \ \
H_p G_0 \rest_{ \cE^{2\delta} } \leq C \,,\ \ 
H_p G_0 \rest_{  \cE^{\delta} \setminus V } \geq 1 \,.
$$
Such a $G_0$ is an escape function, and is necessarily of unbounded support.
We need to truncate $ G_0 $ into a compactly-supported function, 
without making $ H_p G_0 $ too 
negative. For $ T > 0 $, $ \alpha\in (0,1)$ to be fixed later, let
$ \chi \in \CI ( \RR ) $ satisfy
\[ 
\chi ( t ) = \begin{cases} 0\,,  & | t | > T\,,  \\
t\,,  & | t | < \alpha T \,, \end{cases}\quad 
|\chi(t)|\leq 2\alpha T,\quad \chi' ( t ) \geq -2 \alpha,\quad t\in\RR\,.
\]
(we obtain $ \chi $ by regularizing a piecewise linear function with 
these properties). Let $ \psi \in \CIc ( \RR ; [ 0 , 1] ) $ be 
equal to $ 1 $ for $ | t| \leq 1 $ and $ 0 $ for $ | t | \geq 2 $. 
For $ R > 0 $ to be fixed later, we  define 
$$ 
G ( \rho ) \defeq \chi ( G_0 ( \rho ))\, \psi ( p ( \rho ) / \delta ) \,
\psi ( | x ( \rho ) | / R ) \,,
$$  
which vanishes on $U$, outside $\cE^{2\delta}$ and for $|x|>2R$. We then compute
$$  
H_p G   = \chi' ( G_0 )\, H_p G_0 \, \psi ( p / \delta )\, 
\psi ( | x| / R ) + (1/R)\, \chi ( G_0 )\, \psi ( p / \delta )\, \psi' ( |x|/ R )\, 
H_p ( | x | ) \,.
$$
This is bounded from below by $ 0 $ for 
$  \{| x| < R \,, \ | G_0 | \leq \alpha T\} $, and by $ 1 $, if in addition
$ \rho \in \cE^\delta\setminus V$. For any $ \rho \in T^* X $ we have 
$$
H_p G ( \rho ) \geq  - C_0\,\alpha (1 + T/ R ) \,,
$$
for some $ C_0 > 0 $, 
since \eqref{eq:validC}
shows that
$  | H_p ( | x | ) | \leq C_1 $ on $\cE^{2\delta}$.
Choosing $ R > 3 R_0 $ and $ T = T ( \alpha, R_0 ) $ large enough
so that $ | G_0 ( \rho ) | \leq \alpha T $ for $\rho\in \cE^{2\delta}\cap T^*_{B(0,3R_0)}X$, 
we have now guaranteed the first two conditions in 
\eqref{eq:gsa}. To obtain the last condition we need
$$ 
C_0\,\alpha (1  + T ( \alpha, R_0 ) / R ) < \delta_0 \,,
$$
and this follows from choosing $ \alpha $ small enough and then 
$ R $ large enough. 
\end{proof}
Using the identification \eqref{eq:idt}
 we consider $ G $ given in Lemma \ref{l:gsa} 
as a function on $ T^* X_\theta $, 
and define $ P_{\theta, \epsilon } $
by \eqref{eq:pthep}. We note that $ \exp ( \pm \epsilon G ( x , h D)  / h ) $ 
is a pseudodifferential operator with the symbol in the class 
$ S_\delta ( h^{ -C_0 } ) $ for any $ \delta > 0 $ and some $ C_0 $, and 
that the operator
$$
P_{\theta, \eps}  \defi
e^{-\eps G^w/h}\,  P_\theta \, e^{\eps G^w/h}=
e^{-\frac{\eps}{ h}{\rm ad}_{G^w}}\,P_\theta \sim 
\sum_{k=0}^\infty  \frac{\eps ^k}{ k!} (-\frac1{h}{\rm ad}_{G^w})^k(P_\theta ) 
$$
has its symbol in the class $S(\la\xi\ra^2)$.
This expansion shows that 
\begin{align*} 
P_{\theta,\eps} (h)
&  =P_\theta ( h ) - i\eps \{ p_\theta,G \}^w ( x, h D )  + \eps ^2 e_0^w ( x, h D )   \\
&  =p_\theta^w ( x , h D)  - i\eps \{ p_\theta ,G \}^w ( x , h D )  + \eps^2 e_1^w ( x , h D ) + 
h e_2^w ( x, h D) \,,  \quad  e_j \in S \,, 
\end{align*}
where $ p_\theta $ is the principal symbol of $ P_\theta $ given 
by \eqref{eq:pth}.
In particular, denoting by $ \cO ( \alpha ) $ the quantization 
of a symbol in $ \alpha S$, we have
\be\label{eq:split1}
\begin{split}
\Re P_{\theta, \eps} & \defeq  ( P_{\theta, \eps }
+ P_{\theta, \eps}^* )/2 = ( \Re p_\theta)^w ( x, h D) 
+ \eps \{ \Im p_\theta ,G\}^w ( x, h D)  + \cO(h+\eps^2) \\
& =
 \Re p_\theta ^w ( x, h D ) +\cO ( h + \theta \eps + \eps^2 ) \,, \\
\Im P_{\theta, \eps } & \defeq
( P_{\theta, \eps } - P_{\theta, \eps  }^* )/(2i) = 
 \Im p_\theta ^w ( x, h D ) - \eps \{ \Re p_\theta,G\}^w ( x , h D)  +\cO(h+\eps^2) \\
& =  \Im p_\theta ^w ( x, h D ) - \eps ( H_p G)^w ( x, h D)  + \cO(h+\eps^2) \,.
\end{split}
\ee
We can use our knowledge of $p_\theta$, see \eqref{eq:pth}-\eqref{eq:pth2},
and the fact that the set
$V$ used to define $G$ is contained in $T^*_{B(0,R_0)}X$,
to deduce that, for any $\rho\in \cE^\delta$, 
\be\label{e:sign}
\Im p_\theta (\rho ) - \epsilon H_p G ( \rho )
\leq \begin{cases}\qquad\qquad  0 &  \rho \in V\,,  \\
\ \  C \theta - \eps = - (M_2- C)\theta & \rho \notin V \,, \ \ | x ( \rho) | \leq 
2 R_0\,,  \\
- C \theta + \epsilon \delta_0 = - \theta(C-\delta_0 M_2) & |x ( \rho ) | > 2 R_0 \,, 
\end{cases}
\ee
We now choose 
$ M_2$ in \eqref{eq:pthep} such that $C< M_2 < C/\delta_0$,
so that 
\be\label{e:negative}  
\Im p_\theta (\rho )  - \epsilon H_p G ( \rho ) \leq 0 \, \quad 
\text{for any $\rho\in \cE^\delta$}\,.
\ee
The sharp G{\aa}rding inequality \eqref{eq:sharpg} and \eqref{e:negative}
give, in the sense of operators, 
\be
\label{eq:gaa}
\Im \chi^w ( x, h D) \, P_{\theta, \epsilon } ( h ) \, \chi^w ( x , h D) 
 \leq C h \,, \quad 
\supp \chi \subset \cE^{\delta/2} \,, 
\ee
where $ \chi \in S (1 )$ is real valued.
Achieving this approximate negativity was the main reason for 
introducing the weight $ G $. Indeed, we notice that, before conjugating by this
weight, we only had $ \Im (\chi^w P_\theta \chi^w) \leq  C h \log ( 1/ h ) $.

\subsection{The evolution operator}\label{s:eo}
We take the energy width $ \delta > 0 $ as in \S\ref{s:cl}, and construct the weight $G$
accordingly, as explained in the previous section.
Let the function $\chi_\delta \in S ( T^*X ) $ satisfy 
\be\label{eq:suppch}
\supp \chi_\delta \subset \cE^{\delta/2}  \,, \quad 
\chi_\delta\rest_{\cE^{\delta/3} } \equiv 1 \,. 
\ee
In this section we will compare the two energy-localized operators 
\be
\tP_0  \defi  \chi_\delta^w ( x , h D )\, P(h)\, \chi_\delta^w ( x, h D ) \qquad\text{and}\qquad 
\tP \defi  \chi_\delta^w ( x , h D )\, P_{ \theta, \eps }(h)\, \chi_\delta^w ( x, h D )\,.
\ee
$\tP_0$ is obviously 
bounded and hermitian on $L^2(X)$, and $\tP$ is bounded on 
$ L^2 ( X_\theta ) \simeq L^2 ( X ) $ (using
the map $ x \mapsto \Re x $). We may thus define a unitary group and
a non-unitary group as follows ($t\in\RR$):
\be\label{eq:Uthep}
U_0(t)\defi \exp ( - i t \tP_0 / h ),\quad\text{respectively}\quad 
U(t)\defi \exp ( - i t \tP / h ) \,. 
\ee
The need for the cutoff $ \chi_\delta^w$ comes from the non-dissipative contributions
of $ \Im P_\theta $, which are compensated by the weight $ G$  only close
to the energy surface. In view of the bound \eqref{eq:gaa} we have 
\be\label{eq:gaU}
\| U ( t ) \|_{ L^2 \to L^2 } \leq \exp ( C t) \,,  \qquad t \geq 0 \,.
\ee
We make the following observation based on \S\ref{defcs} and the
boundedness of $ e^{ \pm \eps G^w ( x , h ) / h } $ on $ L^2 $:
\begin{gather*}
\Res ( P ( h ) )  \cap D_{\delta,\theta/C}  = 
 \Spec ( P_\theta ( h ) ) \cap D_{\delta,\theta/C}  = 
 \Spec ( P_{\theta, \eps } ( h )  ) \cap D_{\delta,\theta/C}\,, \\ 
 D_{\delta,\theta/C} \defeq  \{ z \; : \; | \Re z | \leq \delta \,, \ \
 \Im z >  - \theta/C  \} \,. 
\end{gather*}
Hence, from now on, by a normalized resonant state of $ z ( h ) \in 
\Res ( P ( h ))  \cap D_{\delta,\theta/C}$ we  mean
\be\label{eq:resst}
 u( h )  \in L^2 ( X_\theta )  \,, \quad
\| u ( h ) \| = 1 \,, \ \  P_{\theta, \eps }\, u ( h ) = z ( h ) u ( h ) \,.
\ee
\begin{prop}\label{p:res}
Let us put $\delta_1=\delta/4$, $C>0$, 
and let $ u ( h ) $ be given by \eqref{eq:resst} 
with  $|\Re z(h)|<\nolinebreak \delta_1$, $ \Im z ( h ) > - C h $. Then for any fixed $M>0$ and any time  
$ 0 \leq t \leq M \log( 1/h ) $, we have 
\be\label{eq:approxU}  
U ( t )\, u ( h ) = \exp ( - i t z ( h ) / h )\, u ( h ) 
+ \cO_{L^2} ( h^\infty ) \,,
\end{equation}
where $ U ( t) $ is the modified propagator given by 
\eqref{eq:Uthep}. More precisely, the $ L^2 $ norm of the 
error in \eqref{eq:approxU} is bounded by $  h^L $ for any 
$L$ and $ 0 < h < h_0 = h_0 ( L, M ) $.
\end{prop}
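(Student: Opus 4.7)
My plan is to first show that the resonant state $u(h)$ is microlocally concentrated on the characteristic set $\cE=\{p=0\}$, so that the cutoff $\chi_\delta^w$ entering the definition of $\tP$ acts as the identity on $u(h)$ up to $\cO(h^\infty)$ in any semiclassical Sobolev norm. This converts the exact eigenvalue equation $P_{\theta,\eps}u=z(h)u$ into an approximate one for $\tP$, after which a standard Duhamel estimate, combined with the growth bound \eqref{eq:gaU}, will propagate the approximate relation over the interval $[0,M\log(1/h)]$.

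For the microlocalization, on $\supp(1-\chi_\delta)$ one has $|p(\rho)|\geq \delta/3$. Using \eqref{eq:pth1} together with $\theta=M_1 h\log(1/h)$ and $\eps=M_2\theta$, the principal symbol satisfies $\Re\sigma_h(P_{\theta,\eps})=p+\cO(h\log(1/h))\la\xi\ra^2$ uniformly on compact sets, so
\[
|\sigma_h(P_{\theta,\eps}-z(h))|\geq |\Re\sigma_h(P_{\theta,\eps})-\Re z|\geq \delta/3-\delta/4-\cO(h\log(1/h))\geq \delta/12
\]
on $\supp(1-\chi_\delta)$, for $h$ small enough. A standard parametrix construction, carried out in the exotic class $S_\delta$ needed to absorb the logarithmic loss coming from conjugation by $e^{\pm\eps G^w/h}$, then produces $Q\in\Psi_{h,\delta}$ with $Q(P_{\theta,\eps}-z)=(1-\chi_\delta)^w+R$, where $R\in h^\infty\Psi^{-\infty}_{h,\delta}$. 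Applying to $u(h)$ and using $(P_{\theta,\eps}-z)u=0$ yields
\be\label{e:planmic}
\|(1-\chi_\delta)^w u(h)\|_{H^s_h}=\cO(h^\infty)\quad\text{for every }s\in\RR.
\ee
Since $P_{\theta,\eps}\in\Psi_h(X,\la\xi\ra^2)$ maps $H^{s+2}_h$ boundedly to $H^s_h$, the decomposition
\[
\tP u=\chi_\delta^w P_{\theta,\eps}u-\chi_\delta^w P_{\theta,\eps}(1-\chi_\delta)^w u = zu-z(1-\chi_\delta)^w u-\chi_\delta^w P_{\theta,\eps}(1-\chi_\delta)^w u
\]
together with \eqref{e:planmic} gives
\be\label{e:planapp}
\|(\tP-z(h))u(h)\|_{L^2}\leq C_N h^N\qquad\text{for every }N\geq 0.
\ee

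For the Duhamel step, set $v(t)\defeq U(t)u(h)-e^{-itz(h)/h}u(h)$. Then $v(0)=0$ and $ih\partial_t v=\tP v+(\tP-z)e^{-itz/h}u$, so
\[
v(t)=-\tfrac{i}{h}\int_0^t e^{-isz(h)/h}\,U(t-s)(\tP-z(h))u(h)\,ds.
\]
Resonances of $P(h)$ lie in the closed lower half plane, so $|e^{-isz(h)/h}|\leq 1$ for $s\geq 0$; combining with \eqref{eq:gaU} and \eqref{e:planapp},
\[
\|v(t)\|_{L^2}\leq \tfrac{t\,e^{Ct}}{h}\,C_N h^N\leq M\log(1/h)\;h^{N-1-CM}\qquad\text{for }0\leq t\leq M\log(1/h).
\]
Taking $N=N(L,M)$ sufficiently large yields $\|v(t)\|\leq h^L$ for $0<h<h_0(L,M)$, which is the claim. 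The only nontrivial step is \eqref{e:planmic}: because $\theta,\eps$ depend on $h$ logarithmically, the parametrix construction must be run in the exotic class $S_\delta$, and one must verify that the uniform lower bound $\delta/12$ on $|\sigma_h(P_{\theta,\eps}-z)|$ survives the $h^\delta$ losses intrinsic to that calculus; once this is in hand, the rest of the argument is routine.
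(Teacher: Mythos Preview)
Your argument is correct and follows essentially the same strategy as the paper: first use ellipticity of $P_{\theta,\eps}-z$ away from $\cE$ to conclude $\chi_\delta^w u=u+\cO_{L^2}(h^\infty)$ (hence $(\tP-z)u=\cO(h^\infty)$), then propagate via a Duhamel/Gronwall estimate using the growth bound \eqref{eq:gaU}. The only cosmetic difference is that the paper applies Gronwall directly to $\partial_t\|v(t)\|^2$, whereas you write out the Duhamel integral and invoke $|e^{-isz/h}|\leq 1$; also, since the paper records that $P_{\theta,\eps}\in\Psi_h(X,\la\xi\ra^2)$ (the logarithmic factors from the conjugation are absorbed into the standard class), the parametrix can in fact be built in $S_0$ without passing to $S_\delta$.
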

\begin{proof}
Let $ v ( t ) \defeq U ( t ) u -  \exp ( - i t z / h ) u $, so that
\[ 
\begin{split} i h \partial_t v ( t ) & = \tP\, U ( t )\,  u - z e^{ - i t z/h }
u \\
& = \tP \, v ( t ) + e ( t) \,, \qquad  e( t) \defi e^{-it z / h} ( \tP - z ) u \,.
\end{split}
\]
Since $ ( P_{\theta, \eps } - z)u = 0 $, 
we know that $ \WF_h ( u ( h ) ) $ lies in $\cE^{\delta/3} $,
so that $ \chi_\delta^w u = u + \cO_{L^2}(h^\infty ) $. Hence, 
$ \| e ( t) \| = \cO ( h^\infty) $ and, using \eqref{eq:gaa},
\[ \begin{split} 
\partial_t \| v ( t ) \|^2 & = 2 \Re \la \partial_t v ( t ) , v ( t ) 
\ra = \frac2h \, \la \Im \tP v ( t) , v ( t ) \ra + 
2 \Im \la e ( t ) , v ( t) \ra \\ 
& \leq C \| v ( t ) \|^2 + \| e ( t ) \|^2 \,, \quad  v ( 0 ) = 0 \,. 
\end{split}
\]
The Gronwall inequality implies that
\[ \| v ( t ) \|^2 \leq  e^{ C  t  }\int_0^t \| e ( s )\|^2 ds \,,
\]
and the lemma follows from the logarithmic bound on $t$.
\end{proof}

The following lemma compares the two propagators in \eqref{eq:Uthep}. 
\begin{lem}
\label{l:new1}
For any fixed $t>0$, the operator
\be
\label{eq:uot}
V( t) \defeq  U_0(t)^{-1}\, U( t)
\ee
is a pseudodifferential operator of symbol
$v(t)\in S_{\gamma} ( T^* X )$ for any $ \gamma\in (0,1/2) $.
\end{lem}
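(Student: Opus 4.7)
The plan is to write $V(t)$ as the solution of a linear operator-valued ODE and then apply Beals's characterization \eqref{eq:beals} of the class $\Psi_{h,\gamma}$. Differentiating $V(t)=U_0(t)^{-1}U(t)$ and using the fact that $\tP_0$ commutes with $U_0(t)$, one immediately gets
\[
\partial_t V(t) = -\frac{i}{h}\,R(t)\, V(t), \qquad V(0)=I,
\]
where $R(t) \defeq U_0(t)^{-1}(\tP-\tP_0)\,U_0(t)$. From $\tP-\tP_0 = \chi_\delta^w (P_{\theta,\eps}-P)\chi_\delta^w$, the expansion \eqref{eq:split1}, the compact support of $G$, and the fact (recalled just after Lemma~\ref{l:gsa}) that $e^{\pm\eps G^w/h}$ is a pseudodifferential operator with symbol in $S_\gamma(h^{-C_0})$ for any $\gamma>0$, one sees that $\tP-\tP_0$ lies in $\Psi_{h,\gamma}^0$ with Weyl symbol of size $\cO(\theta)=\cO(h\log(1/h))$ in $S_\gamma$. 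Since $\tP_0$ is a bounded self-adjoint semiclassical pseudodifferential operator, Egorov's theorem \eqref{eq:egor} applied to the (smoothly time-dependent) flow of its principal symbol shows that $R(t)$ is pseudodifferential in $\Psi_{h,\gamma}^0$ for every fixed $t$, with symbol again of $S_\gamma$-size $\cO(\theta)$.

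To apply Beals I would fix $W_1,\dots,W_N$ of the form appearing in \eqref{eq:beals}, and iteratively apply $\ad_{W_j}$ to the integral identity $V(t) = I - \frac{i}{h}\int_0^t R(s)V(s)\,ds$. This produces a coupled linear system of inhomogeneous integral equations for the iterated commutators of $V$. A single $\ad_W$ acting on a symbol in $S_\gamma^0$ of size $\theta$ produces a symbol of size $\cO(h^{1-\gamma}\theta)$, so that the driving factor $(1/h)\,\ad_W R(s)$ is of order $h^{1-\gamma}\log(1/h)$. Iterating $N$ times and running a Gronwall estimate, using the a priori bound $\|V(s)\|\leq e^{Cs}$ from \eqref{eq:gaU}, one obtains for fixed $t>0$ an estimate of the schematic form
\[
\|\ad_{W_N}\cdots\ad_{W_1}V(t)\|_{L^2\to L^2} \leq C_N\, h^{N(1-\gamma)}\,(\log(1/h))^{C'_N}.
\]
The logarithmic factor is absorbed into an arbitrarily small positive loss of exponent, so Beals's characterization yields $V(t)\in\Psi_{h,\gamma'}^0$ for any $\gamma'>\gamma$; since $\gamma$ can be chosen arbitrarily small, $v(t)\in S_{\gamma}(T^*X)$ for every $\gamma\in(0,1/2)$.

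The main technical obstacle is the combinatorial bookkeeping of the iterated commutators acting on the Dyson-type series: one must verify that both the number of terms produced by repeated use of the Leibniz rule and the accumulated powers of $\log(1/h)$ remain controllable as $N$ grows, so that they can genuinely be absorbed into an exponent loss of prescribed size. The upper bound $\gamma<1/2$ is essentially forced by this step, reflecting the standard threshold beyond which the $S_\gamma$ symbol calculus (composition, $\ad$-estimates) no longer preserves a gain of order $h^{1-\gamma}$ per commutator.
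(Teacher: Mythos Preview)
Your approach is essentially identical to the paper's: differentiate to obtain the ODE $\partial_s V(s)=\frac{1}{h}a(s)^w V(s)$ with $a(s)^w=\frac{1}{i}U_0(s)^{-1}(\tP-\tP_0)U_0(s)$, invoke Egorov's theorem for the symbol bounds $|\partial^\alpha a(s)|\leq C_\alpha h\log(1/h)$, and verify Beals's criterion by induction on the number of commutators. The paper organizes the induction by writing the ODE satisfied by $V_N(t)\defeq\ad_{W_N}\cdots\ad_{W_1}V(t)$ and applying Duhamel's formula $V_N(t)=\int_0^t V(t-s)E_N(s)\,ds$; since every term of $E_N$ has at least one commutator landing on $(a/h)^w$, the extra $h^\gamma$ absorbs the single $\log(1/h)$ and yields $E_N=\cO(h^{N(1-\gamma)})$ directly, which is slightly cleaner than your Gronwall-plus-log-absorption step but amounts to the same argument.
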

\begin{proof}
To prove both statements, we simply differentiate $V(s)$ with respect to $s$: 
\begin{gather*}
 \partial_s V( s ) = \frac{1}{h}\, a(s)^w ( x , h D ) V ( s) \,, \quad  V( 0) = I \,, \\ 
a ( s )^w ( x , h D ) \defeq \frac{1}{i}\, U_0 ( s )^{-1} ( \tP - \tP_0 ) \ U_0 ( s ) \,. 
\end{gather*}
Using Egorov's theorem, we obtain the following general bounds on the symbol $a(s)$,
uniform for $s\in [0,t]$:
$$ 
- C h \log ( 1/ h ) \leq \Re a(s) \leq C h \,, \quad
| \partial^\alpha a(s) | \leq C_\alpha h \log ( 1/ h ) \,, \ \forall \, \alpha 
\in \NN^{2n} \,. 
$$
To show that $V(t)$ is the quantization of a symbol $v(t)\in S_{\gamma}$ 
we use the Beals's characterization of pseudodifferential operators recalled in 
\eqref{eq:beals}. We proceed by induction: suppose we know that
$$ 
V_{N-1} ( t ) \defeq \ad_{W_{N-1}} \cdots \ad_{W_{1} } V ( t ) = 
\cO_{L^2 \to L^2 } ( h^{ ( 1 - \gamma ) (N - 1 ) }) \,, \quad 
N \geq 1 \,,  \quad  V_0 ( t ) = V ( t) \,,
$$
where $ W_j $'s are as in \eqref{eq:beals}.
We now consider the differential equation satisfied by 
\[ V_N ( t) \defeq \ad_{W_N} V_{N_1} ( t) \,. \]
Using the
derivation property  $ \ad_W ( AB ) = (\ad_W A ) B + 
A ( \ad_W B ) $ we see that
\[\begin{split} 
 \partial_t V_N ( t ) 
& =  \ad_{W_N} \cdots \ad_{W_1} \big((a/h)^w ( x, h ) V ( t ) \big)  \\
& = ( a/ h )^w ( x, h D) V_N ( t ) + E_N ( t ) \,, \ \ 
E_N ( t) =  \cO_{L^2 \to L^2 }  ( h^{N ( 1 - \gamma ) })\,,
\end{split} \]
where we used the induction hypothesis and the fact that
\[  \ad_{W_{j_1}} \cdots \ad_{W_{ j_k} } ( a/h)^w ( x, h D) = 
\cO_{L^2 \to L^2 } ( \log ( 1/h ) h^{k} ) = 
\cO_{L^2 \to L^2 } ( h^{N( 1- \gamma) })\,.
\]
Since $ V_N ( 0 ) = 0 $, Duhamel's formula shows that
\[ V_N ( t ) = \int_0^t V ( t - s ) E_N ( s ) ds = 
\cO_{L^2 \to L^2 }  ( h^{N ( 1 - \gamma ) } )\,, \]
concluding the inductive step and the proof.
\end{proof}

The following lemma shows that the propagators $U(t)$ and
$U_0(t)$ act very similarly on wavepackets localized close to the trapped set.
\begin{lem}\label{l:l9}
Take $\delta_1=\delta/4$ as in Proposition ~\ref{p:res}, and consider the open sets 
$U_G\Subset \tilde U_G \Subset \cE^{\delta_1}\cap T^*_{B(0,R_0/2)}X$ such that 
$U_G$ is a neighbourhood of $K^\delta$, while the weight $G$ 
constructed in Lemma~\ref{l:gsa} vanishes identically on $\tilde U_G$.

Fix some $t>0$. 
Assume that the open set $V$ is such that
$\Phi^s(V)\Subset U_G $ for all times $s\in [0, t]$. Take
any $\Pi\in C^\infty_c(V)$. The propagators $U(t)$ and $U_0(t)$ then satisfy 
$$
(U(t)-U_0(t))\,\Pi^w(x,hD)=\cO_{L^2\to L^2}(h^\infty)\,.
$$
\end{lem}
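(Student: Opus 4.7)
The plan is to compare the two propagators via a Duhamel identity, and then show that, microlocally near $U_G$, the generators $\tP$ and $\tP_0$ coincide, so that the two propagations produce essentially the same state on wavepackets microlocalized in $V$. Concretely, differentiating the interpolant $s\mapsto U(t-s)\,U_0(s)\,\Pi^w(x,hD)$ and integrating from $0$ to $t$ yields
\[
\bigl(U_0(t)-U(t)\bigr)\,\Pi^w \;=\; \frac{i}{h}\int_0^t U(t-s)\,(\tP-\tP_0)\,U_0(s)\,\Pi^w\,ds.
\]
Since $\|U(t-s)\|_{L^2\to L^2}\leq e^{Ct}$ by \eqref{eq:gaU}, it suffices to establish $(\tP-\tP_0)\,U_0(s)\,\Pi^w=\cO_{L^2\to L^2}(h^\infty)$ uniformly in $s\in[0,t]$, as the prefactor $h^{-1}$ is absorbed into any $h^\infty$.

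The first step is to propagate $\Pi^w$ by Egorov's theorem. The principal symbol of $\tP_0$ is $\chi_\delta^2\,p$, and $\chi_\delta\equiv 1$ on $\cE^{\delta/3}\supset U_G$, so the Hamilton flow generated by $\sigma(\tP_0)$ agrees with $\Phi^t$ along any trajectory that remains in $U_G$. Iterating \eqref{eq:Uchi} as in the proof of Lemma~\ref{l:lprop} then gives, for each $s\in[0,t]$,
\[
U_0(s)\,\Pi^w \;=\; \widetilde\Pi_s^w\,U_0(s) \;+\; \cO_{L^2\to L^2}(h^\infty),
\]
where $\widetilde\Pi_s\in S(T^*X)$ has essential support in an arbitrarily small neighbourhood of $\Phi^s(\supp\Pi)\Subset U_G$; the uniformity in $s\in[0,t]$ follows from the fact that this family of flow-out sets stays in a fixed compact subset of $U_G$.

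The key step is then to show that $\tP=\tP_0$ microlocally on $U_G$. Two ingredients enter. First, since $U_G\Subset T^*_{B(0,R_0/2)}X$ and the contour $\Gamma_\theta$ agrees with $\RR^n$ inside $B(0,R_0)$ by \eqref{eq:gpr}, formula \eqref{eq:pth} gives $p_\theta=p$ on a neighbourhood of $\overline{U_G}$, and Beals's criterion \eqref{eq:beals} upgrades this to $P_\theta=P$ microlocally near $U_G$. Second, $G\equiv 0$ on $\tilde U_G\Supset U_G$ means that $G$ and all its derivatives vanish on a neighbourhood of $\overline{U_G}$, so every term in the formal expansion of the conjugating factors $e^{\pm\eps G^w/h}\in\Psi_{h,\delta}$ vanishes on $U_G$, and these operators act as the identity microlocally there. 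Composing gives $P_{\theta,\eps}=P$ microlocally on $U_G$, and bracketing with the cutoffs $\chi_\delta^w$ yields $\tP=\tP_0$ microlocally on $U_G$. Combined with the Egorov step, this produces $(\tP-\tP_0)\,\widetilde\Pi_s^w=\cO_{L^2\to L^2}(h^\infty)$ and finishes the argument.

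The only real obstacle I expect is in the second ingredient above: one must control $e^{\pm\eps G^w/h}$ despite the logarithmic blowup $\eps/h=M_1M_2\log(1/h)$ of the exponent. This is precisely the $\Psi_{h,\delta}$ setting already developed in \S\ref{s:mso}, and because $G$ together with all its derivatives vanishes on $\tilde U_G\Supset U_G$, each coefficient in the Moyal expansion of the exponential vanishes identically on $U_G$; hence no new analytic input beyond that of \S\ref{s:mso} is needed to conclude microlocal triviality on $U_G$.
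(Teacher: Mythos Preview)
Your proof is correct and follows essentially the same approach as the paper: a Duhamel identity reduces the question to showing $(\tP-\tP_0)=0$ microlocally on $U_G$, which is then combined with Egorov's theorem and the dynamical hypothesis $\Phi^s(V)\Subset U_G$. The only cosmetic difference is that the paper differentiates $V(s)\Pi^w=U_0(s)^{-1}U(s)\Pi^w$ (and so tacitly uses Lemma~\ref{l:new1} to ensure $\WFh(V(s)\Pi^w)\subset V$), whereas your interpolant $U(t-s)U_0(s)\Pi^w$ avoids that detour; the invocation of Beals's criterion for $P_\theta=P$ near $U_G$ is unnecessary, since the contour $\Gamma_\theta$ literally equals $\RR^n$ inside $B(0,R_0)$.
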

\begin{proof}
The proof is very similar to that of the previous lemma.
The norm is equal to $\norm{(V(t)-1)\Pi^w}_{L^2\to L^2}$. Differentiating this
operator with respect to $t$, we find for all $s\in [0,t]$:
$$
\partial_s V( s)\, \Pi^w=
\frac{1}{ih}\, U_0 ( s )^{-1} ( \tP - \tP_0 ) \,U_0 ( s )\, V ( s )\,\Pi^w\,.
$$
From the dynamical
assumption and using Egorov's
theorem, we easily deduce that 
$$
U_0 ( s )^{-1} ( \tP - \tP_0 ) \, U_0 ( s ) = 0\,, \ \ \text{microlocally near $U_G$,}
$$  
uniformly for all $s\in [0,t]$. Since $\Pi$ is supported 
inside $V$, we obtain $\partial_s V( s)\, \Pi^w = \cO_{L^2\to L^2}(h^\infty)$.
\end{proof}

Using Lemma \ref{l:new1} we also prove a basic semiclassical propagation estimate for
$U(t)$.
\begin{prop}\label{p:new1}
Take $\delta_1$ as in Proposition \ref{p:res} and 
fix some $t>0$ and some $\gamma\in [0,1/2)$. 

i) Take $\psi_0,\ \psi_1\in S_{\gamma}(1)$ such that $\psi_1\circ\Phi^t$ takes the value $1$ near
$\supp \psi_0$: precisely, assume
\be\label{e:nested-delta}
d\big( \supp \psi_0, \complement\{\rho\,:\,\psi_1\circ\Phi^t(\rho)= 1\}\big)\geq h^{\gamma}/C\,,
\quad \supp\psi_1\subset\cE^{\delta_1}\,,
\ee
where $d(\bullet,\bullet)$ is a Riemannian distance on $T^*X$
which coincides with the standard Euclidean distance
outside $T^*_{B(0,R_0)}X$.
Then
\be\label{e:propag}
\psi_1^w( x, h D )\, U(t)\, \psi_0^w( x, h D ) = U(t)\, \psi_0^w( x, h D )
+\cO_{L^2\to L^2}(h^\infty)\,.
\ee
ii) If $\psi_0,\,\psi_1\in S_\gamma(1)$ are such that $\psi_0=1$ near $\supp \psi_1\circ \Phi^t$,
then 
\be\label{e:propag2}
\psi_1^w( x, h D )\, U(t)\, \psi_0^w( x, h D ) = \psi_1^w( x, h D )\,U(t)
+\cO_{L^2\to L^2}(h^\infty)\,.
\ee
\end{prop}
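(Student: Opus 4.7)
My plan is to reduce both statements to questions about compositions of pseudodifferential operators in the exotic class $\Psi_\gamma$, using Lemma \ref{l:new1} to factor $U(t) = U_0(t)\, V(t)$ with $V(t) \in \Psi_\gamma(T^*X)$. Since $U_0(t)$ is unitary, an iterated Egorov's theorem conjugates $\psi_1^w$ through $U_0(t)$ to a pseudodifferential operator with symbol close to $\psi_1 \circ \Phi^t$. Each statement then reduces to composing three $\Psi_\gamma$ operators whose symbols have supports separated at scale $h^\gamma$; because $\gamma < 1/2$, this scale exceeds the critical scale $h^{1/2}$ of the calculus, which yields the desired $\cO(h^\infty)$ bound by non-stationary phase.

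For part $i)$, I would write
\begin{equation*}
(I - \psi_1^w) U(t) \psi_0^w = U_0(t) \bigl[U_0(t)^{-1}(I - \psi_1^w) U_0(t)\bigr] V(t) \psi_0^w.
\end{equation*}
Since $\supp \psi_1 \subset \cE^{\delta_1}$ and $\delta_1 = \delta/4 < \delta/3$, the conjugation happens in the region where $\chi_\delta \equiv 1$, so the generator of $U_0(t)$ has principal symbol $p$ there. Iterating Egorov's theorem within the $S_\gamma$ calculus (as in the proof of Lemma \ref{l:new1}) produces
\begin{equation*}
U_0(t)^{-1}(I - \psi_1^w) U_0(t) = \bigl(1 - \psi_1 \circ \Phi^t\bigr)^w + R(t),
\end{equation*}
where $R(t)$ is microlocally supported near $\supp\nabla(\psi_1 \circ \Phi^t)$. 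By hypothesis $1 - \psi_1 \circ \Phi^t$ vanishes on an $h^\gamma/C$-neighbourhood of $\supp \psi_0$, so the composition $(1 - \psi_1 \circ \Phi^t)^w V(t) \psi_0^w$ is $\cO_{L^2\to L^2}(h^\infty)$ by the standard $S_\gamma$ disjoint-support estimate, and the same separation absorbs $R(t)$. Since $U_0(t)$ is uniformly bounded, part $i)$ follows.

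Part $ii)$ is proved analogously: write
\begin{equation*}
\psi_1^w U(t)(I - \psi_0^w) = U_0(t)\, \bigl[U_0(t)^{-1} \psi_1^w U_0(t)\bigr] V(t)(I - \psi_0^w),
\end{equation*}
conjugate $\psi_1^w$ to $(\psi_1 \circ \Phi^t)^w$ modulo $\cO(h^\infty)$ by iterated Egorov, and observe that the hypothesis "$\psi_0 \equiv 1$ near $\supp(\psi_1 \circ \Phi^t)$" says precisely that the symbols $\psi_1 \circ \Phi^t$ and $1 - \psi_0$ have $h^\gamma$-separated supports.

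The main obstacle is ensuring that Egorov's expansion in the exotic class $S_\gamma$ yields a genuinely $\cO(h^\infty)$ remainder rather than the naive $\cO(h^{1-2\gamma})$. This is handled by iterating the commutator expansion exactly as in the proof of Lemma \ref{l:new1}: each derivative of the symbol costs a factor $h^{-\gamma}$, but the $h^\gamma$-separation of the supports allows one to trade this against a full power of $h$ from non-stationary phase, so after sufficiently many steps any prescribed negative power of $h$ is achieved and the support-separated compositions are indeed $\cO(h^\infty)$.
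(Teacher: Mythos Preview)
Your proposal is correct and follows essentially the same route as the paper: factor $U(t)=U_0(t)V(t)$ via Lemma~\ref{l:new1}, conjugate $\psi_1^w$ through the unitary $U_0(t)$ by Egorov's theorem, and then invoke the $S_\gamma$ disjoint-support calculus to kill the composition with $V(t)\psi_0^w$. The paper phrases the conclusion slightly differently---it says $U_0(t)^{-1}\psi_1^w U_0(t)=I$ microlocally on an $h^\gamma$-neighbourhood of $\supp\psi_0$, rather than working with $(I-\psi_1^w)$---and it dispatches your ``main obstacle'' by a one-line appeal to the $\Psi_{h,\gamma}$ calculus in \cite{DiSj,EZ}, but the substance is the same.
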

Before proving the proposition we remark that if instead
$\psi_0,\ \psi_2\in S(1)$ satisfy
\be
d\big(\supp \psi_0,\,\supp \psi_2\circ\Phi^t \big)\geq 1/C,
\quad \supp\psi_j\subset\cE^{\delta_1}\,,
\ee
then 
\be\label{e:kill}
\psi_2^w( x, h D )\, U(t)\, \psi_0^w( x, h D ) = \cO_{L^2\to L^2}(h^\infty)\,.
\ee
Indeed, we can apply \eqref{e:propag} with $ \psi_1 = 1 - \psi_2 $. 

\medskip
\noindent
{\em Proof of Proposition \ref{p:new1}:}
We use  Lemma \ref{l:new1} to write
\be\label{e:product}
\psi_1^w( x, h D ) U(t) \psi_0^w( x, h D )= 
U_0(t)\big(U_0(t)^{-1}\,\psi_1^w( x, h D )\, U_0(t)\big)\,V(t)\psi_0^w( x, h D )\,.
\ee
Pseudodifferential calculus on $\Psi_{h,\gamma,h}$ (see for instance 
\cite[Chapter 7]{DiSj} or \cite[Chapter 4]{EZ})
shows that the wavefront set of the operator $V(t) \psi_0^w( x, h D )$ 
is a subset of $\supp\psi_0$, while Egorov's theorem and the 
condition \eqref{e:nested-delta} implies that 
$U_0(t)^{-1}\,\psi_1^w( x, h D )\, U_0(t)=I$ microlocally 
in an $ h^\gamma$-neighbourhood 
of $\supp\psi_0$. 
This operator
can thus be omitted in \eqref{e:product}, up to an error $\cO(h^\infty)$, which proves the 
first statement.

The proof of second statement goes similarly: $\psi_0^w( x, h D )=1$ microlocally near 
the wavefront set of $U_0(t)^{-1}\,\psi_1^w( x, h D )\, U_0(t)$.
\stopthm

We can use this proposition to show that the 
``deep complex scaling'' region acts as an absorbing potential, that is, strongly
damps the propagating wavepackets.
\begin{lem}\label{l:deep}
Take $\delta_1$ as in Proposition \ref{p:res}, $R_0$ as in \eqref{eq:gpr} and fix some time $ t_1 \geq 0 $. 
Then, for any symbol $ \psi \in S ( T^* X  ) $ satisfying
\be
\label{eq:condps} 
\forall t\in[0,t_1], \quad \supp (\psi\circ\Phi^{-t}) \subset \cE^{4\delta_1/5}\cap \{|x(\rho)| > 5\, R_0/2\}\,,
\ee
we have 
\be\label{eq:deep}
 \| U ( t_1 ) \psi^w ( x , h D) \|_{L^2\to L^2} \leq 
\exp\left(- \frac{ \theta}{h C_0} 
\right) \| \psi^w ( x, h D ) \|_{L^2\to L^2} + \cO_{L^2\to L^2}(h^\infty )
\,,
\ee
where $C_0>0$ is independent of the choice of $\psi$.
\end{lem}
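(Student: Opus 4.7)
The plan is to derive an exponential decay estimate for $t\mapsto \|U(t)\psi^w v\|^2$ via a Gronwall argument, exploiting the strong dissipativity of $\Im\tilde P$ in the deep scaling region. The key input is that, by \eqref{e:sign} and the choice $C<M_2<C/\delta_0$, we have
$$
\Im p_\theta(\rho) - \eps H_p G(\rho) \leq -\tilde c\,\theta,\qquad \rho\in \cE^{\delta}\cap\{|x|>2R_0\},
$$
for some $\tilde c>0$ depending only on $M_2$ and the constant $C$ in \eqref{eq:pth3}.

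First I would fix $v\in L^2$, set $u(t)\defeq U(t)\psi^w(x,hD)v$, and construct a smooth family of cutoffs $\psi_t\in\CIc(T^*X;[0,1])$ such that $\psi_t\equiv 1$ on a neighbourhood of $\supp(\psi\circ\Phi^{-t})$ while $\supp\psi_t\subset \cE^{\delta/3}\cap\{|x|>2R_0\}$ for all $t\in[0,t_1]$. The fixed margins built into \eqref{eq:condps} ($4\delta_1/5<\delta/3$ and $5R_0/2>2R_0$), together with continuity of $\Phi^t$, make this possible. Proposition \ref{p:new1}(i) applied with $(\psi_0,\psi_1)=(\psi,\psi_t)$ then gives
$$
\psi_t^w(x,hD)\,U(t)\,\psi^w(x,hD) = U(t)\,\psi^w(x,hD) + \cO_{L^2\to L^2}(h^\infty),
$$
so $u(t)=\psi_t^w u(t)+r_t$ with $\|r_t\|=\cO(h^\infty)\|v\|$, uniformly for $t\in[0,t_1]$.

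The core of the proof is the energy identity
$$
\partial_t\|u(t)\|^2 = (2/h)\,\langle \Im\tilde P\, u(t), u(t)\rangle.
$$
Since $\chi_\delta\equiv 1$ on $\supp\psi_t\subset \cE^{\delta/3}$, the operator $\psi_t^w(-\Im\tilde P-\tilde c\theta/2)\psi_t^w$ has principal symbol $\psi_t^2(-\Im p_{\theta,\eps}-\tilde c\theta/2)\geq 0$ everywhere (the $\cO(h+\eps^2)$ correction in \eqref{eq:split1} is absorbed by halving $\tilde c$ for $h$ small). The sharp G{\aa}rding inequality \eqref{eq:sharpg} then gives
$$
\langle -\Im\tilde P\,\psi_t^w u(t),\psi_t^w u(t)\rangle \geq (\tilde c\theta/2)\|\psi_t^w u(t)\|^2 - Ch\|u(t)\|^2.
$$
Combining with $u(t)=\psi_t^w u(t)+\cO(h^\infty)\|v\|$ and using $\theta=M_1 h\log(1/h)\gg h$, we obtain for $h$ sufficiently small
$$
\partial_t\|u(t)\|^2 \leq -(\tilde c\theta/h)\|u(t)\|^2 + \cO(h^\infty)\|v\|^2.
$$

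Gronwall's inequality integrates this to $\|u(t_1)\|\leq e^{-\tilde c\theta t_1/(2h)}\|\psi^w v\|+\cO(h^\infty)\|v\|$, and taking the supremum over $\|v\|=1$ yields \eqref{eq:deep} with $C_0=2/(\tilde c\,t_1)$. The main obstacle I foresee is organizing the family of intermediate cutoffs $\psi_t$: one needs $\psi_t\circ\Phi^t\equiv 1$ near $\supp\psi$ uniformly in $t\in[0,t_1]$, with all seminorms of $\psi_t$ bounded so that the remainders in Proposition \ref{p:new1}(i) are genuinely $\cO(h^\infty)$ with constants uniform in $t$. Once this is in place, everything else is a routine dissipative energy estimate.
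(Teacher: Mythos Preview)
Your proposal is correct and follows essentially the same route as the paper's proof: the dissipative estimate \eqref{e:sign} on the deep-scaling region, a cutoff inserted via Proposition~\ref{p:new1}(i), and Gronwall. The one simplification you missed is that a \emph{single} $t$-independent cutoff $\psi_1$ suffices: since \eqref{eq:condps} forces $\Phi^t(\supp\psi)\subset \cE^{4\delta_1/5}\cap\{|x|>5R_0/2\}$ for every $t\in[0,t_1]$, one can take $\psi_1\in S(1)$ supported in $\cE^{\delta_1}\cap\{|x|>2R_0\}$ and equal to $1$ on a fixed neighbourhood of this entire tube, so that $d(\supp\psi,\complement\{\psi_1\circ\Phi^t=1\})>1/C$ uniformly in $t$. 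This dissolves your stated obstacle about uniform seminorms for the family $(\psi_t)$.
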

\begin{proof}
For any symbol $ \psi_0 \in S(1)$ supported inside $\cE^{\delta_1}\cap\{x>2 R_0\}$, 
the estimates \eqref{e:sign} imply that
\be\label{eq:ptil}
 \Im \la \tP \psi_0^w ( x, h D )  u , \psi_0^w ( x , h D) u \ra \leq 
- \frac {\theta}{C_1 } \|  \psi_0^w ( x, h D )  u\|^2 
+ \cO ( h^\infty ) \| u \|^2 
\ee
for some $C_1>0$.
From the hypothesis \eqref{eq:condps} on $\psi$, and assuming $R_0$ is large enough,
there exists a symbol $ \psi_1\in S(1)$ such that
$$
\supp\psi_1\subset \cE^{\delta_1}\cap\{x>2 R_0\}\quad\text{and}\quad
d(\supp \psi,\,\complement\{\rho\, :\, \psi_1\circ\Phi^t(\rho)=1\}\big)>1/C\,,\quad t\in [0,t_1]\,.
$$
Proposition \ref{p:new1}, $i)$ then shows that
$$
\psi_1^w  ( x , h D )\,U ( t ) \psi^w ( x, h D )= 
 U ( t )\, \psi^w ( x, h D ) + \cO_{L^2\to L^2} ( h^\infty ) \,,\quad\text{uniformly for }t\in [0,t_1]\,.
$$
Combining this with \eqref{eq:ptil} we obtain, uniformly for $t\in [0,t_1]$:
\[ \begin{split} 
\partial_t \| U ( t ) \psi^w u \|^2 & = 
\frac{2}{h}\, \la \Im \tP \psi_1^w   U ( t ) 
\psi^w u , \psi_1^w  U ( t ) \psi^w  
u \ra + \cO ( h^\infty ) \| u \|^2 \\
 & \leq - \frac{2 \theta} { C_1 h }\, \|  U ( t ) \psi^w u \|^2 
+ \cO ( h^\infty ) \| u \|^2 \,, 
\end{split}\]
from which the lemma follows by Gronwall's inequality, with $1/C_0 = 2t_1/C_1$..
\end{proof}

\subsection{Microlocal Partition}
We consider $\delta_1=\delta/4$ as in Proposition \ref{p:res}, 
and take a smooth partition of 
unity adapted to $(W_a\cap \cE^{\delta_1})_{a\in A}$, which by quantization produces
a family $ (\Pi_a \in \Psi_h ( 1 ))_{a\in A} $ such that  
$$
\WF_h ( \Pi_a)\subset W_a\cap \cE^{3\delta_1/4},\quad \Pi_a = \Pi_a^*, \quad \text{and}\quad
\sum_{ a\in A} \Pi_a = I \quad \text{microlocally near $\cE^{\delta_1/2}$.}
$$
The difference
$$
\Pi_\infty\defeq I - \sum_{ a\in A} \Pi_a
$$
is also a pseudodifferential operator in $\Psi_h(1)$, and
$$
\WF_h ( \Pi_\infty)  \cap \cE^{\delta_1/2} = \emptyset\,.
$$
Using this microlocal partition of unity, we 
decompose the modified propagator \eqref{eq:Uthep} at time $t_0$:
\be\label{eq:partu}
U ( t_0 ) = \sum_{a\in {A\cup\infty}} U_a\,, \qquad 
U_a \defeq U(t_0) \, \Pi_a \,. 
\ee
We then decompose the $N$-th power of the propagator as follows:
\begin{equation}
\label{e:power}
U( N t_0 ) = \sum_{ \alpha \in A^N} 
U_{\alpha_N} \circ \cdots \circ U_{ \alpha_1 } + R_N \,.
\end{equation}
The remainder $R_N$ is the sum over all sequences $\alpha$ containing at least one
index $\alpha_j=\infty$.
The following lemma shows that the remainder $R_N$ is irrelevant when applied to
states microlocalized near $\cE$:
\begin{lem}\label{l:3} 
Suppose that $ \chi \in \CIc ( T^* \RR ) $ is supported inside $\cE^{\delta_1/5}$
and that we consider logarithmic times in the semiclassical limit:
\begin{equation}
\label{eq:bddN}
N \leq M \log \frac 1 h \,, \quad M>0\ \text{fixed}\,.
\end{equation}
Then the remainder term in \eqref{e:power} satisfies
$$   
\| R_N\, \chi^w ( x , h D) \|_{L^2 \rightarrow L^2 } = \cO ( h^\infty )  \,,
$$
with the implied constants depending only on $ M $.
\end{lem}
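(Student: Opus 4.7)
The plan is to decompose $R_N\chi^w u$ as a sum over sequences $\alpha\in (A\cup\{\infty\})^N$ containing at least one index $\alpha_j=\infty$, and to show that each such term is $\cO_{L^2}(h^\infty)\|u\|$ with polynomial-in-$1/h$ constants; these then sum to $\cO_{L^2}(h^\infty)$ because there are at most $(|A|+1)^N=h^{-M\log(|A|+1)}$ bad sequences. The crucial fact is that $\WF_h(\Pi_\infty)\cap\cE^{\delta_1/2}=\emptyset$, so $\Pi_\infty$ annihilates, modulo $\cO(h^\infty)$, any state microlocalized in a compact subset of $\cE^{\delta_1/2}$. Hence the central task is to show that for a sequence with first $\infty$-index $j$, the intermediate state $v_{j-1}\defeq U_{\alpha_{j-1}}\circ\cdots\circ U_{\alpha_1}\chi^w u$ (with $\alpha_i\in A$ for $i<j$) remains microlocalized in a compact subset of $\cE^{\delta_1/2}$, uniformly for $j\leq N\leq M\log(1/h)$.

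To track this localization, I would introduce a nested family of cutoffs $\psi_k\in S_\gamma(T^*X)$, for some fixed $\gamma\in(0,1/2)$, such that
\[
\psi_k\equiv 1\ \text{on}\ \cE^{\delta_1/5+(k+\frac12)\vareps},\qquad \supp\psi_k\subset\cE^{\delta_1/5+(k+1)\vareps},
\]
with $\vareps\defeq \delta_1/(10M\log(1/h))$. Since $N\vareps\leq \delta_1/10$, every $\supp\psi_k$ lies inside $\cE^{3\delta_1/10}\subset\cE^{\delta_1/2}$; the derivative bound $|\partial^\beta\psi_k|\leq C_\beta \vareps^{-|\beta|}\leq C_\beta(\log(1/h))^{|\beta|}$ places $\psi_k$ in $S_\gamma$ for any $\gamma>0$ and $h$ small. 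The gap $\vareps/2\sim 1/\log(1/h)$ between $\{\psi_k\equiv 1\}$ and $\supp\psi_k$ dominates $h^\gamma/C$ for $h$ small, fulfilling the margin requirement of Proposition~\ref{p:new1}. The key claim, proved by induction on $k$, is that for any $\alpha\in A^k$ with $k\leq N$,
\[
v_k = \psi_k^w(x,hD)\,v_k + \cO_{L^2}(h^\infty)\|u\|,
\]
with constants uniform in $k$ and $\alpha$. The base case $k=0$ uses $\supp\chi\subset\cE^{\delta_1/5}\subset\{\psi_0\equiv 1\}$. For the inductive step, replace $v_{k-1}$ by $\psi_{k-1}^w v_{k-1}$ modulo the inductive error, observe that $\Pi_{\alpha_k}\psi_{k-1}^w$ is pseudodifferential with essential support in $\WF_h(\Pi_{\alpha_k})\cap\supp\psi_{k-1}\subset \cE^{3\delta_1/4}\cap \cE^{\delta_1/5+k\vareps}=\cE^{\delta_1/5+k\vareps}$, and invoke Proposition~\ref{p:new1}.i with $\psi_1=\psi_k$: the condition $\psi_k\circ\Phi^{t_0}\equiv 1$ near $\supp(\Pi_{\alpha_k}\psi_{k-1})$ holds because $\Phi^{t_0}$ preserves $p$ and $\psi_k\equiv 1$ on the $\vareps/2$-thickening $\cE^{\delta_1/5+(k+\frac12)\vareps}$. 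Each step contributes a bounded operator-norm factor (by \eqref{eq:gaU} and $\|\Pi_a\|\leq C$), so the cumulative constant remains $\cO(1)$ over $N\leq M\log(1/h)$ steps.

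For a bad sequence with first $\infty$-index $j$, the induction applied at step $j-1$ yields
\[
\Pi_\infty v_{j-1}=\Pi_\infty\psi_{j-1}^w v_{j-1}+\cO(h^\infty)\|u\|.
\]
Since $\supp\psi_{j-1}\subset\cE^{\delta_1/2}$ is disjoint from $\WF_h(\Pi_\infty)$, the composition $\Pi_\infty\psi_{j-1}^w$ is itself a residual pseudodifferential operator, so $\|\Pi_\infty v_{j-1}\|=\cO(h^\infty)\|u\|$. The subsequent factors $U_{\alpha_{j+1}},\ldots,U_{\alpha_N}$ contribute a product of operator norms bounded by a constant (polynomial in $1/h$ for the logarithmic time). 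Summing over the $h^{-M\log(|A|+1)}$ bad sequences gives $\|R_N\chi^w\|_{L^2\to L^2}=\cO(h^\infty)$, completing the proof. The main delicate point is sustaining the induction over $N\sim \log(1/h)$ steps while keeping $\supp\psi_k$ inside $\cE^{\delta_1/2}$; this forces the $h$-dependent scale $\vareps\sim 1/\log(1/h)$ and requires working in the slightly exotic class $S_\gamma$, exploiting precisely the $h^\gamma$-tolerance of Proposition~\ref{p:new1}.
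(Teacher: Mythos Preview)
Your proposal is correct and follows essentially the same approach as the paper's proof: reduce to the first occurrence of the index $\infty$, build a nested family of $h$-dependent energy cutoffs in an exotic class $S_\gamma$, and iterate Proposition~\ref{p:new1},~$i)$ to keep the intermediate states microlocalized in $\cE^{\delta_1/2}$, so that $\Pi_\infty$ kills them. The only cosmetic difference is the choice of step size: the paper takes increments $h^{\delta'}$ for a fixed $\delta'\in(0,1/2)$ (so the cutoffs lie in $S_{\delta'}$), whereas you take increments $\sim 1/\log(1/h)$, which yields cutoffs in $S_\gamma$ for every $\gamma>0$; both satisfy the $h^\gamma$-margin hypothesis of Proposition~\ref{p:new1} and both keep the final cutoff supported inside $\cE^{\delta_1/2}$.
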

\begin{proof}
Let $\alpha\in A^N$ be a sequence containing at least one index $\alpha_j=\infty$.
Call $j_m$ the smallest integer such that $a_j=\infty$, so
the corresponding term in $R_N$ reads
$$
U_{\alpha_N}\cdots U_{\alpha_{j_m+1}}\,U(t_0)\,\Pi_\infty\,U_{\alpha_{j_m-1}}\cdots U_{\alpha_1},\quad
\text{with}\quad\alpha_1,\ldots,\alpha_{j_m-1}\in A\,.
$$
The lemma will be proved once we show that
\be\label{e:pi_infty}
\Pi_\infty\,U_{\alpha_{j_m-1}}\cdots U_{\alpha_1}\,\chi^w(x,hD) = \cO_{L^2\to L^2}(h^\infty)\,,
\ee
with implied constants uniform with respect to the sequence $\alpha$. Indeed, the remaining
factor on the left is bounded as
$$
\norm{U_{\alpha_N}\cdots U_{\alpha_{j_m+1}}\,U(t_0)}\leq C\,e^{CN}\leq C\,h^{-CM}\,,
$$
and the full number of sequences is $(|A|+1)^N=\cO(h^{-C'M})$.

The estimate \eqref{e:pi_infty} is obvious if $j_m=0$, because $\WFh(\Pi_\infty)$
and $\WFh(\chi^w)$ are at a positive distance from each other. To treat the cases $j_m>0$, 
we will define a family of $N$ nested symbols which cutoff in energy in various ranges
between $\delta_1/4$ and $\delta_1/2$. Because $N\sim \log(1/h)$, we must
use symbols in some class $S_{\delta'}(1)$,  $\delta'\in(0,1/2)$. We first define a sequence of 
functions $\tchi_j\in \CIc(\RR,[0,1])$, $j=1,\ldots,N$, as follows:
$$
 \tchi_1(t)= \begin{cases}1 &|t|\leq \delta_1/4\\0& |t|\geq \delta_1/4 + h^{\delta'}/2,\end{cases}\qquad
\tchi_{j+1}(t)=\begin{cases}1 & |t|\leq \delta_1/4\\
\tchi_j(|t|-h^{\delta'}) & |t|\geq \delta_1/4\,,\end{cases}\quad j\geq 1\,.
$$
The function $\tchi_N$ vanishes for $|t|\geq \delta_1/4 + Nh^{\delta'}$, and we will take $h$ small enough
so that $\delta_1/4 + Nh^{\delta'}< \delta_1/2$.
From there, the energy cutoffs $\chi_j\in S_{\delta'}(1)$ are defined by
$$
\chi_j(x,\xi)\defeq \tchi_j \big(p(x,\xi)\big)\,,j=1,\ldots,N\,.
$$
From the support properties of $\chi$, the first cutoff satisfies 
\be\label{e:chi_1}
\chi_1^w(x,hD)\,\chi^w(x,hD) = \chi^w(x,hD)+\cO_{L^2\to L^2}(h^\infty)\,.
\ee
For any $j=1,\ldots,j_m-1$, we have $\chi_j=\chi_j\circ\Phi^{t_0}$, and the nesting between $\chi_j$
and $\chi_{j+1}$ allows us to apply the propagation results of Proposition~\ref{p:new1}, $i)$:
\be\label{e:j-j+1}
\chi_{j+1}^w(x,hD)\, U_{\alpha_j}\,\chi_j^w(x,hD) = U_{\alpha_j}\,\chi_j^w(x,hD) +\cO(h^\infty)\,,
\quad j=1,\ldots,j_m-1\,.
\ee
Therefore, inserting $\chi_{j+1}^w$ after each $U_{\alpha_j}$ leaves
the operator \eqref{e:pi_infty} almost unchanged.
Finally, the cutoff, 
$\chi_{j_m}$ is supported in the energy shell $\{ |p(\rho)|\leq \delta_1/4 + N\, h^{\delta'}  \}$
which, for $h$ small enough, is at finite
distance from $\WFh(\Pi_\infty)$, so that
$$
\Pi_\infty\,\chi_{j_m}^w(x,hD)=\cO_{L^2\to L^2}(h^\infty)\,.
$$
Combining this expression with (\ref{e:chi_1},\,\ref{e:j-j+1}) proves \eqref{e:pi_infty} and the lemma.
\end{proof}

\medskip

The set $A^N$ of $N$-sequences can be split between several subsets.
The set $ {\mathcal A}_N \subset A^N  $ is defined as follows:
\be\label{e:A_N}
\alpha=\alpha_1\ldots\alpha_N  \in \cA_N  \; \Longleftrightarrow 
\begin{cases} 
\Phi^{t_0} ( W_{\alpha_j} ) \cap {W_{ \alpha_{j+1} }}  \neq \emptyset \,,
\quad j=1,\ldots,N-1\,, \\
\text{and} \ \ \alpha_j \in A_1 \,,\quad  N_0 < j < N - N_0 \,. \end{cases}
\ee
The sequences in $\cA_N$ spend most of the time in the vicinity of
the trapped set.

The next Lemma
shows that we can discard all sequences except for those
in $\cA_N $:
\begin{lem}\label{l:4} 
Suppose that \eqref{eq:bddN} holds. Then there exists $C_1>0$ such that, 
for $h$ small enough,
\[ 
\sum_{ \alpha \in A^N \setminus \cA_N } 
\| U_{\alpha_N } \circ \cdots \circ U_{\alpha_1} \| \leq 
C_1\,|A|^N\,e^{C_1 N t_0}\, e^{- \theta/{C_1 h } } \,. 
\]
If $ N \leq M \log ( 1/ h ) $, 
$ \theta = M_1 h \log ( 1/h ) $, and 
$ M_1 \gg M t_0 $, this implies that
\begin{equation}
\label{eq:l4}
\sum_{ \alpha \in A^N \setminus \cA_N } 
\| U_{\alpha_N } \circ \cdots \circ U_{\alpha_1} \| \leq  h^{ M_1/C_2} \,,
\quad 0 < h < h_0 ( M,M_1,|A|).
\end{equation}
\end{lem}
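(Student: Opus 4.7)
My plan is to split $A^N\setminus\cA_N$ into two subsets based on which condition in \eqref{e:A_N} fails. Let $\cB$ consist of sequences with a ``classically forbidden'' transition ($\Phi^{t_0}(W_{\alpha_j})\cap W_{\alpha_{j+1}}=\emptyset$ for some $j$), and let $\cC$ consist of classically connected sequences with some index $j_0\in (N_0,N-N_0)$ such that $\alpha_{j_0}\in\{0\}\cup A_2$. For sequences in $\cB$, the open sets $\Phi^{t_0}(W_{\alpha_j})$ and $W_{\alpha_{j+1}}$ sit at positive distance (the cover is finite and has small enough diameters), so the remark \eqref{e:kill} following Proposition~\ref{p:new1} gives $\Pi_{\alpha_{j+1}} U(t_0)\Pi_{\alpha_j}=\cO_{L^2\to L^2}(h^\infty)$. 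Combined with the bound $\|U(t_0)\Pi_{\alpha_k}\|\leq Ce^{Ct_0}$ from \eqref{eq:gaU}, the full product has norm $\cO(h^\infty)\,e^{CNt_0}$, which after summing over at most $|A|^N$ sequences is negligible against the target.

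For sequences in $\cC$, choose the smallest such $j_0$. If $\alpha_{j_0}\in A_2$ is in the forward-escape alternative of Lemma~\ref{l:cover}, iterated use of \eqref{e:propag} shows that $v_{j_0+N_0}\defeq U_{\alpha_{j_0+N_0}}\cdots U_{\alpha_1}u$ has wavefront contained in $\Phi^{N_0 t_0}(W_{\alpha_{j_0}})\cap W_{\alpha_{j_0+N_0}}\subset \{|x|>3R_0\}\cap W_{\alpha_{j_0+N_0}}$. Since $W_{a'}\subset \{|x|\leq 3R_0\}$ for $a'\neq 0$, this intersection is either at positive distance from $\WF_h(\Pi_{\alpha_{j_0+N_0}})$, placing the sequence into $\cB$, or forces $\alpha_{j_0+N_0}=0$. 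In the latter case $v_{j_0+N_0}$ is microlocalized in $\{|x|>3R_0\}\cap\cE^{3\delta_1/4}\subset \{|x|>5R_0/2\}\cap\cE^{4\delta_1/5}$, so Lemma~\ref{l:deep} applied to the next $U(t_0)$-step produces a damping factor $e^{-\theta/(C_0h)}$, with all other factors bounded by $e^{Ct_0}$. The case $\alpha_{j_0}=0$ is immediate, since $W_0$ is already in the deep scaling region. The backward-escape case ($\Phi^{-N_0 t_0}(W_{\alpha_{j_0}})\subset\{|x|>3R_0\}$) is handled symmetrically: in order for $\Pi_{\alpha_{j_0}}v_{j_0-1}$ to be non-negligible, the relevant piece of $v_{j_0-N_0}$ must be microlocalized in $\Phi^{-N_0 t_0}(W_{\alpha_{j_0}})\subset \{|x|>3R_0\}$, so tracking the wavefront backward through the cutoffs $\Pi_{\alpha_{j_0-N_0}},\ldots,\Pi_{\alpha_{j_0-1}}$ again produces either a classical mismatch (reducing to $\cB$) or a step in the deep scaling region to which Lemma~\ref{l:deep} applies.

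Summing all contributions, each sequence in $A^N\setminus\cA_N$ has norm bounded by $C\,e^{C_1 N t_0}\,e^{-\theta/(C_1 h)}$ (the $\cO(h^\infty)$ tails from $\cB$ being negligible), and there are at most $|A|^N$ such sequences. This proves the first bound. For \eqref{eq:l4}, substituting $N\leq M\log(1/h)$ and $\theta=M_1 h\log(1/h)$ gives $|A|^N e^{C_1 N t_0}=h^{-C' M t_0}$ for some $C'>0$ (absorbing the polynomial factor $|A|^N=h^{-M\log|A|}$), and $e^{-\theta/(C_1 h)}=h^{M_1/C_1}$. Choosing $M_1$ large enough relative to $Mt_0$ that $M_1/C_1-C'Mt_0>M_1/C_2$ yields the $h^{M_1/C_2}$ bound.

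The principal obstacle is the backward-escape sub-case of $\cC$. Passing directly to the adjoint is awkward because $\tP^*$ involves the conjugate weight $e^{+\epsilon G^w/h}$ and the complex-conjugate scaling, so one does not obtain a clean ``backward propagator''. I would therefore avoid adjoints and instead carry out a \emph{dual} wavefront analysis: one inserts an auxiliary cutoff supported in a neighbourhood of $\Phi^{-N_0 t_0}(W_{\alpha_{j_0}})$ between steps $j_0-N_0$ and $j_0-N_0+1$, using \eqref{e:propag2} to justify the insertion up to $\cO(h^\infty)$, and then argues as in the forward case. A secondary technical point is verifying that the various bounds combine without double-counting sequences sitting simultaneously in $\cB$ and in the ``reduction to $\cB$'' portion of $\cC$; this is handled by choosing $j_0$ minimal and invoking the $\cB$-bound whenever the wavefront analysis forces a positive-distance mismatch.
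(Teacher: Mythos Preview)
Your proposal is correct and follows essentially the same route as the paper's proof: split off classically forbidden sequences via \eqref{e:kill}, and for sequences hitting $\{0\}\cup A_2$ in the interior window, track the wavefront into the deep-scaling region and invoke Lemma~\ref{l:deep}. Two points are worth noting.

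First, your dichotomy in the forward-escape case (``positive distance or $\alpha_{j_0+N_0}=0$'') is slightly incomplete: the sets $W_a$ with $a\in A_2$ need not lie entirely inside $\{|x|\le 3R_0\}$, so one may have $\alpha_{j_0+N_0}\in A_2$ with $W_{\alpha_{j_0+N_0}}\cap W_0\neq\emptyset$. The paper handles this extra case by observing that such a $W_{\alpha_{j_0+N_0}}$ still satisfies $\Phi^t(W_{\alpha_{j_0+N_0}})\subset\{|x|>8R_0/3\}$ for $t\in[0,1]$ (if $R_0$ is large enough), so Lemma~\ref{l:deep} applies to $U(1)\Pi_{\alpha_{j_0+N_0}}$ with $t_1=1$, after writing $U(t_0)=U(t_0-1)U(1)$. (Note also this decomposition: Lemma~\ref{l:deep} is not applied over the full step $t_0$, since the hypothesis \eqref{eq:condps} need not persist that long.)

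Second, your ``principal obstacle'' is less of an obstacle than you suggest, and neither adjoints nor auxiliary cutoffs via \eqref{e:propag2} are needed. The paper simply runs Proposition~\ref{p:new1}\,(i) \emph{forward} from step $j_0-N_0$ to step $j_0$: the operator $\Pi_{\alpha_{j_0}} U_{\alpha_{j_0-1}}\cdots U_{\alpha_{j_0-N_0+1}} U(t_0)\Pi_{\alpha_{j_0-N_0}}$ is then negligible unless $W_{\alpha_{j_0-N_0}}$ meets $\Phi^{-N_0 t_0}(W_{\alpha_{j_0}})\subset\{|x|>3R_0\}$, i.e.\ unless $W_{\alpha_{j_0-N_0}}$ intersects $W_0$. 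In that case Lemma~\ref{l:deep} is applied at step $j_0-N_0$ exactly as in the forward case. This is the same mechanism you describe, just organized so that only forward propagation is used.
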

\begin{proof}
Take $\alpha\in A^N\setminus \cA_N$. If the first condition on the right 
in \eqref{e:A_N} is violated, then
the property $\WFh(\Pi_a)\Subset W_a\cap\cE^{3\delta_1/4}$ for $a\in A$
and Eq.~\eqref{e:kill} imply that $\|U_\alpha\|=\cO(h^\infty)$.

Assume that for some $j$, $ N_0 < j < N - N_0 $, we have 
 $ \alpha_j \notin A_1 $. 
We have three possibilities. First, assume $\alpha_j=0$. In that case, the factor
$U_{\alpha_j}=U(t_0)\Pi_0$ can be decomposed as $U(t_0-1)U(1)\Pi_0$. If $R_0$ has
been chosen large enough, the set $W_0=\cE^{\delta}\cap\{x(\rho)>3 R_0\}$ satisfies
the property 
$$
\Phi^t(W_0)\subset \{|x|> 8 R_0/3\},\quad t\in [0,1]\,.
$$
Using the fact that $WF_h(\Pi_0)\subset W_0\cap \cE^{3\delta_1/4}$ 
and applying Lemma~\ref{l:deep} for $t_1=1$, we find
\be\label{e:deep1}
\|U(t_0-1) U(1) \Pi_0 \| \leq e^{C\,(t_0-1)}\,C_0 \exp(- \theta / h C_0) + \cO(h^\infty )\,.
\ee
Second, assume $\alpha_j\in A_2^-$, where we use the same notation
as in the proof of Lemma~\ref{l:cover}. In that case,
\be\label{e:propag1}
 \Phi^{ t } ( W_{\alpha_j }) \subset W_0 \quad\text{for any}\quad t\geq N_0 t_0 \,. 
\ee
Applying Proposition~\ref{p:new1}, $i)$, $N_0$ times, one realizes that the operator
$$
\Pi_{ \alpha_{ j+ N_0} }
\, U_{\alpha_{j+N_0-1} } \cdots U_{\alpha_{j+1} } \,U_{\alpha_j } 
$$
is negligible unless $WF_h(\Pi_{ \alpha_{ j+ N_0} })$ intersects $W_0$. This is the case if 
$ \alpha_{ j+ N_0}=0$, or $\alpha_{ j+ N_0}\in A_2$ and $W_{\alpha_{ j+ N_0}}\cap W_0\neq \emptyset$.
In both cases, we have (as long as $R_0$ has been taken large enough)
$$
\Phi^{ t } ( W_{\alpha_{j+ N_0} })\subset \{|x|> 8 R_0/3\},\quad t\in [0,1]\,,
$$
and the estimate \eqref{e:deep1} applies to $\| U(t_0)\Pi_{ \alpha_{ j+ N_0} }\|$.

Third, if $j\in A_2^+$, we have $\Phi^{ t } ( W_{\alpha_j})\in W_0$ for $t<-N_0 t_0$. 
Again, iterating Proposition~\ref{p:new1}, $i)$ $N_0$ times shows that the operator 
$$
\Pi_{\alpha_j}\,U_{\alpha_{j-1} } \cdots \,U_{\alpha_{j-N_0+1} } U(t_0)\,\Pi_{\alpha_{j-N_0}}
$$
will be negligible unless $W_{\alpha_{j-N_0}}$ intersects $W_0$. This yields 
\[ \|U(t_0)\,\Pi_{\alpha_{j-N_0}} \| 
 \leq e^{C\,(t_0-1)}\,C_0 \exp(- \theta / h C_0) + \cO(h^\infty )\,. \]

For these three cases, we find, using \eqref{eq:gaU}, 
\[ 
\| U_{ \alpha_N }\,  \cdots \circ U_{\alpha_1 }\| \leq  e^{C\,(N-N_0) t_0}\,\exp(- \theta / h C_0)\,.
\]
This estimate concerns an individual element $\alpha\in A^N\setminus \cA_N$.  
Summing over all such elements produces a factor $|A|^N$, which proves the first estimate. 
The second estimate follows from the assumptions on $ N $ and $\theta$.
\end{proof}

The following proposition, which is at the center of the method, 
controls the terms $\alpha\in \cA_N$ in \eqref{e:power}. The proof
is more subtle than for the above Lemmas, and uses the whole machinery of
Sections~\ref{s:iterating} and \ref{s:cl}. In particular, a crucial use is made
of the hyperbolicity of the classical dynamics on $K^\delta$. For this reason,
we call the following bound a {\em hyperbolic dispersion estimate}.
\begin{prop}\label{p:crucial}
Assume the time $N\leq M\log(1/h)$ for some $M>0$. Then, if the 
diameter $\vareps>0$ of the cover $\cV_0$ has been
chosen small enough, for any 
$ \alpha \in {\mathcal A}_N \cap A_1^N $ we have the following bound:
\be\label{eq:crest}
\| U_{\alpha_N } \circ \cdots \circ U_{\alpha_1} \| \leq 
 h^{-n/2} ( 1 + \eps_0 )^N 
\prod_{j=1}^N \exp\Big\{\frac12\, S_{t_0}(W_{\alpha_j })\Big\} \,, 
\ee
where the coarse-grained Jacobian $S_{t_0}(\bullet)$ is defined in \eqref{e:coarse-jac}, and
$\eps_0$ is the parameter appearing in \eqref{e:pressure-approx}.
\end{prop}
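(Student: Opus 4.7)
The strategy is to reduce the $N$-fold composition to an iteration of local Fourier integral operators of the type handled by Proposition~\ref{p:an1}, then decompose an arbitrary test function into a superposition of Lagrangian states close to the weak unstable direction, and propagate each one via the hyperbolic dynamics. Since $\alpha\in A_1^N$, every $\WFh(\Pi_{\alpha_j})\subset W_{\alpha_j}$ lies inside a neighbourhood $\tilde U_G$ of $K^\delta_E$ where the escape weight $G$ of Lemma~\ref{l:gsa} vanishes identically. Iterating Lemma~\ref{l:l9} (and using Proposition~\ref{p:new1}~$i)$ to confine the intermediate wavefront sets in $\tilde U_G$) allows us to replace every non-unitary $U(t_0)$ by the unitary $U_0(t_0)$, with a cumulative error $\cO_{L^2\to L^2}(h^\infty)$.

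Next, choose an adapted coordinate chart $(y^{\alpha_j},\eta^{\alpha_j})$ centered on each reference point $\rho_{\alpha_j}\in W_{\alpha_j}\cap K^\delta_E$ (Lemma~\ref{l:coord}). Lemma~\ref{l:Gam01} writes each factor microlocally as $U_0(t_0)\Pi_{\alpha_j}\equiv \cU_{\alpha_j}^* S_j\, \cU_{\alpha_{j-1}}$, where the $\cU_\bullet$ are microlocally unitary coordinate-change operators and $S_j=\chi_j^w T_j$ is of the form \eqref{e:S_j}, with $T_j$ a microlocally unitary FIO quantizing $\Phi^{t_0}$ in the adapted coordinates. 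The composition telescopes, modulo $\cO(h^\infty)$, to $\cU_{\alpha_N}^*\circ(S_{\alpha_N}\circ\cdots\circ S_{\alpha_1})\circ\cU_{\alpha_0}$, so it suffices to estimate $\|S_{\alpha_N}\circ\cdots\circ S_{\alpha_1}\|$.

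Decompose any $u\in L^2$ via the semiclassical Fourier transform in the $y^{\alpha_1}$-coordinates:
$$
\Pi_{\alpha_1}u(y)=(2\pi h)^{-n/2}\int a(y;\eta_0,h)\,\hat u(\eta_0)\,e^{i\langle y,\eta_0\rangle/h}\,d\eta_0,
$$
with $a$ a symbol compactly supported in $(y,\eta_0)$ derived from the principal symbol of $\Pi_{\alpha_1}$. For each such $\eta_0$, the horizontal Lagrangian $\{\eta=\eta_0\}$ is close (in the adapted coordinates) to the weak unstable subspace $E^{+0}_{\rho_{\alpha_1}}$, hence in a $\gamma_1$-unstable cone. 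Because $\alpha\in\cA_N$, the sets $\Phi^{t_0}(W_{\alpha_j})\cap W_{\alpha_{j+1}}$ are nonempty, so iterating Proposition~\ref{c:fold} one gets: the evolved Lagrangians stay in $\gamma_1$-unstable cones with uniformly bounded derivatives \eqref{eq:varph}, the projected maps $g_k$ satisfy the contraction bound \eqref{e:contraction}, and each Jacobian obeys $|\det dg_k|^{1/2}\leq (1+C\vareps^\gamma)\exp\{\tfrac12 S_{t_0}(W_{\alpha_k})\}$, where $\vareps$ is the diameter of $\cV_0$. Proposition~\ref{p:an1} then yields
$$
(S_{\alpha_N}\circ\cdots\circ S_{\alpha_1})\bigl(a(\cdot;\eta_0,h)\,e^{i\langle\cdot,\eta_0\rangle/h}\bigr)(y)=e^{i\varphi_N^{\eta_0}(y)/h}\,b^{\eta_0}(y;h)+r^{\eta_0}(y;h),
$$
with the pointwise bound $|b^{\eta_0}(y;h)|\leq (1+\eps_0)^N\prod_{k=1}^N\exp\{\tfrac12 S_{t_0}(W_{\alpha_k})\}\|a\|_\infty$, provided $\vareps$ is chosen small enough that $(1+C\vareps^\gamma)^2\leq 1+\eps_0$; the $L^2$ remainder $r^{\eta_0}$ is absorbed thanks to \eqref{e:bounds-rem} and $N\leq M\log(1/h)$ (all polynomial factors in $N$ are dominated by $(1+\eps_0)^N$).

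Finally, to pass from this pointwise Lagrangian-state estimate to the operator norm, apply Cauchy--Schwarz in $\eta_0$ and Plancherel's identity $\|\hat u\|_{L^2}=\|u\|_{L^2}$:
$$
\|(S_{\alpha_N}\circ\cdots\circ S_{\alpha_1})\Pi_{\alpha_1}u\|_{L^2}^2\leq (2\pi h)^{-n}\|u\|^2\iint|b^{\eta_0}(y;h)|^2\,dy\,d\eta_0\leq C h^{-n}(1+\eps_0)^{2N}\prod_{k=1}^N e^{S_{t_0}(W_{\alpha_k})}\|u\|^2,
$$
where the double integral is finite because $b^{\eta_0}$ is supported in a fixed compact set in $(y,\eta_0)$; taking square roots yields \eqref{eq:crest} (the $h^{-n/2}$ loss comes precisely from the Cauchy--Schwarz step). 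The main technical hurdle is the uniform control of the iterated Lagrangians across $N\sim\log(1/h)$ steps, which forces us to choose the diameter $\vareps$ of $\cV_0$ small enough from the outset so that the contraction estimate of Proposition~\ref{c:fold} and the coarse-graining error $(1+C\vareps^\gamma)$ are compatible with the prescribed $\eps_0$; this is possible since $\vareps$ and $\eps_0$ are free parameters in the construction \eqref{e:pressure-approx}.
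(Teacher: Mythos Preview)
Your proposal follows essentially the same strategy as the paper's proof in \S\ref{s:estimate}: replace $U(t_0)$ by $U_0(t_0)$ near the trapped set via Lemma~\ref{l:l9}, decompose the input into Lagrangian states along the weak unstable direction by the semiclassical Fourier transform, propagate through the chain of FIOs using Proposition~\ref{p:an1} with the hyperbolic control from Proposition~\ref{c:fold}, and incur the $h^{-n/2}$ loss from the final Cauchy--Schwarz/Plancherel step (which is exactly the paper's crude bound \eqref{e:bound11}).

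The one technical difference worth flagging is that the paper does not apply Proposition~\ref{p:an1} directly to $N$ time-$t_0$ blocks in the charts $(y^{\alpha_j},\eta^{\alpha_j})$. Instead it further splits each $U_0(t_0)\Pi_{\alpha_j}$ into $t_0$ time-$1$ factors threaded through the intermediate sets $V_{b_k}$ of the original cover $\cV_0$ (see \eqref{e:exchange}--\eqref{e:prod10}), and puts the adapted charts on those $V_{b_k}$. This buys two things. First, Lemma~\ref{l:Gam01} is then invoked only between charts centered at points \emph{on the same orbit}, so the projection condition \eqref{eq:surj} is automatic; in your telescoping the output chart at step $j$ is centered at $\rho_{\alpha_j}$ rather than at $\Phi^{t_0}(\rho_{\alpha_{j-1}})$, so you must separately argue (via H\"older continuity of $E^\pm$ and smallness of $\vareps$) that the graph of $\Phi^{t_0}$ still projects well --- this is true but not literally contained in Lemma~\ref{l:Gam01}. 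Second, with time-$1$ steps the finite family $(S_i)$ feeding Proposition~\ref{p:an1} depends only on the fixed cover $\cV_0$, making the uniformity of the constants transparent. Your streamlined route is correct once these two points are made explicit; the paper's version is just more careful bookkeeping. (A minor indexing slip: the leftmost coordinate change in your telescoped product should sit at a chart covering $\Phi^{t_0}(W_{\alpha_N})$, not at $\rho_{\alpha_N}$; the paper introduces an auxiliary set $V'_{Nt_0}$ for precisely this purpose.)
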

Before proving this proposition in \S\ref{s:estimate}, we show how it
implies Theorem \ref{t:1g}.

\subsection{End of the proof of Theorem~\ref{t:1g}}\label{s:pressure}

Suppose that $ \| u ( h ) \| =1 $ is an 
eigenfuction of $ P_{\theta,\eps} ( h ) $, with the same conditions as in Proposition~\ref{p:res}:
$ P_{\theta,\eps}( h )  u( h )  = z ( h )  u ( h ) $, 
$ | \Re z( h ) - E | \leq \delta_1 $, $\Im z(h)> - Ch$. 
Then, taking $ t = N t_0 $, $ N \leq M \log ( 1/h ) $ in Proposition \ref{p:res}, we get
\[ 
\exp ( Nt_0\,\Im z ( h )/ h )  = \| U ( Nt_0  ) u ( h ) \| + \cO ( h^\infty ) \,. 
\]
Using the decomposition \eqref{e:power} and Lemmas~\ref{l:3}, \ref{l:4},
the state $U(Nt_0)\,u(h)$ can be decomposed as
\[ 
U(Nt_0)\,u(h)=\sum_{ \alpha \in {\mathcal A}_N } 
 U_{\alpha_N } \circ \cdots \circ U_{\alpha_1} u ( h ) + \cO_{L^2}( h^{M_3} ) \,, 
\]
where $ M_3 $ can be as large as we like, if we take $ \theta = 
M_1 h \log ( 1/h ) $ with $ M_1 $ large, depending on $ M t_0 $.

The norm of the right hand side can be estimated by applying \eqref{eq:crest} to the 
factors $U_{\alpha_{N-N_0}\cdots U_{N_0}}$. This leads to
\be\label{eq:press1}
\exp (N t_0  \Im z ( h ) / h ) \leq C  h^{-n/2} 
( 1 + \eps_0 )^N \,\sum_{ \alpha \in {\mathcal A}_N } 
 \prod_{j=N_0}^{N-N_0}  e^{\frac12\,S_{t_0}(W_{\alpha_j})} + \cO ( h^{M_3} )\,.
\ee
The sum over $\cA_N$ can be factorized:
$$
\sum_{ \alpha \in \cA_N } \prod_{j=N_0}^{N-N_0} e^{\frac12 
\,S_{t_0} (W_{\alpha_j})}
\leq
\Big(\sum_{ a\in A_1} e^{\frac12 \,S_{t_0}(W_a)}\Big)^{N-2N_0}\,.
$$
Combining this bound with \eqref{eq:press2}, we finally obtain:
\be
\exp \big(N t_0  \Im z ( h ) / h \big) \leq C'\, h^{-n/2} ( 1 +  \eps_0 )^N 
\exp \big( Nt_0 ( \cP_E^\delta ( 1/2 )+\eps )\big) + \Oo ( h^{M_3} ) \,.
\ee
Taking the logarithm and dividing by $ N t_0 $, we get
\[ 
\Im z ( h ) / h \leq  \cP_{E}^\delta ( 1/2 )   +  3 \eps_0  + n \frac{\log ( 1/h )}{ 2 N t_0} 
+ \log C' / Nt_0\,.
\]
We can take $ N = M \log ( 1/h ) $
with $ M $ arbitrary large (and consequently with $ M_1 $ in the
definition of $ \theta $, large), so that, for any $h$ sufficiently small (say, $h<h(\delta,\eps_0)$):
$$
\Im z ( h ) / h \leq  \cP_{E}^\delta  ( 1/2 )   + 4\eps_0\,. 
$$
In \S\ref{s:selection} we could take $ \eps_0 > 0 $ as small as we wished. 
This proves Theorem~\ref{t:1g} with a bound 
slightly sharper than \eqref{eq:press}.

\section{Proof of the hyperbolic dispersion estimate}\label{s:estimate}
To prove the estimate in Proposition~\ref{p:crucial}, we adapt the strategy of 
\cite{An,AnNo} to the present setting. We decompose an arbitrary state microlocalized
inside $W_{\alpha_1}$ into a combination of Lagrangian states associated with
``horizontal'' Lagrangian leaves (namely, Lagrangian leaves situated in some unstable cone). 
By linearity, the evolution of the full initial state can be estimated
by first evolving each of these Lagrangian states. Proposition~\ref{p:inclination}
shows that, being in an unstable cone,
the Lagrangians spread uniformly along the unstable direction, at a rate governed by
the unstable Jacobian. 
Proposition~\ref{p:an1} shows that this spreading implies a uniform exponential
decay of the norm of the evolved Lagrangian state, and by linearity, a uniform
decay of the full evolved state.

\subsection{Decomposing localized states into a Lagrangian foliation}

In this section we consider states $w\in L^2(\RR^n)$ with wavefront sets 
contained in an open neighbourhood $W$ of the
origin, $\WFh(w)\subset W\defi B(\vareps)_y\times B(\vareps)_\eta$. 
Here $B(\vareps)$ is the open ball of radius $\vareps$ in $\RR^n$. 
We will decompose such a state $w$ into a linear combination of ``local momentum states'' 
$(e_{\eta})_{\eta\in B(2\vareps)}$, associated
with horizontal Lagrangian leaves $(\Lambda_{\eta})_{\eta\in B(2\vareps)}$. Each Lagrangian
leaves $\Lambda_{\eta}$ is defined by
$$
\Lambda_{\eta}\defeq \set{(y; \eta)\in T^*\RR^n,\  y\in B(2\vareps)}\,,\qquad \eta\in B(2\vareps)\,.
$$
This family of Lagrangian foliates $B(\vareps)\times B(\vareps)$:
$$
W\Subset \bigcup_{ \eta \in B(2\vareps)} \Lambda_{\eta},\qquad
\Lambda_{\eta} \cap \Lambda_{\eta'} = \emptyset
\quad\text{if}\quad \eta \neq \eta' \,.
$$
The associated Lagrangian states $e_\eta$ are defined as follows. 
We start from the ``full'' momentum states $\tilde e_\eta\in \CI(\RR^n)$:
$$
\tilde e_\eta(y)= \exp(i \la \eta,y\ra/ h)\,,\quad y\in\RR^n\,,\quad \eta\in \RR^n\,,
$$
and we smoothly truncate these states in a fixed ball:
\be\label{e:e_eta}
e_\eta(y)\defeq\tilde e_\eta(y)\,\chi_{\vareps}(y),\quad \chi_{\vareps}\in \CIc(B(2\vareps)),\quad
\chi_{\vareps}\rest_{B(3\vareps/2)}=1\,.
\ee
Notice that all states $e_{\eta}$ satisfy 
\be\label{e:norm-e_eta}
\| e_{\eta} \|_{L^2}=\|\chi_{\eps}\| \leq C\,\vareps\,.
\ee
The $h$-Fourier decomposition of an arbitrary state $w\in L^2(\RR^n_y)$ reads
$$ 
w = \int_{ \RR^n }\frac{d\eta}{( 2 \pi h )^{n/2}}\,
 (\cF_h w) ( \eta )\,\tilde e_\eta \,.
$$
With the assumption 
$\WFh(w)\subset  B(\vareps)_{y}\times B(\vareps)_{\eta}$, one deduces that
\be\label{e:decomp}
w = \int_{B(2\vareps)}\frac{d\eta}{( 2 \pi h )^{n/2}}\,(\cF_h w) ( \eta )\,e_{\eta}
+\cO(h^\infty)\|w\|\,.
\ee
This is the decomposition into horizontal Lagrangian states we were aiming at.
If we apply a semiclassically tempered operator $T$ to this state (see \S\ref{s:semiclass}), 
we obtain
$$
T\,w = \int_{B(2\vareps)}\frac{d\eta}{( 2 \pi h )^{n/2}}\,(\cF_h w) ( \eta )\,(T\,e_{\eta})
+\cO(h^\infty)\|w\|\,.
$$
This gives the following bound for the norm of $T\,w$:
\be\label{e:bound11}
\begin{split}
 \| T\,w\|_{L^2} & \leq C\,h^{-n/2}\,\int_{B(2\vareps)}d \eta\,| (\cF_h w) ( \eta ) | \,\|T\, e_\eta \| 
+ \cO ( h^\infty ) \| w \|  \\
& \leq C\, h^{-n/2}\, \max_{\eta\in B(2\vareps) } \|T\, e_\eta \|\,\| w\| + \cO ( h^\infty ) \| w \|\,. 
\end{split}\ee

\subsubsection{Decomposition of the initial state into near-unstable Lagrangian states}
By using semiclassical Fourier integral operators,
see for instance \cite[Chapter 10]{EZ}, we can transplant the construction of the previous paragraph 
to any local coordinate representation. 
Here we will decompose states microlocalized in the 
sets $W_{a}$, $a\in A_1$. The horizontal Lagrangians are constructed with respect to the 
coordinate chart $(y^a,\eta^a)$ centered at some point $\rho_a\in W_a\cap K^{\delta}$, as described 
in Lemma~\ref{l:coord}. In order to cover the set $W_a$, we use the following  family $(\Lambda_{\eta,a})$:
\be\label{eq:anno}
\Lambda_{\eta,a}\equiv \set{(y^a; \eta^a),\  y^a\in B(2\vareps)}\,,\qquad \eta\in B(\delta,2\vareps)\,,
\ee
where $B(\delta,\vareps)\defeq \{\eta=(\eta_1,s)\in \RR^n,\ \ |\eta_1| < \delta,\ |s| < \vareps\}$.
Notice that these Lagrangians are isoenergetic ($\Lambda_{\eta,a}\subset \cE_{\eta_1}$), 
and they belong to arbitrarily thin unstable cones 
in $W_a$, in particular
the cones used in Proposition~\ref{p:inclination} and Remark~\ref{r:rem2}.

Using the Fourier integral operator $\cU_{a}$ associated with 
the coordinate change $(x,\xi)\mapsto(y^a,\eta^a) $ (see Lemma~\ref{l:Gam01}), 
each state \eqref{e:e_eta} can be brought to a Lagrangian state: 
$$
e_{\eta,a}= \cU_{a}^{*}\,e_{\eta}\quad
\text{associated with the Lagrangian leaf }\Lambda_{\eta,a}\subset \cE_{\eta_1}\,,
$$ 
with norms bounded as in \eqref{e:norm-e_eta}.


\subsection{Evolving the Lagrangian states through $U_{\alpha_N}\cdots U_{\alpha_1}$}
We now consider an arbitrary sequence $\alpha\in \cA_N\cap A_1^N$. 
For any normalized $u\in L^2(X)$, the state 
$$
w\defeq \Pi_{\alpha_1}u\quad\text{satisfies }\quad \WFh(w)\subset W_{\alpha_1}\cap \cE^{3\delta_1/4}\,,
$$ 
and can thus be decomposed into the
Lagrangian states $(e_{\eta,\alpha_1})$ associated with the leaves $\Lambda_{\eta,\alpha_1}$,
as in \eqref{e:decomp}.
In order to prove the estimate \eqref{eq:crest}, we will first study the individual
states
\be\label{e:opera}
U_{\alpha_N } \circ \cdots\circ U_{\alpha_2}\circ U(t_0)\,e_{\eta,\alpha_1}\,,\qquad 
\eta\in B(3\delta_1/4,2\vareps)\,.
\ee
We recall that each set $W_a$, $a\in A_1$, has the property
\be\label{e:succession}
\Phi^k(W_a)\subset V_{b_k},\qquad k=0,\ldots,t_0-1,\quad\text{for some sequence }b_0,\ldots, b_{t_0-1}\,.
\ee
Therefore, 
to the sequence $\alpha=\alpha_1\ldots\alpha_N\in A_1^N$ corresponds
a sequence $\beta=\beta_0\ldots\beta_{Nt_0-1}$ of neighbourhoods $V_{\beta_k}$ visited at the times 
$k=0,\ldots, Nt_0-1$. For later convenience, we also consider a set $V'_{Nt_0}$ (of diameter $C\vareps$),
which contains $\Phi^{t_0}(W_{\alpha_N})$.

From now on, we fix some $\eta\in B(3\delta_1/4,2\vareps)$ and compute the state \eqref{e:opera}, 
making use of various 
properties proved in the preceding sections.

\subsubsection{Evolution of the near-unstable Lagrangians $\Lambda_{\eta,\alpha_1}$}
The results of \S\ref{s:Schrod=FIO} and Lemma~\ref{l:0} show that 
it is relevant to study the evolution
of the Lagrangian $\Lambda^0_{\loc}\defeq \Lambda_{\eta,\alpha_1}\cap W_{\alpha_1}$ 
through the following operations:
one evolves $\Lambda^0_{\loc}$ through $\Phi^{t_0}$,
then restricts the result on $W_{\alpha_2}$, then evolves it through $\Phi^{t_0}$, 
restrict on $W_{\alpha_3}$, and so on. 
It is also useful to
consider the intermediate steps, that is, for $k=mt_0+m'$, $0\leq m<N$, $0\leq m'<t_0$,
we take
\begin{align*}
\Lambda^{mt_0}_{\loc}&\defeq \Phi^{t_0}(\Lambda^{(m-1)t_0}_{\loc})\cap W_{\alpha_m},\qquad m=1,\ldots,N-1\,,\\
\Lambda^{m t_0 + m'}_{\loc}&\defeq \Phi^{m'} (\Lambda^{mt_0}_{\loc})\,,\qquad m'=1,\ldots,t_0-1\,.
\end{align*}
Fix $\gamma_1=1/2$. By construction, $\Lambda^0_{\loc}$ is contained in the unstable 
$\gamma_1$-cone in the coordinates $(y^{\alpha_1},\eta^{\alpha_1})$. 
We can thus apply Proposition~\ref{p:inclination}, $i)$ and Proposition~\ref{c:fold} to this
sequence of Lagrangian leaves: each $\Lambda^k_{\loc}$ is contained in the unstable $\gamma_1$-cone
(when expressed in the coordinates $(y^{\beta_k},\eta^{\beta_k})$ on the set $V_{\beta_k}$).
Furthermore, part $ii)$ of the proposition shows that the higher
derivatives of the functions $\varphi_k$ generating $\Lambda^k_{\loc}$ also remain uniformly bounded with
respect to $k$. 
The sequence of Lagrangians is thus totally ``under control'', and the implied constants are
independent of the choice of $\eta\in  B(3\delta_1/4,2\vareps)$ parametrizing the initial
state $e_{\eta,\alpha_1}$.

\subsubsection{Analysis of the operator $U_{\alpha_N}\cdots U_{\alpha_1}$}
We now show that all the propagators $U(1)$ in \eqref{e:opera} 
may be replaced by the unitary propagators $U_0(1)$, up to a negligible error.
For each $a\in A_1$ we recall
that the set $W_a$ satisfies \eqref{e:succession}.
All the sets $V_b\in \cV_0$ were chosen so 
that $\Phi^t(V_b)$ remains close to $K^\delta$ in the interval $t\in [0,1]$.
As a result, one can apply Lemma~\ref{l:l9}
to the differences $(U(1)-U_0(1))\tPi_{b}^w$, where $\tPi_b\in\Psi_h(1)$ satisfies 
$$
\tPi_{b}= I\quad\text{near}\quad V_{b},\qquad 
\Phi^t(\WFh(\tPi_b))\Subset U_G\quad \text{for all}\quad t\in [0,1]\,.
$$
Each factor $U_a=U(t_0)\Pi_{a}$ can then be decomposed as follows:
\be\label{e:exchange}
\begin{split}
U(t_0)\,\Pi_{a}&= U(1)\tPi_{b_{t_0-1}}\cdots U(1) \tPi_{b_1} U(1)\Pi_{a}+\cO(h^\infty)\\
&= U_0(1)\tPi_{b_{t_0-1}}\cdots U_0(1) \tPi_{b_1} U_0(1)\Pi_{a}+\cO(h^\infty)\,.
\end{split}\ee
The first equality uses the propagation properties of Proposition~\ref{p:new1}, $i)$ and
\eqref{e:succession}. The second one is obtained by applying 
Lemma~\ref{l:l9} to all factors $U(1)\tPi_{b_k}$. 
The operator \eqref{e:opera} can thus be expanded as follows:
\be\label{e:prod10}
U_{\alpha_N}\circ \cdots \circ U_{\alpha_1}= 
S_{\beta_{Nt_0},\beta_{Nt_0-1}}\circ \cdots \circ S_{\beta_{1},\beta_0}\,\Pi_{\alpha_1} +\cO(h^\infty)\,,
\ee
where we called
\begin{align*}
S_{\beta_{k+1},\beta_k}&\defeq  \tPi_{\beta_{k+1}}\,U_0(1),\qquad k=0,\ldots, Nt_0-1\,,
\quad t_0 \nmid k+1\,,\\
S_{\beta_{k+1},\beta_k}&\defeq \Pi_{\alpha_{m+1}}\,U_0(1),\qquad k+1=m t_0,\ m=1,\ldots, N-1\,,\\
S_{\beta_{Nt_0},\beta_{Nt_0-1}}&\defeq \tPi'_{Nt_0}\,U_0(1)\,.
\end{align*}
The operator $\tPi'_{Nt_0}\in \Psi_h(1)$ on the last line has a compactly supported symbol,
and is equal to the identity, microlocally
near the set $V'_{Nt_0}$, so that $\tPi'_{Nt_0}\,U(t_0)\,\Pi_{\alpha_N}=U(t_0)\,\Pi_{\alpha_N}+\cO(h^\infty)$. 

From Lemmas~\ref{l:lprop} and \ref{l:Gam01}, each of the propagators 
$S_{\beta_{k+1},\beta_k}$ can be put in the form 
\be\label{e:change-coord}
S_{\beta_{k+1},\beta_k} = \cU_{\beta_{k+1}}^{*}\,T_{\beta_{k+1},\beta_k}\,\cU_{\beta_{k}}+\cO(h^\infty)\,,
\ee
where $\cU_{\beta_{k}}$ is the Fourier integral operator quantizing the local change of coordinates 
$(x,\xi)\to (y^{\beta_k},\eta^{\beta_k})$ (see Lemma~\ref{l:Gam01}),
while $T_{\beta_{k+1},\beta_k}$ is an operator of the form \eqref{e:S_j}, which quantizes 
the map $\kappa_{\beta_{k},\beta_{k-1}}$, obtained by expressing 
$\Phi^1$ in the coordinates $(y^{b_k},\eta^{b_k})\mapsto (y^{b_{k+1}},\eta^{b_{k+1}})$.

Inserting \eqref{e:change-coord} in \eqref{e:prod10}, we obtain
\begin{align*}
U_{\alpha_N}\cdots U_{\alpha_2}\,U(t_0)\, e_{\eta,\alpha_1}&= 
\cU_{\beta_{Nt_0}}^{*}\,\circ T_{\beta_{Nt_0},\beta_{0}}\,e_{\eta} 
+\cO_{L^2}(h^\infty)\,,\\
\text{where we took for short }\quad
T_{\beta_{Nt_0},\beta_{0}}&\defeq T_{\beta_{Nt_0},\beta_{Nt_0-1}}\circ \ldots \circ T_{\beta_1,\beta_0}\,.
\end{align*}
Here we used the fact that $\cU_{\beta_{k}}^{*}\cU_{\beta_{k}}=I$ microlocally near the
wavefront set of $\tPi^{(\prime)}_{\beta_{k}}$, $\Pi_{\alpha_m}$ or $\Pi'_{Nt_0}$.

\subsubsection{Applying the semiclassical evolution estimate}
The state $T_{\beta_{Nt_0},\beta_0}\,e_{\eta}$
has the same form as the left hand side in \eqref{eq:lit}.
Since the Lagrangians 
$$
\Lambda^k_{\loc}\equiv \{(y^{\beta_k},\eta^{\beta_k}=\varphi_k'(y^{\beta_k}))\}
$$ 
remain under control uniformly for $1\leq k\leq N$,
we can apply Proposition~\ref{p:an1} to obtain a precise description of
that state: for any integer $L>0$, we may write
$$
T_{\beta_{Nt_0},\beta_0}\,e_{\eta} (y) =
a^{Nt_0}(y)\,e^{i\varphi_{Nt_0}(y)/h} + h^L\,R^{Nt_0}_L(y)\,,\qquad
y\in \RR^n\,.
$$
The symbol $a^{Nt_0}$ admits an expansion,
$$
a^{Nt_0}(y)=\sum_{j=0}^{L-1}h^j\,a^{Nt_0}_j(y)\,, 
$$ 
which we now analyze. Starting from some $y\in B(C\vareps)$, 
let us assume that there exists no sequence of coordinates 
$$ 
(y^k)_{k=0,\ldots,Nt_0}\,, \ \ y=y^{Nt_0}\,, \ \ y^{k-1}=g_k(y^k)\,, 
$$
where $g_k$ is the projection of the 
map $\kappa_{\beta_{k},\beta_{k-1}}^{-1}\rest_{\Lambda^k_{\loc}}$ 
on the axes $\{\eta^{\beta_k}=0\}$, $\{\eta^{\beta_{k-1}}=0\}$. In that case,
Proposition~\ref{p:an1} shows that $a^{Nt_0}(y)=0$.

On the other hand, if such a sequence exists,
the principal symbol $a^{Nt_0}_0(y)$ satisfies a formula of the type \eqref{e:principal}. The functions
$\chi_{i_k}$ now correspond to the symbols of 
the operators $\tPi_{\beta_k}$, $\Pi_{\alpha_m}$ or $\Pi'_{Nt_0}$, 
which are uniformly bounded from above by $1+\cO(h)$.

The main factor in \eqref{e:principal} is the product of determinants $|\det dg_k(y^k)|^{1/2}$, which
corresponds to
the uniform expansion of the Lagrangians along the horizontal direction.
To estimate this product, we follow 
\S\ref{s:iterating} and group these determinants by packets of length $t_0$. According to Proposition~\ref{c:fold},
we have for any $t_0$-packet:
\[
\begin{split}
\prod_{k=mt_0+1}^{(m+1)t_0} |\det  d g_k (y^{k})|^{1/2} & = 
\det\Big( \frac{\partial y^{(m+1)t_0}}{\partial y^{mt_0}}\Big)^{-1/2}
\\ 
& = \, (1+\cO(\vareps^\gamma))\,
e^{-\lambda^+_{t_0}(\rho_{\alpha_m})/2}\,,\quad m=0,\ldots,N-1\,.
\end{split} \]
Here we have used the coordinate frames $(y^{\beta_{mt_0}},\eta^{\beta_{mt_0}})$ to label points in $W_{\alpha_m}$
instead of the coordinates $(y^{\alpha_m},\eta^{\alpha_m})$ centered at $\rho_{\alpha_m}\in W_{\alpha_m}$; 
this change does not modify the estimate of the Corollary, as is clear from \eqref{e:ya->yb}.
The product of determinants is thus governed by the unstable Jacobian along the trajectory.
Because the points $\rho_{\alpha_m}\in W_{\alpha_m}\cap K^\delta$ are somewhat arbitrary,
we prefer to use the coarse-grained Jacobian \eqref{e:coarse-jac} to bound the above right hand side.
Taking the product over all $t_0$-packets, we thus obtain, for some $C>0$ independent of $N$:
\be\label{e:sup-bound}
|a^{Nt_0}_0(y)|\leq \prod_{m=0}^{N-1}(1+ C\,h)^{t_0}\,(1+C\,\vareps^\gamma)\,
\exp\big(\frac12\,S_{t_0}(W_{\alpha_m})\big)\,,\quad y\in \supp a^{Nt_0}_0\subset B(C\vareps)\,.
\ee
The proof of Proposition~\ref{p:inclination} (see \eqref{e:bound10}) also shows that 
the determinants $\det dg_k(y)$ satisfy
$$
\sup_{y\in {\rm Dom}(g_k)}|\det dg_k(y)| \leq \det(A_k)^{-1}+ C\,\vareps^\gamma\,,\qquad k=1,\ldots Nt_0\,.
$$
We will assume $\vareps$ small enough, such that the right hand side is bounded from above
by $\nu_3<1$. 
This implies that the Jacobians $J_k$ of \eqref{e:D_k} decay exponentially
when $k\to\infty$. Henceforth, the higher-order symbols $a^{Nt_0}_j$, 
bounded as in \eqref{e:bounds},
are smaller than the principal symbol, so that the upper bound \eqref{e:sup-bound}
also holds if we replace $a^{Nt_0}_0$ by the full symbol $a^{Nt_0}$.
This decay of the $J_k$ also shows that the remainder
$R_L^{Nt_0}$, estimated in \eqref{e:bounds-rem}, is uniformly bounded in $L^2$. 
As a result, the bound \eqref{e:sup-bound} implies the following bound:
\be
\norm{T_{\beta_{Nt_0},\beta_{0}}\,e_{\eta}
}
\leq  C\,\vareps\,(1+ C\,\vareps^\gamma)^{N}\,
\prod_{m=0}^{N-1}\,e^{S_{t_0}(W_{\alpha_m})/2}\,,\qquad \eta\in B(3\delta_1/4,2\vareps)\,.
\ee
To end the proof of Proposition~\ref{p:crucial}, there remains to apply 
the decomposition \eqref{e:decomp} to the $w=\Pi_{\alpha_1} u$, 
with $u\in L^2(X)$ of norm unity,
and the bound \eqref{e:bound11} that follows:
$$
\norm{U_{\alpha_N } \circ \cdots\circ U_{\alpha_1}\,u}
\leq  C\,\vareps\, h^{-n/2}\,(1+C\,\vareps^\gamma)^{N}\,
\prod_{m=0}^{N-1}\,e^{S_{t_0}(W_{\alpha_m})/2} + \cO ( h^\infty ) \,.
$$
Notice that 
the main term on the right hand side is 
larger than $h^{M_3}$ for some $M_3>0$. This bound thus proves 
Proposition~\ref{p:crucial} if, given $\eps_0>0$, we choose the diameter $\vareps$ of the
partition $\cV_0$ small enough.
\stopthm

\section{Microlocal properties of the resonant eigenstates}\label{st2}

In this section we will use the results of \S \ref{s:mso} and \S \ref{s:eo}
to prove Theorem 4. We will turn back to the notation of \S\ref{ass}, that is,
the operator to keep in mind is 
$$ 
P ( h ) = -h^2 \Delta +  V( x )\,, \quad\text{with symbol}\quad  p ( x, \xi) =\xi^2 + V ( x ) \,.
$$
We also recall that
\be\label{eq:VG0}  
P_{\theta, \epsilon} =  e^{-\epsilon G^w / h } P_\theta\,
e^{ \epsilon G^w / h } \,, \ \ \epsilon = M_2 \theta\,, \ \
\theta = M_1 h \log ( 1/ h ) \,, \ \
  G^w = G^w ( x , h D) \,, 
\ee
where $ G $ is given by Lemma \ref{l:gsa}, and $M_1>0$ can be arbitrary large.
In this section we will choose the set $ V $  in Lemma \ref{l:gsa} to be 
\be\label{eq:VG}  
V = T^*_{B( 0 , 3R_0/4 )}X,\quad  \text{and assume that} \quad
G(x,\xi)=0\ \text{ for }\ x\in B ( 0 ,R_0/2  ) \,. 
\ee
We now consider a resonant state $ u $ in the sense of \eqref{eq:gath}, in particular
$ u\rest_{ B( 0 , R_0 )} =  u_\theta \rest_{B(0,R_0)} $. 
If $ u $ satisfies \eqref{eq:gath} for some choice of $ R_0>0 $ (which implies
a choice of deformation $X_\theta$, see \S\ref{defcs}), then it has the
same property with any larger $ R_0 $ (and associated $X_\theta$). 
The state
$$  
u_{\theta, \epsilon} \defeq e^{-\epsilon G^w / h }\, u_\theta \quad\text{is in }L^2 ( X_\theta ),
\quad\text{and satisfies}\ \ ( P_{\theta, \epsilon} - z )u_{\theta, \epsilon} = 0 \,.
$$
Furthermore, the support properties of $G$ imply
\be\label{eq:ute}
\|u_{\theta, \epsilon} - u \|_{L^2( B(0,R_0/2))}= 
\Oo( h^\infty)\, \| u_{\theta, \epsilon} \|_{L^2( X_\theta)} \,.
\ee
This first lemma controls the behaviour of $u_{\theta,\eps}$ near infinity.
\begin{lem}\label{l:1r}
Let $ P_{\theta,\epsilon} $ be the operator given by \eqref{eq:VG0} 
for some choice of $R_0\gg 1$ and $M_1\gg 1$.
Suppose that
\be\label{eq:l.1} 
 ( P_{\theta,\epsilon} - z ) u_{\theta,\epsilon}  = 0 \,, \ \ \Im z \geq -C\,h \,, 
\ \   \Re z = E + o(1) \,, \ \
 \| u_{\theta,\epsilon} \|_{L^2 ( X_\theta ) } = 1 \,.
\ee
Then, there exist $ R_1 >  4 R_0 $ and $ C_0>1 $, independent of $ M_1 $, 
such that 
\be\label{eq:l11}
\| u_{\theta, \epsilon} \|_{ L^2 ( X_\theta \setminus B_{\CC^n} ( 0 , R_1 ) )} 
= \Oo ( h^{M_1/C_0 } ) \,, 
\quad 0 < h < h_0 ( M_1 ) \,. 
\ee
\end{lem}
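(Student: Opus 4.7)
The lemma asserts polynomial decay $h^{M_1/C_0}$ of the resonant state $u_{\theta,\eps}$ outside a fixed ball. The mechanism is the strong ellipticity of $P_\theta - z$ outside a compact set in $x$: by \eqref{eq:pth1}--\eqref{eq:pth3} and the fact that $|\Im z| = O(h) \ll \theta = M_1 h \log(1/h)$, one has $|p_\theta(x,\xi) - z| \geq C\theta\langle \xi\rangle^2$ uniformly on $\{|x| > 2R_0\}$. A semiclassical Agmon-type weighted estimate with an exponential weight of slope $\sim \theta$ will convert this elliptic parameter into exponential decay at rate $\theta/h = M_1 \log(1/h)$, which is the polynomial decay $h^{cM_1|x|}$ in $h$.

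Concretely, I would first choose $R_* > 3R_0$ larger than the $x$-projection of $\supp G$ (bounded since $G \in \CIc(T^*X)$). On $\{|x| > R_*\}$ the conjugation by $e^{\pm \eps G^w/h}$ is the identity microlocally, so $P_{\theta,\eps} = P_\theta$ there, and $(P_\theta - z) u_{\theta,\eps} = O_{L^2}(h^\infty) \|u_{\theta,\eps}\|$ microlocally. Next I introduce a semiclassical weight $\Phi(x) = \varphi(x)/h$ with $\varphi(x) = c\theta\, \psi(|x|)$, where $\psi \geq 0$ is smooth nondecreasing, vanishes on $[0,R_*]$, equals $r - R_* - 1$ on $[R_* + 2, R_2]$ for a large fixed $R_2$, and is constant for $r \geq R_2 + 1$. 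The constant $c > 0$ is fixed small and independent of $M_1$. A direct computation from \eqref{eq:pth1}--\eqref{eq:pth2} using $|\nabla \varphi| \leq c\theta$ shows that the conjugated operator $L_\Phi := e^\Phi P_\theta e^{-\Phi}$, with principal symbol $p_\theta(x, \xi + i\nabla\varphi(x))$, still satisfies $|p_\theta(x, \xi + i\nabla\varphi) - z| \geq C'\theta\langle \xi\rangle^2 / 2$ on $\{|x| > R_*\}$, provided $c$ is small enough relative to universal constants in the symbol of $p$.

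With $\chi \in \CI(X)$ a cutoff supported in $\{|x| > R_*\}$ and equal to $1$ for $|x| > R_* + 1$, set $v := e^\Phi \chi u_{\theta,\eps}$. The equation $(P_{\theta,\eps} - z) u_{\theta,\eps} = 0$ and the above support properties yield
\[
(L_\Phi - z) v = e^\Phi [P_\theta, \chi] u_{\theta,\eps} + O_{L^2}(h^\infty) \|u_{\theta,\eps}\|,
\]
with $[P_\theta, \chi]$ supported in $\supp d\chi \subset \{R_* < |x| < R_* + 1\}$ where $\varphi = 0$, so $\|(L_\Phi - z) v\|_{L^2} = O(h) \|u_{\theta,\eps}\|$. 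The semiclassical elliptic estimate for $L_\Phi - z$ with parameter $\theta$ then gives $\|v\|_{L^2} \leq C(h/\theta) \|u_{\theta,\eps}\| + O(h^\infty) = O(1)$. Since $\varphi(x) \geq c\theta (|x| - R_* - 2)_+$, for any $R_1 \in (R_* + 2, R_2)$ we have
\[
\|u_{\theta,\eps}\|_{L^2(\{|x| > R_1\})} \leq e^{-c\theta(R_1 - R_* - 2)/h}\, \|v\|_{L^2} = O(h^{c(R_1 - R_* - 2) M_1}),
\]
and the same bound holds on $\{|x| > R_2\}$ since $\varphi$ is constant there. Choosing $R_1 > 4R_0$ and $C_0 := 2/(c(R_1 - R_* - 2))$, both independent of $M_1$, yields the claimed bound.

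The main technical hurdle is verifying that the Agmon conjugation preserves the ellipticity coming from the complex scaling. This relies on the smallness of $c$ and on a careful computation showing that the shift $\xi \mapsto \xi + i\nabla\varphi$ modifies $\Im p_\theta$ by $-\nabla\varphi \cdot \partial_\xi \Re p_\theta = O(c\theta|\xi|)$, which is subdominant to the $-C\theta\langle \xi\rangle^2$ provided by \eqref{eq:pth2} as soon as $c$ is fixed small relative to universal constants.
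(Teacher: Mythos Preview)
Your proposal is correct but takes a genuinely different route from the paper's proof. The paper argues dynamically: it writes $\chi_1 u_{\theta,\eps}$ (a spatial cutoff near infinity) as $e^{it_1 z/h}\chi_1\psi_1^w U(t_1)u_{\theta,\eps}+\Oo(h^\infty)$ using the eigenstate property (Proposition~\ref{p:res}) and the propagation estimate (Proposition~\ref{p:new1}), and then bounds $U(t_1)$ on states microlocalized in $\{|x|>5R_0/2\}$ by the ``deep complex scaling'' damping Lemma~\ref{l:deep}, which gives the contraction factor $e^{-\theta/(hC_0)}=h^{M_1/C_0}$.

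Your approach is the static Agmon weighted-$L^2$ method, exploiting directly the uniform lower bound $|p_\theta-z|\gtrsim\theta\langle\xi\rangle^2$ on $\{|x|>R_*\}$. This is precisely the method alluded to in the Remark immediately following the lemma, which notes that exponential weights yield the stronger estimate $\|e^{\theta|x|/C_2}u_{\theta,\eps}\|_{L^2(X_\theta\setminus B(0,R_1))}=\Oo(1)$ and cites \cite{SS}. Your capped weight recovers the lemma as stated; an unbounded weight would give the sharper bound. The paper's proof has the advantage of reusing the propagator machinery already developed in \S\ref{qd}; your approach is more self-contained, works equally well for $\eps=0$ (the weight $G$ plays no role outside its support), and delivers the stronger exponential-in-$|x|$ decay. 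One small correction: in your last paragraph the perturbation to $\Im p_\theta$ under the shift $\xi\mapsto\xi+i\nabla\varphi$ is $+\nabla\varphi\cdot\partial_\xi\Re p_\theta$, not $-$; this is harmless since only the magnitude $\Oo(c\theta|\xi|)$ enters the ellipticity check, and for small $|\xi|$ you correctly fall back on the real part.
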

\begin{proof}
We will use the properties of the ``deep complex scaling'' region, explained in
Lemma~\ref{l:deep}. The first step is localization in energy. Take  $ \psi \in \CIc ( (-2,2) , [0,1] )$,
$\psi\rest_{[-1,1]}=1$, and define
\be\label{e:Energy0}
\psi_0(\rho)=\psi(4 (p(\rho)-E)/\delta_1),\qquad \psi_1(\rho)=\psi(8 (p(\rho)-E)/\delta_1)\,.
\ee
Fix some time $t_1>0$, and consider spatial cutoffs $\chi_0,\ \chi_1\in C^\infty(X,[0,1])$ localized
near infinity:
$$
\chi_j(x)=0\ \ \text{for } x\in B(0,\tR_j),\qquad \chi_j(x)=1\quad 
\text{for } x\in X\setminus B(0,\tR_j+1)\,,\quad
j=0,1\,,
$$
where the radii $\tR_1 > \tR_0+ 2 \gg 1$ are sufficiently large so that the following conditions
are satisfied:
\begin{align}
\forall t\in [0,t_1],\quad &\supp((\chi_0\,\psi_0)\circ \Phi^{-t})\subset 
\cE_E^{4\delta_1/5}\cap \{|x|>5R_0/2\},\label{e:cond-chi0}\\
&(\chi_0\,\psi_0)(\rho) = 1 \quad\text{near  }\supp((\chi_1\,\psi_1)\circ \Phi^{t_1})\,.
\label{e:cond-chi1}
\end{align}
We will now estimate the norm of the following state:
\be
v\defeq \chi_1\,\psi_1^w(x,hD)\,U(t_1)\,\chi_0\,\psi_0^w(x,hD)\,u_{\theta,\eps}
\ee
Using the condition \eqref{e:cond-chi1}, we apply Proposition~\ref{p:new1}, $ii)$
to the operator $\chi_1\,\psi_1^w\,U(t_1)\,\chi_0\,\psi_0^w$, and obtain
\begin{align*}
v &= \chi_1\,\psi_1^w(x,hD)\,U(t_1)\,u_{\theta,\eps} +\Oo_{L^2}(h^\infty)\\
&= e^{-it_1 z/h}\,\chi_1\,\psi_1^w(x,hD)\,u_{\theta,\eps} +\Oo_{L^2}(h^\infty)\\
&= e^{-it_1 z/h}\,\chi_1\,u_{\theta,\eps} +\Oo_{L^2}(h^\infty)\,.
\end{align*}
In the second equality we have applied Proposition~\ref{p:res}. For the third one we 
used the microlocalization of $u_{\theta,\eps}$ on $\cE_E$:
\be\label{e:microloc}
\psi_1^w(x,hD)\,u_{\theta,\eps} = u_{\theta,\eps} + \Oo_{H_h^k}(h^\infty),\quad\forall k\,.
\ee
On the other hand, the condition \eqref{e:cond-chi0} allows us to apply Lemma~\ref{l:deep}:
\begin{align*}
\| U(t_1)\,\chi_0\,\psi_0^w(x,hD)\,u_{\theta,\eps}\|&\leq 
e^{-\theta /hC_0}\,\| \chi_0\,\psi_0^w(x,hD)\,u_{\theta,\eps}\|+\cO(h^\infty)\\
&\leq h^{M_1/C_0}\,\| \chi_0\,u_{\theta,\eps}\|+\cO(h^\infty)\,.
\end{align*}
Here we have taken $\theta = M_1 h\log(1/h)$, and used again the microlocalization of $u_{\theta,\eps}$
near $\cE_E$. 

Using $\Im z\geq -Ch$ and combining the above estimates, we find
$$
\| \chi_1\,u_{\theta,\eps}\| \leq e^{Ct_1}\,h^{M_1 /C_0}\,+\cO(h^\infty)\,.
$$
This proves the Proposition once we take $R_1\geq \max(\tR_1+1,4 R_0)$.
\end{proof}

\noindent
{\bf Remark.} The statement of the lemma can be refined using 
exponential weights to give a
stronger statement about $ u_{\theta,\epsilon}  $ (including the
case of $ \epsilon = 0 $):
$$  
\| e^{ \theta |x|/ C_2 } u_{\theta , \epsilon} \|
_{ L^2 ( X_\theta \setminus B_{\CC^n} ( 0 , R_1 ) )}  = {\mathcal O}( 1) \,,
$$
see \cite{SS} for a similar argument.

\begin{lem}
\label{l:11}
Let $ K=K_E $ be the trapped set \eqref{eq:trapp} for $ p ( x, \xi ) $ 
at energy $ E $.
Suppose that $ u_{\theta, \epsilon } $ is as in Lemma \ref{l:1r}
and $ G $ and $ \epsilon $ have the properties in \eqref{eq:VG0}
and \eqref{eq:VG}.
Then for any $ \delta > 0 $ there exists $ C(\delta)>0 $ such that
\be\label{eq:normte}
\| u \|_{\theta, \epsilon} \leq C(\delta)\, \| u \|_{ L^2 ( \pi ( K ) + 
B_X ( 0 , \delta ) )} \,, \quad  0 < h \leq h_0 ( \delta ) \,.
\ee
As a consequence for any resonant state $ u = u ( h)  $
with $ \Re z-E = o(1) $, $ \Im z \geq -Ch $, we have 
\be\label{eq:normR}
 \forall \, R > 0 \,, \ \exists \, C ( \delta , R ), \, 
h_0 (\delta, R ) \,, \quad 
 \| u \|_{L^2 ( B( 0 , R ) ) } \leq C ( \delta )\, 
\| u \|_{L^2 ( \pi ( K ) +  B_X ( 0 , \delta ) ) } \,, \quad h \leq h_0 ( \delta, R )\,.
\ee
\end{lem}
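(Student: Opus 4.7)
The plan is to prove the main estimate \eqref{eq:normte} by contradiction, invoking the structural description of semiclassical defect measures of resonant states from Theorem \ref{t:2}; the refinement \eqref{eq:normR} will then follow by elementary local arguments.

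Suppose \eqref{eq:normte} fails. Then there are $h_k \to 0$, resonances $z_k$ with $\Im z_k \geq -C h_k$ and $\Re z_k \to E$, and resonant states $u_k = u_{\theta,\epsilon}(h_k)$ with $\|u_k\|_{L^2(X_\theta)} = 1$ but $\|u_k\|_{L^2(\pi(K) + B(0,\delta))} \to 0$. By Lemma \ref{l:1r}, the $L^2$ mass of $u_k$ is essentially concentrated in $B(0, R_1)$, so after extracting a subsequence one obtains a positive Radon measure $\mu$ on $T^*X$ as semiclassical defect measure of $(u_k)$; testing against position-only cutoffs, $\pi_* \mu$ has total mass at least one on $\overline{B(0,R_1)}$ while $\pi_* \mu\bigl(\pi(K) + B(0, \delta)\bigr) = 0$. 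The open neighbourhood
\[
U \defeq T^*_{\pi(K) + B(0, \delta)} X \cap \cE_E^{\delta_1}, \qquad 0 < \delta_1 \ll 1,
\]
of $K_E$ therefore satisfies $\mu(U) = 0$.

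The conclusions $\supp \mu \subset \Gamma^+_E$ and $\mathcal{L}_{H_p} \mu = \lambda \mu$ (for some $\lambda \geq 0$) from Theorem \ref{t:2} will apply in this setting, since their derivation uses only the eigenfunction equation $(P_{\theta,\epsilon} - z_k) u_k = 0$, the forward propagation machinery of \S\ref{qd}, and the non-escape given by Lemma \ref{l:1r}---none of which rely on the local normalization near $\pi(K)$ imposed in \eqref{eq:th2}. The flow transformation law then yields $\mu(\Phi^t(U)) = 0$ for every $t \in \RR$. Since each $\rho \in \Gamma^+_E$ has backward trajectory converging to $K_E \subset U$, one has $\rho \in \Phi^{t(\rho)}(U)$ for some $t(\rho) > 0$, and compactness of $\Gamma^+_E \cap T^*_{B(0,R_1)} X$ produces a finite subcover $\{\Phi^{t_i}(U)\}_{i=1}^N$. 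Hence $\mu(\Gamma^+_E \cap T^*_{B(0,R_1)} X) = 0$, contradicting $\supp \mu \subset \Gamma^+_E$ together with $\mu(T^*_{B(0,R_1)} X) \geq 1$; this proves \eqref{eq:normte}.

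The main delicate point is verifying that Theorem \ref{t:2} does transfer to our globally normalized sequence: one must check that the two key structural facts---the inclusion $\supp \mu \subset \Gamma^+_E$ and the Lie-derivative relation---are consequences of the eigenfunction equation, the exterior absorption of Lemma \ref{l:deep} and the forward propagation of Proposition \ref{p:new1} alone, so that they survive when the normalization in \eqref{eq:th2} is replaced by $\|u_k\|_{L^2(X_\theta)} = 1$. The consequence \eqref{eq:normR} is then routine: on $B(0, R_0/2)$, \eqref{eq:ute} gives $u = u_{\theta,\epsilon} + \cO_{L^2}(h^\infty)$ and \eqref{eq:normte} applies directly; for $R \leq R_0$, $u = u_\theta$ on $B(0,R)$ and standard interior regularity for $(P - z) u = 0$ propagates the $L^2$ control outward; for $R > R_0$, Cauchy estimates applied to the holomorphic extension of $u$ through the family $X_{\theta'}$, $\theta' \in (0, \theta)$, bound $\|u\|_{L^2(B(0,R))}$ by $\|u_\theta\|_{L^2(X_\theta)}$, at the cost of a constant $C(\delta, R)$.
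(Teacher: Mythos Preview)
Your contradiction argument through defect measures is a natural idea, but the appeal to Theorem~\ref{t:2} has a genuine gap, and once you repair it the argument essentially becomes the paper's direct proof.

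You rightly flag the circularity: in the paper both Proposition~\ref{p:3} and the Lie-derivative identity are proved \emph{using} Lemma~\ref{l:11} (to pass between the local and global normalizations). With your global normalization $\|u_{\theta,\epsilon}\|=1$ the support inclusion $\supp\mu\subset\Gamma_E^+$ does go through independently. The Lie-derivative step does not. Theorem~\ref{t:2}'s relation $\mathcal L_{H_p}\mu=\lambda\mu$ is derived for the defect measure of the \emph{unscaled} resonant state $u$, using that $P$ is self-adjoint; your $\mu$ is the defect measure of $u_{\theta,\epsilon}$, an eigenfunction of the non-selfadjoint $P_{\theta,\epsilon}$. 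Running the same commutator computation produces an extra contribution $h^{-1}\langle(\Im p_\theta-\epsilon H_pG)\,a^w\,u_{\theta,\epsilon},u_{\theta,\epsilon}\rangle$ of order $\log(1/h)$, so the identity survives only where $\Im p_\theta=H_pG=0$, i.e.\ inside $T^*_{B(0,R_0/2)}X$. Since $R_1>4R_0$, your covering of $\Gamma_E^+\cap T^*_{B(0,R_1)}X$ by the sets $\Phi^{t_i}(U)$ necessarily leaves that region, and the ``flow transformation law'' is unavailable there. The forward-invariance you actually need \emph{can} be obtained directly: Proposition~\ref{p:res} together with Proposition~\ref{p:new1}(ii) give $\|\psi_1^w u_{\theta,\epsilon}\|\le Ce^{CT}\|\psi_0^w u_{\theta,\epsilon}\|+\cO(h^\infty)$ whenever $\psi_0=1$ near $\Phi^{-T}(\supp\psi_1)$. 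But that is exactly the paper's argument, carried out without defect measures or contradiction: cover the compact slab $S=\overline{\cE_E^{\delta_0}\cap T^*_{B(0,R_1)}X}$ by finitely many small sets, propagate each one backward either into $T^*(\pi(K)+B(0,\delta))$ (if its center lies on $\Gamma_E^{+\delta_0}$) or out past $|x|=2R_1$ (if not, in which case Lemma~\ref{l:1r} makes the piece $\cO(h^{M_1/C_0})$), and sum.

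For \eqref{eq:normR} the paper's device is much simpler than your Cauchy-estimate/interior-regularity route: since $R_0$ is at our disposal in the construction of $P_{\theta,\epsilon}$, one just takes $R_0>2R$, so that $u=u_{\theta,\epsilon}+\cO(h^\infty)$ already on all of $B(0,R)$, and \eqref{eq:normte} applies directly.
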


This means that a normalization in any small neighbourhood of $ \pi ( K ) $
leads to an $h$-independent 
normalization in any compact set. That property allows us to define a
global measure $ \mu $ in Theorem \ref{t:2}.
\begin{proof}
Lemma \ref{l:1r} shows that, to establish 
\eqref{eq:normte}, it is enough to prove 
\be\label{eq:enough}  
\| u_{\theta, \epsilon} \|_{B_{X_\theta} (0, R_1 ) } 
 \leq C\, \| u \|_{ L^2 ( \pi ( K ) +  
B_X ( 0 , \delta ) )} \,, \quad  0 < h \leq h_0 ( \delta ) \,. 
\end{equation}
For $\delta_0>0$ small and $R_1$ as in Lemma~\ref{l:1r},
we consider the compact set
\[ 
S \defeq \overline{\cE_E^{\delta_0}\cap T^*_{B ( 0 , R_1)}X } 
\,.
\]
If $ \rho \in S \cap \Gamma_0^{+\delta_0} $, there exists
$ T_\rho \geq 0 $ and a neighbourhood of $ \rho $, $ U_\rho \subset \cE_E^{2 \delta_0}$, 
such that 
\be\label{eq:Tr}
\Phi^{-T_{\rho} } \left( U_\rho \right) \subset T^* ( \pi ( K ) + B( 0 , \delta)) \,, 
\ee
provided $ \delta_0 $ is small enough depending on $ \delta $
(so that $ K^{2\delta_0}_E \Subset T^* ( \pi ( K_E ) + B( 0 , \delta))$). 

On the other hand, if $ \rho \notin S \cap \Gamma_0^{+\delta_0} $,
there exists $ T_\rho > 0 $ and a neighbourhood of $ \rho$, 
$U_\rho \subset \cE_E^{2 \delta_0} $, such that
\be\label{eq:taur}   
\Phi^{-T_{\rho} } \left( U_\rho \right) 
\subset T^* ( X \setminus B ( 0 , 2 R_1 ) ) \,. 
\ee
Since the set $ S $ is compact, we can cover it with the union of two families of sets
$\{ U_{ \rho_j},\ j\in J_1\} $ and $\{ U_{\rho_j},\  j\in J_2\}$ of the preceding two
types, where $J_1,\ J_2$ are (disjoint) finite index sets.
We can also choose open sets $ U'_{\rho_j}\Subset U_{\rho_j} $ such that 
$\cup_{j\in J_1\cup J_2} U'_{\rho_j}$
still covers $ S $. 
We note
that these covers have different properties than the cover $(W_a)_{a\in A}$
constructed in \S\ref{s:selection}.

We now construct a ``quantum cover'' adapted to the above classical cover: 
$$  
A_j \in \Psi ( X_\theta ) ,\quad \WF_h(A_j) \Subset U_{\rho_j} \,, \qquad 
A_j=I\ \text{microlocally near }  U'_{\rho_j}\,,\quad  j\in J_1\cup J_2 \,.
$$
In view of the localization 
of $ u_{\theta, \epsilon } $ to the energy shell (see \eqref{e:microloc}) 
we have
\[  
\| u_{\theta, \epsilon }\|_{L^2(B(0,R_1))} \leq C\, 
\sum_{j\in J_1\cup J_2} \| A_j\, u_{\theta, \epsilon } \| \,. 
\]
Hence \eqref{eq:enough} will follow from the bounds
\begin{align}
\label{eq:enough1} 
\| A_j\, u_{\theta,\eps} \|  &\leq C \, 
\| u_{\theta,\eps}  \|_{ L^2 ( \pi ( K ) + B( 0 , \delta ) )} + \Oo ( h^\infty) \,,\quad j\in J_1\\
\| A_j\, u_{\theta,\eps} \| &\leq C\, h^{ M_1/C_0} \,\quad j\in J_2 \,. \label{eq:enough2} 
\end{align}
With $ U ( t ) $ defined by \eqref{eq:Uthep}, 
Proposition \ref{p:res} and the condition $|\Im z|\leq C\,h$ imply that 
for any bounded $t>0$:
\[ 
\| A_j\, u_{\theta, \epsilon} \| \leq e^{C t} \| A_j\, U ( t ) u_{\theta, \epsilon} \| 
+ \Oo ( h^\infty ) \,,\quad j\in J_1\cup J_2\,.
\]
Considering operators $tA_j\in \Psi_h ( X_\theta)$ with the properties
$$
\WF_h ( \tA_j ) \subset \Phi^{-T_{\rho_j} } ( U_{\rho_j})\,,\qquad
\tA_j=I\ \ \text{microlocally near }\Phi^{-T_{\rho_j}}(\WFh(A_j))\,,
$$
we may apply Proposition \ref{p:new1}, $ii)$:
\begin{align*}
\| A_j\,u_{\theta,\eps}\| &\leq e^{C\,T_{\rho_j}} \| A_j\,U(T_{\rho_j})\,u_{\theta,\eps}\|+\cO(h^\infty)\\
&\leq \| A_j\,U(T_{\rho_j})\,\tA_j\,u_{\theta,\eps}\|+\cO(h^\infty)\\
&\leq C'\,e^{C'\, T_{\rho_j}}\,\|\tA_j\,u_{\theta,\eps}\| +\cO(h^\infty)\,.
\end{align*}
In the last line we used the bound \eqref{eq:gaU} and $\|A_j\|\leq C'$.
Notice that the times $T_{\rho_j}$ are uniformly bounded, depending on $\delta,\ R_1$.
From \eqref{eq:Tr} we obtain the first estimate in \eqref{eq:enough1}.
Lemma \ref{l:1r} and \eqref{eq:taur} provide the second estimate.
This completes the proof of \eqref{eq:normte}.

\medskip

To see how \eqref{eq:normR} follows from \eqref{eq:normte} we choose
$ R_0 > 2 R $ in the construction of $ P_{\theta, \epsilon} $ (see 
\eqref{eq:VG0} and \eqref{eq:VG} above). From the support properties of the weight $G$,
we have the following relationship between
$ u_{\theta , \epsilon} $ and the corresponding resonant state $u$:
$$ 
\| u - u_{\theta, \epsilon} \|_{H^k_h(B ( 0 , R_0/2 )) } = \Oo (  h^\infty )  
\,
\| u_{\theta, \epsilon} \|_{ L^2 ( X_\theta ) } \,,\qquad\forall \, k. 
$$
Then 
\[ 
\| u \|_{ L^2 (B( 0 , R )) } \leq  \| u_{\theta,\eps}\|_{ L^2 (X_\theta) } ( 1 
+ {\mathcal O} ( h^\infty )  ) \leq 
C(\delta, R )\, \| u \|_{ L^2 ( \pi ( K ) + B ( 0, \delta)  ) } \,. \]
\end{proof}

The next proposition  is a refined version of \eqref{eq:t21}
appearing in Theorem \ref{t:2}:
\begin{prop}
\label{p:3}
Suppose $ u $ satisfies the assumptions of Theorem \ref{t:2}, 
and that  $ a \in 
\CIc  $ is supported in $ T^* X \setminus \Gamma_E^+ $.
Then for any $ \chi \in \CIc ( X) $ we have
\be\label{eq:l2} 
\| a^w ( x , h D )\, \chi\, u \| \leq C_M\, h^{ M} \quad\text{for any $ M>0 $,}
\ee
that is $ u \equiv 0 $ microlocally in $ T^* X \setminus \Gamma_E^+ $.
The constant $ C_M $ in \eqref{eq:l2} depends on  $ E $, $a$ and $\chi$. 
\end{prop}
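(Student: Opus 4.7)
The plan is to show that $u$ is microlocally negligible on $T^*X \setminus \Gamma^+_E$ by combining the backward escape of trajectories in that region, the approximate invariance of $u_{\theta,\epsilon}$ under $U(t)$ from Proposition~\ref{p:res}, and the decay of $u_{\theta,\epsilon}$ at infinity from Lemma~\ref{l:1r}.

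First we reduce to $u_{\theta,\epsilon}$. Choose $R_0$ in the definition of $P_{\theta,\epsilon}$ large enough that $\supp\chi \cup \pi(\supp a) \subset B(0, R_0/2)$, the region where $G \equiv 0$ by \eqref{eq:VG}. Then \eqref{eq:ute} yields $\chi u = \chi u_{\theta,\epsilon} + \Oo_{L^2}(h^\infty)$, and consequently $a^w \chi u = a^w \chi u_{\theta,\epsilon} + \Oo(h^\infty)$. Since $(P_{\theta,\epsilon}-z)u_{\theta,\epsilon} = 0$ and $P_{\theta,\epsilon}-z$ is microlocally elliptic off the characteristic variety, $\WFh(u_{\theta,\epsilon}) \subset \cE_E$, so we may replace $a$ by $a\psi$ with $\psi \in S(1)$ supported in $\cE^{\delta_0}_E$ and equal to $1$ near $\cE_E$, for $\delta_0$ small. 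Structural stability \eqref{e:struct-stab} guarantees that, for $\delta_0$ small enough, every $\rho \in \supp(a\psi)\cap \cE^{\delta_0}_E$ lies on some shell $\cE_{E'}$ with $|E'-E|\leq\delta_0$ and satisfies $\rho \notin \Gamma^+_{E'}$, so its backward trajectory escapes to infinity uniformly.

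For each such $\rho$ pick a time $T_\rho > 0$ and a neighborhood $U_\rho$ with $\Phi^{-T_\rho}(U_\rho) \subset \{|x| > R_1+1\}$, where $R_1$ is provided by Lemma~\ref{l:1r}. Cover $\supp(a\psi)$ by finitely many $U_{\rho_j}$ and take a corresponding partition of unity $\{b_j^w\} \subset \Psi_h(1)$ with $\WFh(b_j) \Subset U_{\rho_j}$ and $\sum_j b_j \equiv 1$ near $\supp(a\psi)$. For each $j$, Proposition~\ref{p:res} at time $T_{\rho_j}$ gives $u_{\theta,\epsilon} = e^{iT_{\rho_j}z/h}\,U(T_{\rho_j})\,u_{\theta,\epsilon} + \Oo_{L^2}(h^M)$, with $|e^{iT_{\rho_j}z/h}| \leq C$ since $\Im z \geq -Ch$ and $T_{\rho_j}$ is bounded. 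Choosing $\tilde b_j \in \CIc(T^*X)$ with $\WFh(\tilde b_j) \subset \Phi^{-T_{\rho_j}}(U_{\rho_j}) \subset \{|x| > R_1+1\}$ and $\tilde b_j \equiv 1$ near $\Phi^{-T_{\rho_j}}(\supp b_j)$, Proposition~\ref{p:new1}, $ii)$ yields $b_j^w U(T_{\rho_j}) = b_j^w U(T_{\rho_j})\,\tilde b_j^w + \Oo(h^\infty)$, so using \eqref{eq:gaU}
\[ \|a^w b_j^w \chi u_{\theta,\epsilon}\| \leq C\,e^{CT_{\rho_j}}\,\|\tilde b_j^w u_{\theta,\epsilon}\| + \Oo(h^M). \]

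Finally, since the symbol of $\tilde b_j$ is supported in $\{|x| > R_1+1\}$, inserting a smooth spatial cutoff to $\{|x| > R_1\}$ (up to $\Oo(h^\infty)$ pseudodifferential errors coming from disjointness of wavefront sets) and applying Lemma~\ref{l:1r} give $\|\tilde b_j^w u_{\theta,\epsilon}\| \leq C\,h^{M_1/C_0}$. Summing over the finite cover yields $\|a^w \chi u\| \leq C\,h^{M_1/C_0} + \Oo(h^\infty)$; since $M_1$ in $\theta = M_1 h\log(1/h)$ may be chosen arbitrarily large once $R_0$ (and hence $R_1$) is fixed, this implies the claimed bound $\|a^w\chi u\| \leq C_M h^M$ for any $M>0$. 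The main obstacle is the uniformity step: one must verify that a single $\delta_0>0$ can be chosen so that every $\rho \in \supp(a\psi)\cap\cE^{\delta_0}_E$ has its backward trajectory escape to $\{|x|>R_1+1\}$ in uniformly bounded time $T_\rho$; this relies on continuity of $\Gamma^+_{E'}$ in $E'$ provided by structural stability and on the compactness of $\supp(a\psi)\cap\cE^{\delta_0}_E$.
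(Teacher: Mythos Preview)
Your proof is correct and follows the same route as the paper's: pass to $u_{\theta,\epsilon}$, localize in energy, backward-propagate via Proposition~\ref{p:res} and Proposition~\ref{p:new1}\,$ii)$, and conclude through Lemma~\ref{l:1r}; the paper merely streamlines one step by using a single backward time $T$ for all of $\supp(a\psi_0)$ (which exists by compactness) instead of your finite cover with times $T_{\rho_j}$. One point you leave implicit: the $\Oo(h^\infty)$ and $\Oo(h^{M_1/C_0})$ errors you quote from \eqref{eq:ute}, Proposition~\ref{p:res} and Lemma~\ref{l:1r} are all relative to $\|u_{\theta,\epsilon}\|_{L^2(X_\theta)}$, so you need this quantity bounded under the normalization $\|u\|_{L^2(\pi(K_E)+B(0,\delta))}=1$ of Theorem~\ref{t:2}---the paper secures this via Lemma~\ref{l:11}.
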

\begin{proof}
We choose $ R_0$ such that $ \supp \chi\subset B(0,R_0/2) $ in the construction of 
$ P_{\theta, \epsilon } $ described in the beginning of this 
section. Then, by Lemma \ref{l:11}, 
the normalization in Theorem \ref{t:2} is, up to uniform constants,
equivalent to the normalization  $ \| u_{\theta, \epsilon } \|_{L^2 ( X_\theta)} = 1 $. 
From \eqref{eq:ute} we see that
$$ 
\| a^w ( x, hD )\, \chi\, u \| = \| a^w ( x, h D)\, \chi\, u_{\theta,\eps} \| + \Oo ( h^\infty) \,. 
$$
The condition on the support of $a$ shows that for $\delta_0>0$ small enough,
$\supp a \cap \Gamma_E^{+\delta_0}=\emptyset $. Using an energy cutoff $\psi_0$ of the form
\eqref{e:Energy0} supported
inside $\cE_E^{\delta_0}$, there exists a time $ T>0 $ such that
\[ 
\Phi^{-T} \supp (a\,\psi_0) \Subset  T^*(X \setminus B(0,2R_1)) \cap \cE_E^{\delta_0} \,, 
\]
where $ R_1 $ is given Lemma \ref{l:1r}. Taking into account the microlocalization of type
\eqref{e:microloc}, we get 
$$
\| a^w ( x, hD )\, \chi\, u \| =
\| a^w ( x, h D)\,\psi_0^w(x,h_D)\, \chi\, u_{\theta,\eps }\| + \Oo ( h^\infty) \,. 
$$
We can not proceed as in the previous lemma:
$$
\| a^w\,\psi_0^w\, \chi\, u_{\theta,\eps }\|\leq C\,\| a^w\,\psi_0^w\,\chi \,U(T)\,u_{\theta,\eps }\|
\leq C\,\| a^w\,\psi_0^w\,\chi\,U(T)\,\psi_1^w\chi_1 \chi u_{\theta,\eps }\|\,,
$$
where $\psi_1\in \CI(T^*X,[0,1])$ satisfies $\psi_1\rest_{\cE_E^{2\delta_0}}=1$, while $\chi_1\in \CI(X)$
vanishes on $B(0,R_1)$ and takes the value $1$ for $|x|\geq 2R_1$. The second line above is
then due to Proposition~\ref{p:new1}, $ii)$.
Lemma~\ref{l:1r} shows that 
$\|\chi_1\,u_{\theta,\eps }\|= \Oo ( h^{M_1/C_0 } )$, so we finally get
\[ 
\| a^w ( x, h D )\, \chi\, u \| = \Oo ( h^{M_1/C_0}  ) \,,
\]
where $ M_1 $ can be taken arbitrary large.
\end{proof}

\medskip

\noindent
{\em Proof of Theorem \ref{t:2}:} The inclusion
\eqref{eq:t21} follows directly from Proposition \ref{p:3},
which shows that only points in $ \Gamma_E^+ $ can be in the support
of the limit measure.

The proof of \eqref{eq:t22} follows the standard approach 
(see \cite{PG} and for a textbook presentation \cite[Chapter 5]{EZ}). 
Suppose that  $ \chi $ and $ u $ are as in \eqref{eq:t20}. 
From Lemma \ref{l:11} we know that $ \| \chi u ( h ) \| \leq C_\chi $
with the constant $ C_\chi $ independent of $ h$. Hence, there exists
a sequence $( h_k\searrow 0)_{k\in\NN} $ for which \eqref{eq:t20} holds for any 
$ A = a^w ( x , h D ) $, $ a \in \CIc ( T^* X ) $, 
$ \supp a  \Subset  ( \pi^*\chi) ^{-1} ( 1) $.
From this support property we get $A[P,\chi]=\Oo_{L^2\to L^2}(h^\infty)$, so that
\[ \begin{split}
\Oo ( h^\infty )  & = \Im \la ( P - z ) \chi u , A \chi u \ra 
= \Im \la P \chi u , A \chi u  \ra - \Im z \la A \chi u , \chi u \ra \\
& = \frac{h}2 \la (H_p a )^w ( x, h D )  \chi u ,  \chi u \ra
- \Im z \la A \chi u  , \chi u \ra + \Oo ( h^2 ) \| \chi u  \|^2  \,.
\end{split}
\] 
For the sequence $ (h_k) $ appearing in \eqref{eq:t20} we obtain
\[ 
\frac 12 \int H_p a \, d \mu - \frac{ \Im z ( h_k) }{ h_k } \int a \, d \mu 
 = o ( 1 ) \,, \ \ k \rightarrow \infty \,. 
\]
Hence there exists $ \lambda \geq 0 $ such that 
$ \Im z ( h_k ) / h_k  \to - \lambda/2 $, and 
$$
\int H_p a \, d \mu + \lambda \int a \, d \mu = 0  \,.
$$
which is the same as \eqref{eq:t22}.
\stopthm

\section{Resolvent estimates}\label{re}

In this section we will prove Theorem \ref{th:4} and consequently 
we assume that the hypothesis of that theorem hold throughout this
section. In particular $ E > 0 $ is an energy level at which the
pressure, $ P_E ( 1/2 ) $, is negative.
We first need a result which is a simpler version of the estimates on the propagator $ U ( t ) $
described in \S\ref{s:pressure}. 
\begin{prop}\label{p:re1}
Suppose $ W \in \CI ( X ; [ 0 , 1] ) $, $ W \geq 0 $ satisfies the following 
conditions
$$
\supp W \subset X \setminus B ( 0 , R_1 ) \,, \qquad 
W\rest_{ X \setminus B ( 0 , R_1 + r_1 )  } = 1 \,,
$$
for $ R_1 , r_1$ sufficiently large. Assume $P_E ( 1/2 )<0$ and choose 
$ \lambda \in (0 , | P_E ( 1/2 )|) $. 
Then there exists $\delta_0>0$, such that,
for any $\psi\in S(1)$ supported inside $\cE_E^{\delta_0}$, and any $ M > 0 $, 
\be\label{eq:re1}
\| e^{ - i t ( P ( h ) - i  W) / h } \psi^w ( x , h D ) \|_{ 
L^2 \to L^2 } \leq C\, h^{-n/2}\, e^{ - \lambda t }+ \Oo_M ( h^\infty ) \,,
\quad 0 \leq t \leq M \log ( 1/h ) \,.
\ee
\end{prop}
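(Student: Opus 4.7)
The plan is to transcribe the microlocal partition and iteration argument of \S\ref{qd}--\S\ref{s:pressure} to the contractive semigroup $e^{-it(P-iW)/h}$ in place of the weighted propagator $U(t)$ of \eqref{eq:Uthep}. Two simplifications arise from using a real absorbing potential rather than complex scaling. First, $W\geq 0$ gives $\|e^{-it(P-iW)/h}\|_{L^2\to L^2}\leq 1$ for $t\geq 0$, which replaces the Gronwall bound \eqref{eq:gaU}. Second, a wavepacket microlocalized in $\{W=1\}$ is damped by $e^{-t/h}$, which is $\cO(h^\infty)$ for any fixed $t>0$; this is much stronger than the $\theta/h$-damping of Lemma~\ref{l:deep}.

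First I would fix $\delta$, an open cover $\cV_0$ of $K_E^\delta$ inside $\cE_E^\delta$, an integer $t_0$, and the refined family $(W_a)_{a\in A_1}$ satisfying \eqref{eq:press2} with $\eps_0>0$ small enough that $\cP_E^\delta(1/2)+3\eps_0<-\lambda$. Enlarging $R_1$ if necessary to have $\supp W\subset\{|x|>3R_0\}$, I complete the cover and build the microlocal partition $(\Pi_a)_{a\in A}$ as in \S\ref{qd}, choosing $\delta_0$ so that $\sum_a \Pi_a = I$ microlocally on a neighbourhood of $\supp\psi\subset \cE_E^{\delta_0}$. Setting $U^W_a\defeq e^{-it_0(P-iW)/h}\Pi_a$, I iterate $N=\lfloor t/t_0\rfloor$ times to expand $e^{-iNt_0(P-iW)/h}\psi^w$ as a sum over sequences $\alpha\in (A\cup\{\infty\})^N$, with the residual time $0\leq r<t_0$ absorbed by contractivity.

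For $N\leq M\log(1/h)/t_0$, sequences containing an index $\infty$ are $\cO(h^\infty)$ by the energy-nesting argument of Lemma~\ref{l:3}; classically forbidden sequences are killed by wavefront propagation (the analogue of \eqref{e:kill}); and sequences with some $\alpha_j\in A_2\cup\{0\}$ in the bulk $N_0<j<N-N_0$ are absorbed into $\{W\equiv 1\}$ during a time $\gtrsim t_0$, hence $\cO(h^\infty)$ by the CAP analogue of Lemma~\ref{l:deep}. For each surviving $\alpha\in\cA_N\cap A_1^N$, the hyperbolic dispersion estimate of Proposition~\ref{p:crucial}, applied with $e^{-it_0(P-iW)/h}$ replacing $U(t_0)$, yields
\[
\|U^W_{\alpha_N}\cdots U^W_{\alpha_1}\psi^w\|\leq h^{-n/2}(1+\eps_0)^N\prod_{j=1}^N \exp\big(\tfrac12 S_{t_0}(W_{\alpha_j})\big).
\]
Summing over $\cA_N\cap A_1^N$ and using \eqref{eq:press2} gives
\[
\|e^{-iNt_0(P-iW)/h}\psi^w\|\leq C\,h^{-n/2}\,e^{Nt_0(\cP_E^\delta(1/2)+3\eps_0)}+\cO(h^\infty)\leq C\,h^{-n/2}\,e^{-\lambda Nt_0}+\cO(h^\infty);
\]
contractivity on $[Nt_0,t]$ extends the bound to general $t\leq M\log(1/h)$.

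The main obstacle is verifying that Proposition~\ref{p:crucial}, together with the auxiliary propagation statements of \S\ref{qd} (the analogues of Lemma~\ref{l:new1}, Proposition~\ref{p:new1} and Lemma~\ref{l:deep}), carry over with $\tP$ replaced by $P-iW$. This is essentially routine: the Lagrangian-wavepacket analysis of \S\ref{s:estimate} only uses that the propagator agrees microlocally with $e^{-it_0 P/h}$ on a neighbourhood of $K_E$, which holds since $W$ vanishes there; the FIO calculus of Lemmas~\ref{l:lprop}--\ref{l:Gam01} is insensitive to adding the bounded term $-iW$ to $P$; and the damping lemma follows from the identity $\partial_t\|e^{-it(P-iW)/h}u\|^2 = -\tfrac{2}{h}\langle W\,e^{-it(P-iW)/h}u,\,e^{-it(P-iW)/h}u\rangle$ combined with the propagation of the initial microsupport into $\{W\equiv 1\}$ as in \eqref{eq:condps}.
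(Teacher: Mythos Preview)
Your outline is correct and matches the paper's own approach: the paper states that the argument of \S\ref{qd}--\S\ref{s:pressure} carries over to $P-iW$ with the simplifications you name (contractivity replacing \eqref{eq:gaU}, and $e^{-t/h}$ absorption in $\{W=1\}$ giving $\cO(h^\infty)$ errors in place of $\cO(h^{M_1/C_0})$). The one point you undersell as ``essentially routine'' is the analogue of Lemma~\ref{l:new1}: since $W$ is $\cO(1)$ rather than $\cO(h\log(1/h))$, the Beals-criterion argument used there does not apply directly, and the paper's appendix gives a separate proof exploiting $W\geq 0$ (via $|\partial A_0|\leq C|A_0|^{1/2}$) to show $V(t)=e^{itP/h}e^{-it(P-iW)/h}\in\Psi_{h,1/2}$; this is the only genuinely new ingredient needed.
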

The proof of this proposition is very similar to the proof 
in the case of the complex-scaled
operator $P_{\theta,\eps}$ treated in \S\ref{qd}. In fact 
the case of the absorbing potential 
is easier to deal with than complex scaling, and in particular
we do not need the weights $ G $. The modifications needed to apply 
\S\ref{qd} directly
are given in the appendix.

Before proving \eqref{eq:t4} we will establish a resolvent estimate
for the operator with the absorbing potential.
\begin{prop}\label{c:resolvent}
Let $P=P(h)$, the energy $E>0$, and the absorbing potential $W$ be as in Proposition~\ref{p:re1}.
Then for any $ \epsilon > 0 $, 
\begin{equation}
\label{eq:phw}
\| ( P ( h ) - i W - E)^{-1} \|_{L^2 ( X ) \to L^2 ( X ) }
 \leq  \frac{ n( 1  +  \epsilon ) } { 2 | P_E ( 1/2 ) |} \,
\frac{ \log ( 1/h ) } {h} \,, \quad 0 < h < h_0 ( \epsilon ) \,.
\end{equation}
\end{prop}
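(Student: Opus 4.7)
The approach is to represent the resolvent via the truncated propagator identity
\[
(P-iW-E)^{-1} = \frac{i}{h}\int_0^T e^{-it(P-iW-E)/h}\,dt\; + \;(P-iW-E)^{-1}\, e^{-iT(P-iW-E)/h},
\]
valid for any $T>0$, and to estimate each term using Proposition~\ref{p:re1} together with the global contractivity $\|e^{-it(P-iW)/h}\|_{L^2\to L^2}\leq 1$. The latter is immediate from $W \geq 0$: if $v(t) = e^{-it(P-iW)/h}u$, then $\partial_t \|v(t)\|^2 = -(2/h)\langle Wv,v\rangle \leq 0$. Writing $A = P - iW - E$ for brevity, the unimodular phase $e^{itE/h}$ gives $\|e^{-itA/h}\|\leq 1$ as well.

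Since the hyperbolic dispersion estimate \eqref{eq:re1} only applies to states microlocalized near the energy surface, I would split an arbitrary $f \in L^2(X)$ as $f = \tilde\psi^w f + (1-\tilde\psi^w)f$, where $\tilde\psi \in \CIc(T^*X)$ takes the value $1$ near $\cE_E$ and is supported inside the energy neighbourhood $\cE_E^{\delta_0}$ allowed by Proposition~\ref{p:re1}. On the complement of $\supp\tilde\psi$ the symbol $p - E - iW$ is bounded below in modulus: either $|p-E|$ is large, or we are in the region where $W \equiv 1$. Standard semiclassical parametrix construction yields $Q \in \Psi_h(X)$, uniformly bounded on $L^2$, with $QA = (1-\tilde\psi)^w + \Oo_{L^2\to L^2}(h^\infty)$, hence
\[
\|A^{-1}(1-\tilde\psi^w)f\|_{L^2} \leq C\|f\|_{L^2} + \Oo(h^\infty)\|A^{-1}f\|_{L^2}.
\]

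For the energy-localized part $\tilde\psi^w f$, I would apply the identity above with $\tilde\psi^w f$ inserted, and bound the integrand using whichever estimate is sharper: for $0 \leq t \leq T_*$, the trivial bound $\|e^{-itA/h}\tilde\psi^w\|_{L^2\to L^2} \leq C$, and for $T_* \leq t \leq T$ the hyperbolic decay of Proposition~\ref{p:re1}. The crossover time $T_* = \frac{n\log(1/h)}{2\lambda}$ is chosen so that $Ch^{-n/2}e^{-\lambda T_*} \sim 1$. A direct calculation then yields
\[
\Big\|\frac{1}{h}\int_0^T e^{-itA/h}\tilde\psi^w\,dt\Big\|_{L^2\to L^2} \leq \frac{CT_*}{h} + \Oo(1/h) = \frac{n(1+o(1))\log(1/h)}{2\lambda h},
\]
where the second integral contributes only $\Oo(1/h)$. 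Choosing $T = (1+\eta)T_*$ with $\eta>0$ small, Proposition~\ref{p:re1} forces $\|e^{-iTA/h}\tilde\psi^w\|_{L^2\to L^2} \leq Ch^{\eta n/2}$, so the remainder term contributes $\Oo(h^{\eta n/2})\|A^{-1}\|\cdot\|f\|$ to $\|A^{-1}\tilde\psi^w f\|$.

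Assembling these bounds produces a self-referential inequality of the form
\[
\|A^{-1}\|_{L^2\to L^2} \leq \frac{n(1+o(1))\log(1/h)}{2\lambda h} + C + \Oo(h^{\eta n/2})\|A^{-1}\|_{L^2\to L^2}.
\]
To close the bootstrap I would invoke the a priori finiteness of $\|A^{-1}\|_{L^2\to L^2}$, which is exactly the content of Theorem~\ref{t:1g}$'$ (the analogue for $P_W$): it provides a spectral gap of size $h\gamma$ near $E$, giving $\|A^{-1}\|_{L^2\to L^2} = \Oo(1/h)$ at worst. The bootstrap term $\Oo(h^{\eta n/2})\|A^{-1}\|$ is then negligible, and letting $\lambda \nearrow |\cP_E(1/2)|$ and $\eta\to 0^+$ yields the claimed bound with constant $n(1+\epsilon)/(2|\cP_E(1/2)|)$. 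The main obstacle is not conceptual but lies in calibrating the cutoffs: one must ensure that the energy-away parametrix construction works uniformly in the presence of the non-self-adjoint term $-iW$ (using the global positivity of $W$ outside the interaction region combined with the energy gap $|p-E|\geq \delta_0/2$ inside), and that the constants in Proposition~\ref{p:re1} and in the parametrix bound can be chosen compatibly with a single cutoff $\tilde\psi$.
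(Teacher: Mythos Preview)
Your approach is correct and genuinely different from the paper's.  Both routes rest on the propagator decay of Proposition~\ref{p:re1}, and both split off the elliptic region $\{|p-E|\gtrsim\delta_0\}$ by a parametrix; the divergence comes in how the resolvent is recovered at the real energy $E$.  The paper first proves the bound $\|(P-iW-z)^{-1}\|\le (C_1+T_E(h))/h + h^N/\Im z$ for $\Im z>0$ by integrating the propagator over $[0,\infty)$ (Lemma~\ref{l:re1}), then invokes a crude a~priori exponential bound $\|(P-iW-z)^{-1}\|\le C_\epsilon\exp(C_\epsilon h^{-n-\epsilon})$ in the strip $\Im z>-\lambda h$ (imported from \cite{DSZ}), and finally interpolates between the two via a parametric three-line maximum-principle lemma (Lemma~\ref{l:three}) to reach $z=E$.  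Your truncated-integral identity and bootstrap avoid both the complex interpolation and the \cite{DSZ} input, at the price of relying on Theorem~\ref{t:1g}$'$ for the a~priori finiteness of $\|A^{-1}\|$; this is arguably the more elementary path, and it recovers the same sharp constant.

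One small correction: the spectral gap of Theorem~\ref{t:1g}$'$ does \emph{not} by itself give $\|A^{-1}\|=\Oo(1/h)$ --- for a non-normal operator the resolvent norm is not controlled by the distance to the spectrum.  What you actually need, and what Theorem~\ref{t:1g}$'$ does provide, is merely that $E\notin\Spec(P_W)$, hence $\|A^{-1}\|<\infty$ for each small $h$.  Your self-referential inequality
\[
\|A^{-1}\| \le \frac{n(1+o(1))\log(1/h)}{2\lambda h} + \big(Ch^{\eta n/2}+\Oo(h^\infty)\big)\|A^{-1}\|
\]
then closes by absorption once the coefficient on the right is $<1$, with no quantitative a~priori input required.
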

\begin{proof}
We will use Proposition \ref{p:re1} and $h$-dependent complex 
interpolation similar to that in \cite{TZ}.

If we put 
\[  U_1 ( t ) \defeq \exp (- i t ( P ( h ) - i  W) / h )\,  \psi^w ( x , h D )
 \,, \]
where $ \psi $ is as in \eqref{eq:re1},
then the following estimates valid for any $ M>0 $ and $ 0 < h < h_M $:
\be\label{eq:U1}
\| U_1 ( t ) \|
\leq 
\left\{ \begin{array}{lll}  1 + {\mathcal O} ( h )\,, 
 & 0 \leq t \leq T_E\,, &  
 T_E ( h ) \defeq n  \log ( 1 / h )  /
 (2\lambda ) \,, \\
\ & \ & \ \\
C_0 h^{-n/2} e^{ - \lambda t }\,,  & T_E \leq 
t \leq T_M \,, &   T_M (h ) \defeq M \log ( 1/ h ) \,, 
\\
\ & \ & \ \\
h^{M/C_0 }\,,  & t \geq T_M  \,, & \ \end{array} \right.
\ee
where $ C_0 $ is independent of $ M $. The notation $T_E(h)$ comes from the analogy with
the {\em Ehrenfest time} (the time the system needs to delocalize a Gaussian wavepacket).

The first estimate in 
\eqref{eq:U1} follows from the subunitarity of $ \exp \big( - i t ( P ( h ) - i W ) /h\big) $ and the bound 
$ \| \psi^w\|_{L^2 \to L^2} \leq 1 + \Oo ( h ) $. 
The second estimate follows from Proposition \ref{p:re1} by 
absorbing the remainder $ \Oo_M ( h^\infty ) $ in the leading
term by taking $ h < h_M $, small enough. The last estimate
follows by writing 
$$ 
U_1  ( t) = \exp \big( - i ( t - T_M ) ( P ( h )- i W )/h \big)\, U_1 ( T_M ) \,,
$$
and using subunitarity for the first factor and the previous estimate
for $\|U_1(T_M)\|$. 

The estimates \eqref{eq:U1} and ellipticity away 
from the energy surface give the following
\begin{lem}\label{l:re1}
In the notations of Proposition \ref{p:re1} and \eqref{eq:U1} we have,
for any $ N>0$, 
\begin{gather}
\label{eq:re2}
\begin{gathered}  \| ( P ( h ) - i W - z )^{-1} \|_{L^2 \rightarrow L^2}
\leq \frac{C_1 + T_E ( h )}{h} + \frac{ h^N }{ \Im z } \,, \\ \Im z > 0 \,,  \ \ 
| z - E | < \delta \,, 
\quad  0 < h < h_N \,. 
\end{gathered}
\end{gather}
\end{lem}
\begin{proof}
We first prove the same estimate for the energy-localized operator
\[  
( P ( h ) - i W - z )^{-1} \psi^w ( x, h  D) 
= \frac1 h \int_0^\infty U_1 ( t )\, e^{  i t z / h } \,, \quad \Im z > 0 \,. \]
From \eqref{eq:U1} we obtain
\[ \begin{split} 
\|  ( P ( h ) - i W - z )^{-1} \psi^w ( x, h  D)  \|
& \leq \frac 1 h \left( \int_0^{T_E} + 
 \int_{T_E}^{T_M} + \int_{T_M}^\infty \right) 
\| U_1 ( t ) \| e^{-\Im z t /h } dt \\
& \leq T_E ( h ) / h + C/ ( h \lambda + \Im z ) + h^{M/C_0} / \Im z \,.
\end{split}
\]
This is the estimate on the right hand side of 
 \eqref{eq:re2} once we take $ M$ large enough and $h$ small enough.

To solve $ ( P - i W - z ) u = ( 1 - \psi^w ) f $, $ f \in L^2(X) $,
we follow a standard procedure. There exists $ \psi_1 \in \CIc ( T^*X; [0,1])$
supported near the energy surface $\cE_E $, such that the
pseudodifferential operator 
$ ( P - i W - z - i \psi_1^w )^{-1} $ is uniformly bounded in $L^2$ for $z$ as in the
Lemma and $h\in (0,h_0)$, while 
$ \psi_1^w ( 1 - \psi^w ) = \Oo_{L^2 \to L^2 } ( h^\infty ) $. 
It follows that
\[ 
( P - i W - z )  ( P - i W - z - i \psi_1^w )^{-1} ( 1 - \psi^w ) f 
= ( 1 - \psi^w ) f + R f \,,  
\]
where $ R = \Oo_{L^2 \to L^2 }( h^\infty )$.
If we put
\[  L \defeq   ( P - i W - z - i \psi_1^w )^{-1} ( 1 - \psi^w ) + 
( P - i W - z)^{-1} \psi^w \,, \]
then 
$$
( P - i W - z ) L = I + R \,, \quad
\| L \| \leq \frac{{C/\lambda} + T_E ( h )}{h} + \frac{ h^N }{ \Im z } \,, \quad
\| R \| = \Oo ( h^\infty )  \,,
$$
and $ ( P - i W - z )^{-1} = L ( I + R)^{-1} $ satisfies the estimate \eqref{eq:re2}.
\end{proof}

To estimate the norm of resolvent 
on the energy axis, $ \|( P - i W - E)^{-1}\|$, 
we need the following
parametric version of the maximum principle 
\begin{lem}\label{l:three}
Suppose that $ \zeta \mapsto F ( \zeta ) $ is holomorphic
in a neighbourhood of 
$ [-1,1] + i [ - c_-, c_+ ] $, for some fixed $ c_\pm > 0 $,
and that 
\begin{align*} 
\log | F ( \zeta ) | \leq M \,,\quad & \zeta \in [-1,1] + i [ - c_-, c_+ ] \,,\\
| F ( \zeta ) | \leq \alpha + \frac{\gamma}{ \Im \zeta } \,, \quad
&\zeta \in [-1, 1] + i ( 0, c_+] \,,
\end{align*}
where $ M ,  \alpha \gg 1$,  while $\gamma\ll 1$. 
Then for $ \epsilon $ satisfying 
$ \gamma M^{\frac32}/\alpha \ll \epsilon^{5/2} \ll 1 $ we have 
$$
|F ( 0 ) | \leq ( 1 + \epsilon )\, \alpha \,. 
$$
\end{lem}
\begin{proof}
Let $ g ( \zeta ) = \exp ( - 3 M \zeta^2 + i a \zeta ) $, with $ a\in \RR $
to be chosen later. Then $ g ( 0 ) = 1 $ and
$$ 
|g ( \zeta ) | \leq \exp \big( - 3 M (\Re \zeta )^2 + 3 M ( \Im \zeta )^2 + 
|a | | \Im \zeta | \big) \,. 
$$ 
Let $ 1 \gg \delta_- \gg \delta_+ > 0 $.
Then the following bounds hold on the boundary of $[-1,1] + i [ - \delta_-, \delta_+ ]$:
\[ \log |F ( \zeta ) g ( \zeta ) | \leq \left\{ \begin{array}{ll} 
- 2 M + 3 M\delta_-^2  + |a| \delta_- &  \Re z = \pm 1 \,, \ -\delta_-
\leq \Im \zeta \leq \delta_+ \,, \\
\ & \ \\
M + 3 M \delta_-^2  + a \delta_- \,, & |\Re z| \leq 1 \,, \ \Im \zeta
= - \delta_- \,, \\
\ & \ \\ 
 \log ( \alpha + \gamma / \delta_+ )  + 3 M \delta_+^2 - a \delta_+  \,, & 
 |\Re z| \leq 1 \,, \ \Im \zeta
= \delta_+ \,. \end{array} \right.
\]
Following the standard ``three-line'' argument we select
$$ 
a = \frac{1} {\delta_+ + \delta_-} ( - M + \log ( \alpha + 
\gamma/\delta_+ ) ) \simeq \frac{1}{\delta_-} ( - M + \log ( \alpha + \gamma/\delta_+ ) ) \,, 
$$
so that
the bounds for $ \Im \zeta 
= \pm \delta_\pm $, $|\Re z|\leq 1$ are the same:
\begin{align*} \log |F ( \zeta )  g (\zeta)|  &  \leq 
 M  \frac{\delta_+}{ \delta_+ + \delta_-} + 
\log( \alpha + \gamma/\delta_+) 
  \frac{\delta_-}{ \delta_+ + \delta_-} + 3 M \delta_-^2\\
& \lesssim
 M  { \delta_+}\delta_-^{-1}  + 
\log( \alpha + \gamma/\delta_+) 
 + 3 M \delta_-^2  \,. 
\end{align*}
To ensure that the above right hand side is smaller than $\log ( \alpha   ( 1 + \epsilon )  ) $, 
we need the following conditions
to be satisfied:
\[  
\delta_+ \delta^{-1}_- M \ll \epsilon \,, \quad  M \delta_-^2 \ll \epsilon \,,
 \quad \gamma/ \alpha\delta_+ \ll \epsilon \,. 
\]
These conditions can be arranged if $\eps^{5/2}$ is large enough compared with $\gamma M^{\frac32}/\alpha$,
which is the condition in the statement of the lemma. One easily checks that the bound 
$\log |F ( \zeta ) g (\zeta)|\leq \log (\alpha( 1 + \eps))$ then also holds for 
$|\Re z|=1$, $\Im z=\pm\delta_{\pm}$, and therefore for $z=0$ by the maximum principle.
\end{proof}
\noindent{\it End of the proof of Proposition~\ref{c:resolvent}:} 
To apply Lemma~\ref{l:three} we need the estimate of Lemma~\ref{l:re1}, but also
an estimate of $ \| ( P - i W - z )^{-1} \| $
for $|\Re z-E|\leq\delta$,  $ | \Im z | \leq  \lambda h $ (where we recall that
$ ( P - i W - z )^{-1} $ has no poles in that strip for $h$ small enough).  
We can cite \cite[Lemma 6.1]{DSZ} and obtain
\[ 
\| ( P - i W -z  )^{-1} \| \leq C_\epsilon \exp ( { C_\epsilon 
h^{-n-\epsilon} }) \,, \quad \Im z > - \lambda h \,. 
\]
Lemma \ref{l:three} applied to the data
\begin{gather*}
 F ( \zeta ) = \langle ( P - i W - E - h \zeta )^{-1} f , g \rangle \,, \quad
f, g \in L^2 (X ) \,, \quad \| f \|=\|g\| = 1 \,, \\
M = C h^{-n-1} \,, \quad \alpha = \frac{C_1 + T_E ( h )}{h}\,, \quad  \gamma = h^N \,, 
\end{gather*}
proves the Corollary (observe the condition $\gamma M^{\frac32}/\alpha\ll 1$ is satisfied 
for $h$ small enough).
\end{proof}

To pass from the estimate \eqref{eq:phw} to an estimate on 
$\chi ( P - z )^{-1} \chi $, $ \chi \in \CIc ( X ) $, we first
recall (see for instance \cite{TZ}) 
that if $ \supp \chi \subset B ( 0 , R_0 ) $, where $ R_0 $ is as
in \S \ref{defcs}, then 
$$
 \chi ( P - z )^{-1} \chi = \chi ( P_\theta - z )^{-1} \chi \,. 
$$
Also, if $ \supp \pi^* \chi \cap \supp G = \emptyset $, then 
\[
\begin{split}  \chi ( P_\theta - z )^{-1} \chi & = 
\chi e^{\eps G^w / h } ( P_{\theta,\eps} - z )^{-1} e^{-\eps G^w / h } \chi \\
& = \chi ( P_{\theta,\eps} - z )^{-1} \chi + 
\Oo_{L^2 \to L^2 } ( h^\infty )\, \| ( P_{\theta,\eps} - z)^{-1} \| \,. 
\end{split}
\]
Hence, 
\be\label{eq:Pth}
\|  \chi ( P_\theta - z )^{-1} \chi \| = 
( 1 + \Oo ( h^\infty ) )  \| ( P_{\theta,\eps}  - z )^{-1} \| \,. 
\ee

For future use, we now consider an auxiliary simpler scattering situation, namely
an operator $ \Ps = \Ps(h) $ satisfying 
the assumptions of \S\ref{ass} and for which the associated classical flow is {\em 
non-trapping} at energy $ E $, that is, $ K_E = \emptyset $.
From a result of Martinez \cite{Mar}, we have 
$$ 
\chi ( \Ps  - z )^{-1} \chi = \Oo ( 1/h ) \,, \quad
z \in D ( E , C h ) \,, 
$$
see \cite[Proposition 3.1]{NSZ}\footnote{The statement
of that Proposition should be corrected
to include a cut-off $ \chi $, or, without a cut-off, a factor $ \log (1/h) $
on the right hand side of \cite[(3.2)]{NSZ}. Lemma~\ref{l:b} gives a correct global
version without the logarithmic loss.}. 
Below we will need an estimate for the resolvent of
$ \Ps _{\theta, \epsilon} $, given in the next lemma.
\begin{lem}
\label{l:b}
Suppose that $ \Ps  = \Ps  ( h ) $ is an operator satisfying 
the assumptions of \S \ref{ass} and that the flow of $ p^\sharp  $ 
is {\em non-trapping} at energy $ E $, that is, $ K_E = \emptyset $. 
Then in the notation of \S \ref{s:mso}, 
\be\label{eq:b}
( \Ps _{\theta,\epsilon} - z )^{-1} = {\mathcal O}_{L^2 \to L^2} ( 1/h ) \,, \quad
z \in D ( E , C h ) \,.
\ee
\end{lem}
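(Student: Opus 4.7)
The plan is to exploit the non-trapping assumption $K_E=\emptyset$ to sharpen the construction of the weight $G$ from Lemma~\ref{l:gsa}. With $K_E$ empty, one runs that construction with the set $V=\emptyset$, producing $G\in \CIc(T^*X)$ with $H_{p^\sharp}G\geq 1$ on $T^*_{B(0,3R_0)}X\cap \cE^\delta$ and $H_{p^\sharp}G\geq -\delta_0$ globally. Carrying this choice through the symbol analysis \eqref{eq:split1}--\eqref{e:sign} (with the same calibration $C<M_2<C/\delta_0$) yields the \emph{uniform} lower bound on the absorption rate
$$-\Im p^\sharp_\theta(\rho)+\epsilon\,H_{p^\sharp}G(\rho)\geq \theta/C_1\,,\qquad \rho\in \cE^\delta\,,$$
with no exceptional region near a (now empty) trapped set. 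The subprincipal $\Oo(h+\epsilon^2)$ terms in \eqref{eq:split1} are dominated by $\theta/C_1$ since $\theta=M_1 h\log(1/h)\gg h$ and $\epsilon^2\ll\theta$ for $h$ small.

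Next, take $\chi\in \CIc(T^*X;[0,1])$ with $\supp\chi\subset \cE^{\delta/2}$ and $\chi\equiv 1$ on $\cE^{\delta/4}$, and apply the sharp G{\aa}rding inequality \eqref{eq:sharpg} to the nonnegative symbol $-\chi^2(\Im p^\sharp_\theta-\epsilon H_{p^\sharp}G)-(\theta/2C_1)\chi^2\geq 0$. For $z\in D(E,Ch)$ and any $u\in L^2(X)$ microlocalized near $\cE^{\delta/4}$ (so that $u=\chi^w u + \Oo_{L^2}(h^\infty)\|u\|$), this together with $|\Im z|\leq Ch\ll \theta$ produces by Cauchy--Schwarz the coercive estimate
$$\|u\|\leq \frac{C_2}{\theta}\,\|(\Ps_{\theta,\epsilon}-z)u\|+\Oo(h^\infty)\|u\|\,,$$
which is already much stronger than the required $\Oo(1/h)$ bound.

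For states microlocalized away from $\cE^{\delta/4}$, the symbol $p^\sharp_\theta-z$ is elliptic in $S(\langle\xi\rangle^2)$ for $z\in D(E,Ch)$, so a standard parametrix produces $Q\in\Psi_h(\langle\xi\rangle^{-2})$ with $Q(\Ps_{\theta,\epsilon}-z)=(1-\tilde\chi^w)+\Oo_{L^2\to L^2}(h)$, where $\tilde\chi\in \CIc$ satisfies $\supp\tilde\chi\subset \cE^{\delta/4}$ and $\chi\equiv 1$ on $\supp\tilde\chi$, and $\|Q\|=\Oo(1)$. Patching the two pieces as in the proof of Lemma~\ref{l:re1} (writing $L=Q+R\chi^w$ where $R$ is the coercive left inverse on the range of $\chi^w$ constructed in the previous paragraph) yields an approximate left inverse of $\Ps_{\theta,\epsilon}-z$ of operator norm $\Oo(1/h)$, the $1/h$ coming entirely from the elliptic piece while the coercive piece contributes the subleading $\Oo(1/\theta)$. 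A Neumann series argument then upgrades $L$ to the genuine inverse with the same $\Oo(1/h)$ bound, establishing \eqref{eq:b}.

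The main technical obstacle is bookkeeping in the symbolic calculus: the conjugated operator $\Ps_{\theta,\epsilon}$ has its symbol in $S_\delta(\langle\xi\rangle^2)$ rather than $S(\langle\xi\rangle^2)$, so every invocation of sharp G{\aa}rding and of the elliptic parametrix must be carried out in the $S_\delta$ calculus. The corresponding remainders are of order $\Oo(h^{1-2\delta})$ for $\delta\in(0,1/2)$, which are comfortably dominated by $\theta=M_1 h\log(1/h)$ in the coercive step and absorbed by the $\Oo(1/h)$ bound in the elliptic step.
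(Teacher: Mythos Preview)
Your approach is correct and genuinely different from the paper's. The paper does not exploit the freedom to take $V=\emptyset$ in Lemma~\ref{l:gsa}; instead it builds an approximate right inverse $Q(z)=Q_0(z)+Q_1(z)$ where $Q_0$ is the elliptic parametrix off $\cE_E$ and, near $\cE_E$, one sets $Q_1(z)=\frac{i}{h}\int_0^T \Us(t)\,\psi_1^w\,e^{itz/h}\,dt$ with $T$ the uniform escape time furnished by non-trapping. The boundary term $\Us(T)\psi_1^w$ is then $\Oo(h^{M_1/C_0})$ by Lemma~\ref{l:deep} (every trajectory has spent a definite time in the deep scaling region $\{|x|>5R_0/2\}$), and the energy-commutator term is $\Oo(h^\infty)$ by propagation, yielding $(\Ps_{\theta,\epsilon}-z)Q(z)=I+\Oo(h)$ with $\|Q(z)\|=\Oo(T/h)=\Oo(1/h)$. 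Your route via global coercivity of $-\Im\Ps_{\theta,\epsilon}$ is more direct and in fact yields the sharper bound $\Oo(1/\theta)=\Oo\!\big(1/(h\log(1/h))\big)$.

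Two small corrections to your write-up. First, the ``patching'' $L=Q+R\chi^w$ is not well-posed as written: what you have near $\cE_E$ is an \emph{a priori estimate}, not a constructed microlocal inverse $R$, so $L(\Ps_{\theta,\epsilon}-z)$ does not reduce cleanly to $I+\Oo(\text{small})$. The right argument is to combine the coercive bound on $\chi^w u$ with the elliptic bound on $(1-\chi^w)u$ (the commutator $[\Ps_{\theta,\epsilon},\chi^w]=\Oo_{L^2\to L^2}(h)$ is absorbed since $h/\theta\to 0$) into a \emph{global} a priori estimate $\|u\|\leq (C/\theta)\|(\Ps_{\theta,\epsilon}-z)u\|$, and then invoke Fredholm index zero (inherited by conjugation from $\Ps_\theta-z$) to get invertibility with the same bound. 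Second, your accounting is inverted: the elliptic piece contributes $\Oo(1)$ and the coercive piece $\Oo(1/\theta)$, so the dominant $1/\theta$ comes from the coercive part, not $1/h$ from the elliptic one. (The $S_\delta$ worry is also unnecessary: as noted right after \eqref{eq:pthep}, the expansion $e^{-(\epsilon/h)\ad_{G^w}}\Ps_\theta$ places $\Ps_{\theta,\epsilon}$ in the ordinary class $S(\langle\xi\rangle^2)$, since each commutator contributes a factor $\Oo(\epsilon)=\Oo(h\log(1/h))$.)

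One caveat relevant to the downstream application. Your argument hinges on \emph{choosing} $G$ with $V=\emptyset$, whereas the paper's proof works for any weight $G$ satisfying Lemma~\ref{l:gsa} (it only uses \eqref{eq:gaa} and the scaling-region decay of Lemma~\ref{l:deep}). In the proof of Theorem~\ref{th:4}, $\Ps_{\theta,\epsilon}$ and $P_{\theta,\epsilon}$ must share the \emph{same} $G$ so that they agree outside $B(0,R_4)$; with that $G$ (which vanishes on a large ball) one only has $\Im p^\sharp_\theta-\epsilon H_{p^\sharp}G=0$ on a big set, and your coercivity fails there. So while your argument proves the lemma as a stand-alone statement, the paper's more robust dynamical proof is what is actually needed later.
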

\begin{proof}
Since $ \Ps _{\theta,\eps} - z $ is a Fredholm operator 
on $ L^2 (X) $ (as elsewhere we identify $ X_\theta $ with $ X $),
the estimate will follow if we find $ Q ( z )$ such that, 
for $ z \in D ( E , C h ) $,
\be\label{eq:Rz}
 ( \Ps _{\theta,\epsilon} - z ) Q ( z ) = 
I + A ( z ) \,, \ \ Q ( z ) = {\mathcal O}_{L^2 \rightarrow L^2 } 
( 1/h ) \,, \ \ A ( z ) = {\mathcal O}_{L^2 \rightarrow L^2 } ( h) \,. 
\ee
We will solve this problem in two steps, away and near the energy layer $\cE_E$. 
Consider the  two nested energy cutoffs
\be\label{eq:chis}
 \psi_0( x, \xi )  = \psi (( p ( x, \xi) - E )/\delta ) \,, \quad
\psi_1 ( x , \xi ) = \psi (8( p ( x, \xi) - E )/\delta ) \,, 
\ee
where $ \psi \in \CIc ( (-2,2) , [0,1] )$ and
and $ \psi\rest_{[-1,1]} \equiv 1 $. Since $ \Ps_{\theta, \epsilon } $
is elliptic on $ \supp (1 - \psi_1) $ (that is, away from $\cE_E$), 
standard symbolic calculus (as in the proof of Lemma~\ref{l:re1}) 
provides an operator $ Q_0 ( z ) $ such that
\[ 
( \Ps _{\theta,\epsilon} - z ) Q_0 ( z ) = I - \psi^w_1 ( x , h D) + A_0( z ) \,, \quad
Q_0 ( z ) = \Oo_{L^2 \to L^2 } ( 1 ) \,, \quad  
A_0 ( z ) = \Oo_{L^2 \to L^2 } ( h) \,. 
\]
We now treat the problem near the energy layer. 
We want to produce an operator $ Q_1 ( z) $
such that 
\[ ( \Ps _{\theta,\epsilon} - z ) Q_1 ( z ) = 
\psi_1^w ( x, h D) + A_1 ( z ) \,, 
\quad Q_1 ( z ) = \Oo_{L^2 \to L^2 } ( 1/h ) \,, \quad  
A_1 ( z ) = \Oo_{L^2 \to L^2 } ( h) \,. \]
To that aim we  use the tools developed in \S \ref{s:eo} and 
consider
the energy-localized propagator
\[ 
\Us ( t ) \defeq \exp ( -  i t \tPs _{\theta,\eps}/h ),\qquad
\tPs _{\theta,\eps}  \defeq  \psi_0^w ( x , h D )\, \Ps _{\theta,\eps} \,
\psi_0^w ( x ,h D )  \,,
\]
which satisfies $\| \Us ( t ) \| \leq  e^{Ct}$ for any $t>0$.
The non-trapping assumption at energy $ E $ implies that 
\be\label{eq:eT} 
\exists \, T > 0\,, \quad  \forall \rho \in 
\cE_E^{\delta/4}\cap T^*_{B(0,3R_0)}X \,, \qquad | \pi ( \Phi^t (\rho ) | > 3 R_0 \,, \quad  t > T \,.
\ee
We claim that we can take
\be\label{eq:R1}
Q_1 ( z ) \defeq \frac{i }{ h} \int_0^T \Us ( t) \psi^w_1 ( x, h D ) e^{ i t z/h} dt \,. 
\ee
Indeed, 
\begin{gather*} ( \Ps _{\theta,\eps} - z ) Q_1 ( z )  = 
\psi^w_1 ( x, h D ) + A_1 ( z ) \,, \\
A_1 ( z) \defeq
 - \, \Us ( T ) \psi_1 ^w ( x, h D )  
+ \frac{i }{ h} \int_0^T ( \Ps _{ \theta, \epsilon} - 
 \tPs _{\theta , \epsilon}  )\, 
 \Us ( t) \,\psi^w_1 ( x, h D )\, e^{ i t z/h} dt \,. 
\end{gather*}
The escape property \eqref{eq:eT} shows that there exists a time $0<T_{\min}<T$, such that
points in $\cE_E^{\delta/4}\cap T*_{B(0,3R_0)}X$
will have escaped outside $B(0,5R_0/2)$ after $T - T_{\min}$, while
points in $\cE_E^{\delta/4}\cap T*(X\setminus B(0,3R_0))$ cannot penetrate inside $B(0,5R_0/2)$ before 
the time $T_{\min}$. In both cases, Lemma \ref{l:deep} provides the
following estimate:
\[ 
\|\Us ( T ) \psi_1 ^w ( x, h D )\|  = \Oo ( h^{M_1/C_0}) \,,
\]
for some $C_0 = C_0(T-T_{\min})$. 
On the other hand, $M_1$ can be chosen arbitrary large, in particular
we assume that $M_1/C_0 > 1$.

To analyse the second term in the definition of $ A_1 $, we use
the energy cutoff $\psi_{1/2}(\rho) \defeq \psi ( 4 ( p (\rho ) - E ) /\delta )$,
which is nested between $\psi_1$ and $\psi_0$, and write
$$
\Ps_{\theta,\eps} - \tPs_{\theta,\eps} =  
 \Ps_{\theta,\eps} ( 1- \psi^w_0)  +  ( 1 - \psi^w_0 ) \Ps _{\theta ,\eps} 
\psi^w_0 ( 1 - \psi_{1/2}^w ) + \Oo_{L^2 \to L^2 }  ( h^\infty) \,.
$$
From the support properties of the $\psi_j$ and using \eqref{e:kill}, we get
$$ 
( 1 - \psi^w_0 )\, \Us ( t )\, \psi^w_1 \,, \quad 
( 1 - \psi_{1/2}^w )\, \Us ( t)\, \psi^w_1 = \Oo_{L^2 \to L^2 }  ( h^\infty ) \,. 
$$
These estimates show that $ A_1 ( z ) = \Oo_{L^2\to L^2} ( h ) $. 

As a result, the operators $ Q ( z ) = Q_0 (z ) + Q_1 ( z ) $ and 
$ A ( z ) = A_0 ( z ) + A_1 ( z ) $ satisfy \eqref{eq:Rz} completing
the proof.
\end{proof}

\medskip

\noindent
{\em Proof of Theorem \ref{th:4}:} We now return
 to our original operator $P(h)$ with properties 
described in \S\ref{defhyp}.
As is seen from 
\eqref{eq:Pth}, it is sufficient to prove the bound
\[ ( P_{\theta,\epsilon} - E )^{-1} = \Oo_{L^2 \to L^2} ( \log ( 1/h )/ h )\,.  \]
As in the Lemma above, we will construct an approximate inverse 
\[ 
( P_{\theta,\eps} - E ) Q = I + A \,, \quad 
Q = \Oo ( \log ( 1/h ) / h ) \,, \quad  A = {\mathcal O} ( h ) \,. 
\]
We consider the cutoffs \eqref{eq:chis}. 
Once again, the operator can be easily inverted away from the energy shell. We then need
to solve 
\be
\label{eq:Q1} 
 ( P_{\theta, \epsilon} - E ) Q_1 = \psi^w_1 ( x, h D)  + A_1 \,, \ \ 
Q_1 = {\mathcal O} ( \log ( 1/h ) / h ) \,, \ \ A_1 = {\mathcal O} ( h ) \,.
\ee
We will now use our knowledge of the absorbing-potential resolvent, see Proposition~\ref{p:re1}
and Proposition~\ref{c:resolvent}: we will use the fact that the operators $P_{\theta,\eps}$ and $P-iW$ 
are very similar near the trapped set. 

Assume that  $ 1 \ll R_4 < R_3< R_2 <  R_1  < R_0/2  $, where the radius $R_1$ is used to
define the absorbing potential $W$, while $R_0$ is used in the complex deformation of $X$
(see \S\ref{defcs}), and the weight $G$ is supposed to vanish on $ \pi^{-1}B ( 0 , R_0/2 )$.
Consider the spatial cutoffs $ \chi_j \in \CIc ( X, [0,1] ) $, $ j = 1 , 2 $, satisfying
$$
\supp \chi_j \Subset B ( 0 , R_j ) \,, \quad \chi_j \rest_{ B(0,R_{j+1}) }\equiv 1\,,\quad
j=1,2,3 \,.
$$
To solve \eqref{eq:Q1}, we first put 
$$
 Q_2 \defeq \chi_1 ( P - i W - E )^{-1} \chi_2 \psi^w_1 \,.
$$
We can then compute
\be
 ( P_{\theta, \epsilon} - E ) Q_2 = \chi_2 \psi_1^w + 
[ P , \chi_1 ]  ( P - i W - E )^{-1} \chi_2 \psi^w_1 + 
\Oo_{L^2 \to L^2} (h^\infty )\,,
\ee
where the error term is due to the weight $G$, which vanishes near the supports of $\chi_j$:
$$
\chi_j\,e^{\eps G/h} = \chi_j +\Oo_{L^2\to H^k_h}(h^\infty)\,,\quad\forall k\,.
$$
On the other hand, Proposition~\ref{c:resolvent} implies that
$$
 Q_2 = \Oo_{L^2 \to L^2} ( \log ( 1/ h ) / h ) \,, \qquad 
[P , \chi_1 ]  ( P - i W - E )^{-1} \chi_2 \psi^w_1  
= \Oo_{L^2 \to L^2} ( \log ( 1/ h ) )\,. 
$$
To treat the operator on the right, we observe that the differential operator 
$[ P , \chi_1  ]$ vanishes outside $B(0,R_2)$, while $\chi_1$ vanishes outside $B(0,R_1)$.
We are thus in position to apply Lemma~\ref{l:prA}. For any $v\in L^2$, $\|v\|=1$, 
set $f\defeq \chi_2 \psi^w_1 v$. The support of $f$ is contained inside 
$B(0,R_2)$, and its wavefront set lies inside $\cE_E^{\delta}$. 
As a consequence, the state
$u\defeq ( P - i W - E )^{-1} f$ also satisfies $\WFh(u)\subset \cE_E^{\delta}$, 
and the wavefront set of the state 
$[P , \chi_1 ] u$ is contained inside  $\WFh(u)\cap T^*(X\setminus B(0,R_2))$. According to
the Lemma, 
\be\label{e:outgo}
\Phi^t\big(\WFh([P , \chi_1 ] u)\big) \cap T^*_{B(0,3R_0)}X=\emptyset\quad \text{for any }
t\geq T(R_2,R_0,E/2)\,.
\ee
Using $ T= T(R_2,R_0,E/2)$, we put
$$
Q_3 \defeq - \frac{i}{h} \int_0^T U( t ) e^{ i t E/h} 
[ P , \chi_1 ]  ( P - i W - E )^{-1} \chi_2 \psi^w_1 
= \Oo_{L^2 \to L^2 } ( \log ( 1/h ) / h ) \,.
$$
Like in the proof of Lemma \ref{l:b}, the outgoing property \eqref{e:outgo} implies that
$$
( P_{\theta, \epsilon} - E ) Q_3 = - [ P , \chi_1 ]  ( P - i W - E )^{-1} \chi_2 \psi^w_1 +
\Oo_{L^2 \to L^2 }(h^{M_1/C_0})\,.
$$
Hence, assuming $M_1\gg 1$, we have
\[  
( P_{\theta, \epsilon} - E ) ( Q_2 + Q_3 ) = 
\chi_2 \psi_1^w + \Oo_{L^2 \rightarrow L^2 } ( h ) \,. 
\]
There remains to find an approximate solution with the right hand
side given by 
$$ 
( 1 - \chi_2 ) \psi_1^w +  \Oo_{L^2 \to L^2 } ( h ) \,.
$$
Since we chose $ R_3 $ large
enough to contain $ \pi ( K_E ) $, we can choose some $1 \ll R_4<R_3$, and construct 
an operator $ \Ps  $ which is non-trapping
in the sense of Lemma \ref{l:b}, and satisfies 
$$
\Ps \rest_{X\setminus B( 0 , R_4 ) } = P \rest_{X\setminus  B ( 0 , R_4 ) }\,.
$$
From the discussion leading to \eqref{eq:Pth} follows that
\[  
\Ps _{\theta, \epsilon} \rest_{X\setminus B( 0 , R_4 ) } = 
P_{\theta,\eps}  \rest_{X\setminus B ( 0 , R_4 ) } 
+ \Oo_{L^2 \to H^k_h } ( h^\infty ) \,.\]
Using the cutoff $\chi_3$, we put
\[ 
Q_4 \defeq ( 1 - \chi_3 ) ( \Ps _{\theta, \epsilon} - E)^{-1} 
( 1 - \chi_2 ) \psi_1^w \,,
\]
and then check that
\begin{gather*}   
( P_{\theta,\eps} - E )  Q_4 = 
( 1 - \chi_2) \psi_1^w - A_4 + \Oo_{L^2 \to L^2} ( h^\infty ) \,, \quad  
Q_4 = \Oo_{L^2 \to L^2} ( 1/h ) \,,\\ 
A_4 \defeq  [ P , \chi_3  ] 
( \Ps_{\theta,\eps} - E)^{-1} ( 1 - \chi_2 ) \psi_1^w 
\,, 
 \quad  A_4 = \Oo_{L^2 \to L^2} ( 1 ) \,. 
\end{gather*}
The operator $ A_4 = \tilde \chi_2 A_4 $ where $ \tilde \chi_2 $
has the same properties as $ \chi_2 $ (in particular, $\tilde\chi_2\rest_{\supp\chi_3}\equiv 1$). 
For any $v\in L^2$, the state $A_4\,v$ will be supported inside $B(0,R_3)$, and
its wavefront set will be contained in $\cE_E^{\delta}$. 
One can thus adapt the construction of
$ Q_2 + Q_3 $ when replacing $\chi_2\psi_1^w$ by $ A_4$, to
obtain an approximate inverse $ Q_5 $ with the properties
$$  
( P_{\theta,\eps} - E ) Q_5  = A_4 + \Oo_{L^2 \to L^2} ( h ) \,, \quad 
Q_5 = \Oo_{L^2 \to L^2} ( \log ( 1/ h ) / h ) \,. 
$$
We conclude that $ Q_1 \defeq Q_2 + Q_3 + Q_4 + Q_5 $ satisfies
\eqref{eq:Q1}, which proves the Theorem. 

\stopthm

\vspace{0.2cm}
\begin{center}
{\sc Appendix}
\end{center}
\vspace{0.1cm}
\renewcommand{\theequation}{A.\arabic{equation}}
\refstepcounter{section}
\renewcommand{\thesection}{A}
\setcounter{equation}{0}

In this appendix we explain how the methods of \S \ref{qd}
apply to the case in which the deformed operator $P_{\theta,\eps}$ is
replaced by the operator with the absorbing-potential operator, $P-iW$, 
where $W$ is described in
Proposition \ref{p:re1}. The arguments are easier in the
case of $P-iW$ and the only complication comes with the following 
replacement of Lemma \ref{l:new1}:
\begin{lem}
\label{l:new1A}
Let $ W $ satisfy the conditions given in Proposition \ref{p:re1}.
Then any fixed $t>0$, the operator
\be\label{eq:uotA}
V ( t) \defeq  e^{ i t P / h } \, e^{ - i t ( P - i W ) / h} \,, 
\ee
satisfies
\be\label{eq:VtA}
 V ( t ) = ( v( t )) ^w ( x, h D) + {\mathcal O}_{L^2 \rightarrow L^2 }
( h^\infty ) \,, \ \  v(t) \in S_{1/2} ( T^* X )\,.
\ee
\end{lem}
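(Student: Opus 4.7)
The plan is to follow the template of Lemma \ref{l:new1}, adapted to the absorbing-potential setting. First I observe that $V(t)$ is uniformly bounded on $L^2$: the factor $e^{itP/h}$ is unitary and $e^{-it(P-iW)/h}$ is a contraction because $W \geq 0$, so $\|V(t)\|_{L^2 \to L^2} \leq 1$. Differentiation in $t$ yields the transport equation
\begin{equation*}
\partial_s V(s) = -\tfrac{1}{h}\, W(s)\, V(s), \qquad V(0) = I, \qquad W(s) \defeq e^{isP/h}\, W\, e^{-isP/h}.
\end{equation*}
For $t$ fixed, Egorov's theorem applied to the unitary Schr\"odinger propagator gives $W(s) \in \Psi_h ( 1 )$ with principal symbol $W \circ \Phi^{-s}$, uniformly in $s \in [0,t]$. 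Consequently, every iterated adjoint satisfies
$\ad_{W_{j_1}} \cdots \ad_{W_{j_k}} W(s) = \cO_{L^2 \to L^2}(h^k)$
whenever the $W_{j_l}$ are first-order differential operators of the kind appearing in Beals's characterization \eqref{eq:beals}.

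Next, I will verify the Beals bounds on $V(t)$ by induction on $N$, aiming to establish
\begin{equation*}
V_N(t) \defeq \ad_{W_N} \cdots \ad_{W_1} V(t) = \cO_{L^2 \to L^2}(h^{N/2}).
\end{equation*}
Differentiating and distributing the commutators via the Leibniz rule for $\ad$ produces the inhomogeneous equation
\begin{equation*}
\partial_s V_N(s) = -\tfrac{1}{h}\, W(s)\, V_N(s) \,+\, E_N(s),
\end{equation*}
where $E_N(s)$ collects all the terms in which at least one commutator has hit $W(s)$; each summand has the form $-\tfrac{1}{h}(\ad^{(A)} W(s))\cdot V_B(s)$ with $A,B$ a partition of $\{1,\dots,N\}$ and $|A|\geq 1$, so by the Egorov bound above it has size $\cO(h^{|A|-1})$ times a lower-order iterated commutator. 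The family generated by $-W(s)/h$ is a contraction semigroup on $L^2$ (again because $W\geq 0$), so Duhamel's formula combined with $V_N(0)=0$ yields $\|V_N(t)\|\leq \int_0^t \|E_N(s)\|\, ds$, and the inductive hypothesis closes the step at the rate of $h^{N/2}$ decay.

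The main obstacle, and the source of the index $1/2$ rather than $0$, is the sharpness of this Duhamel step. In Lemma \ref{l:new1}, conjugation by complex scaling produced a source of semiclassical size $\theta/h\sim \log(1/h)$, which left room for any $\gamma>0$; in the present setting the source $W(s)/h$ has size $1/h$, and the commutator gain $\ad_{W_j} W(s) = \cO(h)$ just barely compensates after division by $h$, costing one half-power of $h$ per iteration. Tracking this carefully—and checking that the worst term in $E_N$, namely the one with a single commutator on $W(s)$, still respects the target bound $\cO(h^{N/2})$—is the technical heart of the argument. Once the iterated commutator bounds are established, Beals's characterization \eqref{eq:beals} delivers $V(t) = v(t)^w(x,hD) + \cO_{L^2 \to L^2}(h^\infty)$ with $v(t) \in S_{1/2}(T^*X)$, which is precisely \eqref{eq:VtA}.
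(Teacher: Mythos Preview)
Your inductive Beals argument does not close.  Look already at $N=1$: you have
\[
\partial_s V_1(s)=-\tfrac1h\,W(s)\,V_1(s)-\tfrac1h\,[W_1,W(s)]\,V(s),
\]
and Egorov gives $[W_1,W(s)]=\cO_{L^2\to L^2}(h)$, so the forcing term is only $\cO(1)$.  Duhamel then yields $V_1(t)=\cO(1)$, not $\cO(h^{1/2})$.  Iterating, the ``worst'' contribution to $E_N$ (the one with exactly one commutator landing on $W(s)$) is $\tfrac1h\cdot\cO(h)\cdot V_{N-1}=\cO(\|V_{N-1}\|)$, so the recursion is $a_N\le a_{N-1}$ with $a_0=0$ and you never gain any power of $h$.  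This is exactly the discrepancy with Lemma~\ref{l:new1}: there the generator was $\cO(\theta)=\cO(h\log(1/h))$, so a single commutator produced $\cO(h\log(1/h))/h=\cO(h^{1-\gamma})$ and the induction closed in $S_\gamma$ for any $\gamma>0$.  Here the generator has size $\cO(1)$ and a purely abstract commutator count cannot succeed.

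What is missing is the \emph{sign} of $W$ beyond the contraction property.  The paper does not attempt Beals directly; it first writes down an explicit candidate symbol $v_0(t)=\exp\!\big(\tfrac1h\int_0^t a(s)\,ds\big)$ with $a(s)\sim -W\!\circ\!\Phi^{-s}\le 0$, and then shows $v_0\in S_{1/2}$ using the elementary Glaeser-type inequality $|\nabla f|\le C\sqrt f$ valid for nonnegative $f\in C^2_b$: this is what converts each derivative falling on the exponent into a loss of only $h^{-1/2}$ instead of $h^{-1}$.  The symbolic calculus in $S_{1/2}$ then gives $V(t)-v_0(t)^w=\cO(h^{1/2})$, and iterating produces the full expansion.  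If you want to salvage your approach you would have to prove an operator-level bound of the type $\|[W_j,W(s)]u\|\lesssim h\,\|W(s)^{1/2}u\|$ and feed the square root back into the dissipation, which amounts to the same structural input.
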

\begin{proof}
We start as in the proof of Lemma~\ref{l:new1}:
differentiating $V(s)$ with respect to $s$ gives
\begin{gather*}
 \partial_s V( s ) = \frac{1}{h}\, a(s)^w ( x , h D ) V ( s) \,, \quad  V( 0) = I \,, \ \ 
a ( s )^w ( x , h D ) \defeq - e^{it P/h} W e^{-it P/h}  \,, 
\end{gather*}
with $ a \in S $. Let 
\[  
A ( t) \defeq \int_0^t  a ( s) ds \,, \qquad
  v_0 ( t) \defeq \exp (  A( t ) / h ) \,. 
\]
We claim that the function $ v_0 \in S_{1/2} $. If fact, by Egorov's theorem,
\[  
A = A_0 + \cO ( h ) \,, \ \ 
A_0 ( t) ( x , \xi ) = - W ( \pi ( \Phi^t ( x , \xi) )) \leq 0 \,, 
\]
hence we only need to check the claim for $ \exp ( A_0 ( t) / h ) $.
The non-negativity and the $C^2$-boundedness of $ (- A_0) $ 
imply the standard estimate
$ | \partial_{(x,\xi)}^{\alpha} A_0 | \leq 
C |A_0|^{1/2} $, $ | \alpha |=1 $, 
from which we see that for any $\beta\in\NN^n$,
\[ \begin{split} \partial^\beta \exp ( A_0 ( t) / h ) & = 
\left( \sum_{ \sum_{\ell=1}^k \beta_\ell = \beta} 
h^{-k} \prod_{\ell=1}^k \partial^{\beta_\ell} A_0 \right)
 \exp ( A_0 ( t) / h ) \\ 
& = 
 \sum_{ \sum_{\ell=1}^k \beta_\ell = \beta} \Oo (h^{-k} )
\prod_{\ell =1}^k  \left( |A_0 ( t)|^{\frac12 \delta_{1, |\beta_\ell| } }
 \exp ( A_0 ( t) /k h ) 
\right)
 \\ 
& \leq C_\beta  \sum_{ \sum_{\ell=1}^k \beta_\ell = \beta} 
 \prod_{\ell =1}^k  h^{ -1 + \frac12 \delta_{1, |\beta_\ell| } }  
 \leq C_\beta' \prod_{\ell=1}^k h^{ - \frac12 |\beta_\ell|} =
{\mathcal O} ( h^{-|\beta|/2} ) \,. 
\end{split}
\]
that is, $ v_0 ( t ) \in S_{1/2} $. 
It follows that 
\[ \begin{split} \partial_s v_0 ( s)^w ( x, h D )  & = 
 \frac1h \, ( a( s) v_0 ( s) )^w ( x , h D) \\ 
& =   
 \frac1h \, a ( s)^w ( x, h D) v_0 ( s ) ^w ( x, h D) - 
r(s)^w ( x , h D) 
 \,, \end{split} \]
where the symbolic calculus shows that $ r( s) \in h^{1/2}\, S_{1/2} $.
By Duhamel's formula,
$$
E ( t) \defeq V( t) - v_0 ( t) ^w ( x , h D) 
= \int_0^t V ( t - s ) 
r(s)^w ( x , h D ) ds  = \cO_{L^2\to L^2} ( h^{1/2} ) \,, 
$$
and
\[ \begin{split} V( t) & = v_0 ( t )^w ( x , h D) + 
\int_0^t ( v_0 ( t - s ) \# r ( s) )^w ( x , h D ) d s \\
& \qquad + 
\int_0^t \int_0^s V ( t - s - s') E ( s - s') r( s')^ w ( x, h D ) ds' ds
\\
& = v_0 ( t )^w ( x , h D) + 
\int_0^t ( v_0 ( t - s ) \# r ( s) )^w ( x , h D ) d s +
{\mathcal O}_{L^2 \rightarrow L^2 }  ( h )\,.
\end{split}
\]
The iteration of this argument gives the full expansion of a symbol $ v ( t )\in S_{1/2} $, the 
quantization of which is equal to 
$ V ( t ) $ modulo an error $ \Oo_{L^2 \to L^2} (h^\infty)$.
\end{proof}
Using this Lemma we obtain the analogues of all the results of 
\S \ref{s:eo}, for $ t \geq 0 $, 
with $ U ( t ) $ replaced by $ \exp ( - i t ( P - i W ) /h ) $,
and errors given by $ \Oo ( h^\infty ) $ instead of $\Oo(h^{M_1/C_0}$. 
The proof of 
the modified Proposition~\ref{p:crucial} is then the same, and 
Proposition~\ref{p:re1} follows from the argument presented in 
\S \ref{s:pressure}. 
For instance, here is a version of the propagation results of Proposition
\ref{p:new1} (see also Proposition \ref{p:3}):
\begin{prop}
\label{p:new1A}
Fix $T>0$. Then for any $ v=v(h) \in L^2 $, $ \| v \| = \Oo(h^{-M} ) $ (in particular,
$v$ is $h$-tempered in the sense of \eqref{eq:tempu}), 
\[ 
\WF_h ( \exp ( - i t ( P - i W )/h ) \, v ) \subset \Phi^{t} ( \WF_h ( v ) ) \,,
\]
where $ \WF_h $ is defined by \eqref{eq:defWF}.
\end{prop}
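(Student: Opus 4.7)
My plan is to factor the damped propagator using Lemma~\ref{l:new1A} and reduce the claim to two standard microlocal facts: pseudolocality of $V(t)$, and the action of the unitary $e^{-itP/h}$ on wavefront sets via Egorov's theorem. Specifically, Lemma~\ref{l:new1A} gives
\[
 e^{-it(P-iW)/h} \;=\; e^{-itP/h}\, V(t)\,, \qquad V(t) = v(t)^w(x,hD) + \cO_{L^2\to L^2}(h^\infty)\,,
\]
with $v(t)\in S_{1/2}(T^*X)$. It therefore suffices to prove (i)~$\WFh(V(t)v)\subset \WFh(v)$, and (ii)~$\WFh(e^{-itP/h}w) = \Phi^{t}(\WFh(w))$ for every $h$-tempered $w$.

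For (i), take $\rho\notin\WFh(v)$ and pick $\chi\in\CIc(T^*X)$ which equals $1$ on a neighborhood of $\rho$ and is supported inside the complement of $\WFh(v)$. By the definition \eqref{eq:defWF}, $\chi^w(x,hD)v = \cO_{L^2}(h^\infty)$. For any test symbol $a\in \CIc(T^*X)$ with $\supp a$ in the interior of $\{\chi=1\}$, I decompose
\[
 a^w(x,hD)\,V(t)\,v \;=\; a^w V(t)\chi^w v \;+\; a^w V(t)(1-\chi^w)v.
\]
The first term is $\cO(h^\infty)$ because $V(t)$ is bounded on $L^2$ (uniformly in $h$) and $\chi^w v = \cO(h^\infty)$. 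For the second term, the $S_{1/2}$ Weyl calculus gives
\[
 a^w V(t)(1-\chi^w) \;=\; \bigl(a\# v(t)\#(1-\chi)\bigr)^w + \cO_{L^2\to L^2}(h^\infty)\,,
\]
and because $\supp a$ and $\supp(1-\chi)$ are disjoint while $\chi\equiv 1$ on a neighborhood of $\supp a$, every term in the symbol expansion involves the product of a derivative of $a$ (supported in $\supp a$) and a derivative of $1-\chi$ (vanishing identically on $\supp a$), hence the full symbol is $\cO(h^\infty)$ in every $S_{1/2}$ seminorm. Thus $a^w V(t)v = \cO(h^\infty)$, which yields $\rho\notin \WFh(V(t)v)$.

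For (ii), Egorov's theorem applied to the self-adjoint $P$ gives, for any $a\in \CIc(T^*X)$,
\[
 e^{itP/h}\, a^w(x,hD)\, e^{-itP/h} \;=\; (a\circ \Phi^{t})^w(x,hD) + h\,R^w(x,hD)\,,
\]
where $R\in S(T^*X)$ is supported in an arbitrarily small neighborhood of $\Phi^{-t}(\supp a)$, and the same holds to all orders in $h$. Taking $\rho_1\notin \Phi^{t}(\WFh(V(t)v))$, so that $\Phi^{-t}(\rho_1)\notin\WFh(V(t)v)$, and choosing $a$ supported in a sufficiently small neighborhood of $\rho_1$, one has $\supp(a\circ\Phi^{t})\cup\supp R \subset \complement \WFh(V(t)v)$; step~(i) then gives $(a\circ\Phi^{t})^w V(t)v$ and $R^w V(t)v$ equal to $\cO(h^\infty)$, so by unitarity of $e^{-itP/h}$,
\[
 a^w e^{-it(P-iW)/h} v \;=\; e^{-itP/h}\bigl((a\circ\Phi^{t})^w + hR^w\bigr)V(t)v \;=\; \cO(h^\infty)\,,
\]
which shows $\rho_1\notin\WFh(e^{-it(P-iW)/h}v)$ and completes the proof.

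The main obstacle is the pseudolocality step (i), since $v(t)$ belongs only to $S_{1/2}$ and its essential support is all of $T^*X$, so one cannot simply invoke "disjoint wavefront sets." The decomposition above circumvents this by applying the cutoff $\chi$ to the state $v$ rather than to the operator, so that only the well-behaved, compactly supported symbols $a$ and $1-\chi$ enforce the $\cO(h^\infty)$ cancellation in the $S_{1/2}$ composition.
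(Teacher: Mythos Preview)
Your proof is correct and follows essentially the same approach as the paper's: factor via Lemma~\ref{l:new1A}, establish $\WFh(V(t)v)\subset\WFh(v)$ from the $S_{1/2}$ calculus (the paper writes this as $b^w V(t)=b^w v(t)^w a^w+\cO(h^\infty)$ for $\supp b\Subset\{a=1\}$, which is exactly your $a^w V(t)(1-\chi^w)=\cO(h^\infty)$ with the roles of $a$ and $\chi$ swapped), and then invoke Egorov for the unitary factor $e^{-itP/h}$. Your step~(ii) is spelled out in more detail than the paper's one-line appeal to Egorov, but the content is the same.

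One remark on your justification in step~(i): saying ``every term in the symbol expansion vanishes, hence the full symbol is $\cO(h^\infty)$'' is slightly glib at $\delta=1/2$, since the naive Moyal expansion does not gain powers of $h$ there. The correct mechanism is that the outer factors $a$ and $1-\chi$ lie in $S_0$, so in the oscillatory integral for the triple composition one can always integrate by parts in the variable dual to the $S_0$ factor and gain a genuine power of $h$ (the $h^{-1/2}$ loss from derivatives of $v(t)$ is then harmless). This is what the paper's phrase ``symbolic calculus on $S_{1/2}$'' is shorthand for, and your decomposition is set up precisely so that this argument applies.
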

\begin{proof}
In the notation of Lemma \ref{l:new1A} we write 
\[ \exp ( - i t ( P - i W ) / h )\, v = 
 \exp ( - it P /h )\, V ( t )\, v \,, \]
and observe that the symbolic calculus on $S_{1/2}$ and \eqref{eq:VtA} give 
$ \WF_h ( V ( t ) v ) \subset \WF_h (v ) $. Indeed, if $ a ( x, h D )^w v = 
\Oo_{L^2} ( h^\infty ) $, $ a ( x , \xi ) \equiv 1  $ in a neighbourhood of $ ( x_0 , \xi_0 ) $ (that
is, $ ( x_0, \xi_0 ) \notin \WFh ( v ) $), then for any symbol $ b $ with 
$ \supp b \Subset \{ a = 1 \} $, 
$$
b^w ( x, h D ) V ( t ) = b^w  \, v ( t )^w 
\,  a^w  + \Oo_{L^2 \to L^2} ( h^\infty ) \,. 
$$
Hence $ b^w ( x, h D) V ( t ) v = \Oo_{L^2} ( h^\infty ) $,
and $ ( x_0, \xi_0 ) \notin \WF ( V (t ) v ) $.
It follows that all we need is the inclusion
\[ \WFh ( \exp ( - i t P /h )  V( t) v ) \subset \Phi^t (\WFh ( V ( t ) v ) ) 
\,,\]
and that follows from the $h$-temperedness of $V(t)v$ and Egorov's theorem.
\end{proof}

In \S \ref{re} we also need the following propagation 
result:
\begin{lem}
\label{l:prA}
Let $ P $ satisfy the general assumptions of \S \ref{ass} and $ W$
is as in Proposition \ref{p:re1}, in particular $W\rest_{B(0,R_1)}\equiv 0$.
Suppose that, for some radii $1 \ll   R_2 < R_1$, we have
\begin{gather*} 
 ( P - i W - z ) u = f \,, \quad \Im z = \Oo ( h ) \,,
\\   \|  u \| = \Oo(h^{-M}) \,, \quad  
  \| f\| = \Oo ( 1 ) \,, \quad  \supp f \Subset B ( 0 , R_2 ) \,.
\end{gather*}
Then 
\begin{gather}\label{eq:prA}
\begin{gathered}
\forall \, \eps > 0\,, \ \  \exists \, T=T(R_2,R_0,\eps) > 0, \ \ \text{s.t.} \\  
\forall \,  ( x, \xi ) \in \WF_h ( u ) \setminus T^*_{B( 0 , R_2)}X \quad\text{with}\ \ 
p(x,\xi) \geq \epsilon  \,, \\ 
| \pi ( \Phi^t ( x , \xi )) | >  3 R_0 \,, \quad  \forall t > T \,.
\end{gathered}
\end{gather}
Here $ \pi : T^*X \rightarrow X $ is the natural projection. 
In other words, $ u\rest_{ X\setminus B ( 0 , R_2 ) } $ is outgoing.
\end{lem}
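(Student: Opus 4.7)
The plan is to exploit the outgoing character of $u=(P-iW-z)^{-1}f$ through the time-integral representation of the resolvent, combined with the propagation-of-singularities result of Proposition~\ref{p:new1A}. Since $\|u\|=\cO(h^{-M})$ and Proposition~\ref{c:resolvent} furnishes a bounded inverse of $P-iW-z$ on $L^2$ for $|\Im z|=\cO(h)$, uniqueness forces $u=(P-iW-z)^{-1}f$. Together with the exponential decay of the semigroup $U_W(t)\defeq e^{-it(P-iW)/h}$ applied to states microlocalized on $\cE_E^\delta$ (Proposition~\ref{p:re1}), this yields the convergent outgoing representation
\[
u \;=\; \frac{i}{h}\int_0^{\infty} e^{itz/h}\,U_W(t)\,f\,dt.
\]
This is where the ``outgoing'' restriction enters: we integrate over $[0,\infty)$, not over the full line.

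Fix a candidate point $(x_0,\xi_0)\in\WFh(u)$ with $|x_0|>R_2$ and $p(x_0,\xi_0)\geq\epsilon$, and pick $a\in\CIc(T^*X)$ whose support is a sufficiently small neighborhood of $(x_0,\xi_0)$ inside $\{|x|>R_2\}\cap\{p\geq\epsilon/2\}$. Applying Proposition~\ref{p:new1A} under the integral gives $\WFh(U_W(t)f)\subset\Phi^t(\WFh(f))\subset\Phi^t(T^*_{B(0,R_2)}X)$, so $a^w\,U_W(t)\,f=\cO_{L^2}(h^\infty)$ unless there exists $\rho_0\in T^*_{B(0,R_2)}X$ with $\Phi^{t}(\rho_0)\in \supp a$. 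Hence the assumption $(x_0,\xi_0)\in\WFh(u)$ forces the existence of $t_0>0$ and $\rho_0\in T^*_{B(0,R_2)}X$ with $\Phi^{t_0}(\rho_0)=(x_0,\xi_0)$; by energy conservation, $p(\rho_0)=p(x_0,\xi_0)\geq\epsilon$.

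The hard step is converting this geometric fact into the quantitative escape statement, ruling out the scenario of an asymptotically trapped forward orbit of $(x_0,\xi_0)$. Classically the forward continuation $\Phi^s(x_0,\xi_0)=\Phi^{s+t_0}(\rho_0)$, $s>0$, is a trajectory of energy $\geq\epsilon$; if it did not eventually leave $B(0,3R_0)$ it would accumulate on the trapped set $K_E\subset T^*_{B(0,R_0)}X$. I would rule this out by splitting the time integral at the Ehrenfest scale $T_E(h)\sim\log(1/h)$: on $[0,T_E]$ the passages of backward orbits of $\supp a$ through $T^*_{B(0,R_2)}X$ each have bounded duration $\sim R_0/\sqrt{\epsilon}$ (since outside $B(0,R_0)$ the flow is free at speed $\geq\sqrt{\epsilon/2}$), so a detailed accounting via a microlocal partition of the backward tube gives an $\cO(h^{-n/2})$ contribution modulated by the damping of $U_W$; on $[T_E,\infty)$ the dispersive bound of Proposition~\ref{p:re1} contributes the decay factor $e^{-\lambda t}$, which combined with the outgoing restriction $t\geq 0$ produces an $\cO(h^\infty)$ tail. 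The resulting $\|a^w u\|=\cO(h^\infty)$ contradicts $(x_0,\xi_0)\in\WFh(u)$, and extracting the uniform escape time $T=T(R_2,R_0,\epsilon)$ is then a compactness step, based on the fact that the sojourn time inside $T^*_{B(0,3R_0)}X\cap\cE_E^\delta$ of a trajectory that does eventually escape is uniformly bounded by a constant depending only on $R_0$, $R_2$ and $\epsilon$.
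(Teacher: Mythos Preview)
Your approach via the outgoing time-integral and Proposition~\ref{p:new1A} correctly recovers the first half of the argument: any $(x_0,\xi_0)\in\WFh(u)$ outside $T^*_{B(0,R_2)}X$ lies on the forward flow-out of $T^*_{B(0,R_2)}X$. The paper reaches the same conclusion more directly, writing
\[
e^{-itz/h}u=U_W(t)u+\frac{i}{h}\int_0^t U_W(t-s)\,e^{-isz/h}f\,ds
\]
and applying Proposition~\ref{p:new1A} to get the backward-propagation inclusion $\WFh(u)\subset\Phi^t(\WFh(u))\cup\bigcup_{0\le s\le t}\Phi^s(\WFh(f))$; combined with the ellipticity of $P-iW-z$ on $X\setminus B(0,R_1+r_1)$ (where $W\equiv1$), this yields $\WFh(u)\subset\Gamma_+\cup\bigcup_{s\ge 0}\Phi^s(\WFh(f))$. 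This route avoids the convergence issues of the infinite-time integral and needs neither Proposition~\ref{p:re1} nor Proposition~\ref{c:resolvent}, hence no pressure hypothesis.

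The genuine gap is your ``hard step''. Once you know $(x_0,\xi_0)=\Phi^{t_0}(\rho_0)$ with $|\pi(\rho_0)|<R_2<|x_0|$, the remaining claim --- that $\Phi^t(x_0,\xi_0)$ eventually leaves $B(0,3R_0)$ --- is a purely \emph{classical} statement about the Hamiltonian flow, and your proposed quantum contradiction cannot detect it. The integral $\frac{i}{h}\int_0^\infty e^{itz/h}a^w U_W(t)f\,dt$ receives its main contribution near $t=t_0$ regardless of what the \emph{forward} orbit of $(x_0,\xi_0)$ does afterwards; the dispersive bound on $[T_E,\infty)$ and the ``bounded sojourn'' accounting on $[0,T_E]$ are the same whether or not that forward orbit escapes, so no contradiction arises. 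What is missing is the paper's convexity computation: using the asymptotic form~\eqref{eq:validC} of $p$ one checks that $\frac{d^2}{dt^2}|x(t)|^2\geq 4|\xi|^2-o(1)\langle\xi\rangle^2>0$ along any trajectory with $p\ge\epsilon$ once $|x(t)|$ exceeds a fixed radius $R$. Since the trajectory through $(x_0,\xi_0)$ crossed $\{|x|=R_2\}$ from inside (hence with nonnegative radial velocity) and $R_2\gg1$ ensures $R_2\ge R$, it is strictly outgoing thereafter, giving the uniform escape time $T(R_2,R_0,\epsilon)$.
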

\begin{proof}
The principal symbol satisfies $ \Im ( p - i W - \Re z ) \leq 0 $ 
hence we have backward propagation: 
\begin{equation}
\label{eq:WFA} 
\WF_h ( u ) \subset \Phi^t ( \WF_h ( u )) \cup \bigcup_{ 0 \leq s \leq t }
\Phi^s ( \WF_h ( f )) \,, \quad \forall \, t \geq 0 \,. 
\end{equation}
Indeed, we check that
$$  
( i h \partial_t - ( P - i W ) ) ( U ( t ) u - e^{-i t z / h} u ) 
=  e^{-i t z/h } f \,,
$$
and thus, by Duhamel's formula,
$$
e^{-itz/h} u = U ( t ) u + \frac{i}{h} \int_0^t \exp(-i(t-s)(P - iW)/h)\,e^{-isz/h}\,f\, ds \,, 
$$
from which \eqref{eq:WFA} follows by applying Proposition \ref{p:new1A}.

From ellipticity of $ P - i W - z $ in 
$ X\setminus B ( 0 , R_1 + r_1 ) $, we have 
$ \| u \|_{ L^2(X\setminus B ( 0 , R_1 + r_1 )) } = \Oo ( h^\infty ) $. 
Together with \eqref{eq:WFA}, this implies that
\[  \WFh ( u ) \subset \Gamma_+ \cup  \bigcup_{ s \geq 0 } 
\Phi^s ( \WF_h ( f )) \,, \ \ 
\Gamma_+ \defeq \{ ( x , \xi) \; : \; 
\exp ( t H_p ) ( x, \xi ) \not\rightarrow \infty \,,  \ 
t \rightarrow - \infty \} \,. \]
The assumptions on $ P $ in \S \ref{ass} (essentially the fact
that it is close to the Euclidean Laplacian near infinity) show that
for $ x ( t) \defeq \pi ( \Phi^t ( x_0, \xi_0) )  $, $ p ( x_0, \xi_0) \geq
\epsilon $, 
\be
\label{eq:prA1}
 \frac{d}{dt} |x(t)|^2 \rest_{t=0} \geq 0 \,, \ \
|x_0| > R  \, \, \Longrightarrow \, \, \frac{d}{dt} |x ( t ) 
|^2 > 0 \,, \ t \geq 0 \,,  \ee
if $ R $ is large enough. Indeed,
\[ \begin{split} \frac{d^2}{dt^2} | x( t ) |^2 & = 
2  \frac{d}{dt} \langle x ( t ) , x' ( t ) \rangle = 
2  \frac{d}{dt} \langle x ( t ) , p_\xi' ( x ( t) , \xi ( t ) ) \rangle \\
& = 
2 | p_\xi' |^2 + 2 \langle x ( t) , p''_{\xi x } 
[ p_\xi ' ] - p''_{\xi \xi} [ p_x' ] \rangle \geq  
4 \xi^2 - o(1) \langle \xi \rangle^2 \,,
\end{split} 
\]
where we used \eqref{eq:validC} to obtain 
\[ p''_{\xi x } = o ( \langle \xi \rangle |x|^{-1} ) \,, \ \ 
 p_x' = o ( \langle \xi \rangle^2 |x|^{-1} ) \,, \]
(here $ o ( 1 ) \rightarrow 0 $ as $ x \rightarrow \infty$). 
Hence $ t \mapsto
| x ( t ) |^2 $ is strictly convex and that proves \eqref{eq:prA1}.

Now observe that, for any point $\rho\in \WFh(u)\setminus T^*_{B(0,R_2)}X$, we have 
$\rho\in \Gamma_+\setminus T^*_{B(0,R_2)}X$, or $\rho\in \Phi^s(\WFh( f))$ for some $s>0$.
In both cases, there exists $1 \ll \tR_2<R_2$ and $t>0$ such that $\Phi^{-t}(\rho)\in T^*_{B(0,\tilde R_2)}$. 
Thus, the trajectory $(\Phi^s(\rho))_{s\in [-t,0]}$
has necessarily crossed the sphere $\{|x|=R_2\}$ for some $t_0$, coming from inside.
From the above discussion, the trajectory is then strictly outgoing ($d|x(s)|/ds >0$) for 
$s>t_0$. In particular, there exists a time $T=T(R_2,R_0,\eps)$ (uniform for all such $\rho$)
such that $\Phi^s(\rho)$ will be outside $B(0,3R_0)$ for $s\geq T$.
\end{proof}

\noindent
{\sc Acknowledgements.} 
We would like to thank the National Science Foundation 
for partial support. The first author is grateful to
UC Berkeley for its hospitality in April 2006; he was partially supported by
the Agence Nationale de la Recherche, under the grant ANR-05-JCJC-0107-01.
We would also like to thank Nicolas Burq for a helpful discussion of
resolvent estimates.

\end{document}